\documentclass[12pt]{article}

\usepackage[utf8]{inputenc}
\usepackage{amsfonts}
\usepackage{amsmath}
\usepackage{mathtools}
\usepackage{mathrsfs}
\usepackage{physics}
\usepackage{csquotes}
\usepackage{amssymb}
\usepackage{dsfont}
\usepackage[left=2cm, right=2cm, top=2cm, bottom=3cm]{geometry}
\usepackage{xcolor}
\usepackage[T1]{fontenc}
\usepackage{hyperref}
\hypersetup{
    colorlinks=true,
    linkcolor=blue!70!,
    filecolor=magenta,      
    urlcolor=cyan,
    }
\usepackage{framed}
\usepackage[framed]{ntheorem}
\usepackage{authblk}
\usepackage{tikz,tikz-cd}
\usepackage{adjustbox}

\let\oldemptyset\emptyset
\let\emptyset\varnothing

\newtheorem{definition}{Definition}

\newtheorem{lemma}{Lemma}
\newframedtheorem{postulate}{Postulate}
\setcounter{postulate}{-1}
\newtheorem{proposition}{Proposition}
\newtheorem{theorem}{Theorem}
\newtheorem{corollary}{Corollary}

\newcommand{\af}{\alpha}
\newcommand{\ri}{\rightarrow}
\newcommand{\id}{\text{id}}
\newcommand{\ci}{\circ}

\newcommand{\Tra}[1]{\text{tr}\left( #1 \right)}
\newcommand{\Exp}[1]{\left\langle #1 \right\rangle_{\rho}}

\newcommand{\conj}[1]{\left\lfloor #1 \right\rfloor}

\newenvironment{proof}{\paragraph{Proof:}}{\hfill$\square$}

\title{Step-by-step derivation of the algebraic structure of quantum mechanics
\\
(or from nondisturbing to quantum correlations by connecting incompatible observables)
}
\author{Alisson Tezzin\footnote{alisson.tezzin@usp.br}}
\affil{\small\textit{Department of Mathematical Physics, Institute of Physics,
University of São Paulo
\\
R. do Matão 1371, São Paulo 05508-090, SP, Brazil}}
\date{\today}

\begin{document}

\maketitle
\begin{abstract}
Recently there has been much interest in deriving the quantum formalism and the set of quantum correlations from simple axioms. In this paper, we provide a step-by-step derivation of the quantum formalism that tackles both these problems and helps us to understand why this formalism is as it is. We begin with a structureless system that only includes real-valued observables, states and a (not specified) state update, and we gradually identify theory-independent conditions that make  the algebraic structure of quantum mechanics be assimilated by it. In the first part of the paper (sections~\ref{sec: observableEvents}-\ref{sec: commutativePart}), we derive essentially all the ``commutative part'' of the quantum formalism, i.e., all definitions and theorems that do not involve algebraic operations between incompatible observables, such as projections, Specker's principle, and the spectral theorem; at the statistical level, the system is nondisturbing and satisfies the exclusivity principle at this stage. In the second part of the paper (sections~\ref{sec: connectingIncompatible}-\ref{sec: quantumTheory}), we connect incompatible observables by taking transition probabilities between pure states into account. This connection is the final step needed to embed our system in a Hilbert space and to go from nondisturbing to quantum correlations. 
\end{abstract}
\tableofcontents

\section{Introduction}

In 1932, John von Neumann published his monumental \textit{Mathematische Grundlagen der Quantenmechanik}  \cite{von2018mathematical} (mathematical foundations of quantum mechanics), where, for the first time, quantum mechanics was presented as a unified and mathematically sound theory \cite{landsman2017foundations}; in general terms, von Neumann's formulation of quantum mechanics is the theory we use nowadays \cite{von2018mathematical, landsman2017foundations}. Right at the beginning of his book, von Neumann makes it clear that the emphasis of his work was placed on investigating the conceptual and interpretative questions that emerged from the then newly developed physical theory; in his own words (translated from German by Robert T. Beyer):
\begin{displayquote}
    ``...\textit{the principal emphasis shall be placed on the general and fundamental questions which have arisen in connection with this theory. In particular, the difficult problems of interpretation, many of which are even now not fully resolved, will be investigated in detail}.'' (\cite{von2018mathematical}, p.1).
\end{displayquote}
As we now know, von Neumann's efforts in this direction weren't enough, and the ``difficult problems of interpretation'' he was aiming at are still alive \cite{landsman2017foundations, norsen2017foundations, deRonde2020noInterpretation}. As a result, the range of existing interpretations of quantum mechanics is enormous, going from extreme realist views (on the wave function) such as the many-worlds interpretation \cite{everett1957relative, norsen2017foundations, halvorson2019realist}, to antirealist or ``pragmatist'' ones like quantum Bayesianism \cite{fuchs2010qbism, healey2022pragmatist}, a situation described by Hans Halvorson as a continuum of different views \cite{halvorson2019realist} and by Adán Cabello as a  map of madness \cite{cabello2017madness}. 

Although the conceptual problems addressed by von Neumann and many of his contemporaries  were left aside by most twentieth-century physicists, the rise of quantum information theory in the last two decades put these problems back in evidence. It was recognized that an adequate understanding of fundamental aspects of quantum systems was crucial to explaining the potential advantages of quantum information processes, and it was demonstrated that many characteristic features of quantum systems, such as contextuality, nonlocality, and entanglement \cite{howard2014contextuality,jozsa2003role, Soerensen2000entanglement}, provided possible explanations. The necessity of a proper understanding of these features and of the quantum formalism as a whole thus imposed itself, and remarkable progress has been made  in this direction since then. Nonlocality and contextuality were promoted to the position of research fields \cite{budroni2021review, brunner2014Bell}, going so far as to be studied outside the realm of physics \cite{dzhafarov2017contextuality, wang2021analysing}, and concepts such as quantum Darwinism \cite{zurek2009quantum, baldijao2021emergence}, decoherence~\cite{schlosshauser2005decoherence}, generalized contextuality~\cite{spekkens2005contextuality}, and the exclusivity principle \cite{chiribella2020exclusivity, amaral2014exclusivity} were explored at length. Operational (or ``generalized'') probability theories were developed to analyze, among other things, what distinguishes quantum theory from other conceivable physical theories \cite{janotta2014structure, plavala2021generalized}, and many operational frameworks for the study of classical, quantum, non-signaling, and contextual ``correlations'' were designed \cite{brunner2014Bell, amaral2018graph, csw2014graph}. Interesting proposals of first principles from which quantum theory can be derived were made \cite{hardy2001reasonable, chiribella2011derivation, masanes2011derivation, hardy2013reconstructing, cabello2019simple}, and new interpretations of quantum mechanics were put forward \cite{fuchs2010qbism, doring2010thing}. All these topics are open fields of study, in which much research is still being developed.

In this paper, we try to contribute to this debate by analyzing how a physical system (which we define as a purely mathematical structure that has systems in both quantum and classical mechanics as particular instances) takes the shape of a quantum system. We begin with a very broad definition of system (definition~\ref{def: system}), which consists essentially of a set of real-valued observables, a set of states that assign probabilities to the values of observables (as made precise by Kochen and Specker in Ref.~\cite{kochen1967problem}), and a mapping accounting for the update that takes place in the state of the system when an observable is measured, and we start to impose constraints on it via postulates. All constraints are key theory-independent features of quantum mechanics, as we explain below, and we argue that each one of them simplifies or refines some aspect of the general definition of system. We impose these constraints one by one, and after each imposition, some facet of the quantum formalism emerges. We conclude by showing that any system satisfying all conditions we pose (postulates~\ref{ax: separability}-\ref{post: pureState}) can be embedded in a quantum system, which naturally completes the derivation. Since all aspects of quantum mechanics relevant to quantum foundations have their roots in the mathematical formalism, shedding light on this formalism is  important, and this is exactly what most contemporary research on quantum foundations consists in. More specifically, our work is part of a longstanding line of research that aims to derive the quantum formalism from simple axioms \cite{hardy2001reasonable, chiribella2011derivation, masanes2011derivation, hardy2013reconstructing}, and it also tackles the problem of recovering the so-called set of quantum correlations \cite{amaral2014exclusivity, cabello2019simple, navascues2015almost, gonda2018almost, elie2018geometry}. 

The conditions that we single out are presented as constraints that simplify or refine the notion of system just to indicate that they are not counterintuitive or purely \textit{ad hoc} requirements. Our intention is to emphasize that, in general terms, we end up with quantum systems by making the concept of system  easier to work with, and not by introducing counterintuitive features to it. Nevertheless, this appeal to simplification is secondary in our work, and what is really important to us  is to provide a mathematically precise derivation of the quantum formalism that, due to its ``step-by-step'' or, say, ``tomographic'' character, can help us understand why this formalism is the way it is, i.e., why events are associated with certain observables that we call projections, why projections form an orthocomplemented lattice, why observables have traces, why the Specker's principle is satisfied, why expectations are given by the Born rule, and so on. To do so, we gradually identify key theory-independent features of quantum systems and derive the rest of the formalism from them. Having a mathematically precise derivation of these pivotal elements of quantum mechanics is fundamental to shedding  light on conceptually important aspects of the theory, such as contextuality, the exclusivity principle and the principle of local orthogonality \cite{fritz2013local}, and consequently to addressing the aforementioned problems of deriving the quantum formalism and the set of quantum correlations from simple axioms.  

By ``algebraic structure of quantum mechanics'' we essentially mean the  C*-algebra of operators  \cite{landsman2017foundations},  the orthocomplemented lattice of projections, the convex set of states, and all important definitions and theorems that follow from these structures, such as traces of observables, the spectral theorem, the functional calculus, and the Born rule. By ``quantum formalism'' we mean this algebraic structure, and therefore, as is commonplace in quantum foundations nowadays, we exclude dynamics from our derivation. Another common practice in quantum foundations that we follow is restricting the  discussion to finite-dimensional systems. Finally, with a slight abuse of terminology, we use the expressions ``quantum theory'' and ``quantum mechanics'' interchangeably. 

Kochen and Specker's definition of system (which is just a mathematically rigorous way of saying that the system includes real-valued observables and states that assign probabilities to the values of observables) enables us to define \textbf{functional relations} between observables, as they point out~\cite{kochen1967problem}, and it also enables us to introduce the notion of \textbf{statistically equivalent} observable events, where by \textbf{observable event} we mean a pair $(\Delta;A)$ in which $A$ is an observable and $\Delta$ is a Borel set. On the other hand, using the state update we can assign probabilities to \textit{sequences of observable events} (definition~\ref{def: sequentialMeasure}). Our entire discussion, and consequently our constraints, revolves around these concepts. 

In any system containing states and real-valued observables, each observable has a measurable space naturally associated with it, which corresponds to its spectrum, and each state of the system defines a probability measure in each one of these measurable spaces. Observable events associated with distinct observables thus lie in different spaces, so in principle we cannot reason about them in statistical terms, i.e.,  as if they were events in a single probability space whose measure is determined by the state of the system. However, as pointed out by Kochen and Specker~\cite{kochen1967problem}, the existence of functional relations between observables naturally implies that a \textit{compatibility relation} (definition~\ref{def: compatibility}) for observables is well defined, which in turn ensures that all observable events associated with a pair of compatible observables can be embedded (via statistical equivalence) in the spectrum of a single observable. For this reason, we say that there is an \textbf{informational link} connecting compatible observables, and the first simplifying constraint that a system must satisfy for the quantum formalism to emerge is the assumption that this link can be used to predict the outcomes of sequential measurements of pairs of compatible observables. Put differently, it means that a sequential measurement of two compatible observables is statistically equivalent to a single measurement of a fine-graining shared by them. This constraint appears in postulates~\ref{post: selfCompatibility} and \ref{post: compatibility} (postulate~\ref{post: selfCompatibility} is a particular case of postulate~\ref{post: compatibility}, but we present it separately for the sake of argument). The existence of an information link between two compatible observables $A$ and $B$, i.e., the existence of a single spectrum accounting for all events associated with  $A$ and $B$ (up to statistical equivalence), implies that these events  satisfy relations which are not universally valid among observable events. For instance, events that result from a sequential measurement of two compatible observables $A$ and $B$ obey the Bayes rule  (definition~\ref{def: commutativity}), and we can predict with certainty the outcome of a measurement of $B$ whenever we prepare the state of the system using an observable $A$ that is compatible with $B$  (definition~\ref{def: eigenstate}). The question thus naturally arises of whether, for some unknown reason, these rules can eventually be valid when $A$ and $B$ are incompatible, i.e., when the informational link that justifies them is not available. In quantum mechanics, these natural consequences of the existence of informational links cannot be achieved in any other way, so postulates \ref{post: commutativity} and \ref{post: eigenstates} cut these exceptions out. As William of Ockham puts it, ``plurality is not to be posited without necessity'' \cite{Spade1999Ockham, adamson2019medieval}, so it is fair to say that these postulates simplify the definition of system. All postulates we mentioned thus far have a strong information appeal, and so is postulate~\ref{post: pureState}, which asserts that a measurement can turn a pure state (which is naturally defined in any system satisfying postulates \ref{post: selfCompatibility} and \ref{post: observables}) into a non-pure one only if the experimentalist fails to acquire all the information that this measurement provides. Other key aspects of quantum systems simplify the general definition of system by making some of its features ``less complex'', in some sense. For instance, in finite-dimensional systems (definition~\ref{def: finiteSystem}), a distinction between nondegenerate and degenerate observables naturally arises, and it follows from the existence of functional relations that some degenerate observables are coarse-grainings of nondegenerate ones. Postulate~\ref{post: observables} thus asserts that there is no exception to this rule, i.e., that degenerate observables are always coarse-grainings of nondegenerate observables. The same postulate also asserts that  the set of states is the minimal convex set containing all experimentally accessible states (definition~\ref{def: accessibleState}) and that nondegenerate observables are as simple as ``$n$-sided dices'', by which we mean that, if we measure a nondegenerate observable many times without specifying the state of the system, the relative frequencies of all its possible values will be the same. The first part of postulate~\ref{post: transitionProbability} follows a similar line of thought: we show that ``transition probabilities'' between pure states are well defined, and we impose that the probability of transitioning between two pure states does not depend on the order of the transition. On the other hand, we say that the second part of postulate~\ref{post: transitionProbability} ``refine'' the definition of system, by which we mean, despite introducing a reasonable condition which is necessary for the emergence of \textit{interference terms} \cite{tausk2018foundations}, it does not have a clear explanation from our point of view. Finally, postulate~\ref{ax: subjectiveUpdate} does not depend on any physical system or theory in particular and must be satisfied due to the simple fact that we reason in statistical terms, whereas postulate~\ref{ax: separability}, which could be part of the definition of system, just explains what we mean when we say that two observables (or states) are equal. 
 
This paper is divided into two parts. In the first part, which ranges from section~\ref{sec: observableEvents} to section~\ref{sec: commutativePart}, we show that basically all the ``commutative part'' of the quantum formalism, i.e., all definitions and theorems that do not involve algebraic operations between incompatible observables, follow from postulates \ref{ax: separability}-\ref{post: commutativity} (with the exception of the second part of postulate~\ref{post: observables}), which essentially consist of the aforementioned considerations about informational links connecting compatible observables,  the assumption that degenerate observables are always coarse-grainings of nondegenerate ones, and the idea that nondegenerate observables are as simple as $n$-sided dices. At the statistical level, we have nondisturbance \cite{amaral2018graph} and the exclusivity principle (corollary~\ref{cor: exclusivityPrinciple}) in this part of the work, but we do not have quantum correlations yet. In the second part of the paper, which goes from section~\ref{sec: connectingIncompatible} to section~\ref{sec: quantumTheory}, we  connect incompatible observables by considering transition probabilities between pure states, and to do so we introduce postulates \ref{post: eigenstates}-\ref{post: pureState}. Postulate~\ref{post: eigenstates} basically says that the capacity to predict with certainty to outcome of some measurement is always a matter of gathering the right amount of information, whereas, as mentioned above, postulate~\ref{post: pureState} asserts that a measurement can turn a pure state into a non-pure one  only if the experimentalist fails to acquire all the information that this measurement provides. Postulate~\ref{post: transitionProbability}, on the other hand, says that the probability of transitioning between two pure states does not depend on the order of the transition, and it also singles out a ``marginalization condition'' that is satisfied by quantum systems and that is necessary for the emergence of the important \textit{interference terms} that appear in quantum mechanics when we evaluate the expectation of an observable with respect to some pure state by decomposing this state as a linear combination of vectors in a certain orthonormal basis \cite{tausk2018foundations, norsen2017foundations}. This second part of the work is the final step needed to embed our system in a Hilbert space and, consequently, to go from nondisturbing to quantum correlations \cite{csw2014graph, amaral2018graph}.

We adopt throughout the paper the convention that $0/0=0$. Given any set $O$, we denote  its power set by $\mathscr{P}(0)$, and the symbol $\mathfrak{B}(\mathbb{R})$ always denote the Borel $\sigma$-algebra on $\mathbb{R}$ \cite{folland1999real}. As usual, if $f:A \ri B$ is a function and $\beta$ is an element of $B$, we write $f^{-1}(\beta)$ rather than $f^{-1}(\{\beta\})$ to denote the pre-image of the singleton $\{\beta\}$ under $f$. 
\section{Basic framework}\label{sec: basicFramework}

As we mentioned in the introduction, a physical system in this paper consists in a mathematical structure that has systems in classical and quantum mechanics as particular cases. Roughly speaking, we use the label ``physical system'' as a synonym for ``system in some physical theory'', and we talk about systems rather than theories just because it is simpler --- even precisely defining physical theories is a very difficult task \cite{halvorson2012notBe, halvorson2016scientificTheories}. The minimal structure that a  system needs to have in order to be a candidate for the position of a quantum system is a set of real-valued observables, a set of states that assign probabilities to the values of observables, and a mapping that accounts for the update that takes place in the state of the system when an observable is measured. Inspired by Kochen and Specker \cite{kochen1967problem}, our definition of system goes as follows. 

\begin{definition}[Physical system]\label{def: system}
A physical system $\mathfrak{S}$ consists of nonempty sets $\mathcal{O}$ and $\mathcal{S}$ whose elements represent, respectively, observable features and states of the system under description, and of mappings $P,T$  defined as follows.
\begin{itemize}
    \item[(a)] $T$ associates, for each \textbf{observable event} $(\Delta;A) \in \mathfrak{B}(\mathbb{R}) \times \mathcal{O}$, a mapping $T_{(\Delta;A)}: \mathcal{S} \rightarrow \mathcal{S}$ corresponding to the state update associated with this event. For convenience, we assume that there is an element $0 \in \mathcal{S}$ (with no physical meaning) such that, given any observable $A$, $T_{(\Delta;A)}(\rho) = 0$ if $\Delta = \oldemptyset$ or $\rho = 0$. 
    \item[(b)] $P$ associates, for each pair $(\rho,A) \in \mathcal{S} \times \mathcal{O}$, a Borel measure $P_{\rho}^{A} \equiv P_{\rho}( \  \cdot \ ; A)$ on $\mathbb{R}$. If $\rho \neq 0$, $P_{\rho}^{A}$ is a probability measure, and it is the null measure otherwise. For any state $\rho \neq 0$ and any Borel set $\Delta$, $P_{\rho}^{A}(\Delta) \equiv P_{\rho}(\Delta;A)$ is the probability that a measurement of $A$ yields an outcome lying in $\Delta$ \cite{kochen1967problem}.
\end{itemize}
\end{definition}
To be precise, the system $\mathfrak{S}$ is the quadruple $(\mathcal{O},\mathcal{S}, P,T)$. As we mentioned in the introduction, $\mathfrak{B}(\mathbb{R})$ denotes the Borel $\sigma$-algebra. The element $0 \in \mathcal{S}$ is said to be the  \textbf{null state}, and, unless explicitly stated otherwise, by a state we  mean a non-null state. Similarly, unless explicitly stated otherwise, by an observable event we mean a pair $(\Delta;A)$ with $\Delta \neq \oldemptyset$.  For obvious reasons, we assume that $\mathcal{S}$ contains non-null states. 

We say that the elements of $\mathcal{O}$ are observable \textit{features}, and not observable properties, to emphasize that we are not committing ourselves to any realist view on observables. For simplicity, we will write ``observable'' rather than ``observable feature'' from now on.

As we know, systems in both classical and quantum mechanics satisfy definition~\ref{def: system}. In fact, for a quantum system associated with a (finite-dimensional, for simplicity) Hilbert space $H$, $\mathcal{O}$ is the set of all selfadjoint operators on $H$, $\mathcal{S}$ is the convex set of density operators, $P_{\rho}^{A}$ is the mapping $\mathfrak{B}(\mathbb{R}) \ni \Delta \mapsto \Tra{\rho\chi_{\Delta}(A)}$, where $\chi_{\Delta}$ is the characteristic function of $\Delta$ and $\chi_{\Delta}(A)$ is given by the (Borel) functional calculus \cite{kadison1997fundamentalsI}, whereas, for any state $\rho$,
\begin{align*}
    T_{(\Delta;A)}(\rho) &= \sum_{\af \in \sigma(A)} \frac{P_{\rho}^{A}(\{\alpha\})}{P_{\rho}^{A}(\Delta)}T_{(\alpha,A)}(\rho) =\sum_{\af \in \sigma(A)} \frac{\Tra{\rho E_{\af}}}{\Tra{\rho E_{\Delta}}} \frac{E_{\af}\rho E_{\af}}{\Tra{E_{\af}}\rho} = \sum_{\af \in \sigma(A)} \frac{E_{\af}\rho E_{\af}}{\Tra{\rho E_{\Delta}}} , 
\end{align*}
where $\sigma(A)$ denotes the spectrum of $A$, $E_{\af} \equiv \chi_{\{\af\}}(A)$ and $E_{\Delta} \equiv \chi_{\Delta}(A)$ \cite{nielsen2002quantum}. In classical mechanics, a system $\mathfrak{S}$ is represented by a certain phase space $\Lambda$ (usually a topological manifold), observables of $\mathfrak{S}$ are identified as real measurable functions on $\Lambda$, states are probability measures on $\Lambda$, $P_{\rho}^{A}$ is the pushforward of the measure $\rho$ along $A: \Lambda \rightarrow \mathbb{R}$, and $T_{(\Delta;A)}$ is determined by the conditional probability, i.e., 
\begin{align*}
    T_{(\Delta;A)}(\rho) &\doteq \rho( \ \cdot \ \vert A^{\Delta}) \equiv \frac{\rho( \ \cdot \ \cap A^{\Delta})}{\rho(A^{\Delta})},
\end{align*}
where $A^{\Delta} \equiv A^{-1}(\Delta)$ (recall that we are adopting the convention that $0/0=0$ ). 

Let $\mathfrak{F}$ be the class of all systems satisfying definition~\ref{def: system}. Adding constraints to the definition of system is equivalent to introducing postulates that single out a certain subclass of $\mathfrak{F}$  --- the class of all systems satisfying these postulates. It seems to us more natural to talk about postulates than constraints, so we will use this terminology from now on.

Definition~\ref{def: system} is too vague: although it includes each essential ingredient of a physical system, it makes no demands of them. The mapping $T$ is there to represent the state update, but definition~\ref{def: system} does not say that it must represent a physically meaningful update; $T$ is any mapping, so in some particular cases it can be, for instance, a constant function. Similarly, $P$ can be any random assignment of Borel measures, whereas the sets $\mathcal{S}$ and $\mathcal{O}$ are nothing more than nonempty sets. As a result, we can construct quadruples $(\mathcal{O},\mathcal{S},P,T)$ that satisfy definition~\ref{def: system} but that are not worthy of being called physical systems. Systems in classical and quantum mechanics are instances of definition~\ref{def: system} constructed within appropriate mathematical structures, and these structures naturally impose rules for $T,P$, $\mathcal{S}$ and $\mathcal{O}$ that make  $\mathfrak{S} \equiv (\mathcal{O},\mathcal{S},P,T)$ meaningful in these particular cases. This is an imposition from inside, let's say. Here we work from outside: we dictate rules for $\mathfrak{S}$ to make the definition  reasonable by itself, i.e.,  regardless of any particular case. By doing so, the initially contentless definition starts to incorporate characteristic features of the quantum formalism, until we get to the point where any system satisfying the (updated) definition gives rise to a quantum one, as theorem~\ref{thm: quantumEmbedding} shows.

Our first postulate simply explains what we mean by ``equality'' for states and observables. Our assumption is that the set of states shapes the sets of observables, i.e., two observables are equal iff we cannot distinguish them using states, and \textit{vice-versa}. It is important not to confuse this postulate with Spekkens's notion of ``ontological identity of empirical indiscernibles'' \cite{spekkens2005contextuality, spekkens2019ontologicalIdentity}: the identity  we impose here is not at the ontological level, to use Spekkens' terminology \cite{spekkens2005contextuality}, but at the operational one. This is not one of the constraints mentioned in the introduction, and this requirement could be part of the definition of system (equation~\ref{eq: KSseparability}, for instance, is part of Kochen and Specker's definition \cite{kochen1967problem}). For this reason, we enumerate it as postulate ``zero''. 

\begin{postulate}[Separability]\label{ax: separability} Two observables $A,B$ in a system $\mathfrak{S}$ are equal if and only if, for every state $\rho$,
\begin{align}
    P_{\rho}^{A} =\label{eq: KSseparability} P_{\rho}^{B}.
\end{align}
Similarly, two states $\rho$, $\rho'$ in $\mathfrak{S}$ are equal if and only if, for every observable $A$,
\begin{align}
    P_{\rho}^{A} = P_{\rho'}^{A}.
\end{align}
\end{postulate}

Let $A$ be an observable in a physical system $\mathfrak{S}$. We define the \textbf{spectrum} of $A$, denoted $\sigma(A)$, as the smallest set containing all its possible values, i.e.,
    \begin{align}
        \sigma(A) \doteq \cap\{\Delta \in \mathfrak{B}(H): \forall_{\rho \in \mathcal{S}}: P_{\rho}^{A}(\Delta) = 1\}.
    \end{align} 
$\mathcal{S}$ contains non null states, therefore $\sigma(A) \neq \oldemptyset$. A real number $\alpha$ is said to be an \textbf{eigenvalue} of an observable $A$ if $P_{\rho}^{A}(\{\af\})=1$ for some state $\rho$, and the \textbf{point spectrum} of $A$ consists of the collection of all its eigenvalues. Clearly, the point spectrum of $A$ lies inside its spectrum. It is important to note that, by construction, the spectrum of an observable depends on the set of states of the system.

As usual, we will focus on finite-dimensional systems:
\begin{definition}[Finite-dimensional system]\label{def: finiteSystem} We say that a system $\mathfrak{S}$ (definition~\ref{def: system}) is finite-dimensional if the following conditions are satisfied.
\begin{itemize}
    \item[(a)] The spectrum of every observable in $\mathfrak{S}$ is a finite set.
    \item[(b)] $\text{dim}(\mathfrak{S}) < \infty$, where
    \begin{align}
        \text{dim}(\mathfrak{S}) \doteq \sup\{\vert \sigma(A)\vert: A \in \mathcal{O}\}.    
    \end{align}
\end{itemize}
\end{definition}
The \textbf{dimension} of a finite-dimensional system $\mathfrak{S}$ is the number $\text{dim}(\mathfrak{S})$. We saw that, in any system, the spectrum of an observable is never empty. Therefore, in any finite-dimensional system $\mathfrak{S}$ we have $\text{dim}(\mathfrak{S}) \geq 1$, and consequently $\text{dim}(\mathfrak{S}) \in \mathbb{N}$. It is also important to note that, in a finite-dimensional system, the spectrum of an observable $A$ satisfies
\begin{align}
    \sigma(A) &=\label{eq: preEigenvector} \{\af \in \mathbb{R}: \exists_{\rho \in \mathcal{S}} (P_{\rho}^{A}(\{\af\}) \neq 0)\}.
\end{align}

Unless explicitly stated otherwise, we will assume that $\mathfrak{S}$ is finite-dimensional from now on. For convenience, we will  restrict the measure $P_{\rho}^{A}$ to $\sigma(A)$, and, unless explicitly stated otherwise, by an \textbf{observable event} we now mean a pair $(\Delta;A)$ where $A$ is an observable and $\Delta \subset \sigma(A)$. We denote by $p_{\rho}^{A}$ (or $p_{\rho}( \ \cdot \ , A)$) the probability distribution on $\sigma(A)$ induced by $P^{A}_{\rho} \equiv P_{\rho}( \ \cdot \ ,A)$, i.e., for any $\af \in \sigma(A)$,
\begin{align*}
    p_{\rho}^{A}(\af) \doteq P_{\rho}^{A}(\{\af\}).
\end{align*}
\section{Observable events}\label{sec: observableEvents}

Let $A$ be an observable in a (finite-dimensional) system $\mathfrak{S}$ (definition~\ref{def: finiteSystem}). An \textbf{$A$-event} is an observable event of the form $A^{\Delta} \equiv (\Delta;A)$, where $\Delta \subseteq \sigma(A)$, and we denote by $\mathbb{E}_{A}$ the collection of all $A$-events. If $\Delta$ is a singleton $\{\af\}$, we write $A^{\af}$ and $(\af;A)$ rather than $A^{\{\af\}}$ and $(\{\af\},A)$ respectively. Clearly, $\mathbb{E}_{A}$ is isomorphic to the power set of $\sigma(A)$ --- denoted $\mathscr{P}(\sigma(A))$, as discussed at the end of the introduction---, and this isomorphism canonically induces a Boolean algebra structure in $\mathbb{E}_{A}$. We have $A^{\Delta} \vee A^{\Sigma} = A^{\Delta \cup \Sigma}$, and so on. We denote by $\mathbb{E}$ the collection of all events of the system, i.e., $\mathbb{E} \doteq \cup_{A \in \mathcal{O}} \mathbb{E}_{A}$, and any state $\rho$ induces a mapping $P_{\rho}: \mathbb{E} \rightarrow [0,1]$ given by $P_{\rho}(A^{\Delta}) \doteq P_{\rho}^{A}(\Delta)$ (see definition~\ref{def: system}). By construction, $\mathbb{E}_{A} \cap \mathbb{E}_{B} = \emptyset$ whenever $A \neq B$, so $\mathbb{E}$ is defined as a union of pairwise disjoint Boolean algebras. It also follows by construction that, for any observable $A$, $P_{\rho}^{A}$ is the composition $P_{\rho} \circ \iota_{A}$, where $\iota_{A}$ is the embedding $\mathscr{P}(\sigma(A)) \ni \Delta \mapsto A^{\Delta} \in \mathbb{E}$. For the sake of illustration:
\begin{center}
    \begin{tikzcd}
        \mathscr{P}(\sigma(A)) \arrow{r}{\iota_{A}} \arrow{dr}[swap]{P_{\rho}^{A}} & \mathbb{E}\arrow{d}{P_{\rho}}
        \\
        & A
    \end{tikzcd}
\end{center}
Let's emphasize this definition.

\begin{definition}[Observable events and probability assignments]\label{def: eventsAndAssignments} Let $\mathfrak{S}$ be a system. Given any observable $A$ of $\mathfrak{S}$, we denote by $\mathbb{E}_{A}$ the set of all $A$-events, i.e., $\mathbb{E}_{A} \doteq \{A^{\Delta} \equiv (\Delta;A): \Delta \subset \sigma(A)\}$, and the set of all observable events of the system is denoted by $\mathbb{E}$, that is, $\mathbb{E}\doteq\cup_{A \in \mathcal{O}}\mathbb{E}_{A}$. For each state $\rho$, we associate a function $P_{\rho}: \mathbb{E} \rightarrow [0,1]$ such that, for any $A^{\Delta} \in \mathbb{E}$,
\begin{align}
    P_{\rho}(A^{\Delta}) \doteq P_{\rho}^{A}(\Delta),
\end{align}
where $P_{\rho}^{A}$ denotes the probability measure on $\sigma(A)$ introduced in definition \ref{def: system}. Finally, if $\Delta\subset \sigma(A)$ is a singleton $\{\af\}$, we write $A^{\af}$ and $(\af;A)$ instead of $A^{\{\af\}}$ and $(\{\af\};A)$, for simplicity. 
\end{definition}

For all practical purposes, $P_{\rho}$ is the collection $\{P_{\rho}^{A}: A \in \mathcal{O}\}$ of probability measures, and, for each observable $A$, $P_{\rho}^{A} \equiv P_{\rho}( \ \cdot \ ,A)$ can be seen as the ``component $A$'' of $P_{\rho}$. Similarly, $P_{\rho}$ is equivalent to the mapping $A \mapsto P_{\rho}^{A}$, which assigns  for each observable $A$ a probability measure on $\mathscr{P}(\sigma(A))$. It immediately follows from postulate~\ref{ax: separability} that, for any pair of states $\rho,\rho'$,
\begin{align}
    \rho = \rho'\Leftrightarrow\label{eq: stateAndP} P_{\rho} = P_{\rho'},
\end{align}
so a state is uniquely defined by the probabilities it assigns to observable events. Put differently, the state $\rho$ is, for all practical purposes, the mapping $P_{\rho}$, and for this reason we will sometimes say that $P_{\rho}$ is a state.  

When a measurement of an observable $A$ is performed, a single outcome $\af \in \sigma(A)$ is obtained. We thus distinguish between the objective event $A^{\af}$ that happens when $A$ is measured, and the subjective event $A^{\Delta}$ corresponding to the information the experimentalist extracts from this procedure. Let's emphasize these definitions.

\begin{definition}[Subjective and objective events]\label{def: events} An observable event in a system $\mathfrak{S}$ consists of a pair $A^{\Delta} \equiv (\Delta;A)$, where $A$ is an observable of $\mathfrak{S}$ and $\Delta$ is a subset of the spectrum $\sigma(A)$ of $A$. An observable event $(\Delta;A)$ is said to be objective if  $\Delta=\{\af\}$ for some $\af \in \sigma(A)$, and it is said to be subjective otherwise.
\end{definition}

We will assume that an objective event $A^{\Delta}$ satisfies $\Delta \neq \oldemptyset$, unless explicitly stated differently, because the subjective event $A^{\oldemptyset}$, despite being mathematically important, has no physical meaning. We say that an event $A^{\Delta}$ ``has happened'' or ``has occurred'' to indicate that $A$ has been measured and that the only information acquired by the experimentalist from this procedure is that some value lying in $\Delta$ was obtained. We say that $P_{\rho}^{A}(\Delta) \equiv P_{\rho}(A^{\Delta})$ is the probability of this event for a system in the state $\rho$, and $T_{(\Delta;A)}(\rho)$ is the state of the system immediately after its occurrence (see definition~\ref{def: system}). The fact that the state is updated by the subjective event implies that some degree of subjectivity must be allowed for states. We will discuss it in more detail in section~\ref{sec: categoryOfObservables}.

Events associated with different observables are, by construction, different --- they are different ordered pairs ---, and $\mathbb{E}$ does not establish any connection between them, since $\mathbb{E}$ is simply a union of pairwise disjoint sets. However, despite being necessarily different as ordered pairs, events associated with different observables can be statistically equivalent, i.e., they can be equally probable with respect to all states. It suggests the following definition.

\begin{definition}[Statistical equivalence]\label{def: statisticalEquivalence} Two events $A^{\Delta}$, $B^{\Sigma}$ are said to be statistically equivalent if $P_{\rho}(A^{\Delta}) = P_{\rho}(B^{\Sigma})$ for each state $\rho$.
\end{definition}

Statistical equivalence canonically induces an equivalence relation $\sim_{\mathcal{S}}$ in $\mathbb{E}$, so we naturally have equivalence classes of events. These classes are constructed in the spirit postulate~\ref{ax: separability} --- we group together all events that cannot be distinguished by states ---, so the coset $\mathbb{E}_{\mathcal{S}} \equiv \mathbb{E}/_{\sim_{\mathcal{S}}}$ is certainly important to our discussion.  $\mathbb{E}_{\mathcal{S}}$ connects the non-intersecting Boolean algebras that construct $\mathbb{E}$, enabling us to go from one algebra to another, and we will see later that these Boolean algebras induce a structure of partial Boolean algebra in $\mathbb{E}_{\mathcal{S}}$, which in turn can be embedded into the set of observables $\mathcal{O}$. Observables associated with equivalent events will be called projections, as in quantum mechanics \cite{kochen1967problem}.

Now let's take a look at the state update. To begin with, let's explain the notation we will use throughout the paper:
\begin{definition}[Updated state]\label{def: updatedState} Given a state $\rho$ and an event $A^{\Delta}$, we denote by $P_{\rho}( \ \cdot \ \vert A^{\Delta})$ the mapping $P_{T_{(\Delta;A)}(\rho)}: \mathbb{E} \rightarrow [0,1]$. Similarly, the component $B \in \mathcal{O}$ of $P_{T_{(\Delta;A)}(\rho)}$, namely the probability measure $P_{T_{(\Delta;A)}(\rho)}^{B} \equiv P_{T_{(\Delta;A)}(\rho)}( \ \cdot \ ;B)$, is denoted by $P_{\rho}(B^{( \ \cdot \ )}\vert A^{\Delta})$.    
\end{definition}

Clearly, $P_{\rho}( \ \cdot \ \vert A^{\Delta})$ (or equivalently $T_{(\Delta;A)}(\rho)$) is the state of the system when all we know about it is both that the event $A^{\Delta}$ has occurred and that its state immediately before this event was $\rho$. It is easy to justify that, for any component $B \in \mathcal{O}$ of $P_{\rho}( \ \cdot \ \vert A^{\Delta})$ and any $\Sigma \subset \sigma(B)$, we must have
\begin{align}
    P_{\rho}(B^{\Sigma}\vert A^{\Delta}) &=\label{eq: subjective} \sum_{\af \in \Delta} P_{\rho}^{A}(\{\af\} \vert \Delta)P_{\rho}(B^{\Sigma}\vert A^{\af}),
\end{align}
where $P_{\rho}^{A}(\{\af\} \vert \Delta) \doteq  \frac{P_{\rho}^{A}(\{\af\} \cap \Delta)}{P_{\rho}^{A}(\Delta)}$.  In fact, subjective events are, well, subjective, so they are not subject to the physical laws governing the system in question. There is absolutely no distinction between a subjective event in a system $\mathfrak{S}$ and any other probabilistic event we consider in our daily lives, such as a dice roll. The ``logic'' of subjective events is determined by our metalanguage, where  probability theory gives the rules. If all the information the experimentalist has is that, in a measurement of $A$, some outcome lying in $\Delta$ was obtained, then she knows that the state $\rho$ of the system has been updated to $T_{(\af;A)}(\rho)$ for one, and only one, $\af \in \Delta$. She also knows that the probability of occurrence of $A^{\af}$ (and consequently of the update $T_{(\af;A)}(\rho)$) for a system in the state $\rho$ is $P_{\rho}^{A}(\{\af\})$. Reasoning in statistical terms (and using  the probability space $(\sigma(A),\mathscr{P}(\sigma(A)),P_{\rho})$ to do it), she concludes that, under the evidence that $A^{\Delta}$ has happened, the probability that $A^{\af}$ has occurred (and consequently that the updated state is $T_{(\af;A)}(\rho)$) is given by the marginal probability $P_{\rho}^{A}(\{\af\} \vert \Delta) \doteq \frac{P_{\rho}(\{\af\} \cap \Delta)}{P_{\rho}^{A}(\Delta)}$. Hence, being in the state $T_{(\Delta;A)}(\rho)$ is equivalent to being in the state $T_{(\af;A)}(\rho)$ with probability $P_{\rho}^{A}(\{\af\} \vert \Delta)$, and therefore, for a system in the state $T_{(\Delta;A)}(\rho)$, the probability that a measurement of $B$ returns some outcome lying in $\Sigma \subset \sigma(B)$ must be given by equation~\ref{eq: subjective}. This equation has to be valid for each event $B^{\Sigma}$, which means that the mapping $P_{\rho}( \ \cdot \ \vert A^{\Delta}): \mathbb{E} \rightarrow [0,1]$ has to be a convex combination of the mappings $P_{\rho}( \ \cdot \ \vert A^{\af}): \mathbb{E} \rightarrow [0,1]$,  $\af \in \Delta$, i.e., 
\begin{align}
    P_{\rho}( \ \cdot \ \vert  A^{\Delta}) &= \sum_{\af \in \Delta} P_{\rho}^{A}( \{\af\}  \vert \Delta)P_{\rho}(\ \cdot \ \vert A^{\af}).
\end{align}
We know that two states $\rho$, $\rho'$ are equal if and only if  $P_{\rho} = P_{\rho'}$, so it is convenient to internalize this equation, i.e., to express it as a correspondence between states. To do so, we need to assume that $\mathcal{S}$ is at least a convex set with zero element. Furthermore, to connect convex combinations in $\mathcal{S}$ with convex combinations in the set of functions $\mathbb{E} \rightarrow [0,1]$ we need to require that the assignment $\rho \mapsto P_{\rho}$ is convex. Note that these requirements are not constraints to simplify the definition of system; equation~\ref{eq: subjective} follows from the very idea of subjective event --- it must be valid in $\mathfrak{S}$ for the same reason it is valid in a roll dice ---, and the other requirements are made for theoretical convenience. 

The validity of equation~\ref{eq: subjective} cannot depend on how many observables we are going to measure after preparing the state of the system. That is, if, instead of a single event $B^{\Sigma}$, we are evaluating the probability of obtaining a sequence $(B_{i}^{\Sigma_{i}})_{i=1}^{m}$ of observable events in a sequential measurement of $B_{1},\dots, B_{m}$ respectively, we must have
\begin{align}
    P_{\rho}((B_{i}^{\Sigma_{i}})_{i=1}^{m}\vert A^{\Delta}) &=\label{eq: subjectiveSequential} \sum_{\af \in \Delta} P_{\rho}^{A}(\{\af\} \vert \Delta)P_{\rho}((B_{i}^{\Sigma_{i}})_{i=1}^{m}\vert A^{\af})
\end{align}
for exactly the same reason why equation~\ref{eq: subjective} has to be satisfied. Note that the argument we presented to justify equation~\ref{eq: subjective} has nothing to do with the event $B^{\Sigma}$, only with the subjective update $T_{\Delta,A}(\rho)$, and this is why we can apply the same reasoning here. So, to continue our discussion, let's make the idea of a sequence of observable events precise.

\begin{definition}[Sequential observable event]\label{def: sequentialEvent} A sequential observable event consists of  a sequence $(A_{i}^{\Delta_{i}})_{i=1}^{m}$ of observable events. For each sequence $(A_{i}^{\Delta_{i}})_{i=1}^{m}$ we associate the mapping
\begin{align}
    T_{(\underline{\Delta};\underline{A})} \equiv T_{(\Delta_{1}, \dots ,\Delta_{m}; A_{1},\dots, A_{m})} \doteq T_{(\Delta_{m};A_{m})} \circ \dots \circ T_{(\Delta_{1};A_{1})},
\end{align}
which encodes state update determined by $(A_{i}^{\Delta_{i}})_{i=1}^{m}$. Given a state $\rho$ and a sequence $(A_{i}^{\Delta_{i}})_{i=1}^{m}$, we denote by $P_{\rho}( \ \cdot \ \vert (A_{i}^{\Delta_{i}})_{i=1}^{m})$ the mapping $\mathbb{E} \rightarrow [0,1]$ associated with the state $ T_{(\Delta_{1}, \dots ,\Delta_{m}; A_{1},\dots, A_{m})}(\rho)$, i.e.,
\begin{align}
    P_{\rho}( \ \cdot \ \vert (A_{i}^{\Delta_{i}})_{i=1}^{m}) \equiv P_{T_{(\Delta_{1}, \dots ,\Delta_{m}; A_{1},\dots, A_{m})}(\rho)}( \ \cdot \ ).
\end{align}
\end{definition}
Note that, if $m=1$, definition~\ref{def: sequentialEvent} reduces to our previous definition. 

Consider a sequence of three events  $(A_{1}^{\Delta_{1}},A_{2}^{\Delta_{2}},A_{3}^{\Delta_{3}})$. For a system in the state $\rho$, the probability of occurrence of $A_{1}^{\Delta_{1}}$, i.e., the probability that a measurement of $A_{1}$ returns some outcome lying in $\Delta_{1}$, is $P_{\rho}(A_{1}^{\Delta_{1}})$, and the occurrence of this event leads us to the state $\rho_{1} \equiv T_{(\Delta_{1};A_{1})}(\rho)$. Similarly, for a system in the state $\rho_{1}$, the probability of occurrence of $A_{2}^{\Delta_{2}}$ is $P_{\rho_{1}}(A_{2}^{\Delta_{2}})=P_{\rho}(A_{2}^{\Delta_{2}} \vert A_{1}^{\Delta_{1}})$, and this event updates the state to $\rho_{2} \equiv T_{(\Delta_{2},A_{2})}(\rho_{1}) = T_{(\Delta_{1},\Delta_{2}; A_{1},A_{2})}(\rho)$. Finally, the probability assigned to $A_{3}^{\Delta_{3}}$ by the state $\rho_{2}$ is $P_{\rho_{2}}(A_{3}^{\Delta_{3}}) = P_{\rho}(A_{3}^{\Delta_{3}} \vert (A_{k}^{\Delta_{k}})_{k=1}^{2})$. According to definition \ref{def: system}, the probability of an event is completely determined by the state of the system immediately before its occurrence, so the probability of obtaining the sequence of events $(A_{i}^{\Delta_{i}})_{i=1}^{3}$ in a sequential measurement of $A_{1},A_{2},A_{3}$ respectively must be defined as
\begin{align*}
    P_{\rho}((A_{i}^{\Delta_{i}})_{i=1}^{3}) &\doteq \prod_{i=1}^{3} P_{\rho}(A_{i}^{\Delta_{i}} \vert (A_{k}^{\Delta_{k}})_{k=1}^{i-1})
    \\
    &= P_{\rho}(A_{1}^{\Delta_{1}})P_{\rho}(A_{2}^{\Delta_{2}} \vert A_{1}^{\Delta_{1}})P_{\rho}(A_{3}^{\Delta_{3}} \vert (A_{k}^{\Delta_{k}})_{k=1}^{2}),
\end{align*}
where $P_{\rho}(A_{i}^{\Delta_{i}} \vert (A_{k}^{\Delta_{k}})_{k=1}^{i-1}) \equiv P_{\rho}(A_{i}^{\Delta_{i}})$ if $i=1$. The general definition goes as follows.

\begin{definition}[Probability of a sequential event]\label{def: sequentialProbability} Let $(A_{i}^{\Delta_{i}})_{i=1}^{m}$ be a sequence of observable events. Given any state $\rho$, we define, with a slight abuse of notation,
\begin{align}
    P_{\rho}((A_{i}^{\Delta_{i}})_{i=1}^{m}) &\doteq \prod_{i=1}^{m} P_{\rho}(A_{i}^{\Delta_{i}} \vert (A_{k}^{\Delta_{k}})_{k=1}^{i-1}).
\end{align}
For a system in the state $\rho$, this number corresponds to the probability of obtaining the sequence of events $(A_{i}^{\Delta_{i}})_{i=1}^{m}$ in a sequential measurement of $A_{1},\dots,A_{m}$ respectively.
\end{definition}

We can finally make equation~\ref{eq: subjectiveSequential} precise:
\begin{postulate}[Subjective update]\label{ax: subjectiveUpdate} $\mathcal{S}$ is a convex set, and the mapping $\mathcal{S} \ni \rho \mapsto P_{\rho}$ is convex. Furthermore, for any event $A^{\Delta}$, any state $\rho$ and any sequence of events $(B_{i}^{\Sigma_{i}})_{i=1}^{m}$,
\begin{align}
    P_{\rho}((B_{i}^{\Sigma_{i}})_{i=1}^{m}\vert A^{\Delta}) &=\label{eq: subjectiveSequentialPostulate} \sum_{\af \in \Delta} P_{\rho}^{A}(\{\af\} \vert \Delta)P_{\rho}((B_{i}^{\Sigma_{i}})_{i=1}^{m}\vert A^{\af})
\end{align}
where $P_{\rho}^{A}(\{\af\}\vert \Delta) =  \frac{P_{\rho}^{A}(\{\af\} \cap \Delta)}{P_{\rho}^{A}(\Delta)}$.
\end{postulate}

As we discussed, this postulate does not aim to simplify the notion of system, nor is it one of the key features of quantum systems. Subjective events are subject only to our metalanguage, not to the laws governing the hypothetical physical system we are trying to describe, and equation~\ref{eq: subjectiveSequentialPostulate} reflects the simple fact that we reason in statistical terms. The other constraints are imposed  for theoretical convenience: the set of mappings $\mathbb{E} \rightarrow [0,1]$ has an algebraic structure that enables us to establish relations like equation~\ref{eq: subjective}, and by assuming that $\mathcal{S}$ and $P$ are convex we are naturally incorporating these relations into $\mathcal{S}$, as we do in proposition~\ref{prop: subjectiveUpdate}.

As we said, we internalize equation~\ref{eq: subjective} as follows:

\begin{proposition}[Subjective Update]\label{prop: subjectiveUpdate} Let  $T_{(\Delta;A)}$ be the update determined by an event $(\Delta;A)$. Then, for any state $\rho$,
\begin{align}
    T_{(\Delta;A)}(\rho) &=\label{eq: subjectiveUpdateProp} \sum_{\af \in \Delta}  P_{\rho}^{A}(\{\af\}\vert \Delta) T_{(\af;A)}(\rho),
\end{align}
or equivalently
\begin{align}
    P_{\rho}( \ \cdot \ \vert  A^{\Delta}) &= \sum_{\af \in \Delta} P_{\rho}^{A}( \{\af\}  \vert \Delta)P_{\rho}(\ \cdot \ \vert A^{\af}),
\end{align}
where $P_{\rho}^{A}(\{\af\}\vert \Delta) =  \frac{P_{\rho}^{A}(\{\af\} \cap \Delta)}{P_{\rho}^{A}(\Delta)}$.
\end{proposition}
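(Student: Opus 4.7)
The plan is to deduce the state equation from Postulate~\ref{ax: subjectiveUpdate} together with Postulate~\ref{ax: separability}, by showing that the probability assignment of the convex combination on the right-hand side of equation~\ref{eq: subjectiveUpdateProp} coincides with that of $T_{(\Delta;A)}(\rho)$ on every observable event. The two formulations in the statement are clearly equivalent (one is just Definition~\ref{def: updatedState} applied to the other), so I would focus on proving the second one and then invoke separability to internalize it as an equation between states.

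First, I would fix an arbitrary observable event $B^{\Sigma} \in \mathbb{E}$ and specialize the sequential version of the subjective update, equation~\ref{eq: subjectiveSequentialPostulate}, to the length-one sequence $(B^{\Sigma})$; this immediately gives
\begin{align*}
P_{\rho}(B^{\Sigma}\vert A^{\Delta}) = \sum_{\af \in \Delta} P_{\rho}^{A}(\{\af\}\vert \Delta)\, P_{\rho}(B^{\Sigma}\vert A^{\af}).
\end{align*}
Since $B^{\Sigma}$ was arbitrary, this identity holds componentwise in $\mathbb{E}$, which is exactly the second displayed equation in the statement.

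Next, I would internalize this equality at the level of states. Define $\rho' \doteq \sum_{\af \in \Delta} P_{\rho}^{A}(\{\af\}\vert \Delta)\, T_{(\af;A)}(\rho)$, which is a well-defined element of $\mathcal{S}$ because $\mathcal{S}$ is convex (Postulate~\ref{ax: subjectiveUpdate}) and the coefficients $P_{\rho}^{A}(\{\af\}\vert \Delta)$ form a convex combination (they are non-negative and sum to $1$, using the convention $0/0=0$ when $P_{\rho}^{A}(\Delta)=0$, in which case $T_{(\Delta;A)}(\rho)=0$ and the identity is trivial). By the convexity of the assignment $\rho \mapsto P_{\rho}$, also part of Postulate~\ref{ax: subjectiveUpdate}, I get
\begin{align*}
P_{\rho'}(B^{\Sigma}) = \sum_{\af \in \Delta} P_{\rho}^{A}(\{\af\}\vert \Delta)\, P_{T_{(\af;A)}(\rho)}(B^{\Sigma}) = \sum_{\af \in \Delta} P_{\rho}^{A}(\{\af\}\vert \Delta)\, P_{\rho}(B^{\Sigma}\vert A^{\af}),
\end{align*}
where the last step uses Definition~\ref{def: updatedState}. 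Combining this with the previous identity yields $P_{\rho'}(B^{\Sigma}) = P_{\rho}(B^{\Sigma}\vert A^{\Delta}) = P_{T_{(\Delta;A)}(\rho)}(B^{\Sigma})$ for every $B^{\Sigma} \in \mathbb{E}$, so by separability (Postulate~\ref{ax: separability}, in the form of equation~\ref{eq: stateAndP}) we conclude $T_{(\Delta;A)}(\rho) = \rho'$, which is equation~\ref{eq: subjectiveUpdateProp}.

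There is no real obstacle here: the proof is essentially a bookkeeping argument. The only subtlety worth flagging is the degenerate case $P_{\rho}^{A}(\Delta)=0$, where one needs the $0/0=0$ convention and the fact that $T_{(\Delta;A)}(\rho)$ then reduces to the null state $0 \in \mathcal{S}$ (by Definition~\ref{def: system}(a), since the probability of the event is zero); verifying that both sides vanish in this case is immediate.
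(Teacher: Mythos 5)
Your proof is correct and follows essentially the same route as the paper's: define the convex combination as a state via the convexity of $\mathcal{S}$, use the convexity of $\rho \mapsto P_{\rho}$ and the $m=1$ case of postulate~\ref{ax: subjectiveUpdate} to show it assigns the same probabilities as $T_{(\Delta;A)}(\rho)$ to every event, and conclude by separability. Your explicit invocation of equation~\ref{eq: stateAndP} and your remark on the degenerate case $P_{\rho}^{A}(\Delta)=0$ only make explicit what the paper leaves implicit.
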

\begin{proof}
According to postulate~\ref{ax: subjectiveUpdate}, the set of states $\mathcal{S}$ is convex, so the state in the right-hand side of equation~\ref{eq: subjectiveUpdateProp} is well defined. Denote by $\rho_{\Delta}$ this state. According to the same postulate, the function $\rho \mapsto P_{\rho}$ is  convex, therefore
\begin{align*}
    P_{\rho_{\Delta}} = \sum_{\af \in \Delta}  P_{\rho}^{A}(\{\af\}\vert \Delta) P_{T_{(\af;A)}(\rho)}.
\end{align*}
Postulate~\ref{ax: subjectiveUpdate} now implies that, for any event $B^{\Sigma}$, $P_{\rho_{\Delta}}(B^{\Sigma})= P_{T_{(\Delta;A)}(\rho)}(B^{\Sigma})$, which in turn is satisfied if and only if $\rho_{\Delta}=T_{(\Delta;A)}(\rho)$. It completes the proof.
\end{proof}

For the same reason why equation~\ref{eq: subjectiveSequentialPostulate} must be satisfied, definition~\ref{def: sequentialProbability} should induce a probability measure in the power set of $\sigma(A_{1}) \times \dots \times \sigma(A_{m})$. In fact, the conditions defining $\sigma$-algebras and probability measures, namely the Kolmogorov axioms \cite{klenke2020probability}, are precisely the conditions we need to ensure that our (sequential) subjective events behave as they should do: as psychological entities obeying probability theory. It follows from postulate~\ref{ax: subjectiveUpdate} that this is the case:

\begin{proposition}[Sequential measure]\label{prop: sequentialMeasure} Let $A_{1},\dots,A_{m}$ be observables. Given a state $\rho$, define a mapping $p_{\rho}( \ \cdot \ ;A_{1},\dots,A_{m}): \prod_{i=1}^{m}\sigma(A_{i}) \rightarrow [0,1]$ by
\begin{align}
    p_{\rho}(\af_{1},\dots,\af_{m};A_{1},\dots A_{m}) \doteq\label{eq: definingSequentialDistribution} P_{\rho}((A_{i}^{\af_{i}})_{i=1}^{m}),
\end{align}
where $(\af_{1},\dots,\af_{m})$ is any element of $\prod_{i=1}^{m}\sigma(A_{i})$ and the right hand side of equation is given by definition~\ref{def: sequentialProbability}. Then $p_{\rho}( \ \cdot \ ;A_{1},\dots,A_{m})$ is a probability distribution, and the probability measure $P_{\rho}( \ \cdot \ ;A_{1},\dots,A_{m})$ that it induces in the power set of $\prod_{i=1}^{m}\sigma(A_{i})$ satisfies, for any $\Delta_{1} \times \dots \times \Delta_{m} \subset \prod_{i=1}^{m}\sigma(A_{i})$,
\begin{align}
    P_{\rho}(\Delta_{1} \times \dots \times \Delta_{m};A_{1},\dots A_{m}) =\label{eq: definingSequentialMeasure} P_{\rho}((A_{i}^{\Delta_{i}})_{i=1}^{m}).
\end{align}
\end{proposition}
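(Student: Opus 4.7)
The plan is to reduce the entire proposition to establishing the product-set identity (equation~\ref{eq: definingSequentialMeasure}) for arbitrary $\Delta_1 \times \dots \times \Delta_m$. Once this is in hand, the remaining claims follow easily: nonnegativity of $p_\rho$ is immediate from Definition~\ref{def: sequentialProbability}, since $p_\rho(\alpha_1, \dots, \alpha_m; A_1, \dots, A_m)$ is a product of probabilities; the normalization $\sum_{(\alpha_1,\dots,\alpha_m)} p_\rho = 1$ follows by specializing to $\Delta_i = \sigma(A_i)$ and observing that in $P_\rho((A_i^{\sigma(A_i)})_{i=1}^m) = \prod_{i=1}^m P_\rho(A_i^{\sigma(A_i)} \vert (A_k^{\sigma(A_k)})_{k=1}^{i-1})$ every factor is of the form $P_\tau^{A_i}(\sigma(A_i)) = 1$; and the measure identity on $\mathscr{P}(\prod_i \sigma(A_i))$ then holds because a probability measure on a finite set is determined by its values on singletons.

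Next I would prove the product-set identity by induction on $m$. The base case $m=1$ is the statement $P_\rho^{A_1}(\Delta_1) = \sum_{\alpha_1 \in \Delta_1} P_\rho^{A_1}(\{\alpha_1\})$, which is just additivity of the probability measure $P_\rho^{A_1}$ on the finite set $\sigma(A_1)$. For the inductive step I first factor
\[
P_\rho((A_i^{\Delta_i})_{i=1}^m) = P_\rho(A_1^{\Delta_1}) \cdot P_\rho((A_i^{\Delta_i})_{i=2}^m \vert A_1^{\Delta_1}),
\]
using Definition~\ref{def: sequentialProbability} and the identification $P_\rho(\ \cdot\ \vert A_1^{\Delta_1}) \equiv P_{T_{(\Delta_1; A_1)}(\rho)}(\ \cdot\ )$ from Definition~\ref{def: sequentialEvent}. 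Then I invoke Postulate~\ref{ax: subjectiveUpdate} (equation~\ref{eq: subjectiveSequentialPostulate}) with $A = A_1$ and $(B_i^{\Sigma_i})_{i=1}^{m-1} = (A_i^{\Delta_i})_{i=2}^m$ to expand
\[
P_\rho((A_i^{\Delta_i})_{i=2}^m \vert A_1^{\Delta_1}) = \sum_{\alpha_1 \in \Delta_1} P_\rho^{A_1}(\{\alpha_1\} \vert \Delta_1)\, P_\rho((A_i^{\Delta_i})_{i=2}^m \vert A_1^{\alpha_1}).
\]
The coefficient $P_\rho(A_1^{\Delta_1}) \cdot P_\rho^{A_1}(\{\alpha_1\} \vert \Delta_1)$ collapses to $P_\rho^{A_1}(\{\alpha_1\}) = P_\rho(A_1^{\alpha_1})$ by the definition of the marginal (and the assumption $\alpha_1 \in \Delta_1$). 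Applying the inductive hypothesis to the state $T_{(\alpha_1; A_1)}(\rho)$ and the observables $A_2, \dots, A_m$ expresses $P_\rho((A_i^{\Delta_i})_{i=2}^m \vert A_1^{\alpha_1})$ as $\sum_{(\alpha_2, \dots, \alpha_m) \in \Delta_2 \times \dots \times \Delta_m} P_{T_{(\alpha_1; A_1)}(\rho)}((A_i^{\alpha_i})_{i=2}^m)$. Reassembling with Definition~\ref{def: sequentialProbability} yields
\[
P_\rho((A_i^{\Delta_i})_{i=1}^m) = \sum_{(\alpha_1, \dots, \alpha_m) \in \Delta_1 \times \dots \times \Delta_m} P_\rho((A_i^{\alpha_i})_{i=1}^m),
\]
closing the induction.

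The main difficulty is purely notational bookkeeping: one must carefully unfold the compact notation so that $P_\rho(\ \cdot\ \vert A_1^{\Delta_1})$ applied to a tail sequence agrees both with the probability assigned by the updated state $T_{(\Delta_1; A_1)}(\rho)$ and with the tail of the product defining $P_\rho((A_i^{\Delta_i})_{i=1}^m)$. No deep input is required beyond Postulate~\ref{ax: subjectiveUpdate}, which already does all the real work by guaranteeing that a subjective event at the beginning of a sequence decomposes into a convex combination of objective ones.
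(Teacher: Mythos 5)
Your proof is correct, and it rests on the same single ingredient as the paper's --- equation~\ref{eq: subjectiveSequentialPostulate} of postulate~\ref{ax: subjectiveUpdate} --- but it is organized differently. The paper first verifies equation~\ref{eq: definingSequentialMeasure} for boxes in which only one factor $\Delta_{k}$ is allowed to be a non-singleton (applying the postulate at position $k$, with the head of the sequence absorbed into the updated state and the tail playing the role of $(B_{i}^{\Sigma_{i}})$), and then asserts that the general box follows ``similarly,'' leaving the iteration over the remaining coordinates implicit. You instead run a clean induction on the length $m$ of the sequence, peeling off $A_{1}$, using the postulate once to resolve the subjective event $A_{1}^{\Delta_{1}}$ into its objective constituents, collapsing the coefficient $P_{\rho}(A_{1}^{\Delta_{1}})P_{\rho}^{A_{1}}(\{\af_{1}\}\vert\Delta_{1})$ to $P_{\rho}^{A_{1}}(\{\af_{1}\})$, and applying the inductive hypothesis to the tail in the state $T_{(\af_{1};A_{1})}(\rho)$. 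The payoff of your route is that it proves the identity for arbitrary boxes in one pass, with the base case reduced to plain additivity of $P_{\rho}^{A_{1}}$, so nothing is deferred to a ``similarly''; the paper's route has the minor advantage of exhibiting explicitly that the postulate can be invoked at any position in the sequence, not just the first. Your normalization and power-set arguments are the same as the paper's and are fine.
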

\begin{proof}
    Let $A_{1},\dots,A_{m}$ be observables, and let $\mathfrak{B} \subset \prod_{i=1}^{m} \sigma(A_{i})$ be the set of all boxes in $\prod_{i=1}^{m} \sigma(A_{i})$, that is, $\Sigma \in \mathfrak{B}$ iff $\Sigma = \Delta_{1}\times \dots \Delta_{m}$ for some sequence $\Delta_{i} \subset \sigma(A_{i})$, $i=1,\dots,m$. Given any state $\rho$, define a function $P_{\rho}( \ \cdot \ ;A_{1},\dots,A_{m}): \mathfrak{B} \rightarrow [0,1]$ via equation~\ref{eq: definingSequentialMeasure}. This function extends the mapping $p_{\rho}( \ \cdot \ ;A_{1},\dots,A_{m})$ defined in the statement of the proposition. Now fix a one-dimensional (or zero-dimensional) box $\prod_{i=1}^{m}\Delta_{i}$ in $\prod_{i=1}^{m}\sigma(A_{i})$, i.e., fix some $k \in \{1,\dots,m\}$ and set $\Delta_{i}$ as a singleton $\{\af_{i}\}$ whenever $i \neq k$, whereas $\Delta_{k}$ is any nonempty set. For simplicity, write  $\underline{\af} \equiv (\af_{1},\dots,\af_{m})$ and $\underline{A} \equiv (A_{1},\dots,A_{m})$. Also, denote by $\rho_{k-1}$ the state $(T_{(\af_{k-1},A_{k-1})} \circ \dots \circ T_{(\af_{1},A_{1})})(\rho)$. Then
    \begin{align*}
        \sum_{\underline{\af} \in \underline{\Delta}} p_{\rho}(\underline{\af};\underline{A}) &= \sum_{\af_{k} \in \Delta_{k}} p_{\rho}(\af_{1},\dots,\af_{m};A_{1},\dots,A_{m})
        \\
        &= \sum_{\af_{k} \in \Delta_{k}} p_{\rho}((A_{i}^{\af_{i}})_{i=1}^{k-1})P_{\rho_{k-1}}(A_{k}^{\af_{k}})P_{T_{(\af_{k},A_{k})}(\rho)}((A_{j}^{\af_{j}})_{j=k+1}^{m})
        \\
        &=  p_{\rho}((A_{i}^{\af_{i}})_{i=1}^{k-1}) P_{\rho_{k-1}}(A_{k}^{\Delta_{k}})\sum_{\af_{k} \in \Delta_{k}}P_{\rho_{k-1}}^{A_{k}}(\{\af_{k}\}\vert \Delta_{k})P_{T_{(\af_{k},A_{k})}(\rho)}((A_{j}^{\af_{j}})_{j=k+1}^{m})
        \\
        &=  p_{\rho}((A_{i}^{\af_{i}})_{i=1}^{k-1}) P_{\rho_{k-1}}(A_{k}^{\Delta_{k}})P_{T_{(\Delta_{k},A_{k})}(\rho)}((A_{j}^{\af_{j}})_{j=k+1}^{m})
        \\
        &= P_{\rho}(\{\af_{1}\}\times \dots\times\Delta_{k}\times \dots \times \{\af_{m}\};A_{1},\dots,A_{m}),
        \\
        &= P_{\rho}(\Delta_{1} \times \dots \times \Delta_{m};\underline{A}).
    \end{align*}
    Similarly, one can show that, given any box $\Delta_{1} \times \dots \times \Delta_{m} \subset \prod_{i=1}^{m}\sigma(A_{i})$, 
    \begin{align*}
        \sum_{\underline{\af} \in \underline{\Delta}} p_{\rho}(\af_{1},\dots,\af_{m};A_{1},\dots,A_{m}) =\label{eq: inducedMeasureInBox} P_{\rho}(\Delta_{1} \times \dots \times \Delta_{m};A_{1},\dots,A_{m}).
    \end{align*}
    If $\Delta_{1}\times\dots\times\Delta_{m} = \sigma(A_{1}) \times \dots \times \sigma(A_{m})$, the right hand side of equation~\ref{eq: inducedMeasureInBox} is equals one, therefore $p_{\rho}( \ \cdot \ ;A_{1},\dots,A_{m})$ is a probability distribution. Furthermore, the same equation says that $P_{\rho}( \ \cdot \ ;A_{1},\dots,A_{m})$ coincides with the measure induced by $p_{\rho}( \ \cdot \ ;A_{1},\dots,A_{m})$ in all boxes, which concludes the proof.
\end{proof}

We can now introduce the following definition.
\begin{definition}[Sequential probability measure]\label{def: sequentialMeasure} Let $\underline{A} \equiv (A_{1}, \dots, A_{m})$ be a sequence of observables. Given any state $\rho$, we denote by $P_{\rho}( \ \cdot \ ;\underline{A})  \equiv P_{\rho}( \ \cdot \ ;A_{1},\dots ,A_{m})$, or $P_{\rho}^{\underline{A}} \equiv P_{\rho}^{(A_{1},\dots,A_{m})}$, the (necessarily unique) probability measure in the power set of $\sigma(\underline{A}) \equiv \prod_{i=1}^{m}\sigma(A_{i})$ satisfying
\begin{align}
    P_{\rho}(\underline{\Delta};\underline{A}) \equiv P_{\rho}(\Delta_{1} \times \dots \times \Delta_{m}; A_{1}, \dots A_{m}) =  \prod_{i=1}^{m}P_{\rho}(A_{i}^{\Delta_{i}} \vert (A_{k}^{\Delta_{k}})_{i=1}^{i-1})
\end{align}
for any $\underline{\Delta} \equiv \Delta_{1} \times \dots \times \Delta_{m} \subset \prod_{i=1}^{m}\sigma(A_{i})$. Similarly, we denote by $p_{\rho}( \ \cdot \ ;\underline{A})  \equiv p_{\rho}( \ \cdot \ ;A_{1},\dots ,A_{m})$ (or $p_{\rho}^{\underline{A}} \equiv p_{\rho}^{(A_{1},\dots,A_{m})}$) the probability distribution associated with this measure.
\end{definition}

The following generalization of proposition~\ref{prop: subjectiveUpdate} is important.

\begin{proposition}[Sequential subjective update]\label{prop: subjectiveSequentialUpdate} Let $\underline{{A}} \equiv (A_{1},\dots,A_{m})$ be a sequence of observables, and fix some $\underline{\Delta} \equiv \Delta_{1}\times \dots\times\Delta_{m} \subset \sigma(A_{1}) \times \dots  \times \sigma(A_{m}) \equiv \sigma(\underline{A})$. Then, for any state $\rho$,
\begin{align}
    T_{(\underline{\Delta};\underline{A})}(\rho)&=\label{eq: subjectiveSequentialUpdate}\sum_{\underline{\af} \in \underline{\Delta}} P_{\rho}^{\underline{A}}(\{\underline{\af}\} \vert \underline{\Delta}) T_{(\underline{\af},\underline{A})}(\rho),
\end{align}
where $P_{\rho}^{\underline{A}}(\{\underline{\af}\} \vert \underline{\Delta})= \frac{P_{\rho}^{\underline{A}}(\{\underline{\af} \} \cap \underline{\Delta})}{P_{\rho}^{\underline{A}}( \underline{\Delta})}$, and where $T_{(\underline{\Delta};\underline{A})} \equiv T_{(\Delta_{m};A_{m})} \circ \dots T_{(\Delta_{1};A_{1})}$ (see definition \ref{def: sequentialEvent}).
\end{proposition}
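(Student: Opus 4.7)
The plan is to reduce the claimed identity to the finite additivity of a single probability measure on an augmented sequence of observables. Observe first that, by postulate~\ref{ax: separability} together with the convexity of the assignment $\rho \mapsto P_{\rho}$ guaranteed by postulate~\ref{ax: subjectiveUpdate}, the state equality in equation~\ref{eq: subjectiveSequentialUpdate} is equivalent to the scalar assertion that, for every observable event $B^{\Sigma}$,
$$P_{\rho}(B^{\Sigma} \vert (A_{i}^{\Delta_{i}})_{i=1}^{m}) = \sum_{\underline{\af} \in \underline{\Delta}} P_{\rho}^{\underline{A}}(\{\underline{\af}\}\vert \underline{\Delta})\, P_{\rho}(B^{\Sigma} \vert (A_{i}^{\af_{i}})_{i=1}^{m}).$$
I would take this scalar identity as the working target.

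Clearing denominators by multiplying both sides by $P_{\rho}^{\underline{A}}(\underline{\Delta})$ and using the telescoping product in definition~\ref{def: sequentialProbability} to absorb the extra event $B^{\Sigma}$ at the tail of the sequence, I recognize each side as a joint sequential probability for the augmented sequence $\underline{A}^{+} \equiv (A_{1},\dots,A_{m},B)$: the left becomes $P_{\rho}((A_{i}^{\Delta_{i}})_{i=1}^{m}, B^{\Sigma})$ and each summand on the right becomes $P_{\rho}((A_{i}^{\af_{i}})_{i=1}^{m}, B^{\Sigma})$. Invoking proposition~\ref{prop: sequentialMeasure} applied to $\underline{A}^{+}$ then recasts these as values of the sequential measure $P_{\rho}^{\underline{A}^{+}}$ on boxes in $\sigma(\underline{A}^{+})$: the left side equals $P_{\rho}^{\underline{A}^{+}}(\underline{\Delta}\times\Sigma)$ and the sum equals $\sum_{\underline{\af}\in\underline{\Delta}} P_{\rho}^{\underline{A}^{+}}(\{\underline{\af}\}\times\Sigma)$. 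The identity then follows immediately from finite additivity of $P_{\rho}^{\underline{A}^{+}}$ applied to the disjoint decomposition $\underline{\Delta}\times\Sigma = \bigsqcup_{\underline{\af}\in\underline{\Delta}}(\{\underline{\af}\}\times\Sigma)$.

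The main obstacle I anticipate is the degenerate case $P_{\rho}^{\underline{A}}(\underline{\Delta})=0$, in which the ratios defining $P_{\rho}^{\underline{A}}(\{\underline{\af}\}\vert\underline{\Delta})$ rely on the convention $0/0=0$, and one cannot literally multiply by $P_{\rho}^{\underline{A}}(\underline{\Delta})$. Here the cascade of updates in item~(a) of definition~\ref{def: system} forces $T_{(\underline{\Delta};\underline{A})}(\rho)=0$, since a vanishing factor must appear at some step, while every coefficient on the right-hand side of equation~\ref{eq: subjectiveSequentialUpdate} likewise vanishes, so both sides collapse to the null state and the equality is trivial. Once this boundary case is dispatched separately, the telescoping-plus-additivity argument above delivers equation~\ref{eq: subjectiveSequentialUpdate} in full generality, and the equivalence with its $P$-tested form recovers the second displayed equation as well.
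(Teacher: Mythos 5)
Your proposal is correct and follows essentially the same route as the paper's proof: both test the claimed state identity against an arbitrary event $B^{\Sigma}$, pass to the sequential measure of the augmented sequence $(A_{1},\dots,A_{m},B)$ via proposition~\ref{prop: sequentialMeasure}, and conclude by finite additivity over the disjoint boxes $\{\underline{\af}\}\times\Sigma$. Your separate treatment of the degenerate case $P_{\rho}^{\underline{A}}(\underline{\Delta})=0$ is a small point of extra care that the paper leaves to the $0/0=0$ convention, but it does not change the argument.
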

\begin{proof}
    Let $B^{\Sigma}$ be any event. Given any state $\rho$, write $P_{\rho}(\underline{\Delta} \times \Sigma;\underline{A},B) \equiv P_{\rho}(\Delta_{1} \times \dots \times \Delta_{m} \times \Sigma;A_{1},\dots,A_{m},B)$. Then, according to definition~\ref{def: sequentialMeasure} and proposition~\ref{prop: sequentialMeasure},
    \begin{align*}
        P_{T_{(\underline{\Delta};\underline{A})}(\rho)}(B^{\Sigma}) &=  \frac{P_{\rho}(\underline{\Delta} \times \Sigma; \underline{A}, B)}{P_{\rho}^{\underline{A}}(\underline{\Delta})} = \frac{1}{P_{\rho}^{\underline{A}}(\underline{\Delta})}\sum_{\underline{\af} \in \underline{\Delta}}P_{\rho}(\{\underline{\af}\} \times \Sigma; \underline{A}, B)
        \\
        &= \frac{1}{P_{\rho}^{\underline{A}}(\underline{\Delta})}\sum_{\underline{\af} \in \underline{\Delta}}P_{\rho}^{\underline{A}}(\{\underline{\af}\}) P_{T_{(\underline{\af},\underline{A})(\rho)}}(B^{\Sigma})
        \\
        &=\sum_{\underline{\af} \in \underline{\Delta}}P_{\rho}^{\underline{A}}(\{\underline{\af}\}\vert \underline{\Delta}) P_{T_{(\underline{\af},\underline{A})(\rho)}}(B^{\Sigma})
    \end{align*}
    This is valid for any event $B^{\Sigma}$, therefore equation~\ref{eq: subjectiveSequentialUpdate} is satisfied.
\end{proof}

Proposition~\ref{prop: subjectiveUpdate} tells us that the update caused by a subjective event is completely determined by the update due to its corresponding objective events, so we can focus on objective updates from now on.  Let $\rho$ be any state, $A^{\af}$ be any objective event, and consider the state $P_{\rho}( \ \cdot \ \vert A^{\af})$. Let's analyze the component $A$ of $P_{\rho}( \ \cdot \ \vert A^{\af})$. Recall that $(\sigma(A),\mathscr{P}(\sigma(A)),P_{\rho})$ is a probability space, and that each $A$-event $A^{\Delta}$ corresponds to an event $\Delta$ in this space.  As we have already discussed, according to probability theory, the probability of an event $\Delta \subset \sigma(A)$ occurring, under the evidence that another event $\Delta' \subset \sigma(A)$ has already occurred, is given by the conditional probability
\begin{align}
    P_{\rho}^{A}(\Delta \vert \Delta') =\label{eq: conditionalProbability} \frac{P_{\rho}^{A}(\Delta \cap \Delta')}{P_{\rho}^{A}(\Delta')}.
\end{align}
Hence, from a purely mathematical perspective, the most straightforward way of defining component $A$ of $P_{\rho}( \ \cdot \ \vert A^{\af})$ consists in defining it as the probability measure  $P_{\rho}^{A}( \ \cdot \  \vert \{\af\})$, that is, consists in requiring that, for any $\Delta \subset \sigma(A)$,
\begin{align}
    P_{\rho}(A^{\Delta} \vert A^{\af}) &=\label{eq: selfupdate} P_{\rho}^{A}(\Delta \vert \{\af\}) = \frac{P_{\rho}^{A}(\Delta \cap \{\af\})}{P_{\rho}^{A}(\{\af\})}.
\end{align}
Note that, together with proposition~\ref{prop: subjectiveUpdate}, equation~\ref{eq: selfupdate} implies that, for any pair $\Delta,\Delta' \subset \sigma(A)$,
\begin{align}
    P_{\rho}(A^{\Delta} \vert A^{\Delta'}) &=\label{eq: subjectiveSelfupdate}P_{\rho}^{A}(\Delta \vert \Delta') = \frac{P_{\rho}^{A}(\Delta \cap \Delta')}{P_{\rho}^{A}(\Delta')},
\end{align}
and consequently
\begin{align}
    P_{\rho}(\Delta' \times \Delta; A,A) &=\label{eq: sequentialSelf} P_{\rho}(A^{\Delta'}) \frac{P_{\rho}^{A}(\Delta \cap \Delta')}{P_{\rho}^{A}(\Delta')} = P_{\rho}^{A}(\Delta \cap \Delta') \equiv P_{\rho}(\Delta \cap\Delta';A).
\end{align}
Equation~\ref{eq: sequentialSelf} tells us that, if equation~\ref{eq: selfupdate} is valid, then the sequential event $(A^{\Delta'},A^{\Delta})$ and the event $A^{\Delta'} \wedge A^{\Delta} \equiv A^{\Delta' \cap \Delta}$ are equally probable with respect to all states. Recall that, according to probability theory \cite{klenke2020probability}, the sequence of events  $(A^{\Delta'},A^{\Delta})$, seen as a sequence of events in $(\sigma(A),\mathscr{P}(\sigma(A)),P_{\rho}^{A})$, updates $P_{\rho}^{A}$ in the same way $A^{\Delta} \wedge A^{\Delta'}$ does. In fact, the event $A^{\Delta'}$ updates $P_{\rho}^{A}$ to $P_{\rho_{\Delta'}}^{A} \equiv P_{\rho}^{A}( \ \cdot \ \vert \Delta')$, which in turn is lead by $A^{\Delta}$ to the marginal $P_{\rho_{\Delta'}}^{A}( \ \cdot \ \vert \Delta) \equiv P_{\rho_{(\Delta';\Delta)}}$. Therefore, for any $\Sigma \subset \sigma(A)$,
\begin{align*}
    P^{A}_{\rho_{(\Delta';\Delta)}}(\Sigma) &= \frac{P^{A}_{\rho_{\Delta'}}( \Sigma \cap \Delta )}{P_{\rho_{\Delta'}}^{A}(\Delta)} = \frac{P^{A}_{\rho}( \Sigma \cap \Delta \vert \Delta')}{P_{\rho}^{A}(\Delta \vert \Delta')} =  \frac{P^{A}_{\rho}(\Sigma \cap \Delta \cap \Delta')}{P^{A}_{\rho}(\Delta \cap \Delta')}
    \\
    &= P^{A}_{\rho}(\Sigma \vert \Delta \cap \Delta').
\end{align*}
Hence, If we accept equation~\ref{eq: sequentialSelf}, we should require that $T_{(\Delta;A)} \circ T_{(\Delta';A)} = T_{(\Delta \cap \Delta',A)}$. Just as proposition~\ref{prop: subjectiveUpdate} tells us that the subjective update $T_{(\Delta;A)}$ is determined by the objective updates $T_{(\af;A)}$, $\af \in \Delta$, proposition~\ref{prop: subjectiveSequentialUpdate} tells us that the sequential subjective update $T_{(\Delta',\Delta;A,A)}$ is determined by the sequential objective updates $T_{(\af',\af;A,A)}$, $(\af',\af) \in \Delta' \times \Delta$, thus, in order to obtain the equality $T_{(\Delta;A)} \circ T_{(\Delta';A)} = T_{(\Delta \cap \Delta',A)}$, it is sufficient to impose $T_{(\af;A)} \circ T_{(\af',A)} = T_{(\{\af\} \cap \{\af'\},A)}$, which in turn means that $T_{(\af;A)} \circ T_{(\af',A)} = \delta_{\af,\af'} T_{(\af;A)}$. Recall that we have a ``null state'' $0 \in \mathcal{S}$, and that, for any state $\rho$ and any observable $A$, $T_{(\oldemptyset,A)}(\rho)=0$, so by $T_{(\af;A)} \circ T_{(\af',A)} =0$ we mean that $T_{(\af;A)} \circ T_{(\af',A)} $ is the constant  function $\mathcal{S} \ni \rho \mapsto 0 \in \mathcal{S}$.

The physical meaning of equation~\ref{eq: selfupdate} is essentially that, whether or not some physically real disturbance is involved in the state update, this disturbance does not prevent us from  reasoning about sequential measurements of the same observable in statistical terms. To put it another way, it says that the influence that a measurement of an observable $A$ exerts on a future measurement of the same observable is purely informational, that is, the information  provided by a state $\rho$  about $A$, namely the probability measure $P_{\rho}( \ \cdot \ ;A)$, is updated according to the standard rule of marginal probability when $A$ is measured. This is the informational link mentioned in the introduction, but restricted to the particular case where the compatible observables are actually the same observable, and this is the first simplifying constraint that  $\mathfrak{S}$ must satisfy  to be a quantum system:

\begin{postulate}[Self-compatibility]\label{post: selfCompatibility} 
Let $A$ be any observable, and $\af$ be any element of its spectrum $\sigma(A)$. Then, given any state $\rho$, the component $A$ of $P_{\rho}( \ \cdot \ \vert A^{\af})$ is the marginal probability measure $P_{\rho}^{A}( \ \cdot \ \vert \{\af\})$, i.e., for any $\Delta \subset \sigma(A)$,
\begin{align}
    P_{\rho}(A^{\Delta}  \vert A^{\af}) = P_{\rho}^{A}(\Delta \vert \{\af\}) = \frac{P_{\rho}^{A}(\Delta \cap \{\af\})}{P_{\rho}^{A}(\Delta)}.
\end{align}
Furthermore, given any pair $\af,\af' \in \sigma(A)$, we have
\begin{align}
    T_{(\af;A)} \circ T_{(\af',A)} = \delta_{\af,\af'}T_{(\af;A)}.
\end{align}
\end{postulate}

As we mentioned in the introduction, this postulate is a particular case of postulate~\ref{post: compatibility}, so we could have proved it instead of postulating it. We pose it as a separate postulate just for the sake of argument.

\begin{lemma}[Self-compatibility]\label{lemma: selfCompatibility} Let $A$ be an observable and $\rho$ be a state. Then, for any pair $\Delta,\Delta' \subset \sigma(A)$,
\begin{align}
    P_{\rho}(A^{\Delta} \vert A^{\Delta'}) &=\label{eq: selfCompatibilitylemma}P_{\rho}^{A}(\Delta \vert \Delta'),
\end{align}
which in turn is equivalent to saying that
\begin{align}
    P_{\rho}(\Delta \times \Delta';A,A) =\label{eq: jointprobability} P_{\rho}(\Delta \cap \Delta';A).
\end{align}
Furthermore, for any pair of $A$-events $(\Delta;A)$, $(\Delta';A)$ we have $T_{(\Delta;A)} \circ T_{(\Delta';A)} = T_{(\Delta \cap \Delta';A)}$, which means that the following diagram commutes.
\begin{center}
    \begin{tikzcd}
        \mathcal{S} \arrow{dr}{T_{(\Delta \cap \Delta',A)}} \arrow{d}[swap]{T_{(\Delta';A)}}
        \\
        \mathcal{S} \arrow{r}[swap]{T_{(\Delta;A)}} & \mathcal{S}
    \end{tikzcd}
\end{center}
\end{lemma}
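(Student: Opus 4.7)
The lemma is essentially an extension of postulate~\ref{post: selfCompatibility} from singletons $\{\af\}$ to arbitrary subsets $\Delta'$ of $\sigma(A)$, so the natural strategy is to reduce each of the three claims to its objective counterpart by means of the convex-decomposition results already established, namely proposition~\ref{prop: subjectiveUpdate} and proposition~\ref{prop: subjectiveSequentialUpdate}.

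First I would prove equation~\ref{eq: selfCompatibilitylemma}. By proposition~\ref{prop: subjectiveUpdate}, $P_{\rho}(\,\cdot\,\vert A^{\Delta'})$ is the convex combination $\sum_{\af' \in \Delta'} P_{\rho}^{A}(\{\af'\}\vert \Delta')\, P_{\rho}(\,\cdot\,\vert A^{\af'})$. Evaluating this convex combination on $A^{\Delta}$ and applying postulate~\ref{post: selfCompatibility} to each term gives
\begin{align*}
P_{\rho}(A^{\Delta}\vert A^{\Delta'})
= \sum_{\af' \in \Delta'} \frac{P_{\rho}^{A}(\{\af'\})}{P_{\rho}^{A}(\Delta')}\cdot\frac{P_{\rho}^{A}(\Delta \cap \{\af'\})}{P_{\rho}^{A}(\{\af'\})}
= \frac{1}{P_{\rho}^{A}(\Delta')}\sum_{\af' \in \Delta'} P_{\rho}^{A}(\Delta \cap \{\af'\}),
\end{align*}
which collapses to $P_{\rho}^{A}(\Delta \cap \Delta')/P_{\rho}^{A}(\Delta') = P_{\rho}^{A}(\Delta \vert \Delta')$. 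The convention $0/0=0$ handles the degenerate case $P_{\rho}^{A}(\Delta')=0$.

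Next, the equivalence between equations~\ref{eq: selfCompatibilitylemma} and~\ref{eq: jointprobability} is immediate from definition~\ref{def: sequentialProbability}: $P_{\rho}(\Delta \times \Delta';A,A) = P_{\rho}(A^{\Delta})\, P_{\rho}(A^{\Delta'}\vert A^{\Delta})$, and substituting equation~\ref{eq: selfCompatibilitylemma} (with the roles of $\Delta$ and $\Delta'$ swapped) makes the factor $P_{\rho}^{A}(\Delta)$ cancel, leaving $P_{\rho}^{A}(\Delta \cap \Delta')$.

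Finally, for the commutativity of the diagram, I would invoke proposition~\ref{prop: subjectiveSequentialUpdate} applied to the two-step sequence $\underline{A}=(A,A)$ with $\underline{\Delta}=\Delta'\times\Delta$. Combined with the objective identity $T_{(\af;A)} \circ T_{(\af';A)} = \delta_{\af,\af'}T_{(\af;A)}$ from postulate~\ref{post: selfCompatibility}, only the diagonal terms $\af = \af' \in \Delta \cap \Delta'$ survive. Using the already-established equation~\ref{eq: jointprobability} to rewrite $P_{\rho}^{(A,A)}(\Delta'\times\Delta) = P_{\rho}^{A}(\Delta\cap\Delta')$, the surviving weights collapse to $P_{\rho}^{A}(\{\af\}\vert \Delta\cap\Delta')$, and one more application of proposition~\ref{prop: subjectiveUpdate} identifies the result with $T_{(\Delta\cap\Delta';A)}(\rho)$. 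Since $T_{(\Delta;A)}\circ T_{(\Delta';A)} = T_{(\Delta',\Delta;A,A)}$ by definition~\ref{def: sequentialEvent}, this yields the diagram.

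I do not foresee a genuinely hard step: the entire argument is bookkeeping that threads postulate~\ref{post: selfCompatibility} through the convex-decomposition propositions. The only mild subtlety is keeping track of the order $(\Delta',\Delta)$ versus $(\Delta,\Delta')$ when applying definition~\ref{def: sequentialEvent} (which composes in reverse order) and handling the edge case $P_{\rho}^{A}(\Delta\cap\Delta')=0$ via the $0/0=0$ convention.
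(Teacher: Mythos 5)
Your proposal is correct and follows essentially the same route as the paper: equation~\ref{eq: selfCompatibilitylemma} is obtained by expanding $P_{\rho}(\ \cdot\ \vert A^{\Delta'})$ via proposition~\ref{prop: subjectiveUpdate} and applying postulate~\ref{post: selfCompatibility} termwise (the paper does this in the discussion preceding the postulate and simply cites it), and the update identity is obtained exactly as in the paper's proof, via proposition~\ref{prop: subjectiveSequentialUpdate} together with $T_{(\af;A)} \circ T_{(\af';A)} = \delta_{\af,\af'}T_{(\af;A)}$. No gaps.
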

\begin{proof}
    We already proved equation~\ref{eq: selfCompatibilitylemma}, and showing that it is equivalent to equation~\ref{eq: jointprobability} is trivial. Finally, according to proposition~\ref{prop: subjectiveSequentialUpdate} and postulate~\ref{post: selfCompatibility}, for any state $\rho$ we have
    \begin{align*}
        (T_{(\Delta;A)} \circ T_{(\Delta';A)})(\rho) &= \sum_{\af \in \Delta}\sum_{\af' \in \Delta'} P_{\rho}^{(A,A)}(\{(\af,\af')\}\vert \Delta \times \Delta')(T_{(\af;A)} \circ T_{(\af',A)})(\rho)
        \\
        &= \sum_{\af \in \Delta \cap \Delta'} \frac{P_{\rho}^{(A,A)}(\{\af\} \times \{\af\})}{P_{\rho}^{(A,A)}(\Delta \times \Delta')}T_{(\af;A)}(\rho) =  \sum_{\af \in \Delta \cap \Delta'} \frac{P_{\rho}^{A}(\{\af\})}{P_{\rho}^{A}(\Delta \cap \Delta')}T_{(\af;A)}(\rho) 
        \\
        &= \sum_{\af \in \Delta}P^{A}_{\rho}(\{\af\}\vert \Delta \cap \Delta')T_{(\af;A)}(\rho) = T_{(\Delta \cap \Delta',A)}(\rho).
    \end{align*}
    This is true for any state $\rho$, therefore $T_{(\Delta;A)} \circ T_{(\Delta';A)} = T_{(\Delta \cap \Delta',A)}$, which completes the proof.
\end{proof}

If $\Delta=\{\af\}$ and $\Delta'=\{\af'\}$, $\Delta \cap \Delta = \{\af\}$ if $\af = \af'$ and $\Delta \cap \Delta =\oldemptyset$ otherwise, therefore:

\begin{corollary}[Repeatability of outcomes]\label{cor: repeatability} For any observable $A$, and any $\af,\af' \in \sigma(A)$ and any state $\rho$,
\begin{align}
    P_{\rho}(A^{\af}\vert A^{\af'}) &= \delta_{\af,\af'}.
\end{align}
\end{corollary}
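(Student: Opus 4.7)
The plan is to apply Lemma~\ref{lemma: selfCompatibility} with the singleton choices $\Delta = \{\af\}$ and $\Delta' = \{\af'\}$, and then split into the two cases already flagged in the text immediately preceding the corollary. Specializing the identity $P_{\rho}(A^{\Delta}\vert A^{\Delta'}) = P_{\rho}^{A}(\Delta\vert\Delta')$ would give
\[
P_{\rho}(A^{\af}\vert A^{\af'}) \;=\; P_{\rho}^{A}(\{\af\}\vert \{\af'\}) \;=\; \frac{P_{\rho}^{A}(\{\af\}\cap\{\af'\})}{P_{\rho}^{A}(\{\af'\})},
\]
so the whole question reduces to evaluating the intersection $\{\af\}\cap\{\af'\}$.

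If $\af \neq \af'$, the intersection is empty, the numerator is $P_{\rho}^{A}(\oldemptyset) = 0$, and the ratio is $0$, matching $\delta_{\af,\af'}$. If $\af = \af'$, the intersection is $\{\af\}$, numerator and denominator coincide, and the ratio is $1$; the $0/0 = 0$ convention stated at the end of the introduction silently handles the degenerate situation in which $P_{\rho}^{A}(\{\af'\})$ itself vanishes. Combining the two cases yields $P_{\rho}(A^{\af}\vert A^{\af'}) = \delta_{\af,\af'}$. There is no real obstacle here: the corollary is a one-line specialization of the lemma to singletons, and the only point worth flagging is the interaction with the $0/0$ convention when conditioning on a measure-zero outcome.
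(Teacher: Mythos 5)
Your proof is correct and is essentially the paper's own argument: the corollary is obtained by specializing Lemma~\ref{lemma: selfCompatibility} to the singletons $\Delta=\{\af\}$, $\Delta'=\{\af'\}$ and evaluating $\{\af\}\cap\{\af'\}$ case by case. Your aside about the $0/0$ convention is a fair observation (the paper itself glosses over the case $P_{\rho}^{A}(\{\af'\})=0$, where the update lands on the null state), but it does not change the substance of the argument.
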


In section~\ref{sec: basicFramework} we defined an eigenvalue of an observable $A$ as a real number $\af$ satisfying $P^{A}_{\rho}(\{\af\})=1$ for some state $\rho$, and we defined the point spectrum of $A$ as the collection of all its eigenvalues. It follows from the definition of spectrum (see section~\ref{sec: basicFramework}) that the point spectrum of an observable $A$ is included in its spectrum $\sigma(A)$. Now that we have postulate~\ref{post: selfCompatibility}, we can easily show that, in finite-dimensional systems, the point spectrum and the spectrum of $A$ coincide. In fact, let $A$ be any observable, and let $\af$ be any element of $\sigma(A)$. It follows from the definition of spectrum that there is a state $\rho$ satisfying $p_{\rho}(\af;A) \neq 0$, so let $\rho$ be such a state. Then, according to postulate~\ref{post: selfCompatibility}, we have
\begin{align*}
    P_{T_{(\af;A)}(\rho)}(A^{\af}) = P_{\rho}(A^{\af} \vert A^{\af}) = P_{\rho}^{A}(\{\af\}\vert \{\af\})= \frac{p_{\rho}(\af)}{p_{\rho}(\af)}=1.
\end{align*}
It proves the following lemma.
\begin{lemma}[Eigenvalues]\label{lemma: eigenvalues} Let $A$ be an observable in some finite-dimensional system. Then any element of its spectrum $\sigma(A)$ is an eigenvalue of $A$, i.e., $\af \in \sigma(A)$ if and only if $p_{\rho}(\af;A)=1$ for some state $\rho$.
\end{lemma}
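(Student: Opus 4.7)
The plan is to prove the two implications separately, with the nontrivial content concentrated in the ``only if'' direction, which will rely on postulate~\ref{post: selfCompatibility} to amplify a nonzero probability into probability one.

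For the easy direction, I would observe that if $p_{\rho}(\af;A)=1$ for some state $\rho$, then $\af$ is an eigenvalue by the definition given in section~\ref{sec: basicFramework}, and the point spectrum is always contained in $\sigma(A)$. This direction uses nothing beyond the definitions.

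For the substantive direction, suppose $\af \in \sigma(A)$. Since $\mathfrak{S}$ is finite-dimensional, I would invoke equation~\ref{eq: preEigenvector}, which guarantees the existence of a state $\rho$ with $p_{\rho}(\af;A) \neq 0$. Because this probability is strictly positive, the update $T_{(\af;A)}(\rho)$ is a non-null state, so its associated probability measure $P_{T_{(\af;A)}(\rho)}$ is well defined. I would then apply postulate~\ref{post: selfCompatibility} to compute
\begin{align*}
    p_{T_{(\af;A)}(\rho)}(\af;A) = P_{\rho}(A^{\af} \vert A^{\af}) = P_{\rho}^{A}(\{\af\} \vert \{\af\}) = \frac{p_{\rho}(\af;A)}{p_{\rho}(\af;A)} = 1,
\end{align*}
exhibiting a state witnessing that $\af$ is an eigenvalue.

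The only delicate point is checking that the updated state $T_{(\af;A)}(\rho)$ is genuinely a (non-null) state so that postulate~\ref{post: selfCompatibility} applies; this is precisely what the strict positivity $p_{\rho}(\af;A) > 0$ guarantees, and finite-dimensionality is exactly what ensures such a $\rho$ exists via equation~\ref{eq: preEigenvector}. Everything else reduces to unwinding definitions.
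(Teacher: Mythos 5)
Your proposal is correct and follows essentially the same route as the paper: the paper likewise takes a state $\rho$ with $p_{\rho}(\af;A)\neq 0$ (guaranteed by equation~\ref{eq: preEigenvector}) and applies postulate~\ref{post: selfCompatibility} to compute $P_{T_{(\af;A)}(\rho)}(A^{\af}) = P_{\rho}^{A}(\{\af\}\vert\{\af\}) = 1$. Your explicit remarks on the converse inclusion and on the non-nullity of the updated state are left implicit in the paper but add nothing new in substance.
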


\section{The category of observables and the completely mixed state}\label{sec: categoryOfObservables}
 As we mentioned in section~\ref{sec: observableEvents}, states must be subjective (or ``epistemic'' \cite{leifer2014real, spekkens2007toy}) to some extent because some of them result from subjective updates. For the sake of argument, we will imagine  that, at any given time, the physical system described by $\mathfrak{S}$ has a well-defined \textit{state of affairs}, and that $\rho \in \mathcal{S}$ describes the degree of knowledge of the experimentalist about it. Note that there is no contradiction between this assumption (which we make just for the sake of argument and that has no theoretical importance in our work) and the Kochen-Specker theorem (theorem~\ref{thm: kochenSpecker}) \cite{kochen1967problem}, since, as we mentioned in section~\ref{sec: basicFramework}, we do not commit ourselves to the realist view according to which observables represent properties  that are fully specified by the state of affairs of the system. The element in $\mathcal{S}$ representing the epistemological situation where the experimentalist has no information at all about the state of affairs --- or, for those who prefer not to talk about states of affairs, information about ``the system'' --- is called the \textbf{completely mixed state}, and it is denoted as $\emptyset$. As soon as we introduce functional relations we will define the completely mixed state properly --- all we have thus far is an idea and a symbol with no mathematical content. Saying that the state of $\mathfrak{S}$ is $\emptyset$ is equivalent to saying that the experimentalist has no information at all about the state of affairs, which in turn is equivalent to saying that this state of affairs can be anything. Finally, note that the general form of a state that the experimentalist is able to access by performing measurements on the system (which is the only process that definition~\ref{def: system} allows us to describe) is given by
\begin{align}
    \rho =\label{eq: accessibleStatesPre} (T_{(\Delta_{m};A_{m})} \circ \dots \circ T_{(\Delta_{1};A_{1})})(\emptyset)
\end{align}
for some sequence $(\Delta_{1};A_{1}),\dots,(\Delta_{m};A_{m})$ of observable events. We will assume that the set of states $\mathcal{S}$ (which, according to postulate~\ref{ax: subjectiveUpdate}, is convex) is the smallest convex set containing all these states. According to our epistemic approach to states, the completely mixed state must have a counterpart in $\mathcal{S}$, and it follows from the very definition of system (definition~\ref{def: system}) that, if $\emptyset\in\mathcal{S}$, then all states given by equation~\ref{eq: accessibleStatesPre} belong to $\mathcal{S}$. Hence, our assumption is essentially that the set of states $\mathcal{S}$ contains as few states as possible, by which we mean that it is the minimal convex set that includes the completely mixed state and all its possible updates. We introduce this simplifying condition, which is satisfied by quantum systems, in postulate~\ref{post: observables}.

Now, let's discuss functional relations between observables. As Kochen and Specker point out \cite{kochen1967problem}, functional relations are naturally defined in any physical system, and it is thanks to these relations that we can construct observables that are functions of other observables, as potential energy or ``position squared''. The definition of functional relation goes as follows.
\begin{definition}[Functional relation \cite{kochen1967problem}]\label{def: functionalRelation} Let $A$ and $B$ be observables in a system $\mathfrak{S}$, and let $f:  \sigma(A) \rightarrow \sigma(B)$ be a function. We say that $B$ is a function of $A$ via $f$, denoted $B=f(A)$, if $P_{\rho}( \ \cdot \ ;B)$ is the pushforward of $P_{\rho}( \ \cdot \ ;A)$ along $f$ for each state $\rho$. It means that, for any $\Sigma \subset \sigma(B)$,
    \begin{align}
        P_{\rho}(\Sigma; B) =\label{eq: KSdefinition} P_{\rho}(f^{-1}(\Sigma);A).
    \end{align}
\end{definition}

Recall that, for any observable $C$, $P_{\rho}( \ \cdot \ ;C)$ is just an alternative notation for $P^{C}_{\rho}$, so $P_{\rho}(\Delta;B) \equiv P^{B}_{\rho}(\Delta) \equiv P_{\rho}(B^{\Delta})$ (see definition \ref{def: eventsAndAssignments}). We say that there is a \textbf{functional relation} between two observables $A$ and $B$ if $B$ is a function of $A$ or \textit{vice-versa}. Also, we say that $B$ is a function of $A$ if $B$  is a function of $A$ via some function.

From the point of view of physics, it only makes sense to take definition \ref{def: functionalRelation} seriously if the function connecting two observables is unique, as in the particular cases of classical and quantum mechanics. The following lemma shows that, fortunately, this is the case. Note that we use postulate~\ref{post: selfCompatibility} to prove this result, which reinforces the plausibility of this requirement. 
\begin{lemma}[Functional relation]\label{lemma: functionalRelation} Let $A,B$ be observables in a system $\mathfrak{S}$, and suppose that $B$ is a function of $A$. Then the mapping $f:\sigma(A) \rightarrow \sigma(B)$ satisfying $P_{\rho}^{B} = P_{\rho}^{A} \circ f^{-1}$ for every state $\rho$ is unique and surjective.
\end{lemma}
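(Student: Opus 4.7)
The plan is to use Lemma~\ref{lemma: eigenvalues} (every element of the spectrum is an eigenvalue) to reduce both claims to easy manipulations with the pushforward relation $P^{B}_{\rho} = P^{A}_{\rho}\circ f^{-1}$. Lemma~\ref{lemma: eigenvalues} is the place where postulate~\ref{post: selfCompatibility} enters the argument.

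For \textbf{uniqueness}, suppose $f,g:\sigma(A)\to\sigma(B)$ both satisfy the defining pushforward relation. Fix $\af\in\sigma(A)$; by Lemma~\ref{lemma: eigenvalues} there exists a state $\rho$ with $p_{\rho}(\af;A)=1$, i.e.\ $P^{A}_{\rho}(\{\af\}^{c})=0$. Apply the pushforward condition for both $f$ and $g$ to $\Sigma=\{f(\af)\}$: from $f$ we get $P^{B}_{\rho}(\{f(\af)\})=P^{A}_{\rho}(f^{-1}(\{f(\af)\}))\geq P^{A}_{\rho}(\{\af\})=1$, so $P^{B}_{\rho}(\{f(\af)\})=1$. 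If $g(\af)\neq f(\af)$, then the same reasoning gives $P^{B}_{\rho}(\{g(\af)\})=1$ simultaneously, contradicting the fact that $P^{B}_{\rho}$ is a probability measure on disjoint singletons. Hence $f(\af)=g(\af)$ for every $\af\in\sigma(A)$.

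For \textbf{surjectivity}, suppose towards a contradiction that $\beta\in\sigma(B)\setminus f(\sigma(A))$. Then $f^{-1}(\{\beta\})=\oldemptyset$, so the pushforward relation yields $P^{B}_{\rho}(\{\beta\})=P^{A}_{\rho}(\oldemptyset)=0$ for every state $\rho$. But $\beta\in\sigma(B)$, so Lemma~\ref{lemma: eigenvalues} provides a state $\rho$ with $p_{\rho}(\beta;B)=1$, a contradiction. Therefore $f(\sigma(A))=\sigma(B)$.

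I do not anticipate any real obstacle: the whole proof amounts to the observation that eigenstates of $A$ pick out $f(\af)$ unambiguously through the pushforward condition, and that an element outside the image of $f$ would have probability zero under every state and so could not live in $\sigma(B)$. The only subtle point is recognising that one needs finite-dimensionality (and hence postulate~\ref{post: selfCompatibility}, via Lemma~\ref{lemma: eigenvalues}) to guarantee the existence of the eigenstates that do the work in both halves of the argument.
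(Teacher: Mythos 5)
Your proof is correct and follows essentially the same route as the paper's: both halves hinge on postulate~\ref{post: selfCompatibility} supplying, for each spectral value, a state concentrated on that value. The paper obtains this state explicitly as $T_{(\af;A)}(\rho)$ and phrases the uniqueness step via the conditional probabilities of lemma~\ref{lemma: selfCompatibility}, whereas you package the same fact as lemma~\ref{lemma: eigenvalues} and finish by noting that two disjoint singletons cannot both carry probability one; the surjectivity arguments are identical.
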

\begin{proof}
Let's begin by showing that, if $B$ is a function of $A$ via $f$, then $f$ is surjective. According to equation~\ref{eq: preEigenvector}, for any $\beta \in \sigma(B)$ there is a state $\rho_{\beta}$ satisfying $P_{\rho_{\beta}}^{B}(\{\beta\}) \neq 0$. Thus, if $B=f(A)$, we have $0 \neq P^{A}_{\rho}(f^{-1}(\{\beta\}))$, which means that $f^{-1}(\{\beta\}) \neq \oldemptyset$, or equivalently $\beta \in f(A)$. This is true for any $\beta \in \sigma(B)$, therefore $\sigma(B)= f(\sigma(A))$. Now let $f,g: \sigma(A) \rightarrow \sigma(B)$ be functions satisfying, for every state $\rho$, $P_{\rho}^{A} \circ f^{-1} = P_{\rho}^{A} \circ g^{-1}$. Then, for any $\af \in \sigma(A)$ and any state $\rho$, we have $P_{T_{(\af;A)}(\rho)}^{A} \circ f^{-1} = P_{T_{(\af;A)}(\rho)}^{A} \circ g^{-1}$, which, according to lemma~\ref{lemma: selfCompatibility}, is satisfied if and only if, for every $\beta \in \sigma(B)$,
\begin{align}
    \frac{P_{\rho}^{A}(f^{-1}(\beta) \cap \{\af\})}{P_{\rho}^{A}(\{\af\})}=\label{eq: lemmaRelation} \frac{P_{\rho}^{A}(g^{-1}(\beta) \cap \{\af\})}{P_{\rho}^{A}(\{\af\})}.
\end{align}
For any $\af \in \sigma(A)$, let $\rho_{\af}$ be a state such that $P_{\rho}^{A}(\{\af\}) \neq 0$ --- equation~\ref{eq: preEigenvector} ensures that this state exists. Clearly, $P_{\rho_{\af}}^{A}(f^{-1}(\beta) \cap \{\af\}) \neq 0$ if and only if $\af \in f^{-1}(\beta)$, whereas $P_{\rho_{\af}}^{A}(g^{-1}(\beta) \cap \{\af\}) \neq 0$ if and only if $\af \in g^{-1}(\beta)$. Hence, equation~\ref{eq: lemmaRelation} implies that, for every $\beta \in \sigma(B)$, $\af \in f^{-1}(\beta)$ iff $\af \in g^{-1}(\beta)$, which in turn is equivalent to saying that $f(\af) = g(\af)$. This is true for every $\af \in \sigma(A)$, therefore $f=g$. 
\end{proof}

It is worth emphasizing the following corollary, which immediately follows from the fact that a function $f:\sigma(A) \rightarrow \sigma(B)$ satisfying $B=f(A)$ must be surjective (lemma~\ref{lemma: functionalRelation}).
\begin{corollary} If $B$ is a function of $A$, $\vert \sigma(B)\vert \leq \vert \sigma(A)\vert$.   
\end{corollary}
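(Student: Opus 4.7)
The plan is essentially to invoke lemma~\ref{lemma: functionalRelation} and then apply an elementary set-theoretic fact about surjections. If $B$ is a function of $A$, then by definition there exists some $f:\sigma(A) \to \sigma(B)$ witnessing this, and the lemma just proved tells us that any such $f$ is automatically surjective.

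So the only substantive step is to observe that the existence of a surjection $f:\sigma(A) \to \sigma(B)$ between finite sets (recall the ambient system is finite-dimensional, so both spectra are finite by definition~\ref{def: finiteSystem}) forces $\vert\sigma(B)\vert \leq \vert\sigma(A)\vert$. This follows because the fibers $\{f^{-1}(\beta) : \beta \in \sigma(B)\}$ form a partition of $\sigma(A)$ into $\vert\sigma(B)\vert$ nonempty disjoint pieces, so $\vert\sigma(A)\vert = \sum_{\beta \in \sigma(B)} \vert f^{-1}(\beta)\vert \geq \vert\sigma(B)\vert$.

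There is no real obstacle: the corollary is essentially a repackaging of lemma~\ref{lemma: functionalRelation}, and the author already signals this by writing \emph{``immediately follows from the fact that a function $f:\sigma(A) \to \sigma(B)$ satisfying $B=f(A)$ must be surjective''}. The entire argument should be one or two lines.
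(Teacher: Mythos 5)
Your proposal is correct and matches the paper's own argument exactly: the corollary is stated as an immediate consequence of the surjectivity of $f$ established in lemma~\ref{lemma: functionalRelation}, combined with the elementary fact that a surjection between finite sets cannot increase cardinality. Nothing further is needed.
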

Let $B$ be a function of $A$ via $f$. We say that $B$ is a \textbf{coarse-graining} of $A$ if $f$ is non-injective, whereas $A$ is a \textbf{fine-graining} of $B$ if $B$ is a coarse-graining of $A$. Equivalently, $B=f(A)$ is a coarse graining of $A$ if $\vert \sigma(B)\vert < \vert \sigma(A)\vert$. It is worth to emphasize that, if $B=f(A)$, then, for any $\Delta \subset \sigma(B)$ and any state $\rho$,
\begin{align}
    P_{\rho}(\Delta; f(A)) =\label{eq: realKSdefinition} P_{\rho}(f^{-1}(\Delta);A).
\end{align}
It is important to note that, by construction, the spectral mapping theorem \cite{kadison1997fundamentalsI} holds in $\mathfrak{S}$. It means that, if $B=f(A)$, then the spectrum of $B$ is the set $f(\sigma(A)) \doteq \{f(\af): \af \in \sigma(A)\}$, i.e., 
\begin{align*}
    \sigma(f(A)) = f(\sigma(A)).
\end{align*}

The very definition of functional relation suggests that $\mathcal{O}$ must be closed under them. That is, if $A$ is an observable and $f$ is a real function on its spectrum, then there must be an observable $B \in \mathcal{O}$ satisfying $B=f(A)$. As Kochen and Specker say, given an observable $A$ and a function $f$, we define the observable $f(A)$ using equation~\ref{eq: realKSdefinition}, and one way of measuring $f(A)$ consists in measuring $A$ and evaluating $f$ in the resulting value \cite{kochen1967problem}. One who accepts this interpretation has no reason to suppose that $\mathcal{O}$ is not closed under functional relations. Furthermore, we know that this condition is satisfied by quantum systems. Hence, for theoretical convenience, we include this requirement in postulate~\ref{post: observables}, and we assume its validity from now on.

Let $n$ be the dimension of $\mathfrak{S}$ (see definition \ref{def: finiteSystem}). We say that an observable $A$ is \textbf{nondegenerate} if $\vert \sigma(A) \vert = n$, and $A$ is said to be \textbf{degenerate} otherwise. It immediately follows from definition~\ref{def: finiteSystem} that there is at least one nondegenerate observable in a system $\mathfrak{S}$. A nondegenerate observable $A$ cannot be a coarse-graining of any other observable of the system, since a fine-graining of $A$ would have more than $n$ outcomes, contradicting definition~\ref{def: finiteSystem}. Put differently, a nondegenerate observable $A$ cannot be refined: if $A = f(C)$ for some other observable $C$, then $f$ is injective. Nondegenerate observables are the most refined observables we can have, and, according to equation~\ref{eq: realKSdefinition}, they dictate how all their coarse-grainings behave: given any state $\rho$, if we know $P_{\rho}( \ \cdot \ ; A)$ then we also know $P_{\rho}( \ \cdot \ ;f(A))$ for any $f$ on $\sigma(A)$. Hence, any restriction upon nondegenerate observables will have a considerable impact on the system $\mathfrak{S}$. For a system $\mathfrak{S}$ to be a quantum system,  nondegenerate observables in $\mathfrak{S}$ must be as simple as a n-sided dice, by which we mean that, if we measure a nondegenerate observable many times without specifying the state of the system, the relative frequencies of all its possible values will be the same. Not specifying the state of the system is the same as being in the completely mixed state, so the condition is that, if $A$ is nondegenerate,  for any $\af\in \sigma(A)$ we have $p_{\emptyset}(\af;A) = \frac{1}{\vert \sigma(A)\vert} = \frac{1}{n}$, and consequently $P_{\emptyset}(\Delta;A) = \frac{\vert \Delta \vert}{n}$ for any $\Delta \subset \sigma(A)$. Equation~\ref{eq: KSdefinition} thus implies that, for any function $f(A)$ of $A$, where $A$ is nondegenerate, and any $\Delta \subset \sigma(f(A))$, we have $P_{\emptyset}(\Delta;f(A)) = \frac{\vert f^{-1}(\Delta) \vert }{n}$. What about observables that are not coarse grainings of nondegenerate ones? Fortunately, they do not exist in quantum systems, so we can assume that degenerate observables are simply coarse grainings of nondegenerate ones, which is equivalent to saying that any observable can be refined until it becomes nondegenerate. Under this assumption, any state is completely specified by its action on nondegenerate observable, and, in particular, the completely mixed state can be defined as follows:

\begin{definition}[Completely mixed state]\label{def: completelyMixed} Let $\mathfrak{S}$ be a $n$-dimensional system. A state $\emptyset$ in $\mathfrak{S}$ is said to be the completely mixed state if, for each nondegenerate observable $A$, $p_{\emptyset}^{A}$ is the uniform probability distribution on $\sigma(A)$, i.e., for any $\af \in \sigma(A)$,
\begin{align}
    p_{\emptyset}(\af;A)&= \frac{1}{\vert \sigma(A)\vert} = \frac{1}{n}.
\end{align}
\end{definition}

Note that, if a completely mixed state exists, it is unique. As we discussed above, the following definition is important.

\begin{definition}[Experimentally accessible state]\label{def: accessibleState} We say that a state $\rho\in \mathcal{S}$ is experimentally accessible if $\rho = \emptyset$ or 
\begin{align}
    \rho =\label{eq: accessibleStates} (T_{(\Delta_{m};A_{m})} \circ \dots \circ T_{(\Delta_{1};A_{1})})(\emptyset)
\end{align}
for some sequence $(\Delta_{1};A_{1}),\dots,(\Delta_{m};A_{m})$ of observable events. We denote by $\mathcal{S}_{\emptyset}$ the set of all experimentally accessible states.    
\end{definition}
We can finally introduce the following postulate.

\begin{postulate}[Observables and states]\label{post: observables} Any degenerate observable is a coarse-graining of a nondegenerate one, and every conceivable function of a nondegenerate observable has a counterpart in the theory, i.e., if $A$ is a nondegenerate observable and $f$ is a real function on its spectrum, then there is an observable $B \in \mathcal{O}$ satisfying $B = f(A)$ (definition~\ref{def: functionalRelation}). Furthermore, the completely mixed state (definition~\ref{def: completelyMixed}), denoted  $\emptyset$, belongs to $\mathcal{S}$, and every state is a convex combination of experimentally accessible states, which means that $\mathcal{S}$ is the convex hull of $\mathcal{S}_{\emptyset}$ (definition~\ref{def: accessibleState}).
\end{postulate}

Note that we can take all conceivable functions of an observable into account without ending up with an ill-defined set of observables just because the values of observables are assumed to be real numbers. 

Postulate~\ref{post: observables} implies that, if $A$ is any observable and $g$ is a function on $\sigma(A)$, there is an observable $B \in \mathcal{O}$ satisfying $B = g(A)$ --- note that we postulate it only for nondegenerate observables. In fact, let $A$ be any observable and $g$ any function on $\sigma(A)$. According to postulate~\ref{post: observables}, $A = h(C)$ for some nondegenerate observable $C$ and some function $h$, and the same postulate ensures that the observable $B \equiv (g \circ h)(C)$ exists. We have $\sigma(B) = (g \ci h)(\sigma(C)) = g(h(\sigma(C))) = g(\sigma(A))$, and, for any $\Sigma \subset \sigma(B)$ and $\rho \in \mathcal{S}$,
\begin{align*}
    P_{\rho}(\Sigma;B)&= P_{\rho}(\Sigma;(g\circ h)(C)) = P_{\rho}((g \circ h)^{-1}(\Sigma);C) = P_{\rho}(h^{-1}(g^{-1}(\Sigma));C)
    \\
    &= P_{\rho}(g^{-1}(\Sigma);h(C)) = P_{\rho}(g^{-1}(\Sigma);A),
\end{align*}
which shows that $B = g(A)$. It is also important to note that the definition of $B \doteq g(A)$ does not depend on the ``choice of basis'', by which we mean that, if $A = h'(D)$ for some other observable $D$, then $(f \circ h')(D) = B$. In fact, let $B'$ be the observable $(g \circ h')(D)$. Then $\sigma(B') = g(\sigma(A)) = \sigma(B)$, and, for any  $\Sigma \subset \sigma(B)$ and $\rho \in \mathcal{S}$, 
\begin{align*}
    P_{\rho}(\Sigma;B')&= P_{\rho}(g^{-1}(\Sigma);A) =  P_{\rho}(\Sigma;B),
\end{align*}
which implies that $B' = B$.

The properties of functional relations can be nicely illustrated from the perspective of category theory. Roughly speaking, a category consists of a class of objects, together with a class of composable arrows connecting some of them. Each arrow $f$ has a domain (or source) and a codomain (or target), which are objects of the category, and we write $A \xrightarrow{f} B$ (or $f: A \rightarrow B$) to indicate that $f$ is an arrow whose domain and codomain are $A$ and $B$ respectively. If the domain of $g$ and the codomain of $f$ are equal, i.e., if $A \xrightarrow{f} B$ and $B \xrightarrow{g} C$, the associative composition $A \xrightarrow{f} B \xrightarrow{g} C$ is well defined, and every object $A$ of the category has an identity arrow $A \xrightarrow{\text{id}_{A}} A$, which acts as a left and right unit w.r.t. the composition of arrows.  The prototypical example of a category is the category \textbf{Set}, whose objects are sets and whose arrows are functions between them. As this example shows, the collections of objects and arrows of a category are not necessarily sets (recall that, as the famous Russell's paradox shows, the collection of all sets cannot be a set \cite{halmos1960naive}), so the meaning we attribute to the term ``class'' here is the same it has in set theory \cite{levy2002basic}. The definition of category goes as follows \cite{awodey2010category, mac2013categories}.

\begin{definition}[Category]\label{def: category} Let $\mathcal{C}$ be a $6$-tuple $\mathcal{C} \equiv (\mathcal{C}_{0},\mathcal{C}_{1}, \circ, \text{dom}, \text{cod}, \text{id})$, where
\begin{itemize}
    \item[(a)] $\mathcal{C}_{0},\mathcal{C}_{1}$ are classes whose elements we call objects and arrows respectively.
    \item[(b)]  $\text{dom}$ and $\text{cod}$ are functions $\mathcal{C}_{1} \rightarrow \mathcal{C}_{0}$. Given an arrow $f$, $\text{dom}(f)$ and $\text{cod}(f)$ are said to be the domain and codomain of $f$ respectively, and we write $A \xrightarrow{f} B$ (equivalently, $f:A \rightarrow B$) to indicate that $f$ is an arrow whose domain and codomain are $A$ and $B$ respectively.
    \item[(c)] $\text{id}$ is a mapping $\mathcal{C}_{0} \rightarrow \mathcal{C}_{1}$. Given an object $A$, the arrow $\text{id}_{A} \equiv \text{id}(A)$ is said to be the identity arrow on $A$, and both its domain and codomain are the object $A$.
    \item[(d)] $\circ$ is a partial function which assigns, for each pair $(f,g)$ of arrows satisfying $\text{cod}(f)=\text{dom}(g)$, an arrow $g \circ f$ whose domain is $\text{dom}(f)$ and whose codomain is $\text{cod}(g)$. That is, if $A \xrightarrow{f} B$ and $B \xrightarrow{g} C$, then $A \xrightarrow{g \circ f} C$.
\end{itemize}
We say that $\mathcal{C}$ is a category if the following conditions are satisfied.
\begin{itemize}
    \item[(i)] The composition of arrows is associative, i.e., if $A \xrightarrow{f} B$, $B \xrightarrow{g} C$ and $C \xrightarrow{h} B$, then
    \begin{align*}
        h \circ (g \circ h) = (h \circ g) \circ f.
    \end{align*}
    \item[(ii)] Identity arrows are left and right units, i.e., for any arrow $A \xrightarrow{f} B$ we have
    \begin{align*}
         f \circ \text{id}_{A} = f = \text{id}_{B} \circ f.
    \end{align*}
\end{itemize}
\end{definition}
Diagrams are powerful tools in category theory. If we want to say, for instance, that an arrow $A \xrightarrow{h} C$ is the composition of $A \xrightarrow{f} B$ and $B \xrightarrow{g} C$, we simply say that the following diagram is commutative.
\begin{center}
    \begin{tikzcd}
        A\arrow[r,"f"]\arrow[dr,"h"] & B\arrow[d,"g"]\\
        &C
    \end{tikzcd}
\end{center}
Diagrams can be precisely defined as functors between categories \cite{awodey2010category, mac2013categories}, but this level of precision is not necessary here. For us, it is sufficient to understand a diagram as a formal representation of a certain collection of arrows, and a commutative diagram as one in which every path (i.e., every composition of arrows, which in turn is itself an arrow) with the same start and end points (namely domain and codomain) coincide. We say that a diagram commutes to indicate that it is a commutative diagram.

The set $\mathcal{O}$ of observables of a system $\mathfrak{S}$ can be seen as a category, also denoted by $\mathcal{O}$ --- with a slight abuse of notation ---, whose  objects are observables and whose  arrows are the functional relations between them, i.e., there is an arrow from $A$ to $B$ if and only if $B = f(A)$ for some (necessarily surjective) function $\sigma(A) \xrightarrow{f} \sigma(B)$. According to lemma~\ref{lemma: functionalRelation}, there is at most one arrow $A \rightarrow B$, which means that $\mathcal{O}$ is a \textbf{thin category} \cite{nlab2022thin}. We will denote the arrow $A \rightarrow B$ using the same symbol we use to denote the function $\sigma(A) \rightarrow \sigma(B)$ making $B$ a function of $A$, that is to say, if $B=f(A)$, then the arrow $A \rightarrow B$ will also be denoted by $f$.  If $A \xrightarrow{f} B$ and $B \xrightarrow{g} C$ are arrows (i.e., if $B = f(A)$ and $C = g(B)$), then their composition is the arrow $A \rightarrow C$ determined by the mapping $\sigma(A) \xrightarrow{g \ci f} \sigma(C)$, and the composition of arrows in $\mathcal{O}$ is associative because the composition of functions is associative. Finally, the identity arrow $\text{id}_{A}$ of an object (observable) $A$ is the arrow determined by the identity function $\sigma(A) \ni \af \xmapsto{\text{id}_{A}} \af \in \sigma(A)$. In fact, given any arrow $f$ whose codomain is $A$, namely  $B \xrightarrow{f} A$, the composition of functions  $\id_{A} \ci f$ is simply $f$, thus the composition of arrows $B \xrightarrow{f} A \xrightarrow{\id_{A}} A$ is simply $A \xrightarrow{f}B$; analogously, for any arrow $A \xrightarrow{g} C$, we have $ g \ci \id_{A} = g$. It enables us to introduce the following definition.

\begin{definition}[Category of observables]\label{def: categoryOfObservables} Let $\mathfrak{S}$ be a system. The category of observables of $\mathfrak{S}$ is the thin category whose objects are the observables of $\mathfrak{S}$ and whose arrows are the functional relations between them. It means that, if $A$, $B$ are observables in $\mathfrak{S}$, then there is an arrow $A \rightarrow B$ if and only if $B$ is a function of $A$.  
\end{definition}
As we mentioned before, if $\mathcal{O}$ denotes the set of observables of $\mathfrak{S}$, we denote the category of observables also by $\mathcal{O}$.

The category of observables enables us to treat different fine grainings of the same observable as different mathematical entities, something that is not possible at the level of observables. In fact, let $A,C,D$ be observables satisfying $f(C) = A = g(D)$ for some pair of functions $f,g$. Although $f(C)$, $A$, $g(D)$ are the same observable, each one of them is associated with a distinct arrow in the category of observables, as represented in the following diagram.
\begin{center}
    \begin{tikzcd}
        C \arrow[r, "f", swap] & A \arrow[loop above, "\id_{A}"] & D \arrow[l, "g"]
    \end{tikzcd}
\end{center}
This is interesting because  preserving functional relations when assigning values to observables is the main assumption behind Kochen-Specker theorem \cite{kochen1967problem}. Seeing $\mathcal{O}$ as a category can be convenient to discuss contextuality (the theoretical edifice that was constructed upon the Kochen-Specker theorem \cite{budroni2021review, amaral2018graph}) not only mathematically but also conceptually, insofar it can shed light on aspects of physical systems that are usually hidden. Exploring contextuality and the Kochen-Specker theorem is out of the scope of this paper, but we briefly discuss them in the appendix.

At first glance, postulate~\ref{post: observables} seems to reduce degenerate observables to a marginal condition: just as position and momentum are the real observable features of classical mechanics systems, whereas all other observables, functions of position and momentum by definition, are convenient theoretical constructions with no necessary physical significance, it seems that nondegenerate observables are the real features of $\mathfrak{S}$, whereas all degenerate ones are simply theoretical representations of experimental post-processings of them. To put it differently, it seems that nondegenerate observables are more fundamental than degenerate ones. However, this reading of postulate~\ref{post: observables} is incorrect. Consider, for instance, a finite-dimensional quantum system containing two spacelike separated parties, and let $H_{1},H_{2}$ be the Hilbert spaces associated with them. As we know, the entire system is represented by the Hilbert space $H \equiv H_{1} \otimes H_{2}$. An observable $A$ in $H$ is said be the a local observable of $H_{1}$ if $A = A' \otimes \mathds{1}$ for some $A' \in \mathcal{B}(H_{1})_{\text{sa}}$. Similarly, $B$ is said to be an local observable of $H_{2}$  if $B=\mathds{1} \otimes B'$ for some $B' \in \mathcal{B}(H)_{\text{sa}}$. Let $A$,$B$ be local observables of $H_{1}$ and $H_{2}$ respectively, and let $A=\sum_{\af \in \sigma(A)}P_{\af}$, $B=\sum_{\beta \in \sigma(B)} \beta Q_{\beta}$ be their spectral decompositions \cite{landsman2017foundations}. Given any injective mapping $\sigma(A) \times\sigma(B) \ni (\af,\beta) \mapsto \gamma_{(\af,\beta)} \in \mathbb{R}$, define
\begin{align}
    A \ast B \doteq \sum_{\af \in \sigma(A)}\sum_{\beta \in \sigma(B)} \gamma_{(\af,\beta)} P_{\af} \otimes Q_{\beta}.
\end{align}
By construction, both $A,B$ are coarse grainings of $A \ast B$, and, if $\vert \sigma(A)\vert = \text{dim}(H_{1})$ and $\vert \sigma(B)\vert = \text{dim}(H_{2})$,  $A \ast B$ is nondegenerate. We thus have a clear example of a system where promoting nondegenerate observables to the position of ``fundamental features'' is a mistake. In this example,  the fine-graining is a convenient theoretical construction derived from the coarse grainings, and not the reverse. Note that it does not make postulate~\ref{post: observables} untenable, nor contradict Kochen and Specker's claim that one way of measuring a coarse-graining $f(C)$ consists in measuring $C$ and evaluating $f$ in the resulting value \cite{kochen1967problem}. This quantum system obeys postulate~\ref{post: observables} and agrees with Kochen and Specker's proposal of measurement procedure. However, the possibility of measuring $A$ as a coarse-graining of $A \ast B$ does not reduce $A$ to a mere mental or theoretical construction derived from $A \ast B$.

The way we interpret functional relations forces us to take these relations into account in the state update. In fact, if we say that we are measuring an observable $B=f(A)$ by measuring $A$ and evaluating $f$ in the resulting value, then, in this procedure, the occurrence of an event $(\beta;B)$ simply means the occurrence of the event $(\Delta_{\beta};A)$, where $\Delta_{\af} \doteq f^{-1}(\af)$. Therefore, in this procedure, the state has to be updated by the event $(\Delta_{\af};A)$, and not by $(\beta;B)$. Furthermore, the state update cannot depend only on the observable $B$ but also on the measurement procedure: if $B=f(A)=g(C)$, measuring $B$ as a post-processing of $A$ is in general different than measuring it as a post-processing of $C$. Hence, the state update must be defined not by an observable event $(\beta;B)$ but actually by a measurement event $(\beta;B \leftarrow A)$, where $\beta \in \sigma(B)$ and $A \rightarrow B$ is an arrow in the category of observables --- we invert the order of the arrow in the measurement event to highlight the  observable $B$. The measurement event $(\beta;B \leftarrow A)$ represents the experimental situation where  $B$ is measured as a post-processing of $A$ and the outcome $\beta \in \sigma(B)$ is obtained, and, as we argued, this measurement event corresponds to the observable event $(\Delta_{\beta};A)$. In general, a \textbf{measurement event} is a pair $(\Sigma;B \xleftarrow{f} A)$, where $A \xrightarrow{f} B$ is an arrow in the category of observables and $\Sigma \subset \sigma(B)$. This event corresponds to the experimental situation where the observable $B$ is measured as a post-processing of $A$ and some value lying in $\Sigma \subset \sigma(B)$ is obtained, which in turn corresponds to the observable event $(f^{-1}(\Delta);A)$. Therefore, for any state $\rho$ we must define
\begin{align}
    P_{\rho}(\Sigma;B \xleftarrow{f}A) &\doteq P_{\rho}(f^{-1}(\Sigma);A) = P_{\rho}(\Sigma;f(A)) \\
    T_{(\Sigma;B \xleftarrow{f} A)}(\rho) &\doteq T_{(f^{-1}(\Sigma);A)}.
\end{align}
In particular, if $B=A$ and, consequently, if $f$ is the identity arrow $A \rightarrow A$, we have
\begin{align}
    P_{\rho}(\Sigma;A \leftarrow A) &=\label{eq: probMeasurementEvent}P_{\rho}(\Sigma;A) \\
    T_{(\Sigma;A \leftarrow A)}(\rho) &=\label{eq: updateMeasurementEvent} T_{(\Sigma;A)}.
\end{align}
We  see that, although our view on functional relations forces us to redefine both $P$ and $T$ (definition~\ref{def: system}) in terms of measurement events, the previous definitions, based on observable events, are good enough for all practical purposes, since the measurement event $(\Sigma;B \xleftarrow{f}A)$ is  equivalent, w.r.t. both $P$ and $T$, to the observable event $(f^{-1}(\Sigma);A)$. More precisely, an observable event $(\Delta;A)$ defines an equivalence class $[\Delta,A]$ of measurement observables, where $(\Sigma;B \xleftarrow{f}A') \in [\Delta,A]$ if and only if $A'=A$ and $f^{-1}(\Sigma) =\Delta$, thus the set of observable events is, up to isomorphism, a coset of the set of measurement events. Given a state $\rho$, $P_{\rho}$ and $T_{( \ \cdot \ )}(\rho)$ are mappings in this coset, and equations~\ref{eq: probMeasurementEvent},~\ref{eq: updateMeasurementEvent} are simply introducing the natural extensions of these mappings to the set of measurement events. We access events only using $P$ and $T$, so it is usually unnecessary to work with measurement events. Instead of referring to a measurement event $(\Sigma;B \xleftarrow{f} A)$, we can simply refer to the observable event associated with it, namely $(f^{-1}(\Sigma);A)$. Unless explicitly stated otherwise, whenever we refer to an observable event $(\Delta;A)$ from now on we will assume that the associated measurement event is $(\Delta;A \leftarrow A)$, i.e., by a  ``measurement of $A$'' we mean a ``direct'' measurement of $A$, not a post-processing of some other observable.

Now let $A,B$ be observables, and suppose that there exists an observable $C$ such that $A=f(C)$ and $B=g(C)$. According to definition~\ref{def: functionalRelation}, the events $A^{\af}$ and $C^{\Delta_{\af}}$, where $\Delta_{\af} \equiv f^{-1}(\af)$, are statistically equivalent, which means that, for any state $\rho$, $P_{\rho}(A^{\af}) = P_{\rho}(C^{\Delta_{\af}})$. Similarly, the events $B^{\beta}$ and $C^{\Delta_{\beta}}$, where $\Delta_{\beta} \equiv g^{-1}(\beta)$, are statistically equivalent. Therefore,
\begin{align}
    p^{(A,B)}_{\rho}(\af,\beta)= P_{\rho}(\Delta_{\af};C)P_{T_{(\af;A)}(\rho)}(\Delta_{\beta};C).
\end{align}

Now assume that we are measuring both $A$ and $B$ as post-processings of $C$. In this experimental situation, the probability of obtaining the sequence $(\af,\beta)$ of outcomes in a sequential measurement $(A,B)$ is given by
\begin{align}
    p_{\rho}^{(A,B \vert C)}(\af,\beta) \doteq p_{\rho}(\af;A \leftarrow C)P_{T_{(\beta;B \leftarrow C)}(\rho)}(\beta;B \leftarrow C),
\end{align}
whereas equations~\ref{eq: probMeasurementEvent},~\ref{eq: updateMeasurementEvent} and postulate~\ref{post: selfCompatibility} ensure that
\begin{align}
    p_{\rho}^{(A,B \vert C)}(\af,\beta) = P_{\rho}(\Delta_{\af};C)P_{T_{(\Delta_{\af};C)}(\rho)}(\Delta_{\beta};C) = P_{\rho}^{C}(\Delta_{\af} \cap \Delta_{\beta}).
\end{align}

In a system where $p^{(A,B)}_{\rho}(\af,\beta) \neq p^{(A,B\vert C)}_{\rho}(\af,\beta)$, the probability distribution that a state $\rho$ assigns to a sequence of two compatible observables (i.e., two functions of the same observable) will depend not only on the observables we are measuring but also on the procedures we use to measure them. For instance, if $A=f'(D)$ and $B=g'(D)$ for some other observable $D$, then, in principle, $p^{(A,B\vert D)}_{\rho} \neq p^{(A,B \vert C)}_{\rho}$ and $p^{(A,B \vert D)} \neq p^{(A,B)}_{\rho}$. It is thus clear that, for both mathematical and physical reasons, equality $p^{(A,B)}_{\rho}=p^{(A,B\vert C)}_{\rho}$ is highly desirable. This equality is equivalent to the assumption that the sequential measure $P_{\rho}( \ \cdot \ ;A,B)$ (see definition~\ref{def: sequentialMeasure}) is the pushforward of $P_{\rho}( \ \cdot \ ; C)$ along $(f,g): \sigma(C) \rightarrow \sigma(A) \times \sigma(B)$, where $(f,g)(\gamma) \doteq (f(\gamma),g(\gamma))$ for all $\gamma \in \sigma(C)$, which in turn is the mathematical formulation of our considerations about the ``informational link'' that exists between compatible observables, which we discussed in the introduction. We will introduce this reasonable necessary condition in section~\ref{sec: compatibility}. Before diving into this discussion, however, it is worth defining projections.
\section{Projections and their traces}\label{sec: projections}

As we mentioned in section~\ref{sec: observableEvents}, a projection is an observable that singles out an equivalence class of observable events. The definition goes as follows.
\begin{definition}[Projection]\label{def: projection} We say that an observable $E$ is a projection if there is an observable $A$ and a set $\Delta \subset \sigma(A)$ such that $E=\chi_{\Delta}(A)$, where $\chi_{\Delta}$ denotes the characteristic function of $A$. We denote by $\mathcal{P}$ the collection of all projections of $\mathfrak{S}$.
\end{definition}

Note that, if $E=\chi_{\Delta}(A)$, then $\sigma(E)=\chi_{\Delta}(\sigma(A)) \subset \{0,1\}$. We have $\sigma(E) = \{1\}$ iff $\Delta = \sigma(A)$, whereas $\sigma(E) = \{0\}$ iff $\Delta = \oldemptyset$. Otherwise, $\sigma(E)=\{0,1\}$. If a projection $E = \chi_{\Delta}(A)$ is a function of another observable $B$, i.e., if $E = g(B)$ for some function $g$, then $g$ is the characteristic function of some subset $\Sigma \subseteq \sigma(B)$. In fact, $\chi_{\Delta}(A) = E = g(B)$ implies that $\{0,1\} \supset \sigma(E) = \sigma(g(B)) = g(\sigma(B))$, hence $g = \chi_{\Sigma}$, where $\Sigma \equiv g^{-1}(1)$. Let's emphasize this result.
\begin{lemma}\label{lemma: projection}
    Let $E$ be a projection, and let $A$ be any observable such that $E=f(A)$. Then $f$ is the characteristic function of some $\Delta \subset \sigma(A)$ i.e., $f=\chi_{\Delta}$ and, consequently, $E = \chi_{\Delta}(A)$.
\end{lemma}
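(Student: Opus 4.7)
The plan is to exploit the spectral mapping theorem (which, as noted in the excerpt, holds by construction in $\mathfrak{S}$) to control the range of $f$, and then observe that a function into $\{0,1\}$ is automatically a characteristic function.

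First I would unfold the hypothesis that $E$ is a projection: by definition~\ref{def: projection}, there exists an observable $A'$ and a subset $\Delta' \subset \sigma(A')$ with $E = \chi_{\Delta'}(A')$. The spectrum of $E$ is then
\begin{align*}
    \sigma(E) \;=\; \sigma(\chi_{\Delta'}(A')) \;=\; \chi_{\Delta'}(\sigma(A')) \;\subseteq\; \{0,1\}.
\end{align*}
This is the only place where the assumption ``$E$ is a projection'' enters; after this step, all we know about $E$ is that $\sigma(E) \subseteq \{0,1\}$.

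Next I would use the hypothesis $E = f(A)$ together with the spectral mapping theorem stated in section~\ref{sec: categoryOfObservables} (i.e., $\sigma(f(A)) = f(\sigma(A))$) to conclude that $f(\sigma(A)) = \sigma(E) \subseteq \{0,1\}$. Thus $f: \sigma(A) \to \sigma(E)$ takes values in $\{0,1\}$. Setting $\Delta \doteq f^{-1}(1)$, we have $f(\alpha) = 1$ iff $\alpha \in \Delta$ and $f(\alpha) = 0$ otherwise, which is exactly the definition of $\chi_{\Delta}$. Hence $f = \chi_{\Delta}$ and therefore $E = f(A) = \chi_{\Delta}(A)$, as claimed.

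There is no serious obstacle here; the argument is essentially the short remark already given in the paragraph preceding the lemma statement, promoted to the level of a formal proof. The only subtlety worth flagging is the minor case analysis: if $\sigma(E) = \{1\}$ then $\Delta = \sigma(A)$ and $f \equiv 1 = \chi_{\sigma(A)}$; if $\sigma(E) = \{0\}$ then $\Delta = \emptyset$ and $f \equiv 0 = \chi_{\emptyset}$; and otherwise $\sigma(E) = \{0,1\}$ and $\Delta$ is a proper nonempty subset of $\sigma(A)$. In each case the identification $f = \chi_{\Delta}$ with $\Delta = f^{-1}(1)$ is uniform, so no separate treatment is really required.
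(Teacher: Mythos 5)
Your proof is correct and follows exactly the paper's own argument (given in the paragraph preceding the lemma): use the spectral mapping theorem to get $f(\sigma(A)) = \sigma(E) \subseteq \{0,1\}$ and then identify $f$ with $\chi_{\Delta}$ for $\Delta = f^{-1}(1)$. The added case analysis at the end is harmless but, as you note, not needed.
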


Let $E=\chi_{\Delta}(A)$ be a projection, and let $\rho$ be any state. Then, according to definition~\ref{def: functionalRelation},
    \begin{align*}
        p_{\rho}(1;\chi_{\Delta}(A)) &= P_{\rho}(\chi_{\Delta}^{-1}(\{1\});A) = P_{\rho}(\Delta;A) \equiv P_{\rho}(A^{\Delta}),\\
        p_{\rho}(0; \chi_{\Delta}(A)) &= 1- p_{\rho}(1;\chi_{\Delta}(A)) = 1 - P_{\rho}(A^{\Delta}).
\end{align*}
Therefore, if $\chi_{\Delta}(A) = \chi_{\Sigma}(B)$, for any state $\rho$ we have $P_{\rho}(A^{\Delta}) = P_{\rho}(1;  \chi_{\Delta}(A)) = P_{\rho}(1; \chi_{\Sigma}(B)) = P_{\rho}(B^{\Sigma})$, which means that $A^{\Delta}$ and $B^{\Sigma}$ are statistically equivalent (definition \ref{def: statisticalEquivalence}). On the other hand, suppose that the events $A^{\Delta}$, $B^{\Sigma}$ are statistically equivalent, which means that, for every state $\rho$, $P_{\rho}(A^{\Delta}) = P_{\rho}(B^{\Sigma})$. Then, for any state $\rho$, $p_{\rho}(1;\chi_{\Delta}(A)) = P_{\rho}(A^{\Delta}) = P_{\rho}(B^{\Sigma}) = p_{\rho}(1;\chi_{\Sigma}(B))$, and consequently $p_{\rho}(0;\chi_{\Delta}(A)) = p_{\rho}(0;\chi_{\Sigma}(B))$. This is equivalent to saying that, for every state $\rho$, $P_{\rho}( \ \cdot \ ; \chi_{\Delta}(A)) = P_{\rho}( \ \cdot \ ; \chi_{\Sigma}(B))$, which in turn means, according to postulate~\ref{ax: separability}, that $\chi_{\Delta}(A) = \chi_{\Sigma}(B)$. It shows that, as we asserted in section \ref{sec: observableEvents}, projections single out equivalent classes of observable events. Let's emphasize this result.

\begin{lemma}[Statistically equivalent events]\label{lemma: statisticalEquivalence} Let $A^{\Delta}$, $B^{\Sigma}$ be observable events (definition \ref{def: events}). Then the following claims are equivalent.
        \begin{itemize}
            \item[(a)] $A^{\Delta}$ and $B^{\Sigma}$ are statistically equivalent. That is, for every state $\rho$,
            \begin{align}
                P_{\rho}(A^{\Delta}) = P_{\rho}(B^{\Sigma}).
            \end{align}
            \item[(b)] $A^{\Delta}$ and $B^{\Sigma}$ are associated with the same projection, that is to say,
            \begin{align}
                \chi_{\Delta}(A) = \chi_{\Sigma}(B).
            \end{align}
    \end{itemize}
\end{lemma}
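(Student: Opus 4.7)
The plan is to prove both directions of the equivalence by reducing everything to the pushforward definition of functional relations (definition~\ref{def: functionalRelation}) and then invoking postulate~\ref{ax: separability} (Separability). The whole argument turns on the observation that any projection $E$ has spectrum contained in $\{0,1\}$, so its statistics are fully encoded by the single number $p_{\rho}(1;E)$, which by the pushforward definition is exactly the probability of the observable event that generates $E$.

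First I would handle (b) $\Rightarrow$ (a). Assume $\chi_{\Delta}(A) = \chi_{\Sigma}(B) \equiv E$. Applying definition~\ref{def: functionalRelation} to the function $\chi_{\Delta}: \sigma(A) \to \sigma(E)$, we get, for any state $\rho$,
\begin{align*}
    P_{\rho}(A^{\Delta}) = P_{\rho}(\Delta;A) = P_{\rho}(\chi_{\Delta}^{-1}(\{1\});A) = p_{\rho}(1;E),
\end{align*}
and the symmetric computation with $\chi_{\Sigma}$ gives $P_{\rho}(B^{\Sigma}) = p_{\rho}(1;E)$. Equality of the two probabilities is then immediate, and this direction does not require any postulate beyond the definition of functional relation.

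Next I would handle (a) $\Rightarrow$ (b). Assume $P_{\rho}(A^{\Delta}) = P_{\rho}(B^{\Sigma})$ for every state $\rho$. The same pushforward identity used above gives $p_{\rho}(1;\chi_{\Delta}(A)) = p_{\rho}(1;\chi_{\Sigma}(B))$ for every $\rho$. Since both $\chi_{\Delta}(A)$ and $\chi_{\Sigma}(B)$ have spectrum contained in $\{0,1\}$, the complementary values are determined by normalization, so $p_{\rho}(0;\chi_{\Delta}(A)) = p_{\rho}(0;\chi_{\Sigma}(B))$ as well. Hence the probability measures $P_{\rho}^{\chi_{\Delta}(A)}$ and $P_{\rho}^{\chi_{\Sigma}(B)}$ agree for every state $\rho$, and postulate~\ref{ax: separability} yields $\chi_{\Delta}(A) = \chi_{\Sigma}(B)$.

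There is no real obstacle here; the only minor subtlety is to keep the spectra straight when some of $\Delta, \Sigma$ might be empty or equal to the full spectrum (so that $\sigma(\chi_{\Delta}(A)) \subseteq \{0,1\}$ is a singleton), but the argument via complementary probabilities works uniformly in all three cases. Note that lemma~\ref{lemma: projection} is not needed for this proof, since we only ever apply the pushforward identity to the specific characteristic functions $\chi_{\Delta}$ and $\chi_{\Sigma}$ that are given to us in the statement.
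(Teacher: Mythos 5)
Your proof is correct and follows essentially the same route as the paper: both directions reduce $P_{\rho}(A^{\Delta})$ to $p_{\rho}(1;\chi_{\Delta}(A))$ via the pushforward definition of functional relations, and the converse direction closes with normalization plus postulate~\ref{ax: separability}. The remark about the degenerate cases $\Delta \in \{\oldemptyset, \sigma(A)\}$ is a harmless extra precaution that the paper leaves implicit.
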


Let's explore the connection between projections and equivalence classes of events in more depth. Let $\mathbb{E}$ and $\mathcal{P}$ be, respectively, the sets of observable events and projections of a system $\mathfrak{S}$. Let $\chi: \mathbb{E} \rightarrow \mathcal{P}$ be the canonical association between events and projections, i.e., $\chi(A^{\Delta}) \doteq \chi_{\Delta}(A)$ for any event $A^{\Delta}$. According to item $(b)$ of lemma~\ref{lemma: statisticalEquivalence}, two events $A^{\Delta},B^{\Sigma}$ are statistically equivalent iff they belong to the same fiber of $\chi$, i.e., iff $\chi(A^{\Delta}) = \chi(B^{\Sigma})$, thus the equivalence relation induced by $\chi$ on $\mathbb{E}$ is precisely the equivalence relation we denoted by $\sim_{\mathcal{S}}$ in section~\ref{sec: observableEvents}, namely statistical equivalence. It means that the coset $\mathbb{E}/_{\sim_{\mathcal{S}}}$ is, up to isomorphism, the set of projections $\mathcal{P}$, by which we mean that
\begin{align*}
    \mathbb{E}/_{\sim_{\mathcal{S}}} \cong \mathcal{P}.
\end{align*}

More importantly, it immediately follows from  lemma~\ref{lemma: statisticalEquivalence} that any state $\rho$ defines a mapping $\Exp{\cdot}:\mathcal{P} \rightarrow [0,1]$ by
    \begin{align*}
        \Exp{E} \doteq P_{\rho}(A^{\Delta}),
    \end{align*}
where $A^{\Delta}$ is any event associated with $E$ (i.e., $\chi_{\Delta}(A) = E$). We can thus introduce the following definition.
\begin{definition}[Expectation of projections]\label{def: expectationProjection} Let $\rho$ be a state in some system $\mathfrak{S}$. The expectation defined by $\rho$ is the mapping $\Exp{ \ \cdot \ }: \mathcal{P} \rightarrow [0,1]$ given by  
\begin{align*}
    \Exp{E} \doteq P_{\rho}(A^{\Delta}),
\end{align*}
where $E$ is any projection and $A^{\Delta}$ is any event associated with $E$, i.e., $\chi_{\Delta}(A) = E$. It means that $\Exp{ \ \cdot \ }$ is the unique mapping $\mathcal{P} \rightarrow [0,1]$ for which the following diagram is commutative
\begin{center}
    \begin{tikzcd}
        \mathbb{E}\arrow{r}{\chi}\arrow{dr}[swap]{P_{\rho}} & \mathcal{P}\arrow{d}{\Exp{\cdot}}
        \\
        &\left[0,1\right]
    \end{tikzcd}
\end{center}
\end{definition}
The number $\Exp{E}$ is said to be the \textbf{expectation (or expected value) of $E$ with respect to $\rho$}. This terminology is justified by the fact that $\Exp{E}$ is the  expected value of the identity function in $\sigma(E)$ w.r.t. to the probability measure $P_{\rho}^{E}$. We will discuss it in more detail in section~\ref{sec: StatesAsFunctionals}.

It immediately follows from definitions~\ref{def: projection} and~\ref{def: functionalRelation} that, given any state $\rho$ and any projection $E$,
\begin{align*}
    P_{\rho}(1;E) &= \Exp{E},\\
    P_{\rho}(0;E) &= 1-\Exp{E}.
\end{align*}

We have seen in section~\ref{sec: observableEvents} that there is a one-to-one correspondence between states and mappings $\mathbb{E} \rightarrow [0,1]$, i.e., for any pair of states $\rho_{1},\rho_{2}$, we have $P_{\rho_{1}}=P_{\rho_{2}}$ if and only if $\rho_{1}=\rho_{2}$. This injectivity is also satisfied by the mapping $\rho \mapsto \Exp{ \ \cdot \ }$, as the following lemma shows.
\begin{lemma}\label{lemma: expectationAsState} Let $\rho_{1},\rho_{2}$ be states. Then $\rho_{1} = \rho_{2}$ if and only if
\begin{align}
    \langle \ \cdot \ \rangle_{\rho_{1}} = \langle \ \cdot \ \rangle_{\rho_{2}}
\end{align}
\end{lemma}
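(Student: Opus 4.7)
The plan is to reduce the lemma to Postulate~\ref{ax: separability} via the canonical map $\chi: \mathbb{E} \to \mathcal{P}$ used in Definition~\ref{def: expectationProjection}. The forward direction is immediate: if $\rho_1 = \rho_2$, then by the ``state-$P$'' correspondence (equation~\ref{eq: stateAndP}) we have $P_{\rho_1} = P_{\rho_2}$, and since $\langle E \rangle_\rho$ is defined as $P_\rho(A^\Delta)$ for any event $A^\Delta$ with $\chi_\Delta(A)=E$, the two expectation maps agree on every projection.

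For the converse, I would argue as follows. Assume $\langle \cdot \rangle_{\rho_1} = \langle \cdot \rangle_{\rho_2}$. I want to conclude $P_{\rho_1}^A = P_{\rho_2}^A$ for every observable $A$ and then invoke Postulate~\ref{ax: separability}. Fix an observable $A$ and a subset $\Delta \subset \sigma(A)$, and set $E := \chi_\Delta(A) \in \mathcal{P}$. Then $A^\Delta$ is an event in the $\chi$-fiber over $E$, so by Definition~\ref{def: expectationProjection} we have $P_{\rho_i}(A^\Delta) = \langle E \rangle_{\rho_i}$ for $i=1,2$; the hypothesis therefore gives $P_{\rho_1}^A(\Delta) = P_{\rho_2}^A(\Delta)$. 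Since $\Delta$ was arbitrary and the measures $P_{\rho_i}^A$ are determined by their values on subsets of $\sigma(A)$, we get $P_{\rho_1}^A = P_{\rho_2}^A$; since $A$ was arbitrary, Postulate~\ref{ax: separability} yields $\rho_1 = \rho_2$.

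There is essentially no obstacle here: the proof is a one-step unpacking of Definition~\ref{def: expectationProjection}, which was explicitly built so that $P_\rho$ factors through $\chi$. The only thing one must be careful about is that every event $A^\Delta$ actually does lie in some fiber of $\chi$ (which is true by construction, since $\chi(A^\Delta) = \chi_\Delta(A)$ is well-defined for every event), so that the expectation functional on $\mathcal{P}$ captures all the probabilistic content needed to invoke separability. No further postulates (such as Postulate~\ref{post: selfCompatibility} or Postulate~\ref{post: observables}) are required for this lemma.
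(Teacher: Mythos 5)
Your proof is correct and follows essentially the same route as the paper's: both directions reduce to Postulate~\ref{ax: separability} by noting that $P_{\rho}(A^{\Delta}) = \langle \chi_{\Delta}(A)\rangle_{\rho}$ recovers every probability measure $P_{\rho}^{A}$ from the expectation on projections. The only cosmetic difference is that the paper evaluates on singletons $\{\af\}$ while you use arbitrary $\Delta \subset \sigma(A)$; either suffices since the spectra are finite.
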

\begin{proof}
Clearly, $\rho_{1}=\rho_{2}$ implies $\langle \ \cdot \ \rangle_{\rho_{1}} = \langle \ \cdot \ \rangle_{\rho_{2}}$. On the other hand, suppose that $\langle \ \cdot \ \rangle_{\rho_{1}}= \langle \ \cdot \ \rangle_{\rho_{2}}$. It means that, for every projection $E$, we have $\langle E \rangle_{\rho_{1}} = \langle E \rangle_{\rho_{2}}$. Let $A$ be any observable, and $\af \in \sigma(A)$. Then
\begin{align*}
    P_{\rho_{1}}(\{\af\};A) = \langle \chi_{\{\af\}}(A) \rangle_{\rho_{1}} = \langle \chi_{\{\af\}}(A) \rangle_{\rho_{2}}  = P_{\rho_{2}}(\{\af\};A),
\end{align*}
which implies that $P_{\rho_{1}}( \ \cdot \ ; A) =  P_{\rho_{2}}( \ \cdot \ ; A)$. It is true for any $A$, thus, according to postulate~\ref{ax: separability}, $\rho_{1} = \rho_{2}$.
\end{proof}    

Lemma~\ref{lemma: expectationAsState} enables us to identify the state $\rho$ with the expectation $\Exp{ \ \cdot \ }$ it defines. For this reason, we will eventually call $\Exp{ \ \cdot \ }$  a  ``state''.

Since $\mathbb{E}/_{\sim_{\mathcal{S}}} \cong \mathcal{P}$, states separate projections:
\begin{lemma}\label{lemma: separatingProjections} Two projections $E,F$ are equal if and only if, for every state $\rho$,
\begin{align}
    \Exp{E} =\label{eq: separatingProjections} \Exp{F}.
\end{align}
\end{lemma}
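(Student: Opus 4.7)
The plan is to reduce the claim directly to Postulate~\ref{ax: separability} via the observation that, for any projection $E$, the probability measure $P_{\rho}^{E}$ on $\sigma(E) \subseteq \{0,1\}$ is completely determined by the single number $\Exp{E}$.

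The forward direction is immediate: if $E = F$ as observables, then $P_{\rho}^{E} = P_{\rho}^{F}$ for every state $\rho$ by definition, hence in particular $\Exp{E} = \Exp{F}$.

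For the backward direction, suppose $\Exp{E} = \Exp{F}$ for every state $\rho$. I would pick any events $A^{\Delta}$ and $B^{\Sigma}$ with $E = \chi_{\Delta}(A)$ and $F = \chi_{\Sigma}(B)$, which exist by definition~\ref{def: projection}. By definition~\ref{def: expectationProjection}, $\Exp{E} = P_{\rho}(A^{\Delta})$ and $\Exp{F} = P_{\rho}(B^{\Sigma})$, so the hypothesis gives $P_{\rho}(A^{\Delta}) = P_{\rho}(B^{\Sigma})$ for every $\rho$, i.e., $A^{\Delta} \sim_{\mathcal{S}} B^{\Sigma}$. Lemma~\ref{lemma: statisticalEquivalence} then yields $\chi_{\Delta}(A) = \chi_{\Sigma}(B)$, namely $E = F$.

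Alternatively (and perhaps more directly), one can bypass lemma~\ref{lemma: statisticalEquivalence} and use Postulate~\ref{ax: separability} at once: since $\sigma(E) \subseteq \{0,1\}$, the measure $P_{\rho}^{E}$ is fully specified by $p_{\rho}(1;E) = \Exp{E}$ and $p_{\rho}(0;E) = 1 - \Exp{E}$, and likewise for $F$; so the assumption forces $P_{\rho}^{E} = P_{\rho}^{F}$ for all $\rho$, and Postulate~\ref{ax: separability} concludes $E = F$. There is no real obstacle here; the only subtlety is handling the trivial cases $\sigma(E) = \{0\}$ or $\sigma(E) = \{1\}$, but in those cases the measure is still pinned down by $\Exp{E}$, so the argument goes through uniformly.
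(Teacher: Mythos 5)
Your main argument is correct and is essentially identical to the paper's own proof: both pass to events $A^{\Delta}$, $B^{\Sigma}$ with $E=\chi_{\Delta}(A)$, $F=\chi_{\Sigma}(B)$, translate the hypothesis into statistical equivalence of these events, and invoke Lemma~\ref{lemma: statisticalEquivalence} to conclude $E=F$. The alternative route you sketch via Postulate~\ref{ax: separability} is also sound (and is really just inlining the proof of Lemma~\ref{lemma: statisticalEquivalence}), but it does not change the substance.
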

\begin{proof}
    Let $E,F$ be projections, and let $A,B$ be observables such that $E=\chi_{\Delta}(A)$ and $F=\chi_{\Sigma}(B)$. Then equation~\ref{eq: separatingProjections} is equivalent to $P_{\rho}(A^{\Delta}) = P_{\rho}(B^{\Sigma})$, which in turn is satisfied for every state $\rho$ if and only if $A^{\Delta}$ and $B^{\Sigma}$ are statistically equivalent. According to lemma~\ref{lemma: statisticalEquivalence}, this is equivalent to saying that $E=F$.
\end{proof}

The following condition translates equation~\ref{eq: realKSdefinition} to the set of projections:

\begin{lemma}[Projections and functional relations]\label{lemma: KSdefinitionProjections} Let $A$ be an observable and $f$ a real function on its spectrum. Then, for any  $\Sigma \subset \sigma(f(A))$,
\begin{align}
    \chi_{\Sigma}(f(A)) = \chi_{f^{-1}(\Sigma)}(A),
\end{align}
which is equivalent to saying that the following diagram commutes in the category of observables
\begin{center}
    \begin{tikzcd}
        A\arrow[r,"f"]\arrow[dr,swap, "\chi_{f^{-1}(\Sigma)}"] &B\arrow[d,"\chi_{\Sigma}"]\\
        & E
    \end{tikzcd}
\end{center}
where $B \equiv f(A)$ and $E \equiv \chi_{\Sigma}(B)$.
\end{lemma}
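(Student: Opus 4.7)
The plan is to verify the claimed equality using the separation property for projections established in Lemma~\ref{lemma: separatingProjections}: two projections coincide iff every state assigns them the same expectation. So it suffices to check that both $\chi_{\Sigma}(f(A))$ and $\chi_{f^{-1}(\Sigma)}(A)$ yield the same expectation on an arbitrary state $\rho$.

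First I would compute $\langle \chi_{\Sigma}(f(A)) \rangle_{\rho}$ directly from Definition~\ref{def: expectationProjection}: since the projection $\chi_{\Sigma}(f(A))$ is canonically associated with the observable event $((f(A))^{\Sigma})$, we have $\langle \chi_{\Sigma}(f(A)) \rangle_{\rho} = P_{\rho}(\Sigma; f(A))$. Then I would compute $\langle \chi_{f^{-1}(\Sigma)}(A) \rangle_{\rho} = P_{\rho}(f^{-1}(\Sigma); A)$ in the same way, now using the event $(A^{f^{-1}(\Sigma)})$. The key bridge is Definition~\ref{def: functionalRelation} applied to the relation $f(A) = f(A)$: by that definition, $P_{\rho}( \ \cdot \ ; f(A))$ is the pushforward of $P_{\rho}( \ \cdot \ ; A)$ along $f$, which gives exactly $P_{\rho}(\Sigma; f(A)) = P_{\rho}(f^{-1}(\Sigma); A)$. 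Hence the two expectations coincide for every state $\rho$, and Lemma~\ref{lemma: separatingProjections} delivers the equality of projections.

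The translation to the commutative diagram in the category of observables is then automatic: the arrow $\chi_{f^{-1}(\Sigma)}: A \to E$ exists because $E = \chi_{f^{-1}(\Sigma)}(A)$ by the identity just established, and the arrow $\chi_{\Sigma}: B \to E$ exists because $E = \chi_{\Sigma}(B)$ by definition of $E$; thinness of the category of observables (as noted after Definition~\ref{def: categoryOfObservables}, which follows from Lemma~\ref{lemma: functionalRelation}) guarantees that the composite $\chi_{\Sigma}\circ f$ equals $\chi_{f^{-1}(\Sigma)}$ as soon as both arrows land on the same observable.

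I do not anticipate any real obstacle: the argument is essentially a one-line unpacking of the definition of functional relation combined with the separation lemma, and no new structural input (compatibility, self-compatibility beyond what was already used to establish Lemma~\ref{lemma: functionalRelation}) is needed. The only thing to be careful about is the degenerate cases $\Sigma = \varnothing$ and $\Sigma = \sigma(f(A))$, where $\chi_{\Sigma}(f(A))$ has singleton spectrum $\{0\}$ or $\{1\}$; but here $f^{-1}(\Sigma)$ is $\varnothing$ or $\sigma(A)$ respectively, so $\chi_{f^{-1}(\Sigma)}(A)$ degenerates in exactly the same way, and the expectations trivially match.
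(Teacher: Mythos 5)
Your proposal is correct and follows essentially the same route as the paper: compute $\Exp{\chi_{\Sigma}(f(A))} = P_{\rho}(\Sigma;f(A)) = P_{\rho}(f^{-1}(\Sigma);A) = \Exp{\chi_{f^{-1}(\Sigma)}(A)}$ for an arbitrary state via the pushforward property of definition~\ref{def: functionalRelation}, then conclude by lemma~\ref{lemma: separatingProjections}. The extra remarks on the diagram and the degenerate cases are fine but not needed beyond what the paper's one-line argument already covers.
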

\begin{proof}
    Let $\rho$ be any state. Then 
    \begin{align*}
        \Exp{\chi_{\Sigma}(f(A))} = P_{\rho}(\Sigma;f(A)) = P_{\rho}(f^{-1}(\Sigma);A) = \Exp{\chi_{f^{-1}(\Sigma)}(A)}.
    \end{align*}
    According to lemma~\ref{lemma: separatingProjections}, it means that $\chi_{\Sigma}(f(A)) = \chi_{f^{-1}(\Sigma)}(A)$.
\end{proof}

The completely mixed state (definition~\ref{def: completelyMixed}) plays a crucial role in our description of projections and observable events. To begin with, we introduce the following definition.
\begin{definition}[Intrinsic probability]\label{def: intrinsicProbability} The intrinsic probability of an observable event $A^{\Delta} \equiv (\Delta;A)$ consists in its probability with respect to the completely mixed state, namely
\begin{align}
    P_{\emptyset}(A^{\Delta}) \equiv P_{\emptyset}(\Delta;A) =\sum_{\af \in \Delta} P_{\emptyset}(A^{\af}).
\end{align} 
Similarly, the intrinsic expectation of a projection $E$ consists in its expected value w.r.t. the completely mixed state, i.e., in the number $\langle E\rangle_{\emptyset}$.
\end{definition}
Note that the intrinsic expectation of a projection $E$ is necessarily equal to the intrinsic probability of any event associated with it. That is, the intrinsic expectation is the natural translation of the intrinsic probability to the coset $\mathbb{E}/\sim_{\mathcal{S}}$.  

According to definition~\ref{def: completelyMixed}, if $A$ is nondegenerate, for any $\af \in \sigma(A)$ we have $P_{\emptyset}(A^{\af})=\frac{1}{n}$, where $n$ denotes the dimension of the system. Consequently, for any $\Delta \subset \sigma(A)$, $P_{\emptyset}(A^{\Delta}) = \frac{\vert \Delta\vert}{n}$. It means that, if $A$ is nondegenerate, the intrinsic probability of an event $A^{\Delta} \equiv (\Delta;A)$ is completely determined by the number of objective events associated with it, namely $\vert \Delta \vert$. Now let $A$ be any observable and $\Delta$ any subset of $\sigma(A)$. According to postulate~\ref{post: observables}, there is a nondegenerate observable $C$ such that $A=f(C)$ for some function $f$. Therefore,
\begin{align*}
    P_{\emptyset}(\Delta;A) = P_{\emptyset}(f^{-1}(\Delta);C) = \frac{\vert f^{-1}(\Delta) \vert}{n}.
\end{align*}
Consequently, if $A=g(D)$ for some other nondegenerate observable $D$, we have, for any $\Delta \subset \sigma(A)$,
\begin{align*}
    \frac{\vert g^{-1}(\Delta) \vert}{n}=P_{\emptyset}(\Delta;A) = \frac{\vert f^{-1}(\Delta) \vert}{n},
\end{align*}
which implies that $\vert g^{-1}(\Delta)\vert = \vert f^{-1}(\Delta)\vert$. Hence, the intrinsic probability of any event $(\Delta;A)$ is determined by the number of objective events associated with $(f^{-1}(\Delta);C)$, where $C$ is a nondegenerate fine graining of $A$ and $f$ is the function satisfying $A=f(C)$. Statistically equivalent events have the same intrinsic probability, thus we can introduce the following definition.
\begin{definition}[Trace of a projection]\label{def: traceProjection} Let $E$ be a projection, and let $A$ be any nondegenerate observable such that $E = \chi_{\Delta}(A)$.The trace of $E$ is the number
\begin{align}
    \Tra{E} \doteq \vert \Delta \vert.
\end{align}
\end{definition}
We define the \textbf{rank} of a projection as its trace, so by a rank-$k$ projection, where $k \in \mathbb{N}$, we mean a projection whose trace is equal to $k$.

As we said, the trace of a projection $E$ is well defined because, if $A,B$ are nondegenerate observables satisfying $\chi_{\Delta}(A)=E=\chi_{\Sigma}(B)$, we obtain $\frac{\vert \Delta \vert}{n} = P_{\rho}(\Delta;A)=\langle E\rangle_{\emptyset}=P_{\emptyset}(\Sigma,B) = \frac{\vert \Sigma \vert}{n}$, which means that $\vert \Delta \vert = \vert \Sigma \vert$. Note also that, in a $n$-dimensional system, the trace of a projection $E$ is a natural number between $1$ and $n$.

It is worth emphasizing the following trivial result.
\begin{lemma}\label{lemma: traceSimple} Let $E$ be a projection in a $n$-dimensional system. Then the intrinsic expectation of $E$ is its trace normalized, i.e.,
    \begin{align}
       \langle E\rangle_{\emptyset} = \frac{\Tra{E}}{n}. 
    \end{align}
\end{lemma}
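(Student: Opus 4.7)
The plan is to reduce the question to evaluating the probability of an event associated with $E$ in a nondegenerate observable, where the completely mixed state is uniform by definition. First, I would use Definition~\ref{def: projection} together with Lemma~\ref{lemma: projection} to write $E = \chi_{\Delta}(A)$ for some observable $A$ and some $\Delta \subset \sigma(A)$. By Postulate~\ref{post: observables}, $A$ itself is a coarse-graining of some nondegenerate observable $C$, say $A = f(C)$. Applying Lemma~\ref{lemma: KSdefinitionProjections} then rewrites $E$ in terms of the nondegenerate observable:
\begin{align*}
    E = \chi_{\Delta}(f(C)) = \chi_{f^{-1}(\Delta)}(C),
\end{align*}
so $E = \chi_{\Delta'}(C)$ with $\Delta' \doteq f^{-1}(\Delta) \subset \sigma(C)$.

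Next, Definition~\ref{def: traceProjection} gives $\Tra{E} = |\Delta'|$, since $C$ is nondegenerate and $E = \chi_{\Delta'}(C)$. (This step implicitly relies on the well-definedness of the trace, discussed right after Definition~\ref{def: traceProjection}, which in turn follows from the computation of intrinsic probabilities via nondegenerate fine-grainings.)

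Finally, I would compute the intrinsic expectation directly from Definition~\ref{def: expectationProjection}, using the event $C^{\Delta'}$ associated with $E$:
\begin{align*}
    \langle E \rangle_{\emptyset} = P_{\emptyset}(C^{\Delta'}) = \sum_{\af \in \Delta'} p_{\emptyset}(\af;C) = \sum_{\af \in \Delta'} \frac{1}{n} = \frac{|\Delta'|}{n} = \frac{\Tra{E}}{n},
\end{align*}
where the third equality uses Definition~\ref{def: completelyMixed} applied to the nondegenerate observable $C$. There is no real obstacle here; the only subtlety is making sure we are allowed to choose a nondegenerate representative for $E$, which is precisely what Postulate~\ref{post: observables} combined with Lemma~\ref{lemma: KSdefinitionProjections} guarantees.
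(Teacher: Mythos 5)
Your proof is correct and follows essentially the same route as the paper: both reduce to a nondegenerate representative $E=\chi_{\Delta'}(C)$ (the paper invokes its existence directly, you derive it via Postulate~\ref{post: observables} and Lemma~\ref{lemma: KSdefinitionProjections}) and then use the uniformity of $p_{\emptyset}^{C}$ from Definition~\ref{def: completelyMixed} together with Definitions~\ref{def: expectationProjection} and~\ref{def: traceProjection}. No gaps.
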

\begin{proof}
    Let $A$ be any nondegenerate observable satisfying $E=\chi_{\Delta}(A)$ for some $\Delta \subset \sigma(A)$. According to definition~\ref{def: expectationProjection}, $\langle E \rangle_{\emptyset} = P_{\emptyset}(\Delta;A)$. Also, we have seen that $P_{\emptyset}(A^{\Delta})=\frac{\vert \Delta \vert}{n}$, thus, according to definition~\ref{def: traceProjection}, $P_{\emptyset}(A^{\Delta})=\frac{\Tra{E}}{n}$, hence $\langle E \rangle_{\emptyset} = \frac{\Tra{E}}{n}$. 
\end{proof}

The following corollary immediately follows.
\begin{corollary} The intrinsic probability of an event $A^{\Delta}$ satisfies
\begin{align}
    P_{\emptyset}(A^{\Delta}) = \frac{\Tra{\chi_{\Delta}(A)}}{n},
\end{align}
where $n$ denotes the dimension of the system.
\end{corollary}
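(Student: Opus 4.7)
The plan is to treat this as an immediate translation of Lemma~\ref{lemma: traceSimple} from the language of projections to the language of events. The right-hand side of the stated equation already refers to the trace of the projection $\chi_{\Delta}(A)$, so the strategy is to rewrite the left-hand side as an intrinsic expectation, apply the lemma, and read off the result.

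Concretely, I would proceed in three short steps. First, invoke Definition~\ref{def: expectationProjection}, which identifies $P_{\emptyset}(A^{\Delta})$ with the expectation $\langle \chi_{\Delta}(A) \rangle_{\emptyset}$, since $\chi_{\Delta}(A)$ is by construction the projection associated with the event $A^{\Delta}$. Second, apply Lemma~\ref{lemma: traceSimple} to $E = \chi_{\Delta}(A)$, obtaining $\langle \chi_{\Delta}(A) \rangle_{\emptyset} = \Tra{\chi_{\Delta}(A)}/n$. Third, chain the two equalities to conclude $P_{\emptyset}(A^{\Delta}) = \Tra{\chi_{\Delta}(A)}/n$.

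There is essentially no obstacle here: the corollary is a two-line bookkeeping consequence of the lemma it is attached to, with no new hypothesis introduced. The only point worth a brief remark, if the proof were to be written out, would be that the identification in Definition~\ref{def: expectationProjection} is well defined precisely because statistically equivalent events share a common projection (Lemma~\ref{lemma: statisticalEquivalence}), so the choice of event representing $\chi_{\Delta}(A)$ does not affect the value. Beyond that observation, the argument is a one-liner and I would present it as such.
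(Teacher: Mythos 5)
Your proof is correct and matches the paper's intent exactly: the paper states the corollary "immediately follows" from Lemma~\ref{lemma: traceSimple}, and the route you give — identifying $P_{\emptyset}(A^{\Delta})$ with $\langle \chi_{\Delta}(A)\rangle_{\emptyset}$ via Definition~\ref{def: expectationProjection} and then applying the lemma — is precisely that derivation. Your remark about well-definedness via Lemma~\ref{lemma: statisticalEquivalence} is a fine (if optional) addition.
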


Let $A$ and $B$ be observables, and consider the projections $\mathds{1}_{A} \equiv \chi_{\sigma(A)}(A)$ and $\mathds{1}_{B} \equiv \chi_{\sigma(B)}(B)$. As we have discussed at the beginning of this section, $\sigma(\mathds{1}_{A}) = \{1\} = \sigma(\mathds{1}_{B})$. Furthermore, for any state $\rho$ we have $\Exp{\mathds{1}_{A}} = p_{\rho}(1;\chi_{\sigma(A)}(A)) = 1 = \Exp{\mathds{1}_{B}}$, thus, according to lemma~\ref{lemma: separatingProjections}, $\mathds{1}_{A}=\mathds{1}_{B}$. It shows that, if we define a projection $\mathds{1}$ by $\mathds{1} \doteq \chi_{\sigma(A)}(A)$, where $A$ is any observable, this definition does not depend on the choice of $A$. Similarly, it is easy to see that $\chi_{\oldemptyset}(A) = \chi_{\oldemptyset}(B)$ for any pair of observables $A,B$, and that $\sigma(\chi_{\oldemptyset}(A))=\{0\}$, thus the projection $0\doteq \chi_{\oldemptyset}(A)$, where $A$ is any observable, is well defined. It justifies the following definition.

\begin{definition}[Zero and unit]\label{def: zeroAndUnit} The \textbf{unit} of a system $\mathfrak{S}$ is the (necessarily unique) projection $\mathds{1}$ satisfying $\mathds{1} = \chi_{\sigma(A)}(A)$ for every observable $A$. Similarly, the \text{zero operator}, or simply the zero, is the (necessarily unique) projection $0$ satisfying $0=\chi_{\oldemptyset}(A)$ for every observable $A$. 
\end{definition}

Note that, in a $n$-dimensional system, $\Tra{\mathds{1}}=n$ and $\Tra{0}=0$. Also, it immediately follows from definition~\ref{def: zeroAndUnit} that, for any state $\rho$, $\Exp{0}=0$ and $\Exp{\mathds{1}}=1$. 

The following lemma, which has already been proved, characterizes the spectrum of any projection.
\begin{lemma}\label{lemma: projectionSpectrum}
    Let $E$ be a projection such that $0 \neq E \neq \mathds{1}$. Then
    \begin{align}
        \sigma(E) = \{0,1\}.
    \end{align}
    Furthermore,
    \begin{align}
        \sigma(\mathds{1}) &= \{1\},\\
        \sigma(0) &= \{0\}
    \end{align}
\end{lemma}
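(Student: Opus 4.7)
The plan is to unpack the definition of a projection and apply the spectral mapping theorem, which the excerpt has already noted holds by construction, namely $\sigma(f(A)) = f(\sigma(A))$. Since every projection $E$ has, by definition~\ref{def: projection}, a representation $E = \chi_{\Delta}(A)$ for some observable $A$ and some $\Delta \subset \sigma(A)$, this immediately yields $\sigma(E) = \chi_{\Delta}(\sigma(A))$. Because $\chi_{\Delta}$ takes values in $\{0,1\}$, we have $\sigma(E) \subseteq \{0,1\}$, so there are only three possibilities: $\sigma(E) = \{0\}$, $\sigma(E) = \{1\}$, or $\sigma(E) = \{0,1\}$.

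Next I would handle the two degenerate cases. If $\sigma(E) = \{1\}$, then $\chi_{\Delta}(\af) = 1$ for every $\af \in \sigma(A)$, which forces $\Delta = \sigma(A)$, and hence $E = \chi_{\sigma(A)}(A) = \mathds{1}$ by definition~\ref{def: zeroAndUnit}. Conversely, if $E = \mathds{1}$, then picking any observable $A$ one has $E = \chi_{\sigma(A)}(A)$, and the spectral mapping theorem gives $\sigma(\mathds{1}) = \chi_{\sigma(A)}(\sigma(A)) = \{1\}$. The argument for $0$ is symmetric: $\sigma(E) = \{0\}$ iff $\Delta = \oldemptyset$ iff $E = 0$, and in that case $\sigma(0) = \chi_{\oldemptyset}(\sigma(A)) = \{0\}$.

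Combining these observations, if $0 \neq E \neq \mathds{1}$ then neither of the two single-point possibilities can occur, so the only alternative left among the three cases above is $\sigma(E) = \{0,1\}$, which is the first assertion of the lemma. The remaining two identities $\sigma(\mathds{1}) = \{1\}$ and $\sigma(0) = \{0\}$ have been obtained along the way.

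I do not anticipate any real obstacle: everything needed has already been recorded in the discussion preceding definition~\ref{def: projection} (``Note that, if $E=\chi_{\Delta}(A)$, then $\sigma(E)=\chi_{\Delta}(\sigma(A)) \subset \{0,1\}$\ldots'') and in definition~\ref{def: zeroAndUnit}, so the only care needed is to explicitly invoke the uniqueness of $\mathds{1}$ and $0$ established in that definition when converting the equality $\Delta = \sigma(A)$ (resp.\ $\Delta = \oldemptyset$) into the equality $E = \mathds{1}$ (resp.\ $E = 0$), so that the argument does not depend on the particular representative $(\Delta,A)$ chosen for $E$.
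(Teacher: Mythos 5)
Your proof is correct and is essentially the argument the paper itself relies on: the lemma is stated there as "already proved," referring to the discussion immediately following Definition~\ref{def: projection}, which uses the same spectral mapping identity $\sigma(\chi_{\Delta}(A)) = \chi_{\Delta}(\sigma(A)) \subseteq \{0,1\}$ and the same three-case analysis, with the degenerate cases identified with $\mathds{1}$ and $0$ via Definition~\ref{def: zeroAndUnit}. Your added care about the uniqueness of $\mathds{1}$ and $0$ (so the conclusion is independent of the representative $(\Delta;A)$) is exactly the right point to make explicit.
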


Let $A$ be an observable, and let $\af$ be an eigenvalue of $A$ (recall that, according to lemma~\ref{lemma: eigenvalues}, the spectrum of an observable in a finite-dimensional system is the collection of all its eigenvalues). Let $E_{\af}$ be the projection associated with the event $A^{\af}$, i.e., $E_{\af} \doteq \chi_{\{\af\}}(A)$, which is the unique projection satisfying, for every state $\rho$,
\begin{align*}
    p_{\rho}(\af;A) = \Exp{E_{\af}}.
\end{align*}
It follows from the definition of spectrum that $p_{\rho}(\af;A)\neq 0$ for some state $\rho$, thus, according to lemma~\ref{lemma: separatingProjections}, $E_{\af} \neq 0$. We will refer to the projection associated with an eigenvalue many times throughout the paper, so it is worth emphasizing this definition:
\begin{definition}[Projection associated with an eigenvalue]\label{def: projectionOfEigenvalue} Let $A$ be an observable in some finite-dimensional system, and let $\af$ be an eigenvalue of $A$. We say that the projection associated with the event $A^{\af}$, namely $E_{\af} \doteq \chi_{\{\af\}}(A)$, is the ``projection associated with the eigenvalue $\af$ of $A$''.
\end{definition}

\section{Compatibility (or binary products in the category of observables)}\label{sec: compatibility}

A fundamental aspect of quantum systems is the existence of ``\textit{incompatible}'', ``\textit{incommensurable}'' or ``\textit{non-commuting}'' observables  \cite{landsman2017foundations, bongaarts2015quantum}. As Kochen and Specker point out \cite{kochen1967problem}, the existence of functional relations ensures that a compatibility relation (and consequently the eventual existence of incompatible observables) is well defined in any physical system:
\begin{definition}[Compatibility \cite{kochen1967problem}]\label{def: compatibility} Two observables $A,B$ in a system $\mathfrak{S}$ are said to be compatible if they are both functions of the same observable, i.e., if there is an observable $C$ and  functions $f,g$ on $\sigma(C)$ such that $A=f(C)$ and $B=g(C)$. Equivalently, $A$ and $B$ are compatible if there is \textbf{cone} for them in the category of observables, which consists of an object (i.e., observable) $C$ and a pair of arrows
\begin{center}
    \begin{tikzcd}
        A &\arrow[l]  C \arrow[r] & B
    \end{tikzcd}
\end{center}
Two observables $A,B$ are said to be incompatible if they are not compatible.
\end{definition}
By definition, compatibility is a symmetric relation on $\mathcal{O}$, i.e., $A$ is compatible with $B$ if and only if $B$ is compatible with $A$. It is also a reflexive relation, that is, any observable is compatible with itself, since each observable has an identity arrow associated with it (see definition \ref{def: categoryOfObservables}).

A cone for a diagram is a fundamental concept in category theory, and it is discussed at length in any standard book on the subject (see, for instance, Refs. \cite{mac2013categories, awodey2010category}). We will explore the connection between compatibility and cones in more detail at the end of this section.

It is important to note that two observables $A,B$ are compatible if and only if they are functions of the same \textit{nondegenerate} observable. In fact, suppose that $A$ and $B$ satisfy $A=f(C)$ and $B=g(C)$ for some observable $C$. According to postulate~\ref{post: observables}, we have $C=h(D)$ for some nondegenerate observable $D$, and therefore $A = f'(D)$, $B = g'(D)$, where $f' \doteq f \circ h$ and $g' \doteq g \circ h$. We can illustrate it in the category of observables  using the following commutative diagram.
\begin{center}
    \begin{tikzcd}
        & D\arrow[dl,swap, "f'"]\arrow[d, "h"]\arrow[dr, "g'"] &\\
        A & C\arrow[l,"f"]\arrow[r, swap, "g"] & B
    \end{tikzcd}
\end{center}  
Let $A$, $B$ be compatible observables, and let $A \xleftarrow{f} C \xrightarrow{g} B$ a cone for them. At the end of section~\ref{sec: categoryOfObservables} we asserted that a desirable condition that is necessary for $\mathfrak{S}$ to be a quantum system is the assumption that, for any state $\rho$, the sequential measure $P_{\rho}( \ \cdot \ ; A,B)$ (definition \ref{def: sequentialMeasure}) is the pushforward of $P_{\rho}( \ \cdot \ ; C)$ along $(f,g): \sigma(C) \rightarrow \sigma(A) \times \sigma(B)$, where $(f,g)$ is the product of the arrows $\sigma(C) \xrightarrow{f} \sigma(A)$, $\sigma(C) \xrightarrow{g} \sigma(B)$ in the category \textbf{Set}, i.e., $(f,g)(\gamma) = (f(\gamma),g(\gamma))$ for all $\gamma \in \sigma(C)$. It means that, for any state $\rho$,
\begin{align}\label{eq: sequentialKSdefinitionPre}
    P_{\rho}( \ \cdot \ ;f(C),g(C)) = P_{\rho}((f,g)^{-1}( \ \cdot \ ); C).
\end{align}
This is the natural extension of equation~\ref{eq: realKSdefinition} to compatible observables and the precise formulation of the ``information link'' we discussed in the introduction. Note that, for any $\Delta \times \Sigma \subset \sigma(A) \times \sigma(B)$,
\begin{align}
    p_{\rho}(\Delta \times \Sigma;A,B) &= P_{\rho}(f^{-1}(\Delta)\cap g^{-1}(\Sigma);C),
\end{align}
and in particular
\begin{align}
    p_{\rho}(\af,\beta;A,B) &= P_{\rho}(\Delta_{\af}\cap \Delta_{\beta};C)
\end{align}
for any pair $(\af,\beta) \in \sigma(A) \times \sigma(B)$, where $\Delta_{\af} \equiv f^{-1}(\af)$ and $\Delta_{\beta} \equiv g^{-1}(\beta)$.

In section~\ref{sec: projections} we saw that observable events are associated with projections, and equation~\ref{eq: sequentialKSdefinitionPre} enables us to do the same with sequential events associated with compatible observables. In fact, let $E,F$ be compatible projections, and let $E \xleftarrow{\chi_{\Delta}} C \xrightarrow{\chi_{\Sigma}} F$ be a cone for them. If equation~\ref{eq: sequentialKSdefinitionPre} is satisfied, for any state $\rho$ we obtain
\begin{align*}
    P_{\rho}(1,1;E,F) &= P_{\rho}(\Delta \cap \Sigma;C) = P_{\rho}(1;\chi_{\Delta \cap \Sigma}(C)) = \Exp{\chi_{\Delta \cap \Sigma}(C)}
\end{align*}
(see definition~\ref{def: expectationProjection}). This is true for any cone of $E$ and $F$, so, given any other cone $E \xleftarrow{\chi_{\Delta'}} D \xrightarrow{\chi_{\Sigma'}} F$, we have, for every state $\rho$,
\begin{align*}
     \Exp{\chi_{\Delta \cap \Sigma}(C)}=P_{\rho}(1,1;E,F) = \Exp{\chi_{\Delta' \cap \Sigma'}(D)},
\end{align*}
which implies that $\chi_{\Delta \cap \Sigma}(C) = \chi_{\Delta' \cap \Sigma'}(D)$, according to lemma \ref{lemma: statisticalEquivalence}. Hence, equation~\ref{eq: sequentialKSdefinitionPre} enables us to introduce the following definition.
\begin{definition}[Product of projections]\label{def: productProjections} Let $\mathfrak{S}$ be a system where equation~\ref{eq: sequentialKSdefinitionPre} is valid for any diagram $A \xleftarrow{f} C \xrightarrow{g} B$ and any state $\rho$. Let $E,F$ be compatible projections in this system. The product of $E$ and $F$ is the unique projection  $E \circ F$ satisfying $E \circ F = (\chi_{\Delta} \cdot \chi_{\Sigma})(C)$ for each cone $E \xleftarrow{\chi_{\Delta}} C \xrightarrow{\chi_{\Sigma}} F$. That is, if $E=\chi_{\Delta}(C)$ and $F = \chi_{\Sigma}(C)$,
    \begin{align}
        \chi_{\Delta}(C) \circ \chi_{\Sigma}(C) = (\chi_{\Delta} \cdot \chi_{\Sigma})(C) = \chi_{\Delta \cap \Sigma}(C).
    \end{align}
\end{definition}

It is important to note that, for any projection $E$, we have $E \circ E = E$. In fact, given any $C$ such that $E=\chi_{\Delta}(C)$, we obtain $E \circ E = \chi_{\Delta \cap \Delta}(C) = \chi_{\Delta}(C) = E$. Also, it follows by definition that $E \circ F = F \circ E$ for any pair $E,F$ of compatible projections. 

Now let $A,B$ be compatible observables, and let $E_{\af} \equiv \chi_{\{\af\}}(A)$,  $F_{\beta} \equiv \chi_{\{\beta\}}(B)$ be the projections associated with the eigenvalues $\af$ and $\beta$ of $A$ and $B$ respectively. Since $A$ and $B$ are compatible, $E_{\af}$ and $F_{\beta}$ are compatible. Furthermore, according to equation \ref{eq: sequentialKSdefinitionPre} and definition \ref{def: productProjections}, for any state $\rho$ we have
\begin{align}
    P_{\rho}(\af,\beta;A,B)&=\label{eq: sequentialUpdatePre} \Exp{E_{\af} \circ F_{\beta}} = P_{\rho}(1;E_{\af} \circ F_{\beta}).
\end{align}

Equation \ref{eq: sequentialUpdatePre} shows that, if  $A$ and $B$ are compatible observables, the sequence of objective events $(A^{\af}, B^{\beta})$ and the objective event $(E_{\af} \circ F_{\beta})^{1}$ are equally probable with respect to all states. Consequently, the projection $E_{\af} \circ F_{\beta}$ can be seen as the theoretical counterpart of the sequence $(A^{\af}, B^{\beta})$, just as it is the counterpart of the event $(E_{\af} \circ F_{\beta})^{1}$. For the correspondence between $(A^{\af}, B^{\beta})$ and  $(E_{\af} \circ F_{\beta})^{1}$ to be complete, we have to require that the state update due to the sequence $(A^{\af}, B^{\beta})$, namely the composition $T_{(\beta;B)} \circ T_{(\af;A)}$, is equals to the state update  $T_{(1;E_{\af} \circ F_{\beta})}$ due to the event $(E_{\af} \circ F_{\beta})^{1}$. Note that we are implicitly assuming here that statistically equivalent objective events must update the state of the system in the same way. In principle, it might not be the case, but in the best-case scenario, namely quantum theory, it is. Note also that we cannot make a similar assumption over statistically equivalent subjective events because, according to proposition \ref{prop: subjectiveUpdate}, subjective events are completely determined by the objective events associated with them, so imposing conditions over them could lead us to an inconsistent set of postulates. More importantly, as we will see at the end of this section, it follows from equation \ref{eq: sequentialKSdefinitionPre} that, if $A$ and $B$ are compatible, the sequence $(A,B)$ itself has a counterpart in the theory, which is an observable $\conj{A,B}$ whose spectrum is given by $\sigma(\conj{A,B}) = \{\conj{\af,\beta} \in \mathbb{R}: (\af,\beta) \in \sigma(A)\times \sigma(B), E_{\af} \circ F_{\beta} \neq 0\}$ for some injective function $\conj{ \ , \ }: \sigma(A) \times \sigma(B) \rightarrow \mathbb{R}$, and which satisfies $p_{\rho}(\conj{\af,\beta};\conj{A,B})=p_{\rho}(\af,\beta;A,B)$ for each state $\rho$ and each pair $(\af,\beta) \in \sigma(A) \times \sigma(B)$. The projection associated with the eigenvalue $\conj{\af,\beta}$ of $\conj{A,B}$ (see definition \ref{def: projectionOfEigenvalue}) is  $E_{\af} \circ F_{\beta}$, so, under the assumption that statistically equivalent objective events update the state in the same way, requiring that $T_{(\af;A)} \circ T_{(\beta;B)} = T_{(1,E_{\af} \circ F_{\beta})}$ is equivalent to imposing $T_{(\af;A)} \circ T_{(\beta;B)} = T_{(\conj{\af,\beta};\conj{A,B})}$, which is necessary for the analogy between the sequence $(A,B)$ and the observable $\conj{A,B}$ to be complete. For all these reasons, equation \ref{eq: sequentialUpdate}, which is a necessary condition for $\mathfrak{S}$ to be embedded in a quantum system, is as reasonable as equation~\ref{eq: sequentialKSdefinition}, so we can finally single out the following key feature of quantum mechanics:

\begin{postulate}[Compatibility]\label{post: compatibility} Let $A$, $B$ be compatible observables, and let $A \xleftarrow{f} C \xrightarrow{g} B$ be a cone for them. For any state $\rho$, $P_{\rho}( \ \cdot \ ; A,B)$ (definition~\ref{def: sequentialMeasure}) is the pushforward of $P_{\rho}( \ \cdot \ ; C)$ along $(f,g): \sigma(C) \rightarrow \sigma(A) \times \sigma(B)$, i.e., 
\begin{align}\label{eq: sequentialKSdefinition}
    P_{\rho}( \ \cdot \ ;A,B) = P_{\rho}((f,g)^{-1}( \ \cdot \ ); C).
\end{align}
Furthermore, for any pair $(\af,\beta) \in \sigma(A) \times \sigma(B)$, we have
\begin{align}\label{eq: sequentialUpdate}
    T_{(\af,A)} \circ T_{(\beta,B)} = T_{(1,E_{\af} \circ F_{\beta})},
\end{align}
where $E_{\af} \equiv \chi_{\{\af\}}(A)$ and $F_{\beta} \equiv \chi_{\{\beta\}}(B)$, whereas $E_{\af} \circ F_{\beta}$ is determined by definition~\ref{def: productProjections}.
\end{postulate}

According to postulate~\ref{post: selfCompatibility}, for any observable $A$ and any $\af \in \sigma(A)$, we have $T_{(\af,A)} \circ T_{(\af,A)} = T_{(\af,A)}$. It thus follows from equation~\ref{eq: sequentialUpdate} that
\begin{align*}
    T_{(\af,A)} &= T_{(1,E_{\af} \circ E_{\af})} = T_{(1,E_{\af})},
\end{align*}
where $E \equiv \chi_{\{\af\}}(A)$. Consequently, if $(\af,A)$ and $(\beta,B)$ are statistically equivalent objective events, i.e., if $\chi_{\{\af\}}(A) = \chi_{\{\beta\}}(B)$ (see lemma~\ref{lemma: statisticalEquivalence}), then
\begin{align*}
    T_{(\af,A)} = T_{(1,E)} = T_{(\beta,B)},
\end{align*}
where $E \equiv \chi_{\{\af\}}(A) = \chi_{\{\beta\}}(B)$. It shows that statistically equivalent objective events update the state of the system in the same way, and that this update is completely determined by the projection associated with them. We have thus proved the following lemma.
\begin{lemma}\label{lemma: objectiveProjectiveUpdate} For any objective event $(\af,A)$, we have
\begin{align}
    T_{(\af;A)} = T_{(1;\chi_{\{\af\}}(A))}.
\end{align}
\end{lemma}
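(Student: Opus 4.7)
The plan is to exploit the reflexivity of compatibility and apply the sequential update identity of postulate~\ref{post: compatibility} in the degenerate case where both observables and both eigenvalues coincide. Since the identity arrow $A \xrightarrow{\id_{A}} A$ provides a trivial cone on the pair $(A,A)$, equation~\ref{eq: sequentialUpdate} with $B=A$ and $\beta=\af$ should yield
\begin{align*}
    T_{(\af;A)} \circ T_{(\af;A)} = T_{(1;\,E_{\af} \circ E_{\af})},
\end{align*}
where $E_{\af} \equiv \chi_{\{\af\}}(A)$. The proof then reduces to simplifying each side independently.

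On the left-hand side I would invoke postulate~\ref{post: selfCompatibility}, which gives $T_{(\af;A)} \circ T_{(\af;A)} = \delta_{\af,\af}\,T_{(\af;A)} = T_{(\af;A)}$. On the right-hand side I would use the definition of the product of projections (definition~\ref{def: productProjections}) with respect to the trivial cone $E_{\af} \xleftarrow{\chi_{\{\af\}}} A \xrightarrow{\chi_{\{\af\}}} E_{\af}$, which produces $E_{\af} \circ E_{\af} = \chi_{\{\af\} \cap \{\af\}}(A) = \chi_{\{\af\}}(A) = E_{\af}$. Equating the two simplified expressions immediately delivers $T_{(\af;A)} = T_{(1;E_{\af})}$, which is exactly the claim.

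I do not expect any genuine obstacle here; the whole argument amounts to matching two idempotency facts, one at the level of state-update maps (self-compatibility) and one at the level of projections (the product $E_{\af} \circ E_{\af} = E_{\af}$), both of which are already available from the postulates and definitions established earlier. The only point worth checking is that postulate~\ref{post: compatibility} is legitimately applicable when the two legs of the cone coincide, which is immediate from the reflexivity of the compatibility relation noted right after definition~\ref{def: compatibility}.
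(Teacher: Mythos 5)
Your proof is correct and is essentially identical to the paper's own argument: both combine the idempotence $T_{(\af;A)}\circ T_{(\af;A)}=T_{(\af;A)}$ from postulate~\ref{post: selfCompatibility} with equation~\ref{eq: sequentialUpdate} of postulate~\ref{post: compatibility} applied to the pair $(A,A)$, together with $E_{\af}\circ E_{\af}=E_{\af}$. The remark on the legitimacy of taking $B=A$ via reflexivity of compatibility is a sound (if implicit in the paper) point of care.
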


This result justifies the following definition.

\begin{definition}[Projection update]\label{def: projectionUpdate} Let $E$ be a projection. The state update determined by $E$, denoted $T_{E}$, is the update $T_{(1,E)}$.
\end{definition}
We have shown in lemma~\ref{lemma: projectionSpectrum} that, if $E=0$, then $\sigma(E) = \{0\}$, and consequently $1 \notin \sigma(E)$. In definition~\ref{def: system} we introduced the ``null state'' $0$, which acts as the null measure in the spectrum of each observable, i.e., for any event $(\Delta,A)$, we have $P_{0}( \Delta;A) = 0$. Also, remember that we restricted the definition of observable event to pairs $(\Sigma;B)$, where $B$ is an observable and $\Sigma \subset \sigma(B)$, for convenience only; in definition~\ref{def: system}, we defined the update $T_{(\Sigma;B)}$ determined by any pair $(\Sigma,B) \in \mathfrak{B}(\mathbb{R}) \times \mathcal{O}$, where $ \mathfrak{B}(\mathbb{R})$ denotes the Borel $\sigma$-algebra, so, even if $E=0$, the update $T_{(1,0)} \equiv T_{(\{1\};E)}$ is well defined. However, since $1 \notin \sigma(O)$, the event $(1;0)$ is impossible, i.e., if we prepare a state $\rho$ and measure the observable $0$, we will never obtain the outcome $1$, given that $P_{\rho}(\{1\};0) = 0$ (recall that, in definition~\ref{def: system}, we defined $P_{\rho}( \ \cdot \ ;B)$ as a Borel measure, so $P_{\rho}(\{1\};0)0$ is well defined either). For this reason, we set the state update determined by the projection $E=0$, namely $T_{0} \equiv T_{(1,0)}$, as the trivial update $\mathcal{S} \ni \rho \mapsto 0 \in \mathcal{S}$ that assigns to null state to all states.

It is important to note that the unit (definition~\ref{def: zeroAndUnit}) does not update the state of the system:
\begin{lemma}\label{lemma: unitUpdate} Let $\mathds{1}$ be the unit (definition~\ref{def: zeroAndUnit}). For any state $\rho$, we have
\begin{align}
    T_{\mathds{1}}(\rho) &= \rho,
\end{align}
which is equivalent to saying that $T_{\mathds{1}}$ is the constant function $\mathcal{S} \ni \rho \mapsto \rho \in \mathcal{S}$.
\end{lemma}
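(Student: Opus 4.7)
The plan is to show directly that $T_{\mathds{1}}(\rho)$ and $\rho$ induce the same probability distribution on every observable, and then invoke separability (postulate~\ref{ax: separability}).

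First I would notice that for any observable $B$, the unit $\mathds{1}$ and $B$ are compatible, since both are functions of $B$: we have $B = \id_{B}(B)$ and $\mathds{1} = \chi_{\sigma(B)}(B)$ by definition~\ref{def: zeroAndUnit}. This gives a cone
\begin{center}
\begin{tikzcd}
\mathds{1} & \arrow[l, swap, "\chi_{\sigma(B)}"] B \arrow[r, "\id_{B}"] & B
\end{tikzcd}
\end{center}
in the category of observables, which allows me to apply postulate~\ref{post: compatibility} to the pair $(\mathds{1}, B)$.

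Next, I would unpack equation~\ref{eq: sequentialKSdefinition} for this cone. For any state $\rho$ and any $\beta \in \sigma(B)$, writing $h \doteq (\chi_{\sigma(B)}, \id_{B}): \sigma(B) \to \{1\} \times \sigma(B)$, we get $h^{-1}(\{(1,\beta)\}) = \{\beta\}$, so postulate~\ref{post: compatibility} yields
\begin{align*}
p_{\rho}(1,\beta;\mathds{1}, B) = P_{\rho}(\{\beta\}; B) = p_{\rho}(\beta;B).
\end{align*}
On the other hand, the definition of sequential probability (definitions~\ref{def: sequentialProbability} and~\ref{def: sequentialMeasure}), combined with the fact that $\sigma(\mathds{1}) = \{1\}$ forces $P_{\rho}(\mathds{1}^{1}) = 1$, gives
\begin{align*}
p_{\rho}(1,\beta;\mathds{1}, B) = P_{\rho}(\mathds{1}^{1})\, P_{\rho}(B^{\beta}\vert \mathds{1}^{1}) = P_{T_{(1;\mathds{1})}(\rho)}(B^{\beta}) = p_{T_{\mathds{1}}(\rho)}(\beta; B),
\end{align*}
where I used definition~\ref{def: projectionUpdate} to identify $T_{(1;\mathds{1})}$ with $T_{\mathds{1}}$.

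Comparing the two expressions, $p_{T_{\mathds{1}}(\rho)}(\beta; B) = p_{\rho}(\beta;B)$ for every $\beta \in \sigma(B)$, hence $P_{T_{\mathds{1}}(\rho)}^{B} = P_{\rho}^{B}$. Since this holds for every observable $B$, postulate~\ref{ax: separability} (separability of states) concludes $T_{\mathds{1}}(\rho) = \rho$. The main obstacle is just the bookkeeping of making sure one picks a legitimate cone for $(\mathds{1}, B)$ and correctly inverts the product function $h$; there is no analytical difficulty, only a verification that the formalism developed so far already pins $T_{\mathds{1}}$ down uniquely through the sequential probability identity.
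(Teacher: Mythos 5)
Your proof is correct and follows essentially the same route as the paper: both form the cone $\mathds{1} \xleftarrow{\chi_{\sigma(B)}} B \xrightarrow{\id_{B}} B$, apply postulate~\ref{post: compatibility} to identify the sequential measure $P_{\rho}( \ \cdot \ ;\mathds{1},B)$ with a pushforward of $P_{\rho}( \ \cdot \ ;B)$, use $P_{\rho}(\mathds{1}^{1})=1$ to read off $P_{T_{(1;\mathds{1})}(\rho)}^{B}=P_{\rho}^{B}$, and conclude via separability. No gaps.
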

\begin{proof}
    We know that, for any observable $A$, we can write $\mathds{1} = \chi_{\sigma(A)}(A)$. Therefore, given any state $\rho$ and any event $(\Delta,A)$, definition~\ref{def: sequentialMeasure} and postulate~\ref{post: compatibility} imply that
    \begin{align*}
        P_{T_{(1,\mathds{1})}(\rho)}(\Delta;A) &= \frac{P_{\rho}(\{1\} \times \Delta ;\mathds{1},A)}{P_{\rho}(1;\mathds{1})}  = \frac{P_{\rho}(\sigma(A) \times \Delta;A)}{P_{\rho}(\sigma(A); A)} =  P_{\rho}^{A}(\Delta\vert \sigma(A)) = P_{\rho}(\Delta;A),
    \end{align*}
    and therefore $T_{\mathds{1}}(\rho) \equiv T_{(1,\mathds{1})}(\rho)=\rho$.
\end{proof}

As we show in the appendix (lemma~\ref{lemma: nondisturbance}), any system satisfying postulates~\ref{ax: separability}-\ref{post: compatibility} satisfies the so-called nondisturbance condition \cite{amaral2018graph}. As we mentioned in the introduction, to go from nondisturbing to quantum ``correlations'' \cite{csw2014graph, amaral2018graph} we need to connect incompatible observables by taking transition probabilities into account. This is done in section~\ref{sec: transitionProbabilities}. 

Let $A$ and $B$ be compatible observables and let $A \xleftarrow{f} C \xrightarrow{g} B$ be a cone for them. Since, for any state $\rho$, both $P_{\rho}( \ \cdot \ ; A,B)$ and $P_{\rho}( \ \cdot \ ; B,A)$ are simply the pushforward of $P_{\rho}( \ \cdot \ ; C)$ along $(f,g)$, we have  $P_{\rho}( \ \cdot \ ; A,B) = P_{\rho}( \ \cdot \ ; B,A)$. This is equivalent to saying that, for any $\Delta \subset \sigma(A)$ and $\Sigma \subset \sigma(B)$,
\begin{align*}
    P_{\rho}(A^{\Delta})P_{\rho}(B^{\Sigma} \vert A^{\Delta}) = P_{\rho}(B^{\Sigma})P_{\rho}(A^{\Delta} \vert B^{\Sigma}).
\end{align*}
It proves that compatible observables satisfy the following definition.
\begin{definition}[Commutativity]\label{def: commutativity} We say that two observables $A$ and $B$ commute, denoted $A \leftrightarrow B$, if they satisfy the \textbf{Bayes rule}, by which we mean that, for any state $\rho$ and any pair  $\Delta\subset \sigma(A)$, $\Sigma \subset \sigma(B)$,
    \begin{align}
        P_{\rho}(A^{\Delta})P_{\rho}(B^{\Sigma} \vert A^{\Delta}) =\label{eq: bayes} P_{\rho}(B^{\Sigma})P_{\rho}(A^{\Delta} \vert B^{\Sigma}).
    \end{align}
\end{definition}

The following lemma helps us to simplify the definition:
\begin{lemma}\label{lemma: commutativityObjective} Two observables $A,B$ commute iff, for any state $\rho$ and any pair  $\af \in  \sigma(A)$, $\beta \in \sigma(B)$,
    \begin{align}
        P_{\rho}(A^{\af})P_{\rho}(B^{\beta} \vert A^{\af}) =\label{eq: bayesObjective} P_{\rho}(B^{\beta})P_{\rho}(A^{\af} \vert B^{\beta}).
    \end{align}
\end{lemma}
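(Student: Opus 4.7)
The forward direction is immediate: if $A$ and $B$ commute in the sense of definition~\ref{def: commutativity}, then equation~\ref{eq: bayes} holds for all Borel subsets $\Delta \subset \sigma(A)$ and $\Sigma \subset \sigma(B)$, so in particular it holds when $\Delta = \{\af\}$ and $\Sigma = \{\beta\}$ are singletons, which is precisely equation~\ref{eq: bayesObjective}. So the substance of the proof lies in the reverse direction.

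For the reverse direction, the plan is to reduce the statement about arbitrary subjective events to the hypothesis about objective ones by exploiting proposition~\ref{prop: subjectiveUpdate}. Specifically, I would fix a state $\rho$ and events $A^{\Delta}$, $B^{\Sigma}$, and expand the left-hand side of equation~\ref{eq: bayes} as
\begin{align*}
    P_{\rho}(A^{\Delta}) P_{\rho}(B^{\Sigma} \vert A^{\Delta})
    &= P_{\rho}(A^{\Delta}) \sum_{\af \in \Delta} P_{\rho}^{A}(\{\af\} \vert \Delta) P_{\rho}(B^{\Sigma} \vert A^{\af}) \\
    &= \sum_{\af \in \Delta} P_{\rho}(A^{\af}) P_{\rho}(B^{\Sigma} \vert A^{\af}),
\end{align*}
using proposition~\ref{prop: subjectiveUpdate} (applied to the state $T_{(\Delta;A)}(\rho)$ evaluated at the event $B^{\Sigma}$) in the first step and the definition of the marginal $P_{\rho}^{A}(\{\af\}\vert\Delta)$ in the second. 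Since $P_{\rho}(\ \cdot \ \vert A^{\af}) \equiv P_{T_{(\af;A)}(\rho)}$ is a probability measure on events associated with $B$, I can further expand $P_{\rho}(B^{\Sigma} \vert A^{\af}) = \sum_{\beta \in \Sigma} P_{\rho}(B^{\beta}\vert A^{\af})$ by $\sigma$-additivity on the finite spectrum. Applying the analogous expansion to the right-hand side of equation~\ref{eq: bayes}, I obtain
\begin{align*}
    P_{\rho}(B^{\Sigma}) P_{\rho}(A^{\Delta} \vert B^{\Sigma}) = \sum_{\af \in \Delta}\sum_{\beta \in \Sigma} P_{\rho}(B^{\beta})P_{\rho}(A^{\af}\vert B^{\beta}).
\end{align*}

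The hypothesis, equation~\ref{eq: bayesObjective}, guarantees that the summands of the two double sums agree term by term, so the two sides are equal. Since $\rho$, $\Delta$, and $\Sigma$ were arbitrary, this establishes that $A$ and $B$ commute in the sense of definition~\ref{def: commutativity}, concluding the proof. There is no real obstacle here; the only thing to watch for is a possible division-by-zero when $P_{\rho}(A^{\Delta}) = 0$ or $P_{\rho}(B^{\Sigma}) = 0$, but the convention $0/0 = 0$ stated in the introduction, combined with the fact that both sides of equation~\ref{eq: bayes} vanish in that case, handles the degenerate case transparently.
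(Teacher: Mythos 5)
Your proof is correct and follows essentially the same route as the paper's: the paper reduces the general Bayes rule to the singleton case by noting that both conditions amount to the equality of the sequential measures $P_{\rho}(\ \cdot \ ;A,B)$ and $P_{\rho}(\ \cdot \ ;B,A)$, which (by proposition~\ref{prop: sequentialMeasure}) are determined by their values on singletons. Your double-sum expansion via proposition~\ref{prop: subjectiveUpdate} just unpacks that same decomposition explicitly instead of citing the sequential-measure machinery, and your handling of the $0/0=0$ degenerate case is fine.
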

\begin{proof}
As we have discussed, $A$ and $B$ commute iff the probability measures $P_{\rho}( \ \cdot \ ; A,B) = P_{\rho}( \ \cdot \ ; B,A)$ are equal, which in turn is equivalent to saying that $p_{\rho}( \ \cdot \ ; A,B) = p_{\rho}( \ \cdot \ ; B,A)$ (see definition~\ref{def: sequentialMeasure}). This is exactly what equation~\ref{eq: bayesObjective} says, so the proof is complete.
\end{proof}

The reason why compatible observables commute is clear: if $A$ and $B$ are compatible, equation~\ref{eq: bayes} turn out to be equivalent to the  Bayes rule in the probability space $(\sigma(C),\mathcal{P}(\sigma(C)),P_{\rho}^{C})$, where $A \xleftarrow{f} C \xrightarrow{g} B$ is any cone for $A$ and $B$, because $P_{\rho}( \ \cdot \ ;A,B)$ is the pushforward of $P_{\rho}( \cdot \ ;C)$ along $(f,g)$. In fact, for any cone $A \xleftarrow{f} C \xrightarrow{g} B$, equation~\ref{eq: bayes} is equivalent to
\begin{align*}
        P_{\rho}(C^{\Delta'})P_{\rho}(C^{\Sigma'} \vert C^{\Delta'}) = P_{\rho}(C^{\Sigma'})P_{\rho}(C^{\Delta'} \vert C^{\Sigma'}),
\end{align*}
where $\Delta' \equiv f^{-1}(\Delta)$ and $\Sigma' \equiv g^{-1}(\Sigma)$, which in turn, thanks to postulate~\ref{post: selfCompatibility}, is equivalent to the Bayes rule in $(\sigma(C),\mathcal{P}(\sigma(C)),P_{\rho}^{C})$, namely
\begin{align*}
        P_{\rho}^{C}(\Delta')P_{\rho}^{C}(\Sigma' \vert \Delta') =P_{\rho}^{C}(\Sigma')P_{\rho}^{C}(\Delta' \vert \Sigma').
\end{align*}
Since $E\circ F = F \circ E$ for any pair of compatible projections $E,F$, the state update associated with compatible observables also commute:
\begin{lemma}\label{lemma: updateCommute} Let $A$ and $B$ be compatible observables. Then, for any pair $\Delta \subset \sigma(A)$ and $\Sigma \subset \sigma(B)$,
\begin{align}
    T_{(\Sigma,B)} \circ T_{(\Delta,A)} = T_{(\Delta,A)} \circ T_{(\Sigma,B)}.
\end{align}
\end{lemma}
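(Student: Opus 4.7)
The plan is to reduce the claim to the objective case and then apply equation~\ref{eq: sequentialUpdate} of postulate~\ref{post: compatibility} twice, exploiting the commutativity of the product of compatible projections. Let $A, B$ be compatible, fix $\Delta \subset \sigma(A)$ and $\Sigma \subset \sigma(B)$, and abbreviate $E_{\af} \equiv \chi_{\{\af\}}(A)$, $F_{\beta} \equiv \chi_{\{\beta\}}(B)$.

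First I would handle the purely objective case: for each $(\af,\beta) \in \sigma(A) \times \sigma(B)$, equation~\ref{eq: sequentialUpdate} gives $T_{(\af;A)} \circ T_{(\beta;B)} = T_{(1; E_{\af} \circ F_{\beta})}$, and the same equation with the roles of $A,B$ swapped yields $T_{(\beta;B)} \circ T_{(\af;A)} = T_{(1; F_{\beta} \circ E_{\af})}$. Since $A$ and $B$ are compatible, so are $E_{\af}$ and $F_{\beta}$ (they are functions of any common nondegenerate fine-graining of $A$ and $B$), and then definition~\ref{def: productProjections} immediately gives $E_{\af} \circ F_{\beta} = F_{\beta} \circ E_{\af}$. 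Therefore $T_{(\beta;B)} \circ T_{(\af;A)} = T_{(\af;A)} \circ T_{(\beta;B)}$.

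Next I would lift this to arbitrary $\Delta, \Sigma$. Using definition~\ref{def: sequentialEvent}, the left-hand side $T_{(\Sigma;B)} \circ T_{(\Delta;A)}$ is precisely $T_{(\Delta,\Sigma;A,B)}$, and proposition~\ref{prop: subjectiveSequentialUpdate} decomposes it on any state $\rho$ as
\begin{align*}
    (T_{(\Sigma;B)} \circ T_{(\Delta;A)})(\rho) &= \sum_{(\af,\beta) \in \Delta \times \Sigma} P_{\rho}^{(A,B)}(\{(\af,\beta)\} \vert \Delta \times \Sigma)\, (T_{(\beta;B)} \circ T_{(\af;A)})(\rho).
\end{align*}
The same proposition applied to $T_{(\Delta;A)} \circ T_{(\Sigma;B)} = T_{(\Sigma,\Delta;B,A)}$ yields the analogous expansion with weights $P_{\rho}^{(B,A)}(\{(\beta,\af)\} \vert \Sigma \times \Delta)$ and objective updates $T_{(\af;A)} \circ T_{(\beta;B)}$. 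The objective updates already agree by the first step, so it remains to match the weights.

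The weight-matching is the only nonroutine point, and it is where commutativity of compatible observables enters. Compatible observables commute in the sense of definition~\ref{def: commutativity}: indeed, since $P_{\rho}(\ \cdot\ ; A,B)$ and $P_{\rho}(\ \cdot\ ; B,A)$ are both pushforwards of $P_{\rho}(\ \cdot\ ; C)$ along $(f,g)$ and $(g,f)$ respectively for any cone $A \xleftarrow{f} C \xrightarrow{g} B$, one reads off $p_{\rho}(\af,\beta; A,B) = p_{\rho}(\beta,\af; B,A) = P_{\rho}^{C}(f^{-1}(\af) \cap g^{-1}(\beta))$ (this is lemma~\ref{lemma: commutativityObjective} for compatible observables). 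Summing over $(\af,\beta) \in \Delta \times \Sigma$ also gives $P_{\rho}^{(A,B)}(\Delta \times \Sigma) = P_{\rho}^{(B,A)}(\Sigma \times \Delta)$, so the conditional weights coincide term by term. Plugging back into the two expansions concludes that $(T_{(\Sigma;B)} \circ T_{(\Delta;A)})(\rho) = (T_{(\Delta;A)} \circ T_{(\Sigma;B)})(\rho)$ for every $\rho$, which is the desired equality of mappings.
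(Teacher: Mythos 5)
Your proposal is correct and follows essentially the same route as the paper's proof: establish the objective case $T_{(\beta;B)}\circ T_{(\af;A)} = T_{(\af;A)}\circ T_{(\beta;B)}$ from equation~\ref{eq: sequentialUpdate} and the commutativity $E_{\af}\circ F_{\beta}=F_{\beta}\circ E_{\af}$, then expand both sides via proposition~\ref{prop: subjectiveSequentialUpdate} and match the conditional weights using $P_{\rho}(\ \cdot\ ;A,B)=P_{\rho}(\ \cdot\ ;B,A)$. Your explicit justification of the weight-matching step is slightly more detailed than the paper's, which performs the swap silently, but the argument is the same.
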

\begin{proof}
    It immediately follows from postulate~\ref{post: compatibility} that $T_{(\beta,B)} \circ T_{(\af,A)} = T_{(\af,A)} \circ T_{(\beta,B)}$ for any $(\af,\beta) \in \sigma(A) \times \sigma(B)$.  Now let $A \xleftarrow{f} C \xrightarrow{g} B$ be any cone for $A$ and $B$. According to proposition~\ref{prop: subjectiveSequentialUpdate} we have, for any $\Delta \subset \sigma(A)$, $\Sigma \subset \sigma(B)$ and any state $\rho$,
    \begin{align*}
        (T_{(\Sigma,B)} \circ T_{(\Delta,A)})(\rho) &= \sum_{(\af,\beta) \in \Delta \times \Sigma} P_{\rho}^{(A,B)}(\{\af,\beta\}\vert \Delta \times \Sigma) (T_{(\af,A)} \circ T_{(\beta,B)}) (\rho)
        \\
        &=  \sum_{(\af,\beta) \in \Delta \times \Sigma} \frac{P_{\rho}^{(A,B)}(\{\af,\beta\})}{P_{\rho}^{(A,B)}(\Delta \times \Sigma)}(T_{(\af,A)} \circ T_{(\beta,B)}) (\rho)
        \\
         &=  \sum_{(\af,\beta) \in \Delta \times \Sigma} \frac{P_{\rho}^{(B,A)}(\{\beta,\af\})}{P_{\rho}^{(B,A)}(\Sigma \times \Delta)}(T_{(\beta,A)} \circ T_{(\af,B)}) (\rho)
         \\
         &=(T_{(\Delta,A)} \circ T_{(\Sigma,B)})(\rho).
    \end{align*}
\end{proof}

Commutativity for compatible observables is a reflection of the mathematically desirable and physically consistent assumption that, for any state $\rho$, $P_{\rho}( \ \cdot \ ; f(C),g(C)) = P_{\rho}((f,g)^{-1}( \ \cdot \ ) ; C)$, which is no more than a natural extension of  equation~\ref{eq: realKSdefinition} imposed in postulate~\ref{post: compatibility}. This condition prevents sequential measurements of compatible observables from depending on the choice of measurement procedure, as we discussed at the end of section~\ref{sec: categoryOfObservables}, and it enables us to use cones connecting compatible observables as informational links to reason about sequential measurements of them in statistical terms. If $A$ and $B$ are incompatible, there is no cone $A \leftarrow C \rightarrow B$ in the category of observables, so $P_{\rho}( \ \cdot \ ; A,B)$ cannot be the pushforward of a function $(f,g): \sigma(C) \rightarrow \sigma(A) \times \sigma(B)$ for some observable $C$. What is missing between two incompatible observables $A$ and $B$ is exactly what justifies commutativity for compatible observables, namely a cone $A \leftarrow C \rightarrow B$. In quantum mechanics, commutativity and compatibility are equivalent concepts, which means that  commutativity can be seen as a mere informational property in this case. From this perspective, a system where incompatible observables can commute is a system whose set of observables is failing to include all possible observables. As we will see in the next sections, having a ``complete'' set of observables, i.e., a set in which a cone exists for each pair of commuting observables, is essential for the emergence of the quantum formalism, since it enables us to prove the Specker's principle and consequently to show that algebraic operations are well defined in any set of pairwise compatible observables. The equivalent between compatibility and commutativity is a key feature of quantum mechanics, and, together with our previous postulates, it enables us to derive essentially all the ``commutative part'' of the quantum formalism, i.e., all theorems and definitions that do not involve algebraic operations between incompatible observables and transition probabilities (definition~\ref{def: transitionProbability}). Hence:

\begin{postulate}[Commutativity]\label{post: commutativity} Only compatible observables are able to commute, i.e., compatibility is a necessary condition for commutativity.    
\end{postulate}

Since all compatible observables commute, we have the following result.
\begin{proposition}[Compatibility and commutativity]\label{prop: compatibilityAndCommutativity} Two observables $A$ and $B$ commute if and only if they are compatible.
\end{proposition}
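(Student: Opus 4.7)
The plan is to observe that this proposition is essentially a restatement of what has already been argued, packaging postulate~\ref{post: compatibility} (for one direction) and postulate~\ref{post: commutativity} (for the other) into a single biconditional. So the proof reduces to making both directions explicit and citing the appropriate results.

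For the ``compatible implies commute'' direction, I would start from a cone $A \xleftarrow{f} C \xrightarrow{g} B$ guaranteed by definition~\ref{def: compatibility}. Note that this same cone, read in reverse order, is also a cone for $B$ and $A$. By postulate~\ref{post: compatibility}, the sequential measure $P_{\rho}(\ \cdot\ ;A,B)$ is the pushforward of $P_{\rho}(\ \cdot\ ;C)$ along $(f,g): \sigma(C) \to \sigma(A) \times \sigma(B)$, and likewise $P_{\rho}(\ \cdot\ ;B,A)$ is the pushforward of $P_{\rho}(\ \cdot\ ;C)$ along $(g,f)$. Evaluating each on a rectangle $\Delta \times \Sigma$ (resp.\ $\Sigma \times \Delta$) gives
\begin{align*}
    P_{\rho}(\Delta \times \Sigma;A,B) = P_{\rho}^{C}(f^{-1}(\Delta) \cap g^{-1}(\Sigma)) = P_{\rho}(\Sigma \times \Delta;B,A).
\end{align*}
Unwinding these sequential probabilities via definition~\ref{def: sequentialMeasure} yields exactly the Bayes rule (equation~\ref{eq: bayes}), so $A \leftrightarrow B$ by definition~\ref{def: commutativity}. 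This is really just the discussion immediately preceding definition~\ref{def: commutativity}, which I would cite rather than redo.

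For the ``commute implies compatible'' direction, I would simply invoke postulate~\ref{post: commutativity}, which states that only compatible observables can commute; taking the contrapositive immediately gives the desired implication. Combining the two directions yields the biconditional.

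I do not anticipate any real obstacle here, since both halves are essentially one-line consequences of earlier material. The only thing worth being careful about is that the ``compatible implies commute'' direction must be written as a formal proof (rather than a reference to a discursive paragraph), so I would take a couple of lines to spell out that the pushforward description yields the symmetry $P_{\rho}(\ \cdot\ ;A,B) = P_{\rho}(\ \cdot\ ;B,A)$ as measures on $\sigma(A)\times\sigma(B)$ (after reordering), which in turn is equivalent, via definition~\ref{def: sequentialMeasure}, to equation~\ref{eq: bayes} holding for all rectangles $\Delta \times \Sigma$.
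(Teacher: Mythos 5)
Your proposal is correct and follows exactly the route the paper takes: the forward direction is the pushforward argument from postulate~\ref{post: compatibility} given in the discussion immediately preceding definition~\ref{def: commutativity}, and the reverse direction is a direct invocation of postulate~\ref{post: commutativity}. The paper in fact states the proposition without a separate proof, treating it as the conjunction of those two facts, so your write-up is if anything slightly more explicit than the original.
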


Postulates~\ref{post: selfCompatibility} and \ref{post: compatibility} make our definition of observable very similar to what is usually called ideal measurement in the literature \cite{budroni2021review, chiribella2020exclusivity, cabello2019simple}. However, it does not mean that we can assume the validity of everything that has been proved about ideal measurements by other authors. Theorems about ideal measurements are usually based on ``operational'' approaches to physics in which some operationally justifiable assumptions that we are not making here are made. For instance, it is proved in Ref.~\cite{cabello2019simple} that the exclusivity principle (corollary~\ref{cor: exclusivityPrinciple} of our work) is satisfied in any KS-scenario  (which is basically a collection of ideal measurements connected by compatibility relations) if we have ideal measurements. However, their operational framework allows them to assume in their proof that any sequence of pairwise compatible binary measurements has a counterpart in the theory, something that we haven't postulated or proved yet. Hence, we cannot simply take for granted that these results will be valid in our work; we need to derive everything from our postulates and nothing more.

To conclude this section, we will show that there is a binary product (definition~\ref{def: binaryProduct}) for any pair of compatible observables in the category of observables. This product is the observable $\conj{A,B}$ that  ``internalizes'' the sequential measurement $(A,B)$ in the system $\mathfrak{S}$, as we mentioned above. The existence of binary products for pairs of compatible observables ensures that the important Specker's principle \cite{cabello2012specker, gonda2018almost} follows from postulates \ref{post: selfCompatibility}-\ref{post: commutativity}, as we will show in section \ref{sec: specker}. Let's begin with some technical results.

\begin{lemma}\label{lemma: sequentialExpectation} Let $A,B$ be compatible observables. For any $\Delta \subset \sigma(A)$, $\Sigma \subset \sigma(B)$, and any state $\rho$,
\begin{align}
    P_{\rho}(\Delta \times \Sigma;A,B) = \Exp{E_{\Delta} \circ F_{\Sigma}},
\end{align}
where $E_{\Delta} \equiv \chi_{\Delta}(A)$ and $F_{\Sigma} \equiv \chi_{\Sigma}(B)$.
\end{lemma}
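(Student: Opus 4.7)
The plan is to reduce both sides of the claimed identity to the same probability evaluated on a single cone for $A$ and $B$, and then read off equality. Since $A$ and $B$ are compatible, definition~\ref{def: compatibility} guarantees the existence of a cone $A \xleftarrow{f} C \xrightarrow{g} B$ in the category of observables. I will fix such a cone once and for all, so that I have a single observable $C$ through which I can route everything.

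First I would rewrite the left-hand side using postulate~\ref{post: compatibility}: the sequential measure $P_{\rho}(\ \cdot\ ;A,B)$ is the pushforward of $P_{\rho}(\ \cdot\ ;C)$ along $(f,g):\sigma(C)\to\sigma(A)\times\sigma(B)$, hence
\begin{align*}
P_{\rho}(\Delta\times\Sigma;A,B) = P_{\rho}\bigl((f,g)^{-1}(\Delta\times\Sigma);C\bigr) = P_{\rho}\bigl(f^{-1}(\Delta)\cap g^{-1}(\Sigma);C\bigr).
\end{align*}
Next I would unfold the right-hand side. Because $A=f(C)$ and $B=g(C)$, lemma~\ref{lemma: KSdefinitionProjections} gives $E_{\Delta}=\chi_{\Delta}(f(C))=\chi_{f^{-1}(\Delta)}(C)$ and $F_{\Sigma}=\chi_{g^{-1}(\Sigma)}(C)$. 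Thus $E_{\Delta}$ and $F_{\Sigma}$ are expressed as characteristic functions of the same observable $C$, so definition~\ref{def: productProjections} applies and yields $E_{\Delta}\circ F_{\Sigma}=\chi_{f^{-1}(\Delta)\cap g^{-1}(\Sigma)}(C)$.

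Finally, applying definition~\ref{def: expectationProjection} (or equivalently definition~\ref{def: functionalRelation}) to the event $(1;E_{\Delta}\circ F_{\Sigma})$, I obtain
\begin{align*}
\langle E_{\Delta}\circ F_{\Sigma}\rangle_{\rho}= P_{\rho}\bigl(\chi^{-1}_{f^{-1}(\Delta)\cap g^{-1}(\Sigma)}(1);C\bigr)=P_{\rho}\bigl(f^{-1}(\Delta)\cap g^{-1}(\Sigma);C\bigr),
\end{align*}
which coincides with the expression I derived for the left-hand side, completing the proof.

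There is no real obstacle here: the lemma is essentially a bookkeeping statement that verifies the compatibility of the definitions introduced earlier. The only point requiring minor care is making sure that the choice of cone is irrelevant, but this is already built into definition~\ref{def: productProjections}, which asserts that $E_{\Delta}\circ F_{\Sigma}$ is independent of the cone used to compute it, and into postulate~\ref{post: compatibility}, which imposes the same pushforward identity for every cone. So the argument is just a direct chain of rewrites.
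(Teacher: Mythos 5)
Your proof is correct and follows essentially the same route as the paper's: both fix a cone $A \xleftarrow{f} C \xrightarrow{g} B$, use postulate~\ref{post: compatibility} to push the sequential measure down to $C$, use lemma~\ref{lemma: KSdefinitionProjections} together with definition~\ref{def: productProjections} to identify $E_{\Delta}\circ F_{\Sigma}$ with $\chi_{f^{-1}(\Delta)\cap g^{-1}(\Sigma)}(C)$, and conclude via definition~\ref{def: expectationProjection}. The only cosmetic difference is that you meet in the middle rather than chaining the equalities left to right.
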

\begin{proof}
    Let $A \xleftarrow{f} C \xrightarrow{g} B$ be any cone for $A$ and $B$. According to definitions~\ref{def: projection}, \ref{def: expectationProjection}, \ref{def: productProjections} and postulate~\ref{post: compatibility}, for any state $\rho$ we have
    \begin{align*}
        P_{\rho}(\Delta \times \Sigma; A,B) &= P_{\rho}((f,g)^{-1}(\Delta \times \Sigma); C) = P_{\rho}(f^{-1}(\Delta) \cap g^{-1}(\Sigma);C)
        \\
        &= P_{\rho}(1;\chi_{f^{-1}(\Delta) \cap g^{-1}(\Sigma)}(C)) = \Exp{\chi_{f^{-1}(\Delta) \cap g^{-1}(\Sigma)}(C)}
        \\
        &= \Exp{(\chi_{f^{-1}(\Delta)}\cdot \chi_{g^{-1}(\Sigma)})(C)} =  \Exp{\chi_{f^{-1}(\Delta)}(C)\circ \chi_{g^{-1}(\Sigma)}(C)}
        \\
        &= \Exp{\chi_{\Delta}(f(C))\circ \chi_{\Sigma}(g(C))} = \Exp{E_{\Delta} \circ F_{\Sigma}},
    \end{align*}
    where $E_{\Delta} \equiv \chi_{\Delta}(A)$ and $F_{\Sigma} \equiv \chi_{\Sigma}(B)$.
\end{proof}

\begin{lemma}\label{lemma: preservingDimension} Let $A,B$ be compatible observables, and let $\af,\beta$ be eigenvalues of $A$ and $B$ respectively. The following claims are equivalent.
\begin{itemize}
    \item[(a)] For every state $\rho$, $p_{\rho}(\af,\beta;A,B) = 0$.
    \item[(b)] $E_{\af}\circ F_{\beta} = 0$, where $E_{\af} \equiv \chi_{\{\af\}}(A)$ and $F_{\beta} \equiv \chi_{\{\beta\}}(B)$.
    \item[(c)] Given any cone  $A \xleftarrow{f} C \xrightarrow{g} B$, we have $f^{-1}(\af) \cap g^{-1}(\beta) = 0$.
\end{itemize}
\end{lemma}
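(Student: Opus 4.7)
The plan is to establish the chain (a)$\Leftrightarrow$(b) and (b)$\Leftrightarrow$(c), using the two preceding results (lemma on sequential expectation and the identity $\chi_{\Sigma}(f(A))=\chi_{f^{-1}(\Sigma)}(A)$) as the main tools.

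For (a)$\Leftrightarrow$(b), I would specialise lemma~\ref{lemma: sequentialExpectation} to singletons, obtaining $p_{\rho}(\af,\beta;A,B) = \langle E_{\af}\circ F_{\beta}\rangle_{\rho}$ for every state $\rho$. Statement (a) thus says that the projection $E_{\af}\circ F_{\beta}$ has vanishing expectation on every state, while the zero projection $0$ (definition~\ref{def: zeroAndUnit}) trivially satisfies $\langle 0\rangle_{\rho}=0$ for every $\rho$; lemma~\ref{lemma: separatingProjections} then yields $E_{\af}\circ F_{\beta}=0$, and the reverse implication is immediate.

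For (b)$\Leftrightarrow$(c), I would fix any cone $A\xleftarrow{f} C\xrightarrow{g} B$, which exists because $A$ and $B$ are compatible. Using lemma~\ref{lemma: KSdefinitionProjections}, rewrite $E_{\af}=\chi_{\{\af\}}(f(C))=\chi_{f^{-1}(\af)}(C)$ and likewise $F_{\beta}=\chi_{g^{-1}(\beta)}(C)$. Definition~\ref{def: productProjections} then gives
\begin{align*}
    E_{\af}\circ F_{\beta} = \chi_{f^{-1}(\af)\cap g^{-1}(\beta)}(C).
\end{align*}
So (b) is equivalent to $\chi_{\Delta}(C)=0$ where $\Delta \equiv f^{-1}(\af)\cap g^{-1}(\beta)\subset\sigma(C)$. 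It remains to show that a projection of the form $\chi_{\Delta}(C)$ with $\Delta\subseteq\sigma(C)$ is the zero projection if and only if $\Delta=\oldemptyset$. One direction holds by definition~\ref{def: zeroAndUnit}. For the other, if $\Delta\neq\oldemptyset$, pick $\gamma\in\Delta\subseteq\sigma(C)$; by lemma~\ref{lemma: eigenvalues} there is a state $\rho$ with $p_{\rho}(\gamma;C)=1$, whence $\langle\chi_{\Delta}(C)\rangle_{\rho}=P_{\rho}(\Delta;C)\geq P_{\rho}(\{\gamma\};C)=1$, contradicting $\chi_{\Delta}(C)=0$ (whose expectation is $0$ on every state).

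I do not expect any real obstacle: the only substantive step is the small observation that $\chi_{\Delta}(C)=0\Leftrightarrow\Delta=\oldemptyset$, which is secured by lemma~\ref{lemma: eigenvalues}. A minor point worth making explicit is that the equivalence (b)$\Leftrightarrow$(c) automatically holds for \emph{every} cone, because (b) makes no reference to a cone; so once the argument above is given for an arbitrary cone, the ``for any cone'' quantifier in (c) is handled uniformly.
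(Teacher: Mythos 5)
Your proposal is correct and follows essentially the same route as the paper: lemma~\ref{lemma: sequentialExpectation} plus lemma~\ref{lemma: separatingProjections} for (a)$\Leftrightarrow$(b), and lemma~\ref{lemma: KSdefinitionProjections} with definition~\ref{def: productProjections} for (b)$\Leftrightarrow$(c). The only difference is that you explicitly justify the step $\chi_{\Delta}(C)=0\Leftrightarrow\Delta=\oldemptyset$ via lemma~\ref{lemma: eigenvalues}, which the paper merely asserts; that is a welcome addition, not a divergence.
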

\begin{proof}
    According to lemma~\ref{lemma: sequentialExpectation}, $p_{\rho}(\af,\beta;A,B) = \Exp{E_{\af} \circ F_{\beta}}$, thus it follows from lemma~\ref{lemma: separatingProjections} and definition \ref{def: zeroAndUnit} that item $(a)$ and $(b)$ are equivalent. Now let $A \xleftarrow{f} C \xrightarrow{g} B$ be any cone for $A$ and $B$. Lemma~\ref{lemma: KSdefinitionProjections} ensures that $E_{\af} = \chi_{f^{-1}(\af)}(C)$ and $F_{\beta} = \chi_{g^{-1}(\beta)}(C)$, thus 
    \begin{align*}
        E_{\af} \circ F_{\beta} = \chi_{f^{-1}(\af)}(C) \circ \chi_{g^{-1}(\beta)}(C) = \chi_{f^{-1}(\Delta) \cap g^{-1}(\beta)}(C).
    \end{align*}
    We know that $ \chi_{f^{-1}(\Delta) \cap g^{-1}(\beta)}(C) = 0$ if and only if $f^{-1}(\Delta) \cap g^{-1}(\beta) = \oldemptyset$, thus items $(b)$ and $(c)$ are equivalent.
\end{proof}
\begin{corollary}\label{cor: rangeOf(f,g)} Let $A,B$ be compatible observables, and let $A \xleftarrow{f} C \xrightarrow{g} B$ be any cone for them. Then the range of $(f,g): \sigma(C) \rightarrow \sigma(A) \times \sigma(B)$, denoted $ (f,g)(\sigma(C))$, satisfies
\begin{align}
    (f,g)(\sigma(C)) = \{(\af,\beta) \in \sigma(A)\times\sigma(B): E_{\af} \circ F_{\beta} \neq 0\},
\end{align}
    where $E_{\af} \equiv \chi_{\{\af\}}(A)$ and $F_{\beta} \equiv \chi_{\{\beta\}}(B)$.
\end{corollary}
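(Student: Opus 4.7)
The corollary is essentially a direct translation of the equivalence between items $(b)$ and $(c)$ of Lemma~\ref{lemma: preservingDimension} into a statement about the range of $(f,g)$. My plan is therefore very short: unfold the definition of ``range of $(f,g)$'' set-theoretically, and then invoke the lemma to swap the condition $f^{-1}(\af)\cap g^{-1}(\beta)\neq\oldemptyset$ for the condition $E_{\af}\circ F_{\beta}\neq 0$.

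More concretely, I would start by fixing $(\af,\beta)\in\sigma(A)\times\sigma(B)$ and writing out the membership criterion $(\af,\beta)\in(f,g)(\sigma(C))$ as the existence of some $\gamma\in\sigma(C)$ with $(f(\gamma),g(\gamma))=(\af,\beta)$. The right-hand side of this biconditional is, by definition of preimages, simply the statement that $\gamma\in f^{-1}(\af)\cap g^{-1}(\beta)$ for at least one $\gamma$, i.e.\ that $f^{-1}(\af)\cap g^{-1}(\beta)\neq\oldemptyset$. This step is purely a matter of unpacking set-theoretic notation and uses nothing beyond the definition of the product map $(f,g)$.

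At this point the proof is already finished, since the equivalence $(b)\Leftrightarrow(c)$ proved inside Lemma~\ref{lemma: preservingDimension} states precisely that $f^{-1}(\af)\cap g^{-1}(\beta)\neq\oldemptyset$ if and only if $E_{\af}\circ F_{\beta}\neq 0$ (where $E_{\af}\equiv\chi_{\{\af\}}(A)$ and $F_{\beta}\equiv\chi_{\{\beta\}}(B)$, and here one uses Lemma~\ref{lemma: KSdefinitionProjections} to rewrite $E_\af=\chi_{f^{-1}(\af)}(C)$ and $F_\beta=\chi_{g^{-1}(\beta)}(C)$, so that their product is $\chi_{f^{-1}(\af)\cap g^{-1}(\beta)}(C)$, vanishing iff the index set is empty by Definition~\ref{def: zeroAndUnit}). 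Chaining the two biconditionals yields the desired set equality.

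There is essentially no obstacle here because the corollary is a repackaging of an already-proved equivalence; the only thing to watch out for is to make sure the chain of equivalences goes in both directions, so I would write the argument as a biconditional between $(\af,\beta)\in(f,g)(\sigma(C))$ and $E_{\af}\circ F_{\beta}\neq 0$ rather than doing two inclusions separately. The whole proof should fit in three or four lines.
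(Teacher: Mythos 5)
Your proposal is correct and follows essentially the same route as the paper: unpack $(\af,\beta)\in(f,g)(\sigma(C))$ as $(f,g)^{-1}(\af,\beta)=f^{-1}(\af)\cap g^{-1}(\beta)\neq\oldemptyset$, then invoke the equivalence of items $(b)$ and $(c)$ of lemma~\ref{lemma: preservingDimension}. The extra detail you give about how that equivalence is itself established (via lemma~\ref{lemma: KSdefinitionProjections}) is already contained in the lemma's proof and is not needed again here.
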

\begin{proof}
     Let $A \xleftarrow{f} C \xrightarrow{g} B$ be a cone for pair $A,B$ of compatible observables. For any $(\af,\beta) \in  \sigma(A) \times \sigma(B)$ we have $(f,g)^{-1}(\af,\beta) = f^{-1}(\af) \cap g^{-1}(\beta)$, thus $(\af,\beta) \in (f,g)(\sigma(C))$ iff $f^{-1}(\af) \cap g^{-1}(\beta) \neq \oldemptyset$. According to lemma~\ref{lemma: preservingDimension}, this is equivalent to saying that $E_{\af} \circ F_{\beta} \neq 0$, so the proof is complete.
\end{proof}

Let $A$ and $B$ be compatible observables, and let $\conj{ \ , \ }: \sigma(A) \times \sigma(B) \rightarrow \mathbb{R}$ be any injective function. For any cone $A \xleftarrow{f} C \xrightarrow{g} B$, denote by $\conj{f,g}$ the composition $\conj{ \ , \ } \circ (f,g): \sigma(C) \rightarrow \mathbb{R}$. According to postulate~\ref{post: observables}, the observable $\conj{A,B} \doteq \conj{f,g}(C)$ is well defined, and corollary~\ref{cor: rangeOf(f,g)} implies that 
\begin{align*}
    \sigma(\conj{A,B}) &= \conj{f,g}(\sigma(C)) = \{\conj{\af,\beta}: (\af,\beta) \in (f,g)(\sigma(C))\}
    \\
    &=\{\conj{\af,\beta}: (\af,\beta) \in \sigma(A) \times \sigma(B), E_{\af} \circ F_{\beta} \neq 0\},
\end{align*}
where $E_{\af} \equiv \chi_{\{\af\}}(A)$ and $F_{\beta} \equiv \chi_{\{\beta\}}(B)$. Also, for any $\conj{\af,\beta} \in \sigma(\conj{A,B})$ and any state $\rho$,
\begin{align*}
    p_{\rho}(\conj{\af,\beta};\conj{A,B})&= P_{\rho}(\conj{f,g}^{-1}(\conj{\af,\beta});C) = P_{\rho}((f,g)^{-1}(\af,\beta);C)
    \\
    &= p_{\rho}(\af,\beta;A,B) = \Exp{E_{\af} \circ F_{\beta}}.
\end{align*}
In particular, the projection associated with the event $(\conj{\af,\beta};\conj{A,B})$ is $E_{\af} \circ F_{\beta}$, and consequently
\begin{align*}
    T_{(\conj{\af,\beta};\conj{A,B})}&= T_{E_{\af}\circ F_{\beta}} = T_{(\beta,B)} \circ T_{(\af,A)}.
\end{align*}
Finally, note that the definition of $\conj{A,B}$ does not depend on the cone we chose. That is, let $A \xleftarrow{f'} D \xrightarrow{g'} B$ be any other cone for $A$ and $B$, and define $\conj{A,B}' \doteq \conj{f',g'}(D)$. It immediately follows from corollary~\ref{cor: rangeOf(f,g)} that $\sigma(\conj{A,B}')=\sigma(\conj{A,B})$, and, for any $\xi \in \sigma(\conj{A,B}')$,  the projection associated with the event $(\xi;\conj{A,B}')$ is $E_{\af} \circ F_{\beta}$. It implies that, for any state $\rho$ and any $\xi \in \sigma(\conj{A,B}')$, $p_{\rho}(\xi;\conj{A,B}')=p_{\rho}(\xi;\conj{A,B})$, which in turn implies that $\conj{A,B}'=\conj{A,B}$. It justifies the following definition.

\begin{definition}[Binary conjunction]\label{def: binaryConjunction} Let $A$, $B$ be compatible observables, and let $\conj{ \ , \ }: \sigma(A) \times \sigma(B) \rightarrow \mathbb{R}$ be any injective function. The conjunction of $A$ and $B$ induced by $\conj{ \ , \ }$ is the unique observable $\conj{A,B}$ satisfying $\conj{A,B} = \conj{f,g}(C)$ for every cone $A \xleftarrow{f} C \xrightarrow{g} B$, where $\conj{f,g} \equiv \conj{ \ , \ } \circ (f,g)$.    
\end{definition}

The following lemma has been proved.
\begin{lemma}\label{lemma: conjunction} Let $\conj{A,B}$ be a conjunction of $A$ and $B$. Then, for any $\conj{\af,\beta} \in \sigma(\conj{A,B})$, the projection $E_{\conj{\af,\beta}}$ associated with the event $(\conj{\af,\beta},\conj{A,B})$ is the product $E_{\af} \circ F_{\beta}$, where $E_{\af} \equiv \chi_{\{\af\}}(A)$ and $F_{\beta} \equiv \chi_{\{\beta\}}(B)$. Consequently, 
\begin{align}
    T_{(\conj{\af,\beta};\conj{A,B})} = T_{E_{\af} \circ F_{\beta}} = T_{(\beta,B)}\circ T_{(\af,A)},
\end{align}
and for any state $\rho$,
\begin{align}
    p_{\rho}(\conj{\af,\beta};\conj{A,B})= \Exp{E_{\conj{\af,\beta}}} = p_{\rho}(\af,\beta;A,B).
\end{align}
Finally, we have $\sigma(\conj{A,B}) = \{\conj{\af,\beta}: (\af,\beta) \in \sigma(A) \times \sigma(B), E_{\af} \circ F_{\beta} \neq 0\}$.
\end{lemma}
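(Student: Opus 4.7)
My plan is to reduce every claim to computations inside a fixed cone. I would begin by picking any cone $A \xleftarrow{f} C \xrightarrow{g} B$ and, by definition~\ref{def: binaryConjunction}, write $\conj{A,B} = \conj{f,g}(C)$ with $\conj{f,g} \equiv \conj{\ , \ } \circ (f,g)$. The spectral mapping property noted just after lemma~\ref{lemma: functionalRelation} identifies $\sigma(\conj{A,B})$ with $\conj{f,g}(\sigma(C))$, and because $\conj{\ , \ }$ is injective this is in bijection with $(f,g)(\sigma(C))$, which corollary~\ref{cor: rangeOf(f,g)} describes as $\{(\af,\beta) \in \sigma(A) \times \sigma(B): E_{\af} \circ F_{\beta} \neq 0\}$. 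That disposes of the description of the spectrum.

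Next I would compute $p_{\rho}(\conj{\af,\beta};\conj{A,B})$ for any $\conj{\af,\beta} \in \sigma(\conj{A,B})$ and any state $\rho$. By definition~\ref{def: functionalRelation} this equals $P_{\rho}(\conj{f,g}^{-1}(\conj{\af,\beta});C)$, and injectivity of $\conj{\ , \ }$ collapses the preimage to $(f,g)^{-1}(\af,\beta) = f^{-1}(\af) \cap g^{-1}(\beta)$. Combining lemma~\ref{lemma: KSdefinitionProjections} with definition~\ref{def: productProjections} then rewrites this probability as $\Exp{E_{\af} \circ F_{\beta}}$; alternatively one can invoke lemma~\ref{lemma: sequentialExpectation} directly. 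Either route, together with postulate~\ref{post: compatibility}, also yields the identity $p_{\rho}(\conj{\af,\beta};\conj{A,B}) = p_{\rho}(\af,\beta;A,B)$.

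Having established these probabilities, I would apply lemma~\ref{lemma: separatingProjections} to the two projections $E_{\conj{\af,\beta}}$ and $E_{\af} \circ F_{\beta}$: since they assign the same expectation to every state, they must coincide. From there, lemma~\ref{lemma: objectiveProjectiveUpdate} immediately gives $T_{(\conj{\af,\beta};\conj{A,B})} = T_{E_{\conj{\af,\beta}}} = T_{E_{\af} \circ F_{\beta}}$, and postulate~\ref{post: compatibility} (combined with lemma~\ref{lemma: updateCommute}, which lets me swap the order of the factors) produces the remaining equality with $T_{(\beta,B)} \circ T_{(\af,A)}$.

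The proof is essentially a bookkeeping exercise and I do not expect a serious obstacle. The one subtle step is exploiting the injectivity of $\conj{\ , \ }$ to convert the preimage under $\conj{f,g}$ into the intersection $f^{-1}(\af) \cap g^{-1}(\beta)$, since this is precisely what brings postulate~\ref{post: compatibility} and definition~\ref{def: productProjections} into contact with the computation. I also want to make explicit that the whole argument is independent of the particular cone chosen, but this independence is already built into definition~\ref{def: binaryConjunction} of $\conj{A,B}$ itself, so no extra verification is needed.
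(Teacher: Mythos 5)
Your proposal is correct and follows essentially the same route as the paper, which establishes this lemma in the discussion immediately preceding definition~\ref{def: binaryConjunction}: fix a cone, use corollary~\ref{cor: rangeOf(f,g)} for the spectrum, compute $p_{\rho}(\conj{\af,\beta};\conj{A,B})=P_{\rho}((f,g)^{-1}(\af,\beta);C)=p_{\rho}(\af,\beta;A,B)=\Exp{E_{\af}\circ F_{\beta}}$ via postulate~\ref{post: compatibility} and lemma~\ref{lemma: sequentialExpectation}, identify the associated projection by statistical equivalence, and get the update identities from lemma~\ref{lemma: objectiveProjectiveUpdate} and postulate~\ref{post: compatibility}. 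Your explicit appeal to lemma~\ref{lemma: updateCommute} to reorder $T_{(\beta,B)}\circ T_{(\af,A)}$ is a minor point the paper glosses over, but it is the right justification.
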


Compatibility is defined by a cone, so the question naturally arises of whether there is a \textit{limit} for any pair of compatible observables \cite{awodey2010category, mac2013categories}. This limit  is called a \textit{binary product} in category theory \cite{awodey2010category, mac2013categories}, and its definition goes as follows. 

\begin{definition}[Binary product]\label{def: binaryProduct} Let $\mathcal{C}$ be a category (definition~\ref{def: category}), and let $A$,$B$ be objects of this category. A cone for $A,B$ consists of an object $C$ and arrows $A \xleftarrow{f} C \xrightarrow{g} B$. A cone $A \xleftarrow{\theta_{A}} P \xrightarrow{\theta_{B}} B$ is said to be a product diagram if it is a limit cone, i.e., if and only if, given any cone $A \xleftarrow{f} C \xrightarrow{g} B$, there is one, and only one, arrow $C \xrightarrow{h} P$  making the following diagram commute
\begin{center}
    \begin{tikzcd}
        & C\arrow[dl, swap, "f"]\arrow[d, dashed, "h"]\arrow[dr, "g"] &\\
        A & P\arrow[l,"\theta_{A}"]\arrow[r, swap, "\theta_{B}"] & B
    \end{tikzcd}
\end{center}  
\end{definition}

The arrow $C \rightarrow P$ in the diagram above is dashed to indicate that it is the unique arrow making this diagram commute. In the category of observables, it makes no difference because there can be at most one arrow from one observable to another.

In the category \textbf{Set}, the canonical example of a product for the sets $U_{1}$, $U_{2}$ consists in the diagram $U_{1} \xleftarrow{\pi_{1}} U_{1} \times U_{2} \xrightarrow{\pi_{2}} U_{2}$, where $U_{1} \times U_{2}$ denotes the Cartesian product and $\pi_{1}$, $\pi_{2}$ denote the projections $(u_{1},u_{2}) \mapsto u_{1}$ and $(u_{1},u_{2}) \mapsto u_{2}$ respectively. Furthermore,  given a pair of arrows $U_{1} \xleftarrow{f} V\xrightarrow{g} U_{2}$, it is easy to see that the unique arrow $V\rightarrow U_{1} \times U_{2}$ making the diagram in definition~\ref{def: binaryProduct} commute  is the function $(f,g): V\rightarrow U_{1}\times U_{2}$ given by $\forall_{v \in V}: \ (f,g)(v) \doteq (f(v),g(v))$.

An arrow $A \xrightarrow{f} B$ in a category $\mathcal{C}$ is said to be an \textbf{isomorphism} (or simply an \textit{iso}) if there exists an arrow $B \xrightarrow{g} A$ (called the inverse of $f$ and usually denoted $f^{-1}$) satisfying $g \circ f = \text{id}_{A}$ and $f \circ g = \id_{B}$. Clearly, $f=(f^{-1})^{-1}$. By construction, an arrow $A \xrightarrow{f} B$ in the category of observables is an isomorphism if and only if the function $\sigma(A) \xrightarrow \sigma(B)$ associated with it is a bijection (equivalently, an isomorphism in \textbf{Set}). In any category, limits are unique up to isomorphism \cite{awodey2010category}, so, if both cones $A \leftarrow P \rightarrow B$ and $A \leftarrow Q \rightarrow B$ are product diagrams for $A$ and $B$, then $P$ and $Q$ are \textit{isomorphic objects} \cite{awodey2010category}. For this reason, it is commonplace in category theory to talk about \textit{the} product of two objects. We will show that conjunctions  are binary products in the category of observables, so, even though definition \ref{def: binaryConjunction} depends on an isomorphism $\conj{ \ ,\ }$, we can talk about \textit{the} conjunction of $A$ and $B$ without risk of confusion. 

Now let's show  that the conjunction $\conj{A,B}$, endowed with appropriate arrows, is a product for $A$ and $B$. To begin with, note that, if $\af \in \sigma(A)$, then $\conj{\af,\beta} \in \sigma(\conj{A,B})$ for some $\beta \in \sigma(B)$, which is equivalent to saying that $E_{\af} \circ F_{\beta} \neq 0$, where $E_{\af} \equiv \chi_{\{\af\}}(A)$ and $F_{\beta} \equiv \chi_{\{\beta\}}(B)$ (see lemma~\ref{lemma: conjunction}). In fact, assume, for a proof by contradiction, that $\af \in \sigma(A)$ is such that $\conj{\af,\beta} \notin \sigma(\conj{A,B})$ for all $\beta \in \sigma(B)$. According to lemmas~\ref{lemma: preservingDimension} and~\ref{lemma: conjunction}, it implies that
\begin{align*}
    0 = \sum_{\beta \in \sigma(B)} p_{\emptyset}(\af,\beta;A,B)  =  p_{\emptyset}(\af;A)\sum_{\beta \in \sigma(B)} p_{T_{(\af,A)}(\emptyset)}(\beta;B) = p_{\emptyset}(\af;A) = \langle E_{\af} \rangle_{\emptyset},
\end{align*}
where, as usual, $\emptyset$ denotes the completely mixed state (definition~\ref{def: completelyMixed}). $E_{\af}$ is the projection associated with the eigenvalue $\af$ of $A$ (definition~\ref{def: projectionOfEigenvalue}), so, as we have discussed at the end of section~\ref{sec: projections}, it is non zero. Hence, its trace is non zero (see definition~\ref{def: traceProjection}), and consequently $\langle E_{\af} \rangle_{\emptyset} \neq 0$, which completes the proof. Since $A$ and $B$ commute, it is analogous to show that, for any $\beta \in \sigma(B)$, we have $\conj{\af,\beta} \in \sigma(\conj{A,B})$ for some $\af \in \sigma(A)$. These results ensure that the mappings $\sigma(\conj{A,B}) \ni \conj{\af,\beta} \xmapsto{\theta_{A}} \af \in \sigma(A)$ and  $\sigma(\conj{A,B}) \ni \conj{\af,\beta} \xmapsto{\theta_{B}} \beta \in \sigma(B)$ are surjective (note that these functions are well defined because $\conj{ \ , \ }$ is injective). Finally, for any $\af \in \sigma(A)$ and any state $\rho$ we have
\begin{align*}
    P_{\rho}(\theta_{A}^{-1}(\af);\conj{A,B})&= \sum_{\substack{\beta \in \sigma(B) \\ E_{\af} \circ F_{\beta} \neq 0}}p_{\rho}(\conj{\af,\beta};\conj{A,B}) = \sum_{\beta \in \sigma(B)} p_{\rho}(\af,\beta;A,B) = p_{\rho}(\af;A),
\end{align*}
which implies that $P_{\rho}( \ \cdot \  ;A)$ is the pushforward of  $P_{\rho}( \ \cdot \ ;\conj{A,B})$ along $\theta_{A}$. It is analogous to show that  $P_{\rho}( \ \cdot \ ;B)$ is the pushforward of  $P_{\rho}( \ \cdot \ ;\conj{A,B})$ along $\theta_{B}$. Hence, $\theta_{A}$ and $\theta_{B}$ induce arrows $A \xleftarrow{\theta_{A}} \conj{A,B} \xrightarrow{\theta_{B}} B$ in the category of observables, which means that $A=\theta_{A}(\conj{A,B})$ and $B=\theta_{B}(\conj{A,B})$. Finally:

\begin{proposition}[Conjunctions and binary products]\label{prop: conjunctionAndProduct} Let $A$ and $B$ be compatible observables and $\conj{A,B}$ a conjunction of them. Let $\theta_{A}: \conj{A,B} \ri A$ and $\theta_{B}: \conj{A,B} \ri B$ be the arrows induced by the (surjective) functions $\sigma(\conj{A,B}) \ni \conj{\af,\beta} \mapsto \af \in \sigma(A)$ and  $\sigma(\conj{A,B}) \ni \conj{\af,\beta} \mapsto \beta \in \sigma(B)$ respectively. Then the diagram
\begin{align}
    A \xleftarrow{\theta_{A}} \conj{A,B} \xrightarrow{\theta_{B}} B
\end{align}
is a binary product for $A$ and $B$.
\end{proposition}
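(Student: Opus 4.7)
My plan is to exploit the fact that the category of observables is thin (Definition~\ref{def: categoryOfObservables} together with Lemma~\ref{lemma: functionalRelation}): between any two observables there is at most one arrow. Uniqueness of the mediating arrow $h: C \to \conj{A,B}$ in the product diagram is therefore automatic, and I only need to establish existence and commutativity.

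For existence, given a cone $A \xleftarrow{f} C \xrightarrow{g} B$, I would exhibit the candidate $h$ as the arrow in the category of observables induced by the set-theoretic function $\conj{f,g} \equiv \conj{\,\cdot\,,\,\cdot\,} \circ (f,g) : \sigma(C) \to \mathbb{R}$. The point is that Definition~\ref{def: binaryConjunction} already says $\conj{A,B} = \conj{f,g}(C)$ for every cone $A \xleftarrow{f} C \xrightarrow{g} B$; this equation is exactly the statement that $\conj{A,B}$ is a function of $C$ via $\conj{f,g}$, and hence (using the surjectivity part of Lemma~\ref{lemma: functionalRelation}, together with the spectral mapping identity $\sigma(\conj{A,B}) = \conj{f,g}(\sigma(C))$) gives a well-defined arrow $h : C \to \conj{A,B}$ in the category of observables whose underlying map of spectra is $\gamma \mapsto \conj{f(\gamma), g(\gamma)}$.

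It then remains to check that the triangles commute, i.e.\ that $\theta_A \circ h = f$ and $\theta_B \circ h = g$. Because composition of arrows in the category of observables is realised by composition of the underlying functions on spectra, I only need to evaluate: for any $\gamma \in \sigma(C)$,
\begin{align*}
    (\theta_A \circ h)(\gamma) &= \theta_A\bigl(\conj{f(\gamma),g(\gamma)}\bigr) = f(\gamma),\\
    (\theta_B \circ h)(\gamma) &= \theta_B\bigl(\conj{f(\gamma),g(\gamma)}\bigr) = g(\gamma),
\end{align*}
where the second equalities are the definitions of $\theta_A$ and $\theta_B$ given just before the proposition. This gives $\theta_A \circ h = f$ and $\theta_B \circ h = g$, as required.

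There is no real obstacle here because the genuine work has already been done: the delicate steps were showing that $\conj{A,B}$ is independent of the chosen cone (Definition~\ref{def: binaryConjunction}) and that $\theta_A, \theta_B$ are honest arrows in the category of observables (the computation preceding the proposition, using Lemma~\ref{lemma: conjunction} and compatibility/commutativity to verify that $P_\rho(\,\cdot\,;A)$ and $P_\rho(\,\cdot\,;B)$ are the pushforwards of $P_\rho(\,\cdot\,;\conj{A,B})$ along $\theta_A$ and $\theta_B$). The mildest subtlety I would double-check is that the construction $h = \conj{f,g}$ genuinely lands in $\sigma(\conj{A,B})$, but this is forced by the spectral mapping identity and injectivity of $\conj{\,\cdot\,,\,\cdot\,}$, so the verification reduces to unpacking definitions.
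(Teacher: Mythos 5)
Your proposal is correct and follows essentially the same route as the paper's proof: existence of the mediating arrow via $\conj{f,g} \equiv \conj{\ ,\ } \circ (f,g)$ (which is an arrow precisely because $\conj{A,B} = \conj{f,g}(C)$ by definition of the conjunction), commutativity of the two triangles checked pointwise on $\sigma(C)$, and uniqueness from thinness of the category of observables. The only cosmetic difference is that the paper first records the commuting diagram in $\mathbf{Set}$ before transporting it to the category of observables, which you do implicitly.
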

\begin{proof}
    Let $A,B$ be compatible observables, and let $A \xleftarrow{f} C \xrightarrow{g} B$ be any cone for them. It is easy to see that the following diagram commutes in \textbf{Set}.
    \begin{center}
        \begin{tikzcd}
            & \sigma(C)\arrow[dl, swap, "f"]\arrow[d, "(f\text{,}g)"]\arrow[dr, "g"] &\\
            \sigma(A) & \sigma(A) \times \sigma(B)\arrow[d,"\conj{ \ \text{,} \ }"] & \sigma(B)\\
            & \sigma(\conj{ A\text{,} B })\arrow[ul,"\theta_{A}"]\arrow[ur,swap, "\theta_{B}"] &
        \end{tikzcd}
    \end{center}
    In fact, for any $\gamma \in \sigma(C)$ we have $(\theta_{A}\circ \conj{ \ , \ } \circ (f,g))(\gamma) = \theta_{A}(\conj{f(\gamma),g(\gamma)}) = f(\gamma)$, thus $f=\theta_{A} \circ \conj{ \ , \ } \circ (f,g)$, and it is analogous to show that $g=\theta_{B} \circ \conj{ \ , \ } \circ (f,g)$. Recall that $\conj{f,g} \equiv \conj{ \ , \ } \circ (f,g)$. Therefore, the following diagram commutes in the category of observables.    
    \begin{center}
    \begin{tikzcd}[row sep=large, column sep=large]
        & C\arrow[dl, swap, "f"]\arrow[d, dashed, "\conj{f,g}"]\arrow[dr, "g"] &\\
        A& \lfloor A\text{,}B \rfloor \arrow[l,"\theta_{A}"]\arrow[r, swap, "\theta_{B}"] & B
    \end{tikzcd}
\end{center}  
The category of observables is a thin category, so there is no other arrow $C \rightarrow \conj{A,B}$ making this diagram commute, therefore the proof is complete.
\end{proof}

As we have discussed, products (more generally, limits) are unique up to isomorphism, and it easily follows from this fact that any product diagram of $A$ and $B$ is given by a conjunction. 
\section{Specker's principle (or finite products in the category of observables)}\label{sec: specker}

In 1932, von Neumann proved that any Abelian von Neumann algebra is generated by a single selfadjoint operator \cite{von2018mathematical, doring2005kochen}.  It implies that, if $A_{1},\dots,A_{m}$ are pairwise compatible observables in a quantum system, then they are all functions of a single observable $C$, and a simple corollary of this result is that, if $E_{1},\dots,E_{m}$ are pairwise orthogonal projections and $\rho$ is a state of the (finite-dimensional, for simplicity) quantum system, then $\sum_{i=1}^{m} \Exp{E_{i}} \leq 1$, where $\Exp{ \ \cdot \ } \equiv \Tra{ \rho \ \cdot \ }$. As notably recognized by Ernst Specker \cite{cabello2012specker, gonda2018almost}, von Neumann's theorem (or, more specifically, the fact that finitely many pairwise compatible observables are always functions of a single observable) and its corollary are very important features of quantum systems. Their translations to ``theory-independent approaches to physics'' are usually called Specker's principle and exclusivity principle in the literature, and they have received much attention in recent years \cite{cabello2012specker, gonda2018almost, amaral2014exclusivity, amaral2018graph}. Specker's principle plays a crucial role in our work, especially in section~\ref{sec: algebraicOperations}, and the aim of this section is to prove it. 

Specker's principle follows from the fact that any finite set of pairwise compatible observables has a finite product  \cite{awodey2010category, goldblatt2006topoi, tezzin2020estados} in the category of observables. The definition of product goes as follows.

\begin{definition}[Finite product] Let $A_{1},\dots,A_{m}$ be objects in some category $\mathcal{C}$. A \textbf{cone} for $A_{1},\dots,A_{m}$ consists of an object $C$ and arrows $C\xrightarrow{f_{i}} A_{i}$, $i=1,\dots,m$. A cone  $P\xrightarrow{\theta_{i}} A_{i}$, $i=1,\dots,m$, for $A_{1},\dots,A_{m}$ is said to be a product diagram if it is a limit cone, which means that, given any other cone $C \xrightarrow{f_{i}} A_{i}$, $i=1,\dots,m$, there is one, and only one, arrow $C \xrightarrow{h} P$ making the following diagram commute for all $i$
\begin{center}
    \begin{tikzcd}
        C\arrow[d, dashed,"h"]\arrow[dr,"f_{i}"] & \\
        P\arrow[r,"\theta_{i}"] & A_{i}
    \end{tikzcd}
\end{center}
\end{definition}
If $P\xrightarrow{\theta_{i}} A_{i}$, $i=1,\dots,m$ is a product diagram for $A_{1},\dots,A_{m}$, we will usually denote $P$ by $\conj{A_{1},\dots,A_{m}}$. Similarly, given a cone $C \xrightarrow{f_{i}} A_{i}$, $i=1,\dots,m$, we will usually denote the arrow $C \xrightarrow{h} P$ by $\conj{f_{1},\dots,f_{m}}$. In the category of observables, we will call $\conj{A_{1},\dots,A_{m}}$ a \textbf{conjunction} of $A_{1},\dots,A_{m}$, as in definition~\ref{def: binaryConjunction}.

For the same reason why binary products are unique up to isomorphism, finite products are unique up to isomorphism (more broadly, any limit in a category is unique up to isomorphism \cite{awodey2010category, mac2013categories}). Furthermore, it is well known that, in any category, products are ``associative'' (insofar they are well defined, which is not always the case), by which we mean that
\begin{align*}
    \conj{A,B,C} &\cong \conj{\conj{A,B},C} \cong \conj{A,\conj{B,C}} \cong \conj{\conj{A,C},B}.
\end{align*}
It implies that
\begin{align}
    \conj{A_{1},\dots,A_{m}} \cong\label{eq: SpeckerConjunction} \conj{\dots\conj{\conj{A_{1},A_{2}},A_{3}},\dots, A_{m}}.
\end{align}

If $A_{1},\dots,A_{m}$ are pairwise compatible observables, all conjunctions (or, to use the standard terminology of category theory, products) at the right-hand side of equation~\ref{eq: SpeckerConjunction} are well defined. To prove this, we need the following lemma.

\begin{lemma}\label{lemma: compatibleWithConjunction} Let $A,B$ be compatible observables, and let $C$ be any observable. Then the following claims are equivalent.
\begin{itemize}
    \item[(a)] $C$ is compatible with both $A$ and $B$.
    \item[(b)] $C$ is compatible with the  conjunction $\conj{A,B}$.
\end{itemize}
\end{lemma}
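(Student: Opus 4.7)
The plan is to prove the two directions separately, with (b)$\Rightarrow$(a) being essentially immediate and (a)$\Rightarrow$(b) requiring a computation that reduces the problem to postulate~\ref{post: commutativity}.

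For (b)$\Rightarrow$(a), I would invoke proposition~\ref{prop: conjunctionAndProduct}: the conjunction $\conj{A,B}$ sits at the apex of a cone $A \xleftarrow{\theta_{A}} \conj{A,B} \xrightarrow{\theta_{B}} B$. If $C$ is compatible with $\conj{A,B}$, a cone $C \xleftarrow{h} D \xrightarrow{k} \conj{A,B}$ exists, and composing with $\theta_{A}$ and $\theta_{B}$ yields cones $C \xleftarrow{h} D \xrightarrow{\theta_{A}\circ k} A$ and $C \xleftarrow{h} D \xrightarrow{\theta_{B}\circ k} B$, so $C$ is compatible with both $A$ and $B$.

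For (a)$\Rightarrow$(b), I would show that $C$ \emph{commutes} with $\conj{A,B}$ in the sense of lemma~\ref{lemma: commutativityObjective} and then invoke postulate~\ref{post: commutativity}. Fix a state $\rho$, $\gamma \in \sigma(C)$, and $\conj{\af,\beta} \in \sigma(\conj{A,B})$. Using lemma~\ref{lemma: conjunction} to rewrite probabilities and updates at $\conj{A,B}$ in terms of sequential data for $(A,B)$, the sequential probability $p_{\rho}(\gamma,\conj{\af,\beta};C,\conj{A,B})$ factors as $p_{\rho}(\gamma;C)\,p_{T_{(\gamma,C)}(\rho)}(\af;A)\,p_{T_{(\af,A)}\circ T_{(\gamma,C)}(\rho)}(\beta;B)$. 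Since $C$ is compatible with $A$, proposition~\ref{prop: compatibilityAndCommutativity} gives commutativity $C\com A$, so the first two factors can be swapped via the objective-event Bayes rule, and lemma~\ref{lemma: updateCommute} lets me commute $T_{(\af,A)}$ past $T_{(\gamma,C)}$ in the last factor. Now the middle and last factors form a sequential $(C,B)$ probability at the state $T_{(\af,A)}(\rho)$; since $C\com B$, another application of the Bayes rule and lemma~\ref{lemma: updateCommute} rearranges the expression into $p_{\rho}(\af,\beta;A,B)\,p_{T_{(\beta,B)}\circ T_{(\af,A)}(\rho)}(\gamma;C)$, which by lemma~\ref{lemma: conjunction} is precisely $p_{\rho}(\conj{\af,\beta},\gamma;\conj{A,B},C)$.

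This establishes the objective-event Bayes rule for $C$ and $\conj{A,B}$, hence $C\com\conj{A,B}$ by lemma~\ref{lemma: commutativityObjective}, and postulate~\ref{post: commutativity} then yields compatibility. The main obstacle I anticipate is purely bookkeeping: making sure that every use of commutativity/lemma~\ref{lemma: updateCommute} is justified by a compatibility already assumed (only $C\com A$ and $C\com B$, never $C\com \conj{A,B}$ itself, which is what we are trying to prove), so the calculation must be ordered carefully to avoid circularity.
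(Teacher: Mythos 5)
Your proposal is correct and follows essentially the same route as the paper: the (b)$\Rightarrow$(a) direction by composing a cone for $C$ and $\conj{A,B}$ with the product arrows $\theta_{A},\theta_{B}$, and the (a)$\Rightarrow$(b) direction by rewriting $p_{\rho}(\gamma,\conj{\af,\beta};C,\conj{A,B})$ via lemma~\ref{lemma: conjunction}, rearranging with the Bayes rule and lemma~\ref{lemma: updateCommute} using only the assumed compatibilities $C\com A$ and $C\com B$, and concluding with lemma~\ref{lemma: commutativityObjective} and postulate~\ref{post: commutativity}. The circularity concern you flag is handled correctly in your ordering, exactly as in the paper's chain of equalities.
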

\begin{proof}
    If $C$ and $\conj{A,B}$ are compatible, there is a cone $ \conj{A,B} \leftarrow D \rightarrow C$, which immediately implies that there is a cone for $A$,$B$ and $C$, as the following diagram shows
    \begin{center}
    \begin{tikzcd}
        & &  D\arrow[dl]\arrow[dr] & \\
        & \conj{A\text{,}B}\arrow[dl]\arrow[dr] & &  C\arrow[dr]\\
        A & & B & & C
    \end{tikzcd}
    \end{center}
    (we introduce the identity arrow $C \rightarrow C$ for purely aesthetic reasons). In particular, $A,B$, and $C$ are pairwise compatible. Now, suppose that $A$, $B$ and $C$ are pairwise compatible, and let $\af,\beta,\gamma$ be eigenvalues of $A,B,C$ respectively. According to lemma~\ref{lemma: updateCommute}, for any state $\rho$ we have
    \begin{align*}
        p_{\rho}(\gamma,\conj{\af,\beta};C,\conj{A,B}) &= p_{\rho}(\gamma;C) P_{T_{(\gamma,C)}(\rho)}(\conj{\af,\beta};\conj{A,B})
        \\
        &= p_{\rho}(\gamma;C) P_{T_{(\gamma,C)}(\rho)}(\af,\beta;A,B)
        \\
        &=P_{\rho}(\gamma,\af;C,A)P_{T_{(\gamma,\af;C,A)}(\rho)}(\beta,B)
        \\
        &= P_{\rho}(\af,\gamma;A,C)P_{T_{(\af,\gamma;A,C)}(\rho)}(\beta,B)
        \\
        &=P_{\rho}(\af,\gamma,\beta;A,C,B) = P_{\rho}(\af,\beta,\gamma;A,B,C)
        \\
        &= P_{\rho}(\conj{\af,\beta},\gamma;\conj{A,B},C).
    \end{align*}
    According to lemma~\ref{lemma: commutativityObjective}, it proves that $C$ and $\conj{A,B}$ commute, and, according to postulate~\ref{post: commutativity}, it means that $C$ and $\conj{A,B}$ are compatible. The proof is thus complete.
\end{proof}

\begin{theorem}[Product diagram and compatibility]\label{thm: productCompatibility} Let $A_{1},\dots,A_{m}$ be observables. There is a product diagram for $A_{1},\dots,A_{m}$ if and only if they are pairwise compatible.    
\end{theorem}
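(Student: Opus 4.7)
The forward direction is immediate: if $\lfloor A_1, \dots, A_m \rfloor \xrightarrow{\theta_i} A_i$ is a product diagram, then for any $i \neq j$ the pair $A_i \xleftarrow{\theta_i} \lfloor A_1, \dots, A_m \rfloor \xrightarrow{\theta_j} A_j$ is a cone for $A_i, A_j$, so these observables are compatible by Definition~\ref{def: compatibility}.

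For the converse I would proceed by induction on $m$, with base cases $m = 1$ (trivially, $A_1$ equipped with its identity arrow is its own product) and $m = 2$ (Proposition~\ref{prop: conjunctionAndProduct}). Given pairwise compatible $A_1, \dots, A_m$ with $m \geq 3$, the inductive hypothesis yields a product $P \equiv \lfloor A_1, \dots, A_{m-1} \rfloor$ with projection arrows $P \xrightarrow{\theta_i} A_i$. The key step is to show that $A_m$ is compatible with $P$; once this is done, Proposition~\ref{prop: conjunctionAndProduct} provides a binary product $R \equiv \lfloor P, A_m \rfloor$ with arrows $P \xleftarrow{\pi_P} R \xrightarrow{\pi_m} A_m$, and I would verify that the composites $\theta_i \circ \pi_P : R \to A_i$ for $i < m$, together with $\pi_m : R \to A_m$, form a product diagram for $A_1, \dots, A_m$. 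Universality follows by composing universal properties: given any cone $C \xrightarrow{f_i} A_i$, the arrows $f_1, \dots, f_{m-1}$ factor uniquely through $P$ by the inductive hypothesis, and this factoring together with $f_m$ factors uniquely through $R$ by the binary product property; uniqueness of the mediating arrow is automatic since $\mathcal{O}$ is thin.

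To establish that $A_m$ is compatible with $P$, I would run a secondary induction proving the auxiliary claim: if $C$ is compatible with each of pairwise compatible $B_1, \dots, B_k$ and the latter admit a product $Q$, then $C$ is compatible with $Q$. The base case $k = 2$ is precisely Lemma~\ref{lemma: compatibleWithConjunction}. For the inductive step, the outer induction guarantees that $\lfloor B_1, \dots, B_{k-1}\rfloor$ exists, and since finite products are unique up to isomorphism and isomorphic objects are compatible with the same observables, one may take $Q \cong \lfloor \lfloor B_1, \dots, B_{k-1}\rfloor, B_k \rfloor$; a single application of Lemma~\ref{lemma: compatibleWithConjunction} then reduces compatibility of $C$ with $Q$ to compatibility of $C$ with $\lfloor B_1, \dots, B_{k-1}\rfloor$ (given by the inner inductive hypothesis) and with $B_k$ (given by assumption). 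The main obstacle is precisely this auxiliary claim, which must be interleaved carefully with the outer induction so that the product $\lfloor B_1, \dots, B_{k-1}\rfloor$ is already available whenever it is invoked; the remainder is routine bookkeeping exploiting associativity of products (equation~\ref{eq: SpeckerConjunction}) and the thinness of $\mathcal{O}$.
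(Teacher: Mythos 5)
Your proposal is correct and follows essentially the same route as the paper: both arguments build the product as an iterated chain of binary conjunctions, use Lemma~\ref{lemma: compatibleWithConjunction} in an inner induction to show that the next observable is compatible with the partial product already constructed, and then invoke associativity of products (equivalently, the composition of universal properties you spell out) to conclude. The only differences are cosmetic — you verify universality explicitly and phrase the compatibility step as a standalone auxiliary claim, whereas the paper works directly with the specific chain $P_{i}$ — so there is nothing substantive to add.
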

\begin{proof}
   Having a product is clearly sufficient for being pairwise compatible, then we just need to prove that it is also necessary. So let $A_{1},\dots,A_{m}$ be pairwise compatible observables. Let's prove, by induction, that the chain of conjunctions $\conj{\conj{\conj{A_{1},A_{2}},A_{3}},\dots, A_{m}}$ is well defined.  According to lemma~\ref{lemma: compatibleWithConjunction}, $\conj{\conj{A_{1},A_{2}},A_{3}}$ is well defined, so all we need to do is to prove the induction step. Suppose thus that, for some $k \in \{1,\dots,m\}$, the chain of conjunctions $\conj{\conj{\conj{A_{1},A_{2}},A_{3}},\dots, A_{k}}$ is well defined. For each $i \in \{2,\dots,k\}$, define $P_{i} \doteq \conj{\conj{\conj{A_{1},A_{2}},A_{3}},\dots, A_{i}}$, and define $P_{1} \doteq A_{1}$. We have, by definition, $P_{i}=\conj{P_{i-1},A_{i}}$ for every $i\in \{i=2,\dots,k\}$, and $P_{1}=A_{1}$. According to lemma~\ref{lemma: compatibleWithConjunction}, if an observable $C$ is compatible with $P_{i-1}$ and $A_{i}$  for some $i \in \{2,\dots,k\}$, it is also compatible with $P_{i}$. We know that $A_{k+1}$ is compatible with $A_{i}$ for every $i=1,\dots,m$, thus, if $A_{k+1}$ is compatible with $P_{i-1}$, it is also compatible with $P_{i}$. We know that $A_{k+1}$ is compatible with $P_{1}$ and $P_{2}$, so, by induction, it is compatible with $P_{i}$ for every $i \in \{1,\dots,k\}$. Consequently, $\conj{\conj{\conj{A_{1},A_{2}},A_{3}},\dots, A_{k+1}}$ is well defined, which completes the proof of the induction step. To illustrate:
   \begin{center}
    \begin{tikzcd}
    \conj{\conj{\conj{\conj{A_{1},A_{2}},A_{3}},A_{4}},A_{5}}\arrow[d]\arrow[dr]   & & & &\\
    \conj{\conj{\conj{A_{1},A_{2}},A_{3}},A_{4}}\arrow[d]\arrow[dr]  & A_{5}\arrow[dr] & & & \\
    \conj{\conj{A_{1},A_{2}},A_{3}}\arrow[d]\arrow[dr]  & A_{4}\arrow[dr] & A_{5}\arrow[dr] & &  \\
    \conj{A_{1},A_{2}}\arrow[d]\arrow[dr]  & A_{3}\arrow[dr] & A_{4}\arrow[dr] & A_{5}\arrow[dr] & \\
    A_{1}  &  A_{2} & A_{3} & A_{4} &  A_{5}
    \end{tikzcd}
\end{center}
   (as before, we introduce  identity arrows for purely aesthetic reasons). To conclude, we need to show that $\conj{\conj{\conj{A_{1},A_{2}},A_{3}},\dots, A_{m}}$ is a product (or, as we say, conjunction) for $A_{1},\dots,A_{m}$. However, it immediately follows the fact that the product is associative in any category\cite{awodey2010category, mac2013categories}, so the proof is complete.
   
\end{proof}

\begin{corollary}[Specker's principle]\label{cor: specker} If $A_{1}, \dots,A_{m}$ are pairwise compatible observables, there is a cone $C \xrightarrow{f_{i}}A_{i}$, $i=1,\dots,m$, for them, i.e., there is an observable $C$ and real functions $f_{1},\dots,f_{m}$ on $\sigma(C)$ such that  $A_{i}=f_{i}(C)$ for all $i$.    
\end{corollary}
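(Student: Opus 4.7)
The proof is essentially immediate from Theorem~\ref{thm: productCompatibility}, which was just established. The plan is to observe that a product diagram for $A_1, \dots, A_m$ is, by definition, a cone for $A_1, \dots, A_m$, so the existence of the former trivially implies the existence of the latter.

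More precisely, I would argue as follows. Suppose $A_1, \dots, A_m$ are pairwise compatible. By Theorem~\ref{thm: productCompatibility}, there is a product diagram
\begin{align*}
    \conj{A_1, \dots, A_m} \xrightarrow{\theta_i} A_i, \qquad i = 1, \dots, m,
\end{align*}
in the category of observables. Set $C \doteq \conj{A_1, \dots, A_m}$ and $f_i \doteq \theta_i$ for each $i$. Then $C \xrightarrow{f_i} A_i$ is a cone for $A_1, \dots, A_m$, and by construction of the category of observables (Definition~\ref{def: categoryOfObservables}), each arrow $C \xrightarrow{f_i} A_i$ is induced by a (necessarily surjective, by Lemma~\ref{lemma: functionalRelation}) function $\sigma(C) \to \sigma(A_i)$ satisfying $A_i = f_i(C)$. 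This is exactly what the statement of the corollary asserts.

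Since the real content of Specker's principle has already been absorbed into the proof of Theorem~\ref{thm: productCompatibility}, there is no genuine obstacle left at this stage. The two substantive ingredients upstream were Lemma~\ref{lemma: compatibleWithConjunction} (which uses postulate~\ref{post: commutativity} to ensure that compatibility is preserved when one replaces a compatible pair by their conjunction) and the inductive construction of the iterated conjunction $\conj{\conj{\dots\conj{A_1,A_2},\dots},A_m}$ in the proof of that theorem. Consequently, the corollary is just a matter of reading off the definition of a product diagram as a special kind of cone and renaming the arrows.
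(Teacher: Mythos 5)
Your proof is correct and matches the paper's intent exactly: the corollary is stated without a separate proof precisely because the product diagram produced by Theorem~\ref{thm: productCompatibility} is already a cone, so one only reads off the arrows as surjective functions $\sigma(C)\to\sigma(A_{i})$ with $A_{i}=f_{i}(C)$. Nothing further is needed.
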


Although the Exclusivity principle immediately follows from corollary~\ref{cor: specker}, it is more convenient to discuss it after introducing algebraic operations between observables. For this reason, we will prove it only in section~\ref{sec: spectralTheory}, as corollary~\ref{cor: exclusivityPrinciple}.

\begin{corollary}\label{cor: nondegenerateSpecker} Let $A_{1},\dots,A_{m}$ be pairwise compatible observables. Then there exists a \textbf{nondegenerate cone} for them, i.e., a cone $D \xrightarrow{f_{i}} A_{i}$, $i=1,\dots,m$, such that $D$ is a nondegenerate observable. 
\end{corollary}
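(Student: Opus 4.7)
The plan is to derive this as a direct composition of two results already in hand: Specker's principle (corollary~\ref{cor: specker}), which gives a cone over $A_1,\dots,A_m$ but says nothing about degeneracy, and the first part of postulate~\ref{post: observables}, which guarantees that every observable is a coarse-graining of a nondegenerate one. The idea is that one can always ``refine'' the apex of any cone into a nondegenerate observable by precomposing with the refinement map.

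More concretely, I would first invoke corollary~\ref{cor: specker} to obtain an observable $C$ together with functions $f_i: \sigma(C) \ri \sigma(A_i)$ such that $A_i = f_i(C)$ for every $i = 1,\dots,m$. If $C$ happens to be nondegenerate there is nothing to prove, so suppose it is degenerate. Postulate~\ref{post: observables} then provides a nondegenerate observable $D$ and a function $h: \sigma(D) \ri \sigma(C)$ with $C = h(D)$. Setting $g_i \doteq f_i \ci h$, the associativity of functional relations (which follows because the pushforward of a pushforward is the pushforward along the composition) yields $A_i = f_i(C) = f_i(h(D)) = g_i(D)$, so the arrows $D \xrightarrow{g_i} A_i$ form the desired nondegenerate cone. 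Diagrammatically:
\begin{center}
    \begin{tikzcd}
        & D \arrow[d, "h"] \arrow[dll, swap, "g_{1}"] \arrow[dl] \arrow[dr] \arrow[drr, "g_{m}"]& & \\
        A_{1} & C \arrow[l, "f_{1}"] \arrow[r, swap, "f_{m}"]\arrow[dotted, no head, l]\arrow[dotted,no head, r] & \cdots & A_{m}
    \end{tikzcd}
\end{center}

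There is essentially no obstacle: the only subtlety is making sure that composing the refinement arrow $D \to C$ with the cone arrows $C \to A_i$ is legitimate in the category of observables, which is immediate from definition~\ref{def: categoryOfObservables} together with the identity $(f \ci h)(D) = f(h(D))$ used implicitly at the end of section~\ref{sec: categoryOfObservables}. No appeal to postulate~\ref{post: compatibility} or to properties of conjunctions is needed; the corollary is purely a statement about the existence of arrows, and it reduces to the observation that postulate~\ref{post: observables} lets us push any cone back along a refinement to a nondegenerate apex.
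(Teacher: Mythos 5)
Your argument is correct and is essentially identical to the paper's own proof: both take an arbitrary cone $C \xrightarrow{f_{i}} A_{i}$ supplied by Specker's principle, refine the apex to a nondegenerate $D$ with $C=h(D)$ via postulate~\ref{post: observables}, and compose to get $D \xrightarrow{f_{i}\circ h} A_{i}$. Nothing further is needed.
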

\begin{proof}
    Let $A_{1},\dots,A_{m}$ be pairwise compatible observables, and let $C \xrightarrow{f_{i}}A_{i}$, $i=1,\dots,m$, be any cone for them. Let $D$ be a nondegenerate observable such that $C=g(D)$ for some function $g$, which exists according to postulate~\ref{post: observables}, and define $g_{i} \doteq f_{i} \circ g$ for each $i$. Then $D \xrightarrow{g_{i}} A_{i}$, $i=1,\dots,m$, is a cone for $A_{1},\dots,A_{m}$, which completes the proof.
\end{proof}

We can now prove the following generalized version of postulate~\ref{post: compatibility}.

\begin{proposition}\label{prop: jointPushforward} Let $A_{1},\dots,A_{m}$ be pairwise compatible observables, and let $C \xrightarrow{f_{i}} A_{i}$, $i=1,\dots,m$ be any cone for them. For any state $\rho$, the sequential measure $P_{\rho}( \ \cdot \ ;f_{1}(C),\dots,f_{m}(C))$ (see definition~\ref{def: sequentialMeasure}) is the pushforward of $P_{\rho}( \ \cdot \ ; C)$ along $(f_{1},\dots,f_{m}): \sigma(C) \rightarrow \prod_{i=1}^{m}\sigma(A_{i})$, i.e.,
\begin{align}
    P_{\rho}( \ \cdot \ ;f_{1}(C),\dots,f_{m}(C)) = P_{\rho}((f_{1},\dots,f_{m})^{-1}( \ \cdot \ ); C).
\end{align}
Furthermore, for any $\underline{\af} \equiv (\af_{1},\dots,\af_{m}) \in \prod_{i=1}^{m}\sigma(A_{i})$ we have
\begin{align}
    T_{(\af_{m},A_{m})} \circ \dots \circ T_{(\af_{1},A_{1})} &= T_{(1,\chi_{\cap_{i=1}^{m}}\Delta_{i})}(C),
\end{align}
where $\Delta_{i} \equiv f_{i}^{-1}(\af_{i})$ for each $i=1,\dots,m$.
\end{proposition}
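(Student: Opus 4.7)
The plan is to prove both equations simultaneously by induction on $m$. The base case $m=2$ is exactly postulate~\ref{post: compatibility} applied to the pair $(A_1,A_2)$ with cone $C$, while $m=1$ is definition~\ref{def: functionalRelation}. Assume both claims for sequences of length $m-1$. Given a cone $C \xrightarrow{f_i} A_i$ for $A_1,\dots,A_m$, restricting to $i=1,\dots,m-1$ yields a cone for $A_1,\dots,A_{m-1}$, so the inductive hypothesis applies to the first $m-1$ observables.

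I would first establish the measure equation at singletons; extension to boxes $\underline{\Delta}=\Delta_1\times\dots\times\Delta_m$ then follows by summing, using $(f_1,\dots,f_m)^{-1}(\underline{\Delta})=\bigcap_i f_i^{-1}(\Delta_i)$. Factoring out the last step in definition~\ref{def: sequentialProbability},
\begin{align*}
p_\rho(\underline{\alpha};\underline{A}) = p_\rho(\alpha_1,\dots,\alpha_{m-1};A_1,\dots,A_{m-1})\cdot p_{\rho'}(\alpha_m;A_m),
\end{align*}
where $\rho'=(T_{(\alpha_{m-1},A_{m-1})}\circ\dots\circ T_{(\alpha_1,A_1)})(\rho)$. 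Set $D\doteq\bigcap_{i=1}^{m-1}f_i^{-1}(\alpha_i)$ and $E_D\doteq\chi_D(C)$. By the inductive hypothesis the first factor equals $P_\rho(D;C)$ and $\rho'=T_{(1,E_D)}(\rho)$. Since $E_D$ and $A_m$ are both functions of $C$ (with cone $E_D\xleftarrow{\chi_D}C\xrightarrow{f_m}A_m$), postulate~\ref{post: compatibility} gives
\begin{align*}
p_\rho(1,\alpha_m;E_D,A_m) = P_\rho\!\left(\chi_D^{-1}(1)\cap f_m^{-1}(\alpha_m);C\right) = P_\rho\!\left(\bigcap_{i=1}^m f_i^{-1}(\alpha_i);C\right).
\end{align*}
On the other hand, definition~\ref{def: sequentialProbability} reads $p_\rho(1,\alpha_m;E_D,A_m)=P_\rho(E_D^1)\cdot p_{\rho'}(\alpha_m;A_m)=P_\rho(D;C)\cdot p_{\rho'}(\alpha_m;A_m)$. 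Combining the two expressions and substituting back yields the desired singleton identity (with both sides equal to zero when $P_\rho(D;C)=0$).

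For the update identity, the inductive hypothesis gives that the first $m-1$ updates compose to $T_{(1,E_D)}$. Assume first $E_D\neq 0$, so that $1\in\sigma(E_D)$ by lemma~\ref{lemma: projectionSpectrum}. Composing with $T_{(\alpha_m,A_m)}$ and invoking equation~(\ref{eq: sequentialUpdate}) from postulate~\ref{post: compatibility} on the compatible pair $(A_m,E_D)$ (with $\beta=1$ so $F_\beta=\chi_{\{1\}}(E_D)=E_D$) yields
\begin{align*}
T_{(\alpha_m,A_m)}\circ T_{(1,E_D)} = T_{(1,\,\chi_{\{\alpha_m\}}(A_m)\circ E_D)}.
\end{align*}
By lemma~\ref{lemma: KSdefinitionProjections}, $\chi_{\{\alpha_m\}}(A_m)=\chi_{f_m^{-1}(\alpha_m)}(C)$; since both projections are functions of $C$, definition~\ref{def: productProjections} gives $\chi_{f_m^{-1}(\alpha_m)}(C)\circ E_D=\chi_{\bigcap_{i=1}^m f_i^{-1}(\alpha_i)}(C)$, closing the induction step.

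The main obstacle I expect is the degenerate case $E_D=0$ (equivalently $D=\emptyset$), in which $1\notin\sigma(E_D)$ and postulate~\ref{post: compatibility} cannot be applied directly to $(A_m,E_D)$. In that case the inductive update identity already gives $T_{(\alpha_{m-1},A_{m-1})}\circ\cdots\circ T_{(\alpha_1,A_1)}=T_0$, the constant null-state map, and composition with $T_{(\alpha_m,A_m)}$ remains the null-state map by the convention in definition~\ref{def: system}; on the other hand $\bigcap_{i=1}^m f_i^{-1}(\alpha_i)\subset D=\emptyset$, so the right-hand side reads $T_{(1,0)}=T_0$ as well (see the remark following definition~\ref{def: projectionUpdate}). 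Thus both identities hold trivially when $E_D=0$, and the induction is complete once this convention is kept in view.
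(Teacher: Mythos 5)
Your proof is correct and follows essentially the same route as the paper's: an induction on $m$ that reduces everything to the binary case of postulate~\ref{post: compatibility}, building up the projection $\chi_{\cap_{i}\Delta_{i}}(C)$ one observable at a time. The only differences are organizational --- you route the measure computation through the auxiliary projection $E_{D}$ and couple the two claims in a single induction (using the update identity at stage $m-1$ to identify the intermediate state), and you treat the degenerate case $E_{D}=0$ explicitly, which the paper's ``$=\dots$'' steps leave implicit.
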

\begin{proof}
The sample space $\sigma(\underline{A}) \equiv \prod_{i=1}^{m}\sigma(A_{i})$ is finite, so it is sufficient to show that, for any $\underline{\af} \in \sigma(\underline{A})$, where $\underline{\af} \equiv (\af_{1},\dots,\af_{m})$, we have $p_{\rho}(\underline{\af};f_{1}(C),\dots,f_{m}(C)) = p_{\rho}((f_{1},\dots,f_{m})^{-1}(\underline{\af}); C)$. So let $\underline{\af}$ be any element of  $\sigma(\underline{A})$, and fix a state $\rho$. For each $ i=1,\dots,m$, write $\Delta_{i} \equiv f_{i}^{-1}(\af_{i})$ and $ \rho_{i} \equiv T_{(\af_{1},\dots,\af_{i};A_{1},\dots,A_{i})}(\rho)$. Then
\begin{align*}
        p_{\rho}(\underline{\af}; f_{1}(C),\dots,f_{m}(C))  &= p_{\rho}((\af_{1},\dots, \af_{m-2}); A_{1},\dots,A_{m-2}) 
        \\
        &\times p_{\rho_{m-2}}((\af_{m-1},\af_{m}); A_{m-1},A_{m})
        \\
        &= P_{\rho}(\{\af_{1} \dots \af_{m-2}\};A_{1},\dots,A_{m-2})
        \\
        &\times P_{\rho_{m-2}}(\cap_{j=m-1}^{m}\Delta_{j};C)
        \\
        &=P_{\rho}(\{\af_{1}\} \times \dots \times \{\af_{m-2}\} \times  \cap_{j=m-1}^{m}\Delta_{j}; A_{1}, \dots, A_{m-2},C)
        \\
        &= p_{\rho}((\af_{1},\dots, \af_{m-3};A_{1}, \dots A_{m-3})
        \\
        &\times P_{\rho_{m-3}}(\{\af_{m-2}\}\times \cap_{j=m-1}^{m}\Sigma_{j};A_{m-2},C)
        \\
        &=  p_{\rho}((\af_{1},\dots, \af_{m-3};A_{1}, \dots A_{m-3})
        \\
        &\times P_{\rho_{m-3}}( \cap_{j=m-2}^{m}\Delta_{j};C)
        \\
        &= P_{\rho}(\{\af_{1}\} \times \dots \times \{\af_{m-3}\} \times \cap_{j=m-2}^{m}\Delta_{j} ; A_{1}, \dots A_{m-3},C)
        \\
        &=P_{\rho}(\{\af_{1}\} \times \dots \times \{\af_{m-4}\} \times \cap_{j=m-3}^{m}\Delta_{j} ; A_{1}, \dots A_{m-4},C)
        \\
        &=\dots
        \\
        &=P_{\rho}(\{\af_{1}\} \times \cap_{j=2}^{m}\Delta_{j} ; A_{1},C)
        \\
        &= P_{\rho} (\cap_{i=1}^{m} \Delta_{i},C).
\end{align*}
(the step ``$=\dots$'' can be made precise with induction). It proves the first part of the lemma. Now note that, for any $i =1,\dots,m$  and any $\Sigma \subset \sigma(C)$, $A_{i}$ and $\chi_{\Sigma}(C)$ are functions of $C$, whereas $(\af_{i},A_{i})$ and $(1,\chi_{\Sigma}(C))$ are objective events, thus, according to postulate~\ref{post: compatibility},
\begin{align*}
        T_{(\af_{i},A_{i})} \circ T_{(1,\chi_{\Sigma}(C))} = T_{(1,\chi_{\Delta_{i} \cap \Sigma}(C))},
\end{align*}
where $\Delta_{i} \equiv f_{i}^{-1}(\af_{i})$. Therefore,
\begin{align*}
    T_{(\af_{m},A_{m})} \circ \dots \circ T_{(\af_{1},A_{1})} &= (T_{(\af_{m},A_{m})} \circ \dots \circ T_{(\af_{2},A_{2})}) \circ (T_{(\af_{1},A_{1})} \circ T_{(\af_{1},A_{1})})
    \\
    &= (T_{(\af_{m},A_{m})} \circ \dots \circ T_{(\af_{2},A_{2})}) \circ T_{(1,\chi_{\Delta_{1}}(C))}
    \\
    &= (T_{(\af_{m},A_{m})} \circ \dots \circ T_{(\af_{3},A_{3})}) \circ T_{(1,\chi_{\cap_{i=1}^{2}\Delta_{i}}(C))}
    \\
    &=\dots
    \\
    &= T_{(1,\chi_{\cap_{i=1}^{m}\Delta_{i}}(C))},
\end{align*}
(again, the argument can be made precise with induction). It completes the proof.
\end{proof}

It is worth making the following corollary explicit.

\begin{corollary}\label{cor: permutingProbability} Let $A_{1},\dots,A_{m}$ be pairwise compatible observables, and let $\pi$ be any permutation of $\{1,\dots,m\}$. Let $\rho$ be any state. Then, for any $\Delta_{1}\times \dots \times \Delta_{m} \subset \prod_{i=1}^{m}\sigma(A_{i})$,
\begin{align}
    p_{\rho}(\Delta_{1} \times \dots \times\Delta_{m};A_{1},\dots,A_{m}) &= \label{eq: permutingProbability}p_{\rho}(\Delta_{\pi(1)}\times\dots\times\Delta_{\pi(m)};A_{\pi(1)},\dots,A_{\pi(m)}),\\
    T_{(\Delta_{1},\dots,\Delta_{m};A_{1},\dots,A_{m})} &=T_{(\Delta_{\pi(1)},\dots,\Delta_{\pi(m)};A_{\pi(1)},\dots,A_{\pi(m)})}.
\end{align}
\end{corollary}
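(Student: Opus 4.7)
The plan is to reduce everything to a single common cone provided by Specker's principle (Corollary~\ref{cor: specker}) and then apply Proposition~\ref{prop: jointPushforward}, exploiting the fact that the intersection of sets is permutation-invariant. Since $A_{1},\dots,A_{m}$ are pairwise compatible, Corollary~\ref{cor: specker} furnishes an observable $C$ and functions $f_{i}$ on $\sigma(C)$ such that $A_{i}=f_{i}(C)$ for $i=1,\dots,m$. For any permutation $\pi$, the observables $A_{\pi(1)},\dots,A_{\pi(m)}$ are also pairwise compatible and the same observable $C$, together with the rearranged arrows $f_{\pi(1)},\dots,f_{\pi(m)}$, is a cone for them.

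For the probability identity, I would apply Proposition~\ref{prop: jointPushforward} to both tuples. For any box $\Delta_{1}\times\dots\times\Delta_{m}\subset\prod_{i=1}^{m}\sigma(A_{i})$ the first application yields
\begin{align*}
    P_{\rho}(\Delta_{1}\times\dots\times\Delta_{m};A_{1},\dots,A_{m})
    &= P_{\rho}\Bigl((f_{1},\dots,f_{m})^{-1}(\Delta_{1}\times\dots\times\Delta_{m});C\Bigr)\\
    &= P_{\rho}\Bigl(\bigcap_{i=1}^{m}f_{i}^{-1}(\Delta_{i});C\Bigr),
\end{align*}
and the second application gives, after re-indexing $j=\pi(i)$,
\begin{align*}
    P_{\rho}(\Delta_{\pi(1)}\times\dots\times\Delta_{\pi(m)};A_{\pi(1)},\dots,A_{\pi(m)})
    &= P_{\rho}\Bigl(\bigcap_{i=1}^{m}f_{\pi(i)}^{-1}(\Delta_{\pi(i)});C\Bigr)
    = P_{\rho}\Bigl(\bigcap_{j=1}^{m}f_{j}^{-1}(\Delta_{j});C\Bigr).
\end{align*}
The right-hand sides coincide because set intersection is commutative, so equation~\ref{eq: permutingProbability} follows. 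Reading $p_{\rho}$ as the probability-distribution version of this measure (definition~\ref{def: sequentialMeasure}) the stated equality is immediate.

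For the update identity, I would first handle the objective case $\Delta_{i}=\{\af_{i}\}$ using the second part of Proposition~\ref{prop: jointPushforward}: with $\Delta_{i}\equiv f_{i}^{-1}(\af_{i})$,
\begin{align*}
    T_{(\af_{m},A_{m})}\circ\dots\circ T_{(\af_{1},A_{1})}
    = T_{(1,\chi_{\cap_{i=1}^{m}\Delta_{i}}(C))}
    = T_{(\af_{\pi(m)},A_{\pi(m)})}\circ\dots\circ T_{(\af_{\pi(1)},A_{\pi(1)})},
\end{align*}
again because $\bigcap_{i}\Delta_{i}$ is the same set under any reordering. To lift this to arbitrary (not necessarily singleton) $\Delta_{i}\subset\sigma(A_{i})$, I would apply Proposition~\ref{prop: subjectiveSequentialUpdate} to expand each subjective sequential update as a convex combination of objective ones; the mixing weights are exactly the probabilities $P_{\rho}^{\underline{A}}(\{\underline{\af}\}\mid\underline{\Delta})$, which have just been shown to be permutation-invariant, and each objective summand is permutation-invariant by the previous display. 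Matching the two expansions on both sides of the desired equality finishes the proof.

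The only delicate step is the last one: making sure the summations in Proposition~\ref{prop: subjectiveSequentialUpdate} are indexed consistently when the order of the observables is shuffled. This is purely a relabeling argument, since the sum ranges over the box $\underline{\Delta}$ in both cases and reindexing by $\pi$ is a bijection of that box onto itself; so there is no real obstacle, just care with notation.
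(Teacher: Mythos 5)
Your proof is correct and follows the route the paper intends: the corollary is stated immediately after Proposition~\ref{prop: jointPushforward} with no explicit proof, precisely because it reduces to the permutation-invariance of $\bigcap_{i}f_{i}^{-1}(\Delta_{i})$ for a common cone supplied by Corollary~\ref{cor: specker}. Your only addition is to write out the lift from objective to subjective updates via Proposition~\ref{prop: subjectiveSequentialUpdate}, which the paper leaves implicit and which you handle correctly.
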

\section{Algebraic aspects of the ``commutative part'' of quantum mechanics}\label{sec: commutativePart}

Postulates~\ref{ax: separability}-\ref{post: commutativity} enable us to introduce  basically every definition and theorem of quantum mechanics that do not depend on the particular way in which quantum systems connect incompatible observables. We have already seen some important examples, such as projections (which actually follow from postulates \ref{ax: separability}-\ref{post: observables}) and Specker's principle. We now focus on results that depend on algebraic operations between compatible observables. To begin with, let's define these operations.
\subsection{Algebraic operations for compatible observables}\label{sec: algebraicOperations}

Let $A,B$ be compatible observables, and let $\conj{A,B}$ be its conjunction (definition~\ref{def: binaryConjunction}). Recall that $\conj{A,B}$ is defined by an injective function $\conj{ \ , \ }: \sigma(A) \times \sigma(B) \rightarrow \mathbb{R}$, and that
\begin{align*}
    \sigma(\conj{A,B})&= \{\conj{\af,\beta} \in \mathbb{R}: (\af,\beta) \in \sigma(A) \times \sigma(B), E_{\af} \circ F_{\beta} \neq 0\},
\end{align*}
where $E_{\af} \equiv \chi_{\{\af\}}(A)$ and $F_{\beta} \equiv \chi_{\{\beta\}}(B)$. Let $h$ be any real function on $\sigma(A) \times \sigma(B)$, and let $h'$ be its ``translation'' to $\sigma(\conj{A,B})$, i.e., $h'(\conj{\af,\beta}) \doteq h(\af,\beta)$ for every $\conj{\af,\beta} \in \sigma(\conj{A,B})$. It means that $h'$ is the only function from $\sigma(\conj{A,B})$ to $\mathbb{R}$ making the following diagram commute.
\begin{center}
    \begin{tikzcd}
        \sigma(A) \times \sigma(B) \arrow[r,"\conj{ \ \text{,} \ }"]\arrow[dr,"h"] & \sigma(\conj{ A, B })\arrow[d,"h'"]\\
        & \mathbb{R}
    \end{tikzcd}
\end{center}

According to postulate~\ref{post: observables}, given any function $h: \sigma(A) \times \sigma(B) \rightarrow \mathbb{R}$ we can define an observable $h(A,B)$ by
\begin{align}
    h(A,B) \doteq h'(\conj{A,B}).
\end{align}
If an observable $C$ is compatible with $A$ and $B$, then it is also compatible with $h(A,B)$ for any function $h$. In fact, lemma~\ref{lemma: compatibleWithConjunction} tells us that $C$ is compatible with $A$ and $B$ iff it is compatible with $\conj{A,B}$, which in turn implies that $C$ is compatible with any function of $\conj{A,B}$.

We are interested in the particular case where $h$ is a binary operation. First, let $h$ be the  addition $\sigma(A) \times \sigma(B) \ni(\af,\beta) \xmapsto{+} \af + \beta \in \mathbb{R}$, and define $A + B \equiv +(A,B) \doteq +'(\conj{A,B})$. For any $\conj{\af,\beta} \in \sigma(\conj{A,B})$, we have $+'(\conj{\af,\beta}) = \af + \beta$. Furthermore, since $A\xleftarrow{\theta_{A}} \conj{A,B}\xrightarrow{\theta_{B}} B$ is a product diagram for $A$ and $B$ (see proposition~\ref{prop: conjunctionAndProduct}), given any cone  $A \xleftarrow{f} C \xrightarrow{g} B$ for $A,B$ we have
\begin{align*}
    A+B &= +'(\conj{A,B}) = +'(\conj{f,g}(C)) = (+' \circ \conj{ \ , \ } \circ (f,g))(C) = (+ \circ (f,g))(C)
    \\
    &= (f+g)(C)
\end{align*}
(recall that $\conj{f,g} = \conj{ \ , \ } \circ (f,g)$, and that the range of $(f,g)$ is $\sigma(\conj{A,B})$). This is illustrated in the following commutative diagram.
\begin{center}
    \begin{tikzcd}[row sep=large, column sep=large]
        & C\arrow[dl, swap, "f"]\arrow[d, "\conj{f\text{,}g}"]\arrow[dr, "g"] &\\
        A & \conj{A,B}\arrow[l,"\theta_{A}"]\arrow[r,swap, "\theta_{B}"]\arrow[d,"+'"]& B\\
        & A+B& 
    \end{tikzcd}
\end{center}
We just proved that, for any observable $C$ such that $A=f(C)$ and $B=g(C)$, we have $A+B = f(C) + g(C) = (f+g)(C)$, and it is easy to see that $A + B$ is the unique observable satisfying this condition. In fact, suppose that an observable $D$ satisfies $D=(f+g)(C)$ for every cone $A \xleftarrow{f}C\xrightarrow{g} B$. In particular, we have $D=(\theta_{A}+\theta_{B})(\conj{A,B})$. It is straightforward to show that $(\theta_{A}+\theta_{B}) = +'$, thus $D=A + B$. It is important to note that, although $\conj{A,B}$ is unique only up to isomorphism, the observable $A + B$ is unique.  Finally, note that
\begin{align*}
    \sigma(A+B) &= \{\af + \beta: (\af,\beta) \in \sigma(A) \times \sigma(B), E_{\af} \circ F_{\beta} \neq 0\},
\end{align*}
where $E_{\af} \equiv \chi_{\{\af\}}(A)$ and $F_{\beta} \equiv \chi_{\{\beta\}}(B)$. 

Similarly, we can define the (algebraic) product $A \circ B$ of two compatible observables $A,B$ using the mapping $\sigma(A) \times \sigma(B) \ni (\af,\beta) \mapsto \af \cdot \beta \in \mathbb{R}$, and it is analogous to show that $A \circ B$ is the unique observable satisfying $A \circ B = (f \cdot g)(C)$ whenever $A=f(C)$ and $B=g(C)$ (equivalently, for any cone $A \xleftarrow{f} C \xrightarrow{g} B$). The spectrum of $A \circ B$ satisfies
\begin{align*}
    \sigma(A\circ B) &= \{\af \cdot \beta: (\af,\beta) \in \sigma(A) \times \sigma(B), E_{\af} \circ F_{\beta} \neq 0\},
\end{align*}
and the product of projections we defined in section \ref{sec: compatibility} (see definition~\ref{def: productProjections}) is just a particular case of this definition.

To conclude, let $a$ be any real number, and let $A$ be any observable. Let $h_{a}$ be the mapping $\sigma(A) \ni \af \mapsto a\cdot \af$, and define $aA \equiv \af \cdot A \doteq h_{a}(A)$. If $A=f(C)$, then $a \cdot A = (h_{a} \circ f)(C) = (a\cdot f)(C)$. Now let $B_{a}$ be any observable satisfying $B_{a} = (a \cdot f)(C)$ whenever $f(C)=A$. Then $B_{a} = (a \cdot \id_{A})(A) = (h_{a} \circ \id_{A})(A) = h_{a}(A) = a\cdot A$. Also,
\begin{align*}
    \sigma(a\cdot A) &= \{a\cdot \af: \af \in\ \sigma(A)\} \equiv \af \cdot \sigma(A).
\end{align*}

This discussion shows that the following definition is consistent.

\begin{definition}[Algebraic operations]\label{def: algebraicOperations} Let $A,B$ be compatible observables. We denote by $A + B$ the unique observable satisfying $A+B = f(C)+g(C) = (f+g)(C)$ for every cone $A \xleftarrow{f} C \xrightarrow{g} B$. Analogously, $A \circ B$ denotes the unique observable satisfying $A\circ B = f(C)\circ g(C) = (f\cdot g)(C)$ for every cone $A \xleftarrow{f} C \xrightarrow{g} B$. Finally, given any real number $a$ and any observable $A$, we denote by $aA \equiv a \cdot A$ the unique observable such that $a\cdot A= (a\cdot f)(C)$ whenever $A=f(C)$.
\end{definition}
As usual, we say that $A+B$ and $A \circ B$ are the sum and the product respectively of $A$ and $B$, whereas $a \cdot A$ is said to be the scalar multiplication of $a$ and $A$. For simplicity, we eventually write $AB$ rather than $A \circ B$, and $aA$ instead of $a \cdot A$. Finally, note that, by definition, $A+B = B+A$ and $A \circ B = B \circ A$. 

It is important to note that $+$ and $\circ$ do not define binary operations on $\mathcal{O}$, since they act only on compatible observables. They define \textbf{partial operations}, as  the composition of arrows in a category does (see definition~\ref{def: category}).

As we said above, it follows from lemma~\ref{lemma: compatibleWithConjunction} that any observable $C$ which is compatible with $A$ and $B$ is also compatible with $A+B$ and $A \circ B$, which in turn implies that the observables $(A+B) + C$, $(A\circ B) \circ C$, $(A + B ) \circ C$, and so on are well defined. Saying that $A$ and $B$ are compatible and that an observable $C$ is compatible with them is equivalent to saying that $A,B$ and $C$ are pairwise compatible, which in turn, according to theorem \ref{thm: productCompatibility}, is equivalent to saying that there exists an observable $D$ and arrows $f,g,h$ such that $A=f(D)$, $B=g(D)$, $C=h(D)$. Hence, it follows from definition~\ref{def: algebraicOperations} that
\begin{align*}
    (A+B)+C &= (f(D)+g(D))+h(D) = (f+g)(D)+h(D)
    \\
    &= (f+g+h)(D)
    \\
    &=(f+(g+h))(D) = f(D)+(g+h)(D) = A + (B + C).
\end{align*}
More generally, Specker's principle  (corollary~\ref{cor: specker}) ensures that, if $A_{1},\dots,A_{m}$ are pairwise compatible, the finite sum $\sum_{i=1}^{m} A_{i}$ is well defined and, given any cone $C\xrightarrow{f_{i}}A_{i}$, $i=1,\dots,m$, for $A_{1},\dots,A_{m}$, we have
\begin{align}
    \sum_{i=1}A_{i} = \sum_{i=1}^{m} f_{i}(C) = \left(\sum_{i=1}^{m}f_{i} \right)(C).
\end{align}
It is analogous to show that the product $\circ$ is associative, and that, for any cone $C\xrightarrow{f_{i}}A_{i}$, $i=1,\dots,m$, of $A_{1},\dots,A_{m}$, we have
\begin{align}
    \prod_{i=1}^{m}A_{i} = \prod_{i=1}^{m}f_{i}(C) = \left(\prod_{i=1}^{m}f_{i}\right)(C).
\end{align}

Let $C$ be an observable, and let $\mathfrak{C}^{\ast}(C)$ be the set of all functions of $C$. $\mathfrak{C}^{\ast}(C)$ canonically inherits the algebraic structure of the collection of real functions on $\sigma(C)$, which is a commutative algebra, so, thanks to Specker's principle (corollary~\ref{cor: specker}) the  operations defined above satisfy all properties that operations between functions satisfy, i.e., they are associative, commutative, the product distributes over addition, and so on. We summarize some of these results in the following proposition.
\begin{proposition}\label{prop: algebraicProperties} The partial operations $+,\ci$ and the scalar multiplication given by definition \ref{def: algebraicOperations} satisfy the following conditions.
\begin{itemize}
    \item[(a)] The addition $+$ and the product $\ci$ are commutative and associative. Furthermore, the zero  and the unit, namely the projections $0$ and $\mathds{1}$ (definition~\ref{def: zeroAndUnit}), satisfy
    \begin{align}
        0 + A &= A,\\
        0 \ci A &= 0,\\
        \mathds{1} \ci A &= A.\\
    \end{align}
    for every observable $A$.
    \item[(b)] The product and the scalar product distribute over addition, i.e., if $A,B,C$ are pairwise compatible observables and $a$ is a real number,
    \begin{align}
        A \ci (B + C) &= (A \ci B) + (A \ci C),\\
        a \cdot (B + C) &= (a \cdot B) + (a \cdot C). 
    \end{align}
    \item[(c)] If $A,B$ are compatible observables and $a,b$ are real numbers,
    \begin{align}
        a \cdot (A \ci B) &= (a \cdot A) \ci B = A \ci (a \cdot B),\\
        (ab)\cdot A &= a\cdot (bA),\\
        (a+b) \ci A &= (a \cdot A) + (b \cdot A). 
    \end{align}
    \item[(d)] Given any observable $A$,
    \begin{align}
        0 \cdot A &= 0,\\
        1 \cdot A &= A
    \end{align}
\end{itemize}
\end{proposition}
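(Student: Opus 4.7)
The strategy is to reduce every claimed identity to a pointwise identity of real-valued functions on the spectrum of a single observable, and then invoke the (already known) commutative algebra structure of $\mathfrak{C}^{\ast}(C)$. Throughout, fix the tuple of observables appearing in the identity under consideration. Because that tuple is pairwise compatible (either by hypothesis, or, as in the zero/unit clauses of item (a), because $0 = \chi_{\emptyset}(A)$ and $\mathds{1} = \chi_{\sigma(A)}(A)$ are themselves functions of $A$ and hence compatible with $A$), Specker's principle (corollary~\ref{cor: specker}) supplies a common fine-graining $C$ and functions $f_{1},\dots,f_{m}$ on $\sigma(C)$ such that every observable in the identity has the form $f_{i}(C)$. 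By definition~\ref{def: algebraicOperations}, the partial operations on those observables correspond to the pointwise operations on the $f_{i}$, with scalars represented by constant functions on $\sigma(C)$.

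With this reduction each item is a bookkeeping exercise. For item (a), commutativity and associativity of $+$ and $\circ$ come from the corresponding properties of pointwise sum and product; the identities $0 + A = A$, $0 \circ A = 0$, $\mathds{1} \circ A = A$ follow from the fact that $0 = \chi_{\emptyset}(A)$ and $\mathds{1} = \chi_{\sigma(A)}(A)$ translate, in $\mathfrak{C}^{\ast}(A)$, to the zero and constant-one functions on $\sigma(A)$. For item (b), lemma~\ref{lemma: compatibleWithConjunction} ensures that pairwise compatibility of $A$, $B$, $C$ extends to compatibility of $A$ with $B + C$ and with $B \circ C$, so corollary~\ref{cor: specker} yields a common fine-graining of $\{A,B,C\}$, and distributivity is inherited from pointwise multiplication distributing over pointwise addition. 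For item (c), take a common fine-graining of $A$ and $B$ (or simply $C = A$ for the identities $(ab)\cdot A = a\cdot (bA)$ and $(a+b)\cdot A = a\cdot A + b\cdot A$), and use that scalar multiplication of functions commutes with pointwise multiplication and distributes over sums. Item (d) is immediate from $0 \cdot f = 0$ and $1 \cdot f = f$ in $\mathfrak{C}^{\ast}(A)$.

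The only conceptually nontrivial step is the reduction itself, i.e.\ the assertion that for each identity the relevant tuple admits a common fine-graining, which is precisely the content of theorem~\ref{thm: productCompatibility} plus its corollary~\ref{cor: specker}. Everything else is the same computation one performs when verifying these identities in a commutative algebra of functions, so I expect no further obstacle; indeed, the discussion in section~\ref{sec: algebraicOperations} already carries out the representative cases (associativity of $+$ and of $\circ$), and the remaining clauses are obtained by repeating that template verbatim with the appropriate pointwise identity.
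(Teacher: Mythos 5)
Your proposal is correct and follows essentially the same route as the paper: the paper justifies this proposition by the preceding discussion in section~\ref{sec: algebraicOperations}, which reduces each identity to pointwise identities of functions on the spectrum of a common fine-graining supplied by Specker's principle (corollary~\ref{cor: specker}), exactly as you do, and works out associativity of $+$ as the representative case. Your additional remark that lemma~\ref{lemma: compatibleWithConjunction} is what guarantees $A$ is compatible with $B+C$ and $B\circ C$ matches the paper's own observation immediately after definition~\ref{def: algebraicOperations}.
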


To conclude this section, let's make it clear that definitions \ref{def: functionalRelation} and \ref{def: algebraicOperations} are in agreement.
\begin{proposition}\label{prop: operationsAndFunctions} Let $A$ be an observable, and let $p$ is a polynomial on its spectrum $\sigma(A)$, i.e., there are real numbers $a_{0},\dots,a_{m}$ such that  $p(\af) = \sum_{k=0}^{m}a_{k} \af^{k}$ for each $\af \in \sigma(A)$. Let $p(A) \in \mathcal{O}$ be the function of $A$ via $p$ given by definition \ref{def: functionalRelation}. Then
\begin{align}
    p(A) = \sum_{k=0}^{m} a_{k}A^{k},
\end{align}
where $A^{k} \equiv \prod_{i=1}^{k} A$ for each $k=1,\dots,m$ and $A^{0} \doteq \mathds{1}$.
\end{proposition}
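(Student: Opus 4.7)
The plan is to prove the statement by induction, reducing everything to the fact that powers and scalar multiples of $A$ are all functions of $A$ itself, and hence pairwise compatible through the trivial cone built from identity arrows. First I would handle the pure monomials $\af^k$, then scalar multiples, then the full sum.

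For the base step, set $p_k(\af) \doteq \af^k$. For $k=0$ the function $p_0: \sigma(A) \to \mathbb{R}$ is identically $1$, which is exactly $\chi_{\sigma(A)}$, so $p_0(A) = \chi_{\sigma(A)}(A) = \mathds{1} = A^0$ by definition~\ref{def: zeroAndUnit}. For $k=1$ we have $p_1 = \id_{\sigma(A)}$ and therefore $p_1(A) = A = A^1$. For the induction step, assume $A^k = p_k(A)$. Since $A$ is compatible with itself (the identity arrow $A \to A$ supplies the trivial cone $A \xleftarrow{\id} A \xrightarrow{\id} A$), and $A^k = p_k(A)$ is a function of $A$, definition~\ref{def: algebraicOperations} applied to the cone $A \xleftarrow{p_k} A \xrightarrow{\id} A$ gives
\begin{align*}
A^{k+1} = A^k \circ A = p_k(A) \circ \id(A) = (p_k \cdot \id)(A) = p_{k+1}(A).
\end{align*}

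Next I would handle the scalar multiplication: by definition~\ref{def: algebraicOperations}, $a_k \cdot A^k = a_k \cdot p_k(A) = (a_k \cdot p_k)(A)$, since $A^k = p_k(A)$ exhibits $A^k$ as a function of $A$.

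Finally, observe that the observables $A^0, A^1, \ldots, A^m$ are all functions of $A$ via the arrows $p_0, \ldots, p_m$, so they form a cone $A \xrightarrow{p_k} A^k$ and are in particular pairwise compatible. Using the extended form of definition~\ref{def: algebraicOperations} for finite sums along a common cone (which is valid by Specker's principle, corollary~\ref{cor: specker}, and the discussion immediately following definition~\ref{def: algebraicOperations}), we obtain
\begin{align*}
\sum_{k=0}^{m} a_k A^k = \sum_{k=0}^{m} (a_k \cdot p_k)(A) = \Bigl(\sum_{k=0}^{m} a_k \cdot p_k\Bigr)(A) = p(A),
\end{align*}
which is the desired identity. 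There is no real obstacle here beyond bookkeeping; the one point to be careful about is checking at each step that the operands of $+$ and $\circ$ are genuinely compatible so that definition~\ref{def: algebraicOperations} applies, but compatibility is automatic throughout because every observable involved is a function of $A$.
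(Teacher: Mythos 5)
Your proposal is correct and follows essentially the same route as the paper: both express $p$ as $\sum_{k} a_{k}\,\id^{k}$ on $\sigma(A)$ and use the fact that $f \mapsto f(A)$ carries sums, products, and scalar multiples of functions to the corresponding partial operations on observables (definition~\ref{def: algebraicOperations} together with the finite-sum extension via Specker's principle). The only difference is that you spell out by induction the identification $A^{k} = \id^{k}(A)$, which the paper's chain of equalities treats implicitly; the content is the same.
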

\begin{proof}
    Let $\text{id}\equiv \text{id}_{A}$ be the identity function $\sigma(A) \ni \af \mapsto \af \in \sigma(A)$. For each $k=1,\dots,m$, define  $\text{id}^{k} \doteq\prod_{i=1}^{k} \text{id}$, and let $\text{id}^{0}$ be the constant function $\sigma(A) \ni \af \mapsto 1 \in \mathbb{R}$, i.e., $\text{id}^{0} = \chi_{\sigma(A)}$.  We have $p = \sum_{k=0}^{m} a_{k} \id^{k}$, and therefore
    \begin{align}
        p(A) &= \left(\sum_{k=0}^{m} a_{k} \id^{k}\right)(A) = \sum_{k=0}^{m}a_{k}\id^{k}(A) = a_{0}\chi_{\sigma(A)}(A) + \sum_{k=1}^{m}a_{k}\left\{\left(\prod_{i=1}^{k}\id\right)(A)\right\}
        \\
        &= a_{0}\mathds{1} +  \sum_{k=1}^{m}a_{k}A^{k} = \sum_{k=0}^{m}a_{k}A^{k}.
    \end{align}
\end{proof}

Proposition~\ref{prop: operationsAndFunctions} shows that there is no ambiguity in writing, for instance, $A^{2}$. It does not matter if by $A^{2}$ we mean $A \circ A$ or $g(A)$, where $g(\af) \doteq \af^{2}$ for every $\af \in \sigma(A)$; in both cases, the observable is the same.
\subsection{Spectral theory, spectral theorem and the functional calculus}\label{sec: spectralTheory}

Recall that a projection is an observable $E$ satisfying $E=\chi_{\Delta}(C)$ for some observable $C$ and some $\Delta \subset \sigma(C)$ (see definition~\ref{def: projection}). Recall also that, as we have shown in lemma \ref{lemma: projection}, there is no ambiguity in this definition, by which we mean that, if a projection $E$ satisfies $E=g(D)$ for some observable $D$, then $g$ is the characteristic function of some $\Sigma \subset \sigma(A)$, and consequently $E=\chi_{\Sigma}(D)$. We saw in section~\ref{sec: projections} that a projection $E$ represents the equivalence class of all observable events $(\Delta,A)$ satisfying $\chi_{\Delta}(A)=E$, and we proved that each state $\rho$ defines a mapping $\Exp{ \ \cdot \ }$ on the set of projections $\mathcal{P}$, which we called the expectation defined by $\rho$ (definition~\ref{def: expectationProjection}). Now that we have algebraic operations between compatible observables, we can explore projections in more depth. The very definition of projection, for instance, can be given in  purely algebraic terms:
\begin{proposition}[Idempotence of projections]\label{prop: projectionAndSquare}
    Let $E$ be an observable. The following claims are equivalent.
    \begin{itemize}
        \item[(a)] $E$ is a projection
        \item[(b)] $\sigma(E) \subseteq \{0,1\}$
        \item[(c)] $E$ is \textbf{idempotent}, that is,  $E^{2} = E$ (see proposition~\ref{prop: operationsAndFunctions}).
\end{itemize}
\end{proposition}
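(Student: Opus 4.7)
The plan is to establish the cycle (a) $\Rightarrow$ (b) $\Rightarrow$ (c) $\Rightarrow$ (a), leaning on results already proved in the excerpt: the spectral mapping theorem $\sigma(f(A)) = f(\sigma(A))$, the uniqueness of functional relations (lemma~\ref{lemma: functionalRelation}), and the identification of the algebraic product $E^{2} = E \circ E$ with $h(E)$ for $h(x) = x^{2}$ provided by proposition~\ref{prop: operationsAndFunctions}.

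For (a) $\Rightarrow$ (b), if $E = \chi_{\Delta}(A)$ then the spectral mapping theorem gives $\sigma(E) = \chi_{\Delta}(\sigma(A)) \subseteq \{0,1\}$, as already noted at the start of section~\ref{sec: projections}. For (b) $\Rightarrow$ (a), the key observation is that when $\sigma(E) \subseteq \{0,1\}$ the function $\chi_{\Delta}: \sigma(E) \rightarrow \mathbb{R}$ with $\Delta \doteq \sigma(E) \cap \{1\}$ satisfies $\chi_{\Delta}(\af) = \af$ for every $\af \in \sigma(E)$, that is, it coincides with $\text{id}_{\sigma(E)}$. Hence $\chi_{\Delta}(E) = \text{id}_{\sigma(E)}(E) = E$, exhibiting $E$ as a projection with respect to itself.

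For (b) $\Rightarrow$ (c), note that by proposition~\ref{prop: operationsAndFunctions} we have $E^{2} = h(E)$ with $h(\af) = \af^{2}$; if $\sigma(E) \subseteq \{0,1\}$ then $h(\af) = \af$ on $\sigma(E)$, so $h$ agrees with $\text{id}_{\sigma(E)}$ and therefore $E^{2} = h(E) = E$. For the converse (c) $\Rightarrow$ (b), suppose $E^{2} = E$. Writing $E = \text{id}_{\sigma(E)}(E)$ and $E^{2} = h(E)$ with $h(\af) = \af^{2}$, we have two functional relations $E = \text{id}_{\sigma(E)}(E) = h(E)$, and lemma~\ref{lemma: functionalRelation} forces $\text{id}_{\sigma(E)} = h$ on $\sigma(E)$, i.e., $\af = \af^{2}$ for every $\af \in \sigma(E)$. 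The only real solutions are $\af \in \{0,1\}$, so $\sigma(E) \subseteq \{0,1\}$.

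There is no real obstacle here; the only subtlety is making sure that the step (b) $\Rightarrow$ (a) is not circular. It is not: we are using $E$ itself as the ``parent'' observable $A$ in definition~\ref{def: projection}, which is legitimate because $\Delta \doteq \sigma(E) \cap \{1\}$ is a subset of $\sigma(E)$ and $\chi_{\Delta}(E)$ is a bona fide observable by postulate~\ref{post: observables}. The separability postulate~\ref{ax: separability} (via lemma~\ref{lemma: functionalRelation}) then guarantees the equality $\chi_{\Delta}(E) = E$ used in the argument.
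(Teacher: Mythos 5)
Your proof is correct and follows essentially the same route as the paper's: the spectral mapping theorem for (a)$\Rightarrow$(b), recognition of a $\{0,1\}$-valued function as a characteristic function for (b)$\Rightarrow$(a), and the uniqueness of functional relations (lemma~\ref{lemma: functionalRelation}) to force $\af^{2}=\af$ on the spectrum in the idempotence direction. The only (harmless) difference is that you consistently use the identity arrow $E \rightarrow E$ as the parent functional relation, whereas the paper argues with an arbitrary arrow $C \rightarrow E$; both are valid and rest on the same lemmas.
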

\begin{proof}
    In lemma~\ref{lemma: projectionSpectrum} we proved that $(a)$ implies $(b)$. Item $(a)$ implies item $(c)$ because, for any projection $E = \chi_{\Delta}(C)$, we have $E^{2} = (\chi_{\Delta} \cdot \chi_{\Delta})(C) = \chi_{\Delta \cap \Delta}(C) = \chi_{\Delta}(C) = E$ (see proposition~\ref{prop: operationsAndFunctions}). Suppose now that $E$ is an observable satisfying $\sigma(E) \subseteq \{0,1\}$, and let $C \xrightarrow{f} E$ be any arrow whose codomain is $E$. Then $ f(\sigma(C)) = \sigma(f(C)) = \sigma(E) \subseteq \{0,1\}$, which means that $f = \chi_{\Delta}(C)$ for $\Delta \equiv f^{-1}(\{1\})$. It shows that $E$ is a projection, thus $(b)$ implies $(a)$ and, consequently, $(c)$. Finally, suppose that $E^{2} = E$, and let $C \xrightarrow{f} E$ be any arrow whose codomain is $E$. Then $f(C) = E = E^{2} = (f \cdot f)(C) = f^{2}(C)$, and, according to lemma~\ref{lemma: functionalRelation}, it implies that $f^{2} = f$. We know that, for any $x \in \mathbb{R}$, $x^{2} = x$ iff $x \in \{0,1\}$, thus, for any $\gamma \in \sigma(C)$, $f^{2}(\gamma) = f(\gamma)$ iff $f(\gamma) \in \{0,1\}$. It means that $f(\sigma(C)) \subseteq \{0,1\}$, hence $f = \chi_{\Delta}(C)$, where $\Delta \equiv f^{-1}(\{1\})$. Then $E = \chi_{\Delta}(C)$, and therefore $E$ is a projection.  
\end{proof}

Another important consequence of the existence of algebraic operations between compatible observables is the existence of an orthogonality relation in the set of projections.

\begin{definition}[Orthogonality]\label{def: orthononality} Two projections $E,F$ are said to be orthogonal, denoted $E \perp F$, if they are compatible and satisfy $E \circ F = 0$.
\end{definition}

Note that every projection is orthogonal to the zero operator (definition~\ref{def: zeroAndUnit}). On the other hand, only the zero operator is orthogonal to the unit (definition~\ref{def: zeroAndUnit} again).

As usual, we write $A - B$ as a shorthand for $A + (-B)$, where $-B$ in turn is a shorthand for $(-1)\cdot B$. We can now introduce the following definition. 
\begin{definition}[Orthocomplement]\label{def: orthocomplement} The orthocomplement of a projection $E$ is the projection
    \begin{align}
        E^{\perp} \doteq \mathds{1}- E. 
    \end{align}
\end{definition}
Let $E$ be any projection. Then $E^{\perp} \circ E^{\perp} = (\mathds{1} - E)\circ (\mathds{1} - E) = \mathds{1}^{2} - E - E + E^{2} = \mathds{1} - E$ (see proposition~\ref{prop: operationsAndFunctions}), so $E^{\perp}$ is indeed a projection. In the following lemma, we summarize important properties that are satisfied by the orthocomplement.

\begin{lemma}[Orthocomplement]\label{lemma: orthocomplement} Let $\mathcal{P}$ be the set of projections of a system $\mathfrak{S}$, and let $\mathcal{P} \ni E \mapsto E^{\perp} \in \mathcal{P}$ be the mapping that assigns each projection to its orthocomplement. Then the following conditions are satisfied.
\begin{itemize}
    \item[(a)] $E$ and $E^{\perp}$ are compatible.
    \item[(b)] If $E=\chi_{\Delta}(A)$ for some observable $A$, $E^{\perp} = \chi_{\sigma(A)\backslash \Delta}(A)$.
    \item[(c)] $E$ and $E^{\perp}$ are orthogonal.
    \item[(d)] The mapping $\mathcal{P} \ni E \mapsto E^{\perp} \in \mathcal{P}$ is an involution, i.e., for any projection $E$, $(E^{\perp})^{\perp} =E$.
    \item[(e)] $\mathds{1}^{\perp} = 0$ and, consequently, $0^{\perp} = \mathds{1}$.
    \item[(f)] For any state $\rho$, $
        p_{\rho}(0,E) = p_{\rho}(1,E^{\perp}).$
    \item[(g)] For any state $\rho$, $\Exp{E^{\perp}} = 1-\Exp{E}.$
\end{itemize}
\end{lemma}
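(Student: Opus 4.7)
The plan is to prove item (b) first, since the other items follow easily from it. The key observation is that $\mathds{1}$ can be rewritten as $\chi_{\sigma(A)}(A)$ for any observable $A$ (definition~\ref{def: zeroAndUnit}), so if $E=\chi_{\Delta}(A)$, then $E$ and $\mathds{1}$ are both functions of $A$ and hence compatible. By definition~\ref{def: algebraicOperations}, applied to the cone $\mathds{1}\xleftarrow{\chi_{\sigma(A)}}A\xrightarrow{\chi_{\Delta}}E$, we get
\begin{align*}
    E^{\perp} = \mathds{1} - E = \chi_{\sigma(A)}(A) - \chi_{\Delta}(A) = (\chi_{\sigma(A)} - \chi_{\Delta})(A) = \chi_{\sigma(A)\backslash \Delta}(A).
\end{align*}
This proves (b).

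Item (a) is then immediate: by (b), both $E$ and $E^{\perp}$ are functions of the same observable $A$, so they are compatible. For (c), definition~\ref{def: productProjections} and (b) give
\begin{align*}
    E \circ E^{\perp} = \chi_{\Delta}(A) \circ \chi_{\sigma(A)\backslash\Delta}(A) = \chi_{\Delta \cap (\sigma(A)\backslash \Delta)}(A) = \chi_{\oldemptyset}(A) = 0.
\end{align*}
For (d), I plug the definition into itself: $(E^{\perp})^{\perp} = \mathds{1} - E^{\perp} = \mathds{1} - (\mathds{1} - E) = E$, using the algebraic properties from proposition~\ref{prop: algebraicProperties}. For (e), $\mathds{1}^{\perp} = \mathds{1} - \mathds{1} = 0$ by the same proposition, and then $0^{\perp} = \mathds{1}$ follows from (d).

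For (g), use (b) together with definition~\ref{def: expectationProjection} and definition~\ref{def: functionalRelation}: writing $E = \chi_{\Delta}(A)$,
\begin{align*}
    \Exp{E^{\perp}} = P_{\rho}(\sigma(A) \backslash \Delta; A) = 1 - P_{\rho}(\Delta; A) = 1 - \Exp{E}.
\end{align*}
Finally, for (f), note that $p_{\rho}(0,E) = 1 - p_{\rho}(1,E) = 1 - \Exp{E}$ (this uses the formulas for $P_{\rho}(0;E)$ and $P_{\rho}(1;E)$ recorded just after definition~\ref{def: expectationProjection}), while $p_{\rho}(1,E^{\perp}) = \Exp{E^{\perp}} = 1 - \Exp{E}$ by (g), so the two are equal. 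None of the steps presents a real obstacle; the only nontrivial input is recognizing that $\mathds{1}$ is a common fine-graining of any projection, which reduces the subtraction $\mathds{1} - E$ to the algebra of characteristic functions on a single spectrum where the result is transparent.
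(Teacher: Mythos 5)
Your proof is correct and follows essentially the same route as the paper's: reduce everything to item (b) by computing $\mathds{1}-E=(\chi_{\sigma(A)}-\chi_{\Delta})(A)=\chi_{\sigma(A)\backslash\Delta}(A)$ on a common spectrum, then read off (a), (c)--(g) from the algebra of characteristic functions and the expectation formulas. No gaps.
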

\begin{proof}
    Let $E$ be a projection. For any observable $A$ such that $E = \chi_{\Delta}(A)$, we have
    \begin{align*}
        E^{\perp} = \chi_{\sigma(A)}(A) - \chi_{\Delta}(A) = (\chi_{\sigma(A)} -\chi_{\Delta})(A) = \chi_{\sigma(A)\backslash \Delta}(A).
    \end{align*}
    It proves both items $(a)$ and $(b)$. Also, if $E = \chi_{\Delta}(A)$, then $E\circ E^{\perp} = \chi_{\Delta \cap (\sigma(A)\backslash\Delta)}(A) = \chi_{\emptyset}(A) = 0$, so item $(c)$ follows (equivalently, item $(c)$ follows from the fact that $E \circ E^{\perp} = E\circ (\mathds{1} - E) = E - E = 0$). For any projection $E$ we have  $(E^{\perp})^{\perp} = \mathds{1} - E^{\perp} = \mathds{1} - (\mathds{1} - E) = E$, thus item $(d)$ is satisfied. The validity of item $(e)$ is straightforward, since $\mathds{1}^{\perp} = \mathds{1} - \mathds{1} = 0$ and $0^{\perp} = 1-0 = 1$. Finally, items $(f)$ and $(g)$ follows from item $(b)$. In fact, let $E=\chi_{\Delta}(A)$ be any projection. Then, for any state $\rho$,
    \begin{align*}
        p_{\rho}(0;E) = 1-p_{\rho}(1;E) = 1-P_{\rho}(\Delta;A) = P_{\rho}(\sigma(A)\backslash \Delta;A) = p_{\rho}(1,E^{\perp}),
    \end{align*}
    which in turn is equivalent to saying that $\Exp{E^{\perp}} = 1-\Exp{E}$.
\end{proof}

If $E_{1},\dots,E_{m}$ are pairwise compatible projections, $\prod_{i=1}^{m}E_{i}$ is also a projection, and, for any cone $\chi_{\Delta_{i}}: A \rightarrow E_{i}$, $i=1,\dots,m$, of $E_{1},\dots,E_{m}$, we have
\begin{align}
    \prod_{i=1}^{m}E_{i} = \left(\prod_{i=1}^{m}\chi_{\Delta_{i}}\right)(A) = \chi_{\cap_{i=1}^{m}\Delta_{i}}(A).
\end{align}
Together with proposition~\ref{prop: jointPushforward}, this equality implies the following lemma.
\begin{lemma}\label{lemma: projectionJointProbability}
    Let $A_{1},\dots,A_{m}$ be pairwise compatible observables. Then, for any $\Delta_{1}\times \dots \times \Delta_{m} \subset \prod_{i=1}^{m} \sigma(A_{i})$ and any state $\rho$,
    \begin{align}
        P_{\rho}(\Delta_{1}\times\dots\times\Delta_{m};A_{1},\dots,A_{m})&= \Exp{\prod_{i=1}^{m}E_{\Delta_{i}}},
    \end{align}
    where $E_{\Delta_{i}} \equiv \chi_{\{\Delta_{i}\}}(A_{i})$ for every $i$.
\end{lemma}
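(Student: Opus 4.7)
The plan is to reduce everything to a single observable $C$ via Specker's principle and then match two different expressions for the same number, $P_{\rho}(\bigcap_{i=1}^{m} f_{i}^{-1}(\Delta_{i});C)$.

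First, since $A_{1},\dots,A_{m}$ are pairwise compatible, Corollary \ref{cor: specker} gives a cone $C \xrightarrow{f_{i}} A_{i}$, $i=1,\dots,m$, so that $A_{i} = f_{i}(C)$ for every $i$. Applying Proposition \ref{prop: jointPushforward} to this cone, the left-hand side becomes
\begin{align*}
    P_{\rho}(\Delta_{1} \times \dots \times \Delta_{m};A_{1},\dots,A_{m}) = P_{\rho}\bigl((f_{1},\dots,f_{m})^{-1}(\Delta_{1}\times\dots\times\Delta_{m});C\bigr) = P_{\rho}\Bigl(\bigcap_{i=1}^{m} f_{i}^{-1}(\Delta_{i});C\Bigr).
\end{align*}

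Next, I would handle the right-hand side. By Lemma \ref{lemma: KSdefinitionProjections}, each $E_{\Delta_{i}} = \chi_{\Delta_{i}}(A_{i}) = \chi_{\Delta_{i}}(f_{i}(C)) = \chi_{f_{i}^{-1}(\Delta_{i})}(C)$, so every $E_{\Delta_{i}}$ is a function of the single observable $C$; in particular the projections $E_{\Delta_{1}},\dots,E_{\Delta_{m}}$ are pairwise compatible and the product $\prod_{i=1}^{m} E_{\Delta_{i}}$ is well defined. Using the identity $\prod_{i=1}^{m} E_{\Delta_{i}} = \chi_{\bigcap_{i=1}^{m} f_{i}^{-1}(\Delta_{i})}(C)$ established immediately before the lemma (which itself follows from iterating Definition \ref{def: productProjections} and Proposition \ref{prop: algebraicProperties}), and then applying Definition \ref{def: expectationProjection} to the projection $\chi_{\bigcap_{i=1}^{m} f_{i}^{-1}(\Delta_{i})}(C)$ with the associated $C$-event, one gets
\begin{align*}
    \Bigl\langle \prod_{i=1}^{m} E_{\Delta_{i}} \Bigr\rangle_{\rho} = P_{\rho}\Bigl(\bigcap_{i=1}^{m} f_{i}^{-1}(\Delta_{i});C\Bigr).
\end{align*}

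Comparing the two displays yields the claimed equality. The only step that needs a little care is verifying that the factorization $\prod_{i=1}^{m} E_{\Delta_{i}} = \chi_{\bigcap_{i} f_{i}^{-1}(\Delta_{i})}(C)$ genuinely holds for arbitrary $m$, since Definition \ref{def: productProjections} is stated for pairs; but this is routine by induction on $m$ using the associativity of $\circ$ for pairwise compatible projections (proposition \ref{prop: algebraicProperties}) together with $\chi_{\Sigma}(C)\circ\chi_{\Sigma'}(C)=\chi_{\Sigma\cap\Sigma'}(C)$. I do not anticipate any real obstacle here; the main conceptual content is that Specker's principle lets us collapse a joint sequential measure into a single-observable measure, at which point both sides of the claimed equation visibly coincide with $P_{\rho}^{C}\bigl(\bigcap_{i} f_{i}^{-1}(\Delta_{i})\bigr)$.
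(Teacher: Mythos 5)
Your proof is correct and follows essentially the same route as the paper: the text immediately preceding the lemma establishes $\prod_{i=1}^{m}E_{\Delta_{i}} = \chi_{\cap_{i=1}^{m}f_{i}^{-1}(\Delta_{i})}(C)$ for a cone $C\xrightarrow{f_{i}}A_{i}$, and the paper then simply invokes proposition~\ref{prop: jointPushforward} to conclude, exactly as you do. Your version just spells out the intermediate identifications (via lemma~\ref{lemma: KSdefinitionProjections} and definition~\ref{def: expectationProjection}) that the paper leaves implicit.
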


The sum of two compatible projections $E$, $F$ is itself a projection if and only if $E$ and $F$ are orthogonal. In fact, for any cone $E \xleftarrow{\chi_{\Delta}} A \xrightarrow{\chi_{\Sigma}} F$ we have
\begin{align*}
    E + F = (\chi_{\Delta} + \chi_{\Sigma})(A) = \left(\chi_{\Delta \backslash\Sigma} + 2\chi_{\Delta \cap \Sigma} + \chi_{\Sigma \backslash \Delta}\right)(A).
\end{align*}
If $\Delta \cap \Sigma \neq \oldemptyset$, $2 \in \sigma(E+F)$, so $E + F$ is not a projection. On the other hand, if $\Delta \cap \Sigma = \oldemptyset$, we have $\Delta = \Delta \backslash\Sigma$, $\Sigma = \Sigma \backslash \Delta$ and
\begin{align*}
    E + F = \chi_{\Delta \cup \Sigma}(A).
\end{align*}
Similarly, if the projections $E_{1},\dots E_{m}$ are pairwise orthogonal, $\sum_{i=1}^{m}E_{i}$ is a projection. For any cone $\chi_{\Delta_{i}}: A \rightarrow E_{i}$, $i=1,\dots,m$, of $E_{1},\dots,E_{m}$ we have $\Delta_{i} \cap \Delta_{j} =\oldemptyset$ whenever $i\neq j$ and
\begin{align}
    \sum_{i=1}^{m} E_{i} =\label{eq: sumOfProjections} \left(\sum_{i=1}^{m}\chi_{\Delta_{i}}\right)(A) = \chi_{\cup_{i=1}^{m}\Delta_{i}}(A).
\end{align}

For this reason, states are finitely additive:

\begin{lemma}[Finite additivity]\label{lemma: finiteAdditivity} Let $\rho$ be any state. Then, for any set of pairwise orthogonal projections $E_{1},\dots,E_{m}$,
        \begin{align}
            \Exp{\sum_{i=1}^{m}E_{i}} = \sum_{i=1}^{m}\Exp{E_{i}}.
        \end{align}
    \end{lemma}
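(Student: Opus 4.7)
The plan is to reduce the statement to additivity of the probability measure $P_\rho^A$ on a single spectrum, exploiting the fact that pairwise orthogonal projections are in particular pairwise compatible and hence admit a common fine-graining by Specker's principle.

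First I would invoke corollary~\ref{cor: specker}: since the projections $E_1,\dots,E_m$ are pairwise compatible (orthogonality presupposes compatibility by definition~\ref{def: orthononality}), there exist an observable $A$ and subsets $\Delta_i\subset\sigma(A)$ such that $E_i=\chi_{\Delta_i}(A)$ for every $i$. Next I would show that the $\Delta_i$ can be taken to be pairwise disjoint: from $E_i\circ E_j = \chi_{\Delta_i\cap\Delta_j}(A) = 0$ and the fact that $\chi_\emptyset(A)=0$ is the unique projection whose characteristic set is empty, one concludes $\Delta_i\cap\Delta_j=\emptyset$ for $i\neq j$ (this is exactly the reasoning leading to equation~\ref{eq: sumOfProjections}).

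Then I would apply equation~\ref{eq: sumOfProjections} to get
\begin{align*}
    \sum_{i=1}^{m}E_i \;=\; \chi_{\cup_{i=1}^{m}\Delta_i}(A).
\end{align*}
Evaluating both sides on the state $\rho$ via definition~\ref{def: expectationProjection}, and using that $P_\rho^A$ is a probability measure on $\sigma(A)$ (hence finitely additive on pairwise disjoint subsets), yields
\begin{align*}
    \Exp{\sum_{i=1}^{m}E_i} \;=\; P_\rho\!\left(\bigcup_{i=1}^{m}\Delta_i;A\right) \;=\; \sum_{i=1}^{m} P_\rho(\Delta_i;A) \;=\; \sum_{i=1}^{m}\Exp{E_i},
\end{align*}
which is the desired equality.

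There is no real obstacle here: all the work was already done in establishing Specker's principle and equation~\ref{eq: sumOfProjections}. The only subtle point is making sure the $\Delta_i$ are disjoint in the common fine-graining, but this is immediate from the characterization of the zero projection together with lemma~\ref{lemma: separatingProjections} (or directly from the $\chi_\Delta(A)=0\Leftrightarrow\Delta=\emptyset$ observation used throughout section~\ref{sec: projections}). The rest is finite additivity of an ordinary probability measure.
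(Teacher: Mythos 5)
Your proof is correct and follows essentially the same route as the paper's: both obtain a common cone via Specker's principle, note that orthogonality forces the sets $\Delta_{i}$ to be pairwise disjoint, apply equation~\ref{eq: sumOfProjections}, and conclude by finite additivity of the probability measure $P_{\rho}^{A}$. Your explicit justification of the disjointness of the $\Delta_{i}$ is slightly more detailed than the paper's, which simply refers back to the preceding discussion, but the argument is the same.
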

    \begin{proof}
    Let $E_{1},\dots,E_{m}$ be pairwise compatible projections, and let $\chi_{\Delta_{i}}:A \rightarrow E_{i}$, $i=1,\dots,m$, be a cone for them. We have seen that $\Delta_{i} \cap \Delta_{j} = \oldemptyset$ whenever $i \neq j$, thus it follows from equation~\ref{eq: sumOfProjections} and from the additivity of probability measures that, that for any state $\rho$,
    \begin{align*}
        \Exp{\sum_{i=1}^{m}E_{i}} &= P_{\rho}(\cup_{i=1}^{m} \Delta_{i};A) = \sum_{i=1}^{m}P_{\rho}(\Delta_{i};A) = \sum_{i=1}^{m}\Exp{E_{i}}.
    \end{align*}
\end{proof}

The exclusivity principle, mentioned in section \ref{sec: specker}, corresponds to the following result.

\begin{corollary}[Exclusivity principle]\label{cor: exclusivityPrinciple} Let $\rho$ be any state. Then, for any set of pairwise orthogonal projections $E_{1},\dots,E_{m}$,
\begin{align}
    \sum_{i=1}^{m}\Exp{E_{i}} \leq 1.
\end{align}
\end{corollary}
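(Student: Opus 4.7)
The plan is to observe that this is essentially a one-line consequence of Lemma~\ref{lemma: finiteAdditivity} combined with the fact that any projection has expectation bounded by $1$. First I would invoke the finite additivity lemma just proved to rewrite
\begin{align*}
\sum_{i=1}^{m}\Exp{E_{i}} = \Exp{\sum_{i=1}^{m} E_{i}}.
\end{align*}
The sum on the right is legitimate because the discussion preceding Equation~\ref{eq: sumOfProjections} shows that a finite sum of pairwise orthogonal projections is itself a projection; concretely, picking a common cone $\chi_{\Delta_{i}}\colon A\to E_{i}$ of the (pairwise compatible) projections, one gets $\sum_{i}E_{i}=\chi_{\cup_{i}\Delta_{i}}(A)$, with the $\Delta_{i}$ disjoint by orthogonality.

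Next I would use Definition~\ref{def: expectationProjection} to identify $\Exp{\sum_{i}E_{i}}$ with $P_{\rho}(A^{\cup_{i}\Delta_{i}})$. Since $P_{\rho}^{A}$ is a probability measure on $\sigma(A)$ (Definition~\ref{def: system}), this number lies in $[0,1]$, and in particular is at most $1$. Chaining these two facts yields the desired inequality
\begin{align*}
\sum_{i=1}^{m}\Exp{E_{i}} = \Exp{\sum_{i=1}^{m}E_{i}} = P_{\rho}\Bigl(\bigcup_{i=1}^{m}\Delta_{i};A\Bigr)\leq 1.
\end{align*}

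There is no real obstacle here: the whole content of the corollary is already packaged in the finite additivity lemma together with the fact that orthogonal sums of projections are projections. The only thing one has to be careful about is to justify the existence of a common cone $A$ for the pairwise compatible projections $E_{1},\dots,E_{m}$, which is guaranteed by Specker's principle (Corollary~\ref{cor: specker}). Everything else is automatic from the probabilistic interpretation of $\Exp{\cdot}$ on projections.
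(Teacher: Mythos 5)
Your argument is correct and is essentially the paper's own: the corollary is presented immediately after Lemma~\ref{lemma: finiteAdditivity}, whose proof already contains the common cone $\chi_{\Delta_{i}}\colon A \to E_{i}$ (via Specker's principle), the disjointness of the $\Delta_{i}$, and the identification $\Exp{\sum_{i}E_{i}} = P_{\rho}(\cup_{i}\Delta_{i};A) \leq 1$. Nothing is missing.
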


Recall that, according to definition~\ref{def: projectionOfEigenvalue}, the projection associated with the eigenvalue $\af$ of $A$ is the projection $E_{\af}$ associated with the event $(\af,A)$, namely  $\chi_{\{\af\}}(A)$, and it immediately follows from this definition that projections associated with distinct eigenvalues of an observable are pairwise orthogonal: if $\af,\af' \in \sigma(A)$ are different, we have $\{\af\} \cap \{\af'\} = \oldemptyset$, and therefore $E_{\af} \circ E_{\af'} = 0$. Among other things, this fact enables us to characterize the trace of a projection in a purely algebraic way. In fact, let $E$ be a projection, and let $A$ be any nondegenerate observable satisfying $E=\chi_{\Delta}(A)$ for some $\Delta \subset \sigma(A)$ (according to postulate~\ref{post: observables} and lemma~\ref{lemma: projection}, this nondegenerate observable always exists). Recall the trace of $E$  is the number $\Tra{E} \doteq \vert \Delta \vert$ (see definition~\ref{def: traceProjection}), and that this number does not depend on the choice of $A$, i.e., if $E=\chi_{\Sigma}(B)$ for some other nondegenerate observable $B$, then $\vert \Delta \vert = \vert \Sigma \vert$. Also, recall that the rank of a projection is its trace and that by a rank-$k$ projection we mean a projection whose trace is equal to $k$. We have seen that, in a $n$-dimensional system, the trace of a projection is a natural number between $1$ and $n$, and that the intrinsic expectation $\langle E \rangle_{\emptyset}$ of $E$, i.e., its expectation w.r.t. the completely mixed state (see definition~\ref{def: intrinsicProbability}), satisfies
\begin{align*}
    \langle E \rangle_{\emptyset} = \frac{\Tra{E}}{n}
\end{align*}
(see lemma~\ref{lemma: traceSimple}). It immediately follows from this equation that projections associated with eigenvalues of nondegenerate observables have rank $1$. On the other hand, let $E$ be a rank-$1$ projection, and let $A$ be any nondegenerate observable such that $E = \chi_{\Delta}(A)$. Then $1 = \vert \Delta \vert$, which implies that $\Delta=\{\af\}$ for some $\af \in \sigma(A)$, which in turn means that $E$ is the projection associated with the eigenvalue $\af$ of $A$. Hence, rank-$1$ projections and projections associated with eigenvalues of nondegenerate observables are equivalent concepts. It enables us to prove the following result: 

\begin{proposition}\label{prop: nondegenerateDecomposition}
A projection $E$ has rank $k$ if and only if there are $k$ pairwise orthogonal rank-$1$ projections $E_{1},\dots,E_{k}$ satisfying
\begin{align*}
        E = \sum_{i=1}^{k}E_{i}.
\end{align*}
\end{proposition}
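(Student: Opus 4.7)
The plan is to prove both directions using the nondegenerate-observable characterization of rank-$1$ projections established just before the statement together with finite additivity (lemma~\ref{lemma: finiteAdditivity}) of the completely mixed state.

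For the forward direction, suppose $E$ has rank $k$. By postulate~\ref{post: observables} and lemma~\ref{lemma: projection}, I can pick a nondegenerate observable $A$ and a set $\Delta \subset \sigma(A)$ with $E = \chi_\Delta(A)$ and, by definition~\ref{def: traceProjection}, $|\Delta| = k$. Enumerate $\Delta = \{\af_1, \dots, \af_k\}$ and set $E_i \doteq \chi_{\{\af_i\}}(A)$. Each $E_i$ is the projection associated with an eigenvalue of a nondegenerate observable, hence a rank-$1$ projection (as observed just above the statement). Pairwise orthogonality follows from the disjointness of the singletons: $E_i \circ E_j = \chi_{\{\af_i\} \cap \{\af_j\}}(A) = 0$ whenever $i \neq j$. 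Since $A$ is a common cone for $E_1, \dots, E_k$, the formula for sums of pairwise orthogonal projections (equation~\ref{eq: sumOfProjections}) gives
\begin{align*}
    \sum_{i=1}^{k} E_i = \chi_{\cup_{i=1}^k \{\af_i\}}(A) = \chi_\Delta(A) = E.
\end{align*}

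For the converse, suppose $E = \sum_{i=1}^{k} E_i$ with $E_1, \dots, E_k$ pairwise orthogonal rank-$1$ projections. Orthogonality presupposes compatibility (definition~\ref{def: orthononality}), so the $E_i$ are pairwise compatible and, by equation~\ref{eq: sumOfProjections}, their sum $E$ really is a projection. Apply finite additivity (lemma~\ref{lemma: finiteAdditivity}) to the completely mixed state $\emptyset$:
\begin{align*}
    \langle E \rangle_{\emptyset} = \sum_{i=1}^{k} \langle E_i \rangle_{\emptyset}.
\end{align*}
By lemma~\ref{lemma: traceSimple}, $\langle E_i \rangle_{\emptyset} = \Tra{E_i}/n = 1/n$ for every $i$, and $\langle E \rangle_{\emptyset} = \Tra{E}/n$. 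Therefore $\Tra{E}/n = k/n$, so $\Tra{E} = k$, as required.

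No step here poses a real obstacle, since all the essential machinery is already in place; the closest thing to a subtle point is ensuring the sum $\sum_{i=1}^k E_i$ is itself a projection in the reverse direction, but this is handled by remarking that pairwise orthogonality supplies pairwise compatibility and then invoking equation~\ref{eq: sumOfProjections}, whose validity in turn rests on Specker's principle (corollary~\ref{cor: specker}) to guarantee a common cone for the family $\{E_i\}$.
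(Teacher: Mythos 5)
Your proof is correct. The forward direction is essentially identical to the paper's: pick a nondegenerate $A$ with $E=\chi_{\Delta}(A)$, split $\Delta$ into singletons, and sum using the common cone. The converse is where you diverge. The paper invokes corollary~\ref{cor: nondegenerateSpecker} to produce an explicit nondegenerate cone $C \xrightarrow{\chi_{\Delta_{i}}} E_{i}$, argues that each $\Delta_{i}$ must be a distinct singleton (rank $1$ plus pairwise orthogonality), and then reads off $\Tra{E}=\vert\cup_{i}\Delta_{i}\vert=k$ directly from definition~\ref{def: traceProjection}. You instead compute the trace indirectly: finite additivity (lemma~\ref{lemma: finiteAdditivity}) applied to the completely mixed state gives $\langle E\rangle_{\emptyset}=\sum_{i}\langle E_{i}\rangle_{\emptyset}=k/n$, and lemma~\ref{lemma: traceSimple} converts this to $\Tra{E}=k$. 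Both arguments ultimately rest on Specker's principle (a common cone for the pairwise orthogonal family), but yours buys a shorter, more "statistical" derivation that never needs the cone to be nondegenerate, while the paper's buys an explicit structural description of $E$ as $\chi_{\Delta}(C)$ with $C$ nondegenerate, which is the form it reuses elsewhere. Your closing remark about verifying that $\sum_{i}E_{i}$ is a projection is not strictly needed here, since the statement already assumes $E$ is a projection, but it does no harm.
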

\begin{proof}
Let $E$ be a rank-$k$ projection, and let $A$ be a nondegenerate observable such that $E = \chi_{\Delta}(A)$. We know that $\vert \Delta \vert =\Tra{E}= k$, so we can write $\Delta \equiv \{\af_{1},\dots,\af_{k}\}$, where $\af_{i} = \af_{j}$ iff $i=j$. For each $i=1,\dots,k$, let $E_{i}$ be the projection $\chi_{\{\af_{i}\}}(A)$. As we have seen, $E_{i}\perp E_{j}$ whenever $i \neq j$, so $E = \chi_{\Delta}(A) = \left(\sum_{i=1}^{k}\chi_{\{\af_{i}\}}\right)(A) =\sum_{1}^{k}\chi_{\{\af_{i}\}}(A) =\sum_{i=1}^{k}E_{i}$, which proves that any rank-$1$ projection can be written as a sum of $k$ pairwise orthogonal rank-$1$ projection. On the other hand, let $E$ be a projection, and suppose that $E = \sum_{i=1}^{k}E_{i}$ for pairwise orthogonal rank-$1$ projections  $E_{1},\dots,E_{k}$. The projections $E_{1},\dots,E_{k}$ are pairwise orthogonal and, in particular, pairwise compatible, so it follows from corollary~\ref{cor: nondegenerateSpecker} that there is a nondegenerate cone for them, i.e., a cone $C \xrightarrow{\chi_{\Delta_{i}}} E_{i}$, $i=1,\dots,m$, such that $C$ is nondegenerate. As we have discussed, we must have, for each $i$, $\Delta_{i}=\{\af_{i}\}$ for some $\af_{i} \in \sigma(A)$, because $E_{i}$ has rank $1$, and  $\af_{i} \neq \af_{j}$ whenever $i \neq j$, since $E_{i} \perp E_{j}$. Finally, $E = \sum_{i=1}^{k}E_{i} =\sum_{i=1}^{k}\chi_{\{\af_{i}\}}(A) = \left(\sum_{i}^{k} \chi_{\{\af_{i}\}}\right)(A) = \chi_{\Delta}(A)$, where $\Delta \doteq \{\af_{i}: i=1,\dots,k\}$, and since $A$ is nondegenerate, $\Tra{E} = \Tra{\chi_{\Delta}(A)} = \vert \Delta \vert = k$, which completes the proof.
\end{proof}

To conclude our discussion about projections, it is worth emphasizing the following result.

\begin{lemma}[Dimension of $\mathfrak{S}$]\label{lemma: dimension} Let $\mathfrak{S}$ be a $n$-dimensional system. Then any set of pairwise orthogonal projections in $\mathfrak{S}$ contains at most $n$ elements. Furthermore, there is at least one set $\mathcal{E}$ of pairwise orthogonal projections satisfying $\vert \mathcal{E}\vert = n$.    
\end{lemma}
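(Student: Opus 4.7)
I would tackle the two parts separately, both leaning on Specker's principle (Corollary~\ref{cor: specker}) and its nondegenerate refinement (Corollary~\ref{cor: nondegenerateSpecker}), with an implicit convention that the set consists of (distinct) nonzero projections — otherwise one could throw the zero projection in on top, and the result would fail by one. The first part is the substantive claim; the second is a direct construction.

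For the upper bound, suppose $E_{1},\dots,E_{m}$ are pairwise orthogonal nonzero projections. Since orthogonal projections are in particular pairwise compatible, Corollary~\ref{cor: nondegenerateSpecker} furnishes a nondegenerate observable $C$ together with subsets $\Delta_{i}\subset\sigma(C)$ such that $E_{i}=\chi_{\Delta_{i}}(C)$ for every $i$. By Definition~\ref{def: productProjections}, $E_{i}\circ E_{j}=\chi_{\Delta_{i}\cap\Delta_{j}}(C)$, so the orthogonality condition $E_{i}\circ E_{j}=0$ (for $i\neq j$) translates into $\Delta_{i}\cap\Delta_{j}=\oldemptyset$, while $E_{i}\neq 0$ forces $\Delta_{i}\neq\oldemptyset$. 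Hence $\{\Delta_{1},\dots,\Delta_{m}\}$ is a family of pairwise disjoint nonempty subsets of the $n$-element set $\sigma(C)$, so $m\leq n$.

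For the lower bound, I would invoke Definition~\ref{def: finiteSystem}: since $\mathrm{dim}(\mathfrak{S})=n$, there exists a nondegenerate observable $A$ with $\vert\sigma(A)\vert=n$. Write $\sigma(A)=\{\af_{1},\dots,\af_{n}\}$ with the $\af_{i}$ pairwise distinct, and set $E_{i}\doteq\chi_{\{\af_{i}\}}(A)$. Each $E_{i}$ is the projection associated with the eigenvalue $\af_{i}$ of $A$ (Definition~\ref{def: projectionOfEigenvalue}), and by the discussion following Lemma~\ref{lemma: eigenvalues} each is nonzero; moreover the singletons $\{\af_{i}\}$ are pairwise disjoint, so $E_{i}\circ E_{j}=\chi_{\{\af_{i}\}\cap\{\af_{j}\}}(A)=0$ for $i\neq j$, and the $E_{i}$ are mutually distinct (if $E_{i}=E_{j}$ then $E_{i}\circ E_{j}=E_{i}\neq 0$). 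Thus $\mathcal{E}\doteq\{E_{1},\dots,E_{n}\}$ is a set of $n$ pairwise orthogonal projections, which together with the upper bound forces $\vert\mathcal{E}\vert=n$.

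I don't anticipate a real obstacle: both halves reduce to elementary bookkeeping once Specker's principle is in hand. The only delicate point is the convention on whether $0$ is allowed — either the statement is implicitly restricted to nonzero projections, or one should phrase the upper bound as $\vert\mathcal{E}\setminus\{0\}\vert\leq n$. An alternative, slightly slicker argument for the upper bound would bypass the explicit choice of $C$ by noting that $\sum_{i=1}^{m}E_{i}$ is itself a projection (by the observation preceding Lemma~\ref{lemma: finiteAdditivity}), and then applying finite additivity to the completely mixed state to get $\sum_{i=1}^{m}\mathrm{tr}(E_{i})=\mathrm{tr}\bigl(\sum_{i=1}^{m}E_{i}\bigr)\leq n$; since each nonzero projection has trace at least $1$, this again yields $m\leq n$.
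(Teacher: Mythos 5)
Your proof is correct and follows essentially the same route as the paper's: the upper bound via a nondegenerate cone from Corollary~\ref{cor: nondegenerateSpecker} turning orthogonality into disjointness of subsets of an $n$-element spectrum, and the lower bound via the eigenvalue projections $\chi_{\{\af\}}(C)$ of a nondegenerate observable. Your caveat about excluding the zero projection is a fair observation that applies equally to the paper's own argument, and the trace-based alternative is a valid shortcut, but neither changes the substance.
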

\begin{proof}
    Let $\mathfrak{S}$ be a $n$-dimensional system, and let $C$ be any nondegenerate observable in $\mathfrak{S}$. The projections $E_{i} \equiv \chi_{\{i\}}(C)$, $i \in \sigma(C)$, are pairwise orthogonal by definition, and the fact that $C$ is nondegenerate implies that  $\vert \sigma(C)\vert = n$, therefore $\{E_{i}: i \in \sigma(C)\}$ is a set of $n$ pairwise orthogonal projections. On the other hand, let $E_{1},\dots,E_{m}$ be pairwise orthogonal projections in some $n$-dimensional system, and let $\chi_{\Delta_{i}}: C \rightarrow E_{i}$, $i=1,\dots,m$, be a nondegenerate cone for them (see corollary \ref{cor: nondegenerateSpecker}). We know that, for each pair $i,j$ of distinct coefficients, $\Delta_{i} \cap \Delta_{j} = \oldemptyset$, therefore $m \leq \vert \sigma(C)\vert =n$.
\end{proof}

Now let's turn our attention to observables in general. Let $A$ be an observable, and, for each $\af \in \sigma(A)$, write $E_{\af} \equiv \chi_{\{\af\}}(A)$, i.e., $E_{\af}$ is the projection associated with the eigenvalue $\af$ of $A$. Let $\id\equiv \id_{A}$ be the identity function on $\sigma(A)$. Then
\begin{align*}
    A &= \text{id}(A) = \left(\sum_{\af \in \sigma(A)} \af\chi_{\{\af\}}\right)(A) = \sum_{\af \in \sigma(A)}\af \chi_{\{\af\}}(A) =\sum_{\af \in \sigma(A)}\af E_{\af}.
\end{align*}
Furthermore, 
\begin{align*}
    \mathds{1} = \chi_{\sigma(A)}(A) = \left(\sum_{\af \in \sigma(A)} \chi_{\{\af\}}\right)(A) = \sum_{\af \in \sigma(A)}\chi_{\af}(A)=\sum_{\af \in \sigma(A)}E_{\af}.
\end{align*}
The set of projections $\{E_{\af}: \af \in \sigma(A)\}$ is what we call a partition of the unit:

\begin{definition}[Partition of the unit]\label{def: partitionOfUnit} A set $\{E_{1},\dots,E_{m}\}$ of nonzero  pairwise orthogonal projections is said to be a partition of the unit if it sums to one, i.e., if
    \begin{align*}
        \sum_{i=1}^{m}E_{i} = \mathds{1}.
\end{align*}
\end{definition}

The partition of the unit induced by the eigenvalues of $A$ determines its spectral decomposition:
\begin{definition}[Spectral decomposition]\label{def: spectralDecomposition} Let $A$ be an observable, and let $\{E_{\af}: \af \in \sigma(A)\}$ be the partition of the unit defined by $A$, i.e., $E_{\af} \equiv \chi_{\{\af\}}(A)$ for any $\af \in \sigma(A)$. The spectral decomposition of $A$ corresponds to the following equation.
    \begin{align}
        A = \sum_{\af \in \sigma(A)} \af E_{\af}.
    \end{align}
\end{definition}
Note that the spectral decomposition of a projection $0 \neq E \neq \mathds{1}$ is
    \begin{align}
        E = 1 \cdot E + 0 \cdot E^{\perp}.
    \end{align}
For $0$ and $\mathds{1}$ (definition~\ref{def: zeroAndUnit}) we have $0 = 0 \cdot \mathds{1}$ and $\mathds{1} = 1 \cdot \mathds{1}$.

The spectral decomposition of $A$ is the unique way of writing $A$ as a linear combination of pairwise orthogonal projections with distinct coefficients:

\begin{theorem}[Spectral theorem]\label{thm: spectralTheorem} Let $A$ be an observable, and suppose that, for some partition of the unit $\{F_{1},\dots,F_{m}\}$ and some set of pairwise distinct real numbers $\{\af_{1},\dots,\af_{m}\}$ we have
    \begin{align}
        A =\label{eq: decompositionTheorem} \sum_{i=1}^{m}\af_{i}F_{i}.
    \end{align}
Then equation~\ref{eq: decompositionTheorem} is the spectral decomposition of $A$, i.e., $\{\af_{1},\dots,\af_{m}\}=\sigma(A)$ and $\{F_{i},\dots,F_{m}\}$ is the partition of the unit defined by $A$.
\end{theorem}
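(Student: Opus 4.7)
The plan is to use Specker's principle to collapse the whole equation onto the spectrum of a single observable, and then read off the conclusion from the functional calculus.

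First I would observe that the projections $F_{1},\dots,F_{m}$ are pairwise compatible (orthogonality is stronger than compatibility by definition~\ref{def: orthononality}), so by corollary~\ref{cor: nondegenerateSpecker} there is a nondegenerate observable $C$ and a cone $\chi_{\Delta_{i}}: C \ri F_{i}$, $i=1,\dots,m$, in the category of observables, so that $F_{i} = \chi_{\Delta_{i}}(C)$ for each $i$. Since $F_{i} \neq 0$ we have $\Delta_{i} \neq \emptyset$; the relation $F_{i}\circ F_{j} = 0$ for $i\neq j$ together with definition~\ref{def: productProjections} forces $\Delta_{i}\cap\Delta_{j}=\emptyset$; and the identity $\sum_{i=1}^{m}F_{i}=\mathds{1} = \chi_{\sigma(C)}(C)$ together with equation~\ref{eq: sumOfProjections} forces $\cup_{i=1}^{m}\Delta_{i}=\sigma(C)$. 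Hence $\{\Delta_{1},\dots,\Delta_{m}\}$ is a genuine partition of $\sigma(C)$.

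Next I would define the real function $f \doteq \sum_{i=1}^{m}\af_{i}\chi_{\Delta_{i}}$ on $\sigma(C)$. Using proposition~\ref{prop: operationsAndFunctions} and the linearity of the functional calculus built up in section~\ref{sec: algebraicOperations}, the hypothesis $A=\sum_{i=1}^{m}\af_{i}F_{i}$ becomes
\begin{align*}
A = \sum_{i=1}^{m}\af_{i}\chi_{\Delta_{i}}(C) = \left(\sum_{i=1}^{m}\af_{i}\chi_{\Delta_{i}}\right)(C) = f(C),
\end{align*}
so $A$ is a function of $C$ via $f$. Since $\{\Delta_{i}\}$ partitions $\sigma(C)$ into nonempty pieces and the $\af_{i}$ are pairwise distinct, the spectral mapping identity $\sigma(A)=\sigma(f(C))=f(\sigma(C))$ (noted in section~\ref{sec: categoryOfObservables}) yields $\sigma(A)=\{\af_{1},\dots,\af_{m}\}$, which is the first claim.

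Finally, to identify the partition of the unit defined by $A$, I would fix $i\in\{1,\dots,m\}$ and observe that $f^{-1}(\{\af_{i}\})=\Delta_{i}$ (this uses once more that the $\af_{j}$ are distinct and the $\Delta_{j}$ are disjoint). Applying lemma~\ref{lemma: KSdefinitionProjections} to the arrow $C \xrightarrow{f} A$,
\begin{align*}
\chi_{\{\af_{i}\}}(A) = \chi_{\{\af_{i}\}}(f(C)) = \chi_{f^{-1}(\{\af_{i}\})}(C) = \chi_{\Delta_{i}}(C) = F_{i},
\end{align*}
which shows that $\{F_{1},\dots,F_{m}\}$ is precisely the partition of the unit associated with $A$ by definition~\ref{def: spectralDecomposition}. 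I expect the main obstacle to be the bookkeeping in the second paragraph (checking that the algebraic sum in the ambient partial algebra really matches the pointwise sum of characteristic functions evaluated at $C$), but once Specker's principle has produced the common refiner $C$ this reduces to ordinary manipulations of real functions on the finite set $\sigma(C)$, which proposition~\ref{prop: algebraicProperties} legitimates.
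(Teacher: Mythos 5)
Your proof is correct and follows essentially the same route as the paper's: pick a cone $\chi_{\Delta_i}:C\to F_i$ via Specker's principle, observe that $\{\Delta_i\}$ partitions $\sigma(C)$, write $A=f(C)$ with $f=\sum_i\af_i\chi_{\Delta_i}$, and read off the spectrum and the projections from the spectral mapping property and lemma~\ref{lemma: KSdefinitionProjections}. The only cosmetic differences are that you invoke a nondegenerate cone (corollary~\ref{cor: nondegenerateSpecker}) where an arbitrary one suffices, and that you spell out the verification that $\{\Delta_1,\dots,\Delta_m\}$ is a partition of $\sigma(C)$, which the paper states without proof.
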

\begin{proof}
    Let $\{F_{1},\dots,F_{m}\}$ be a partition of the unit, and let $\chi_{\Delta_{i}}: C \rightarrow F_{i}$, $i=1,\dots,m$, be a cone for $F_{1},\dots,F_{m}$. Let $\af_{1},\dots,\af_{m}$ be pairwise distinct real numbers. We have
    \begin{align}
        \sum_{i=1}^{m}\af_{i}F_{i} &= \sum_{i=1}^{m}\af_{i}\chi_{\Delta_{i}}(C) = \left(\sum_{i=1}^{m}\af_{i}\chi_{\Delta_{i}}\right)(C) = f(C),
    \end{align}
    where $f \equiv \sum_{i=1}^{m}\af_{i}\chi_{\Delta_{i}}$. The set $\{F_{1},\dots,F_{m}\}$ is a partition of the unit, so $\{\Delta_{1},\dots,\Delta_{m}\}$ is a partition of $\sigma(C)$, which means that $\Delta_{i} \cap \Delta_{j} = \oldemptyset$ if $i \neq j$ and $\cup_{i=1}^{m}\Delta_{i} = \sigma(C)$, thus $\sigma(f(C)) = f(\sigma(C)) = \{\af_{1},\dots,\af_{m}\}$. Also, we have $\Delta_{i} = f^{-1}(\af_{i})$ for each $i$. Now assume that $A = \sum_{i=1}^{m} \af_{i}F_{i}$, which is equivalent to saying that $A=f(C)$. Then $\sigma(A) = \sigma(f(C)) = \{\af_{1},\dots,\af_{m}\}$. Furthermore,  for any $\af_{i} \in \sigma(A)$ we have, according to lemma~\ref{lemma: KSdefinitionProjections},
    \begin{align}
        \chi_{\{\af_{i}\}}(A) = \chi_{\{\af_{i}\}}(f(C)) = \chi_{f^{-1}(\af_{i})}(C) = \chi_{\Delta_{i}}(C) = F_{i},
    \end{align}
    which completes the proof.
\end{proof}

The following result is important.
\begin{proposition}[Functional calculus]\label{prop: functionalCalculus} Let $A$ be an observable, and let $A = \sum_{\af \in \sigma(A)}\af E_{\af}$ be its spectral decomposition. Then, for any function $f(A)$ of $A$,
    \begin{align}
        f(A) = \sum_{\af \in \sigma(A)} f(\af) E_{\af}.
    \end{align}
    \end{proposition}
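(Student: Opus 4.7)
The plan is to compare the spectral decomposition of the observable $f(A)$, defined via the pushforward (definition~\ref{def: functionalRelation}), with the right-hand side $\sum_{\af \in \sigma(A)} f(\af) E_{\af}$, which is a priori only an algebraic expression in $\mathcal{O}$. The subtlety is that $f$ need not be injective, so the coefficients $f(\af)$ may repeat and the sum is not immediately a spectral decomposition in the sense of theorem~\ref{thm: spectralTheorem}.

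First I would observe that the projections $\{E_{\af}\}_{\af \in \sigma(A)}$ are pairwise orthogonal (and, in particular, pairwise compatible), so every observable built from them by scalar multiplication and addition is well defined by definition~\ref{def: algebraicOperations} and proposition~\ref{prop: algebraicProperties}. For each $\beta \in f(\sigma(A)) = \sigma(f(A))$, define
\begin{align*}
    F_{\beta} \doteq \sum_{\af \in f^{-1}(\beta)} E_{\af}.
\end{align*}
Since the $E_{\af}$ are pairwise orthogonal, $F_{\beta}$ is a projection (the discussion preceding lemma~\ref{lemma: projectionJointProbability}), the projections $F_{\beta}$ with $\beta$ ranging in $f(\sigma(A))$ are pairwise orthogonal, and they sum to $\sum_{\af \in \sigma(A)} E_{\af} = \mathds{1}$; hence they form a partition of the unit (definition~\ref{def: partitionOfUnit}).

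Next, I would regroup the right-hand side according to the value of $f$:
\begin{align*}
    \sum_{\af \in \sigma(A)} f(\af) E_{\af} = \sum_{\beta \in f(\sigma(A))}\; \beta \sum_{\af \in f^{-1}(\beta)} E_{\af} = \sum_{\beta \in f(\sigma(A))} \beta\, F_{\beta}.
\end{align*}
The coefficients $\beta$ are now pairwise distinct, so by the spectral theorem (theorem~\ref{thm: spectralTheorem}) this last expression is the spectral decomposition of the observable it defines: call it $B$. Then $\sigma(B) = f(\sigma(A))$ and $\chi_{\{\beta\}}(B) = F_{\beta}$ for every $\beta \in \sigma(B)$.

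It remains to verify that $B$ coincides with $f(A)$. By lemma~\ref{lemma: KSdefinitionProjections},
\begin{align*}
    \chi_{\{\beta\}}(f(A)) = \chi_{f^{-1}(\{\beta\})}(A) = \sum_{\af \in f^{-1}(\beta)} \chi_{\{\af\}}(A) = \sum_{\af \in f^{-1}(\beta)} E_{\af} = F_{\beta},
\end{align*}
where the second equality uses that the $\chi_{\{\af\}}(A) = E_{\af}$ are pairwise orthogonal and that $\chi_{f^{-1}(\{\beta\})} = \sum_{\af \in f^{-1}(\beta)} \chi_{\{\af\}}$ as functions on $\sigma(A)$. Hence $f(A)$ has spectrum $f(\sigma(A)) = \sigma(B)$ and the same partition of the unit $\{F_{\beta}\}$ as $B$, so its spectral decomposition is $f(A) = \sum_{\beta \in f(\sigma(A))} \beta F_{\beta} = \sum_{\af \in \sigma(A)} f(\af) E_{\af}$, proving the claim. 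The only delicate step is the regrouping together with the application of lemma~\ref{lemma: KSdefinitionProjections}; everything else is bookkeeping with the spectral theorem and the algebra of pairwise orthogonal projections.
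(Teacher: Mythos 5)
Your proof is correct and rests on the same two ingredients as the paper's: lemma~\ref{lemma: KSdefinitionProjections} giving $\chi_{\{\beta\}}(f(A)) = \chi_{f^{-1}(\beta)}(A)$, and the decomposition $\chi_{f^{-1}(\beta)}(A) = \sum_{\af \in f^{-1}(\beta)} E_{\af}$, followed by regrouping the finite sum. The only difference is that the paper computes $f(A) = \sum_{\beta}\beta\chi_{\{\beta\}}(f(A)) = \dots = \sum_{\af}f(\af)E_{\af}$ as a direct chain of equalities, so your detour through the uniqueness statement of theorem~\ref{thm: spectralTheorem} is sound but unnecessary.
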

\begin{proof}
    We know that $\sigma(f(A)) = f(\sigma(A))$. Now, according to definition~\ref{def: spectralDecomposition} and lemma~\ref{lemma: KSdefinitionProjections},
    \begin{align}
        f(A) &= \sum_{\beta \in \sigma(f(A))} \beta \chi_{\{\beta\}}(f(A)) = \sum_{\beta \in \sigma(f(A))} \beta \chi_{f^{-1}(\beta)}(A) =  \sum_{\beta \in \sigma(f(A))} \sum_{\af \in f^{-1}(\beta)}\beta \chi_{\{\af\}}(A)
        \\
        &= \sum_{\beta \in \sigma(f(A))} \sum_{\af \in f^{-1}(\beta)}\beta E_{\af} = \sum_{\af \in \sigma(A)} f(\af) E_{\af}.
    \end{align}
\end{proof}

Let $A,B$ be compatible observables, and let $\conj{A,B}$ be their conjunction (definition \ref{def: binaryConjunction}). We have seen in lemma~\ref{lemma: conjunction} that, for any $\conj{\af,\beta} \in \sigma(\conj{A,B})$, $E_{\conj{\af,\beta}} \equiv \chi_{\{\conj{\af,\beta}\}}(\conj{A,B}) = E_{\af} \circ F_{\beta}$. Also, the same lemma says that $\sigma(\conj{A,B}) = \{\conj{\af,\beta}: (\af,\beta) \in \sigma(A) \times \sigma(B),E_{\af} \circ F_{\beta} \neq 0\}$, thus
\begin{align}
    \conj{A,B} &= \sum_{\af\in \sigma(A)}\sum_{\beta \in \sigma(B)} \conj{\af,\beta} E_{\af} \circ F_{\beta}. 
\end{align}
Hence, the following equalities easily follow from proposition~\ref{prop: functionalCalculus}:
\begin{align}
    A+B &= \sum_{\af\in \sigma(A)}\sum_{\beta \in \sigma(B)} (\af + \beta)E_{\af} \circ F_{\beta},
    \\
    A\circ B &= \sum_{\af\in \sigma(A)}\sum_{\beta \in \sigma(B)} (\af \cdot \beta)E_{\af} \circ F_{\beta}.
\end{align}
\subsection{States as functionals}\label{sec: StatesAsFunctionals}

According to the spectral theorem (theorem \ref{thm: spectralTheorem}), each observable $A$ can be written, in a unique way, as a linear combination of pairwise orthogonal projections with distinct coefficients. This result enables us to extend the state $\Exp{ \ \cdot \ }$ (more precisely, the expectation associated with $\rho$) from the set of projections $\mathcal{P}$ to the set of observables $\mathcal{O}$:

\begin{definition}[Expectation]\label{def: Expectation} Let $A$ be an observable, and let $A = \sum_{\af \in \sigma(A)}\af E_{\af}$ its spectral decomposition (definition~\ref{def: spectralDecomposition}). Given a state $\rho$, the expectation (or expected value) of $A$ with respect to $\rho$ is defined by
    \begin{align}
        \Exp{A} \doteq \sum_{\af \in \sigma(A)} \af \Exp{E_{\af}}.
    \end{align}
\end{definition}

It is easy to see that, if $A$ is a projection, then definition~\ref{def: Expectation} coincides with definition~\ref{def: expectationProjection}. In fact, we have seen that the spectral decomposition of a projection $0\neq E \neq \mathds{1}$ is $E = 1 \cdot E + 0 \cdot E^{\perp}$, whereas the spectral decompositions of $0$ and $\mathds{1}$ are, respectively, $0 = 0 \cdot \mathds{1}$ and $\mathds{1} = 1 \cdot \mathds{1}$. Denote the expectation given in definition~\ref{def: Expectation} by $\widetilde{\Exp{ \ \cdot \ }}$, for the time being. For any state $\rho$, $\widetilde{\Exp{0}} = 0 \Exp{\mathds{1}} = 0 \cdot 1 = \Exp{0} $, $\widetilde{\Exp{\mathds{1}}} = 1\cdot \Exp{\mathds{1}} = \Exp{\mathds{1}}$, whereas, for any projection $E$ other than $0,\mathds{1}$, $\widetilde{\Exp{E}} = 1 \cdot \Exp{E} + 0 \cdot \Exp{E^{\perp}} = \Exp{E}$, which completes the proof.

It easily follows from postulate~\ref{ax: subjectiveUpdate} and definition \ref{def: expectationProjection}  that, if $\Exp{ \ \cdot \ }$ is restricted to the set of projections $\mathcal{P}$, as in definition~\ref{def: expectationProjection}, then the mapping $\rho \mapsto \Exp{ \ \cdot \ }$ is convex. That is, if $\rho_{1},\dots,\rho_{m}$ are states and $a_{1},\dots,a_{m}$ are non-negative real numbers satisfying $\sum_{i=1}^{m}\af_{i} =1$, then, for any projection $E$,
\begin{align}
    \Exp{E} =\label{eq: preExpectationConvex} \sum_{i=1}^{m}a_{i}\langle E \rangle_{\rho_{i}},
\end{align}
where $\rho \doteq \sum_{i=1}^{m} a_{i} \rho_{i}$. The same result is valid for definition \ref{def: Expectation}:
\begin{proposition}\label{prop: expectationConvex} The mapping $\rho \mapsto \Exp{ \ \cdot \ }$  induced by definition~\ref{def: Expectation} is convex. That is, if $\rho=\sum_{i=1}^{m} a_{i}\rho_{i}$ is a convex decomposition of a state $\rho$ in the convex set of states $\mathcal{S}$, then, for any observable $B$,
\begin{align}
    \Exp{B} = \sum_{i=1}^{m}a_{i}\langle B \rangle_{\rho_{i}},
\end{align}
\end{proposition}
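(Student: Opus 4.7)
The plan is to reduce the claim about observables to the already-known convexity of $\rho\mapsto\langle\,\cdot\,\rangle_{\rho}$ on the set of projections $\mathcal{P}$, and then exploit the linearity built into Definition~\ref{def: Expectation} via the spectral decomposition (Theorem~\ref{thm: spectralTheorem}).

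First, I would record the convexity of the expectation on projections. Postulate~\ref{ax: subjectiveUpdate} asserts that the assignment $\rho\mapsto P_{\rho}$ is convex, so for any convex combination $\rho=\sum_{i=1}^{m}a_{i}\rho_{i}$ and any event $A^{\Delta}\in\mathbb{E}$ we have $P_{\rho}(A^{\Delta})=\sum_{i}a_{i}P_{\rho_{i}}(A^{\Delta})$. Given a projection $E$, choose any observable $A$ and subset $\Delta\subseteq\sigma(A)$ with $E=\chi_{\Delta}(A)$; by Definition~\ref{def: expectationProjection} this yields
\begin{align*}
\langle E\rangle_{\rho}=P_{\rho}(A^{\Delta})=\sum_{i=1}^{m}a_{i}P_{\rho_{i}}(A^{\Delta})=\sum_{i=1}^{m}a_{i}\langle E\rangle_{\rho_{i}},
\end{align*}
which is precisely equation~\ref{eq: preExpectationConvex}.

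Next, I would lift this to an arbitrary observable $B$ by using its spectral decomposition. By Theorem~\ref{thm: spectralTheorem} we can write $B=\sum_{\beta\in\sigma(B)}\beta F_{\beta}$, where $\{F_{\beta}:\beta\in\sigma(B)\}$ is the partition of the unit defined by $B$. Definition~\ref{def: Expectation} gives $\langle B\rangle_{\rho}=\sum_{\beta}\beta\langle F_{\beta}\rangle_{\rho}$. Substituting the projection-level convexity established above and interchanging the two finite sums, I would compute
\begin{align*}
\langle B\rangle_{\rho}=\sum_{\beta\in\sigma(B)}\beta\sum_{i=1}^{m}a_{i}\langle F_{\beta}\rangle_{\rho_{i}}=\sum_{i=1}^{m}a_{i}\sum_{\beta\in\sigma(B)}\beta\langle F_{\beta}\rangle_{\rho_{i}}=\sum_{i=1}^{m}a_{i}\langle B\rangle_{\rho_{i}},
\end{align*}
which is the desired identity.

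There is no genuine obstacle here: the statement is essentially a bookkeeping consequence of Definition~\ref{def: Expectation} (which expresses $\langle B\rangle_{\rho}$ as a \emph{linear} functional of $\rho\mapsto\langle F_{\beta}\rangle_{\rho}$ with coefficients depending only on $B$) combined with the fact that convexity on projections is already built into Postulate~\ref{ax: subjectiveUpdate}. The only point worth being slightly careful about is that the partition $\{F_{\beta}\}$ and the coefficients $\beta$ in the spectral decomposition depend only on $B$, not on $\rho$; this is guaranteed by the uniqueness clause of Theorem~\ref{thm: spectralTheorem}, and it is what lets one use the same expansion of $B$ on both sides of the equation and pass the convex combination through the finite spectral sum.
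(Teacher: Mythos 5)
Your proof is correct and follows essentially the same route as the paper: expand $B$ via its spectral decomposition, apply the projection-level convexity (equation~\ref{eq: preExpectationConvex}), and interchange the two finite sums. The only difference is that you explicitly derive equation~\ref{eq: preExpectationConvex} from postulate~\ref{ax: subjectiveUpdate}, which the paper states without proof just before the proposition; this is a harmless (and arguably welcome) addition.
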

\begin{proof}
    Let $B$ be any observable, and let $B=\sum_{\beta \in \sigma(B)}\beta F_{\beta}$ be its spectral decomposition. Let $\rho=\sum_{i=1}^{m} a_{i}\rho_{i}$ be a convex decomposition of a state $\rho$. Then, according to equation~\ref{eq: preExpectationConvex} and definition~\ref{def: Expectation},
    \begin{align*}
        \Exp{B} &= \sum_{\beta \in \sigma(B)}\beta \Exp{F_{\beta}} = \sum_{\beta \in \sigma(B)}\beta \sum_{i=1}^{m}a_{i}\langle F_{\beta}\rangle_{\rho_{i}} = \sum_{i=1}^{m} a_{i} \sum_{\beta \in \sigma(B)} \beta \langle F_{\beta} \rangle_{\rho_{i}} = \sum_{i=1}^{m}a_{i} \langle B\rangle_{\rho_{i}}.
    \end{align*}
\end{proof}

It is important to note that we can evaluate the expected value of an observable $A$ using any decomposition of $A$ in terms of pairwise orthogonal projections:
\begin{proposition}[Expectation and decompositions]\label{prop: expectedValue}
    Let $A$ be an observable and $F_{1},\dots F_{m}$ be pairwise orthogonal projections such that $A = \sum_{i=1}^{m} \beta_{i}F_{i}$ for some real numbers $\beta_{i},\dots,\beta_{m}$. Then, for any state $\rho$,
    \begin{align}
        \Exp{A} = \sum_{i=1}^{m} \beta_{j} \Exp{F_{i}}.
    \end{align}
\end{proposition}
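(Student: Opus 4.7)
The key obstacle is that definition~\ref{def: Expectation} evaluates $\Exp{A}$ using the \emph{spectral} decomposition of $A$, whose defining properties (pairwise distinct coefficients summing to a partition of the unit) the given decomposition $A=\sum_{i=1}^{m}\beta_{i}F_{i}$ need not share. My plan is therefore to massage the given decomposition into the spectral one, and use finite additivity (lemma~\ref{lemma: finiteAdditivity}) to push the equality through.

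First, I would coarse-grain by coefficient value: let $\{\af_{1},\dots,\af_{k}\}$ be the distinct values appearing among $\{\beta_{1},\dots,\beta_{m}\}$, and for each $j$ set $G_{j} \doteq \sum_{i:\beta_{i}=\af_{j}}F_{i}$. Since the $F_{i}$ are pairwise orthogonal, each $G_{j}$ is a projection (equation~\ref{eq: sumOfProjections}), and for $j \neq j'$ the projections $G_{j}$ and $G_{j'}$ are sums of pairwise orthogonal terms, hence themselves orthogonal (this follows from the algebraic identities of proposition~\ref{prop: algebraicProperties} applied to pairwise compatible projections, all of which are jointly compatible by Specker's principle, corollary~\ref{cor: specker}). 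Consequently $A=\sum_{j=1}^{k}\af_{j}G_{j}$ is a decomposition with pairwise orthogonal projections and pairwise distinct coefficients.

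Second, I would promote $\{G_{1},\dots,G_{k}\}$ to a partition of the unit. Set $G_{0}\doteq \mathds{1}-\sum_{j=1}^{k}G_{j}$; a direct computation gives $G_{0}\circ G_{0}=G_{0}$ and $G_{0}\circ G_{j}=0$ for each $j$, so $G_{0}$ is a projection orthogonal to all the others. If $G_{0}=0$, the family $\{G_{1},\dots,G_{k}\}$ (after discarding any $G_{j}=0$, which contribute trivially to both sides) is already a partition of the unit. If $G_{0}\neq 0$, distinguish two subcases: (a) if $0 \notin \{\af_{1},\dots,\af_{k}\}$, adjoin $\af_{0}\doteq 0$ and $G_{0}$, obtaining $A=\sum_{j=0}^{k}\af_{j}G_{j}$ with all coefficients distinct and $\{G_{0},\dots,G_{k}\}$ a partition of the unit; (b) if some $\af_{j_{0}}=0$, absorb $G_{0}$ into $G_{j_{0}}$ by replacing $G_{j_{0}}$ with $G_{j_{0}}+G_{0}$, which remains a projection and preserves the identity $A=\sum_{j=1}^{k}\af_{j}G_{j}$ since $\af_{j_{0}}=0$. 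In either subcase I arrive at a decomposition of $A$ that satisfies the hypotheses of the spectral theorem (theorem~\ref{thm: spectralTheorem}), hence coincides with the spectral decomposition of $A$.

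Finally, definition~\ref{def: Expectation} applied to the spectral decomposition yields $\Exp{A}=\sum_{j}\af_{j}\Exp{G_{j}}$ (the extra $\af_{0}=0$ term, when present, vanishes). Finite additivity (lemma~\ref{lemma: finiteAdditivity}) gives $\Exp{G_{j}}=\sum_{i:\beta_{i}=\af_{j}}\Exp{F_{i}}$, so
\begin{align*}
\Exp{A}=\sum_{j}\af_{j}\sum_{i:\beta_{i}=\af_{j}}\Exp{F_{i}}=\sum_{j}\sum_{i:\beta_{i}=\af_{j}}\beta_{i}\Exp{F_{i}}=\sum_{i=1}^{m}\beta_{i}\Exp{F_{i}},
\end{align*}
completing the argument. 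The only delicate point, and where I would be most careful, is the bookkeeping around the coefficient $0$ and the ``defect'' projection $G_{0}$, because the spectral theorem requires both distinct coefficients and a genuine partition of the unit, and the definition of partition of the unit (definition~\ref{def: partitionOfUnit}) forbids the zero projection.
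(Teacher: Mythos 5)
Your proof is correct, and its skeleton matches the paper's: pad the family with a zero-coefficient complement of $\mathds{1}$, identify the spectral projections of $A$ as the grouped sums $\sum_{i:\beta_{i}=\af}F_{i}$, and finish with finite additivity (lemma~\ref{lemma: finiteAdditivity}). Where you diverge is in \emph{how} you identify those spectral projections: you group by coefficient first and then invoke the uniqueness clause of the spectral theorem (theorem~\ref{thm: spectralTheorem}), and it is precisely this appeal to uniqueness that forces your case analysis around the coefficient $0$ and the defect projection $G_{0}$. (One small bookkeeping remark: the zero projections you discard in the $G_{0}=0$ subcase must also be discarded in the other two subcases before theorem~\ref{thm: spectralTheorem} applies, since definition~\ref{def: partitionOfUnit} excludes the zero projection; the fix is identical and harmless.) The paper instead takes a cone $\chi_{\Delta_{i}}:C\rightarrow F_{i}$, writes $A=g(C)$ with $g=\sum_{i}\beta_{i}\chi_{\Delta_{i}}$, and reads off $E_{\af}=\chi_{g^{-1}(\af)}(C)=\sum_{i:\beta_{i}=\af}F_{i}$ directly from lemma~\ref{lemma: KSdefinitionProjections}; since that computation never requires the coefficients to be distinct, the entire delicacy you flag around $0$ and $G_{0}$ simply does not arise. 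Both routes are valid; the paper's is shorter because the functional calculus does the grouping for you, while yours has the mild advantage of exhibiting the spectral decomposition explicitly as a consequence of its uniqueness.
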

\begin{proof}
     Suppose, for the sake of simplicity, that $\{F_{1},\dots F_{m}\}$ is a partition of the unit (otherwise define $F_{0} \doteq \mathds{1} - \sum_{i=1}^{m}F_{i}$, $\beta_{0} \doteq 0$, and consider the partition of the unit $\{F_{0},\dots,F_{m}\}$, which satisfies $A = \sum_{i=0}^{m} \beta_{i}F_{i}$). Let $\chi_{\Delta_{i}}:C \rightarrow F_{i}$, $i=1,\dots,m$, be a cone for $F_{1},\dots,F_{m}$, and define $g \doteq \sum_{i=1}^{m} \beta_{i} \chi_{\Delta_{i}}$. We have $F_{i} = \chi_{\Delta_{i}}(C)$ for every $i$, and
     \begin{align*}
         A &= \sum_{i=1}^{m} \beta_{i}\chi_{\Delta_{i}}(C) =\left( \sum_{i=1}^{m}\beta_{i}\chi_{\Delta_{i}}\right)(C) = g(C).
     \end{align*}
     It implies that $\sigma(A)=g(\sigma(C))$, and since  $\{F_{1},\dots F_{m}\}$ is a partition of the unit, it follows that $g(\sigma(C)) = \{\beta_{1},\dots,\beta_{m}\}$, thus $\sigma(A)=\{\beta_{1},\dots,\beta_{m}\}$. According to lemma~\ref{lemma: KSdefinitionProjections}, for any $\af \in \sigma(A)$ we have
     \begin{align*}
         E_{\af} &\equiv \chi_{\{\af\}}(g(C)) = \chi_{g^{-1}(\af)}(C) = \sum_{\substack{i=1 \\ \beta_{i}=\af}}^{m} \chi_{\beta_{i}}(C) = \sum_{\substack{i=1 \\ \beta_{i}=\af}}^{m} F_{i},
     \end{align*}
     thus lemma~\ref{lemma: finiteAdditivity} ensures that, for any state $\rho$, $\Exp{E_{\af}} = \sum_{\substack{i=1 \\ \beta_{i}=\af}}^{m} \Exp{F_{i}}$. Finally,
     \begin{align*}
         \Exp{A} = \sum_{\af \in \sigma(A)} \af \Exp{E_{\af}} = \sum_{\af \in \sigma(A)}\sum_{\substack{i=1 \\ \beta_{i}=\af}}^{m} \af\Exp{F_{i}} = \sum_{i=1}^{m} \beta_{i} \Exp{F_{i}}.
     \end{align*}
\end{proof}

Let $X$ be any nonempty finite set, and let $P$ be a probability measure on its power set $\mathcal{P}(X)$. Let $f:X \rightarrow Y$ be any random variable (i.e., any function) on $(X,\mathcal{P}(X),P)$. In probability theory \cite{klenke2020probability}, the expected value of $f$ w.r.t. to $P$ is defined as
    \begin{align}
        \mathbb{E}_{P}(f) \doteq \sum_{x \in X} f(x) P(\{x\}) = \sum_{y \in f(X)} y P(f^{-1}(\{y\})).
    \end{align}
The expectation of an observable $A=f(C)$ coincides with expected value of the random variable $f$ in the probability space associated with $C$:
\begin{proposition}\label{prop: classicalExp} Let $C$ be any observable, and let $A=f(C)$ be any function of $C$. Then, for any state $\rho$,
\begin{align}
    \Exp{A} = \mathbb{E}_{\rho}(f),    
\end{align}
where $\mathbb{E}_{\rho}(f)$ is the expected value of the random variable $f:\sigma(C) \rightarrow \sigma(A)$ w.r.t. the probability measure $P_{\rho}^{C} \equiv P_{\rho}( \ \cdot \ ; C)$.
\end{proposition}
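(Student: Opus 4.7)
The plan is to combine the functional calculus (proposition~\ref{prop: functionalCalculus}) with the flexibility afforded by proposition~\ref{prop: expectedValue}, which tells us that the expectation $\Exp{A}$ can be computed via \emph{any} decomposition of $A$ into pairwise orthogonal projections, not just its spectral decomposition.

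First, I would write out the spectral decomposition of $C$, namely $C=\sum_{\gamma\in\sigma(C)}\gamma\, E_{\gamma}$ with $E_{\gamma}\equiv\chi_{\{\gamma\}}(C)$. By proposition~\ref{prop: functionalCalculus}, applying $f$ yields $A = f(C) = \sum_{\gamma\in\sigma(C)} f(\gamma)\, E_{\gamma}$. The set $\{E_{\gamma}:\gamma\in\sigma(C)\}$ is a partition of the unit (in particular pairwise orthogonal), so although the real numbers $f(\gamma)$ are typically \emph{not} pairwise distinct (this is exactly the point where $A$ might be a genuine coarse-graining), proposition~\ref{prop: expectedValue} applies and gives
\begin{align*}
    \Exp{A} &= \sum_{\gamma\in\sigma(C)} f(\gamma)\, \Exp{E_{\gamma}}.
\end{align*}

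Next, I would observe that by definition~\ref{def: expectationProjection} we have $\Exp{E_{\gamma}} = P_{\rho}(C^{\{\gamma\}}) = P_{\rho}^{C}(\{\gamma\})$. Substituting,
\begin{align*}
    \Exp{A} &= \sum_{\gamma\in\sigma(C)} f(\gamma)\, P_{\rho}^{C}(\{\gamma\}),
\end{align*}
which is precisely the definition of the expected value $\mathbb{E}_{\rho}(f)$ of the random variable $f:\sigma(C)\to\sigma(A)$ with respect to the probability measure $P_{\rho}^{C}$.

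There is no real obstacle here: the result is essentially a one-line consequence of the functional calculus once one recognizes that proposition~\ref{prop: expectedValue} was stated precisely to avoid any worry about whether the coefficients $f(\gamma)$ collapse (i.e., whether $f$ is injective). The only subtle point worth emphasizing in the write-up is that one should \emph{not} try to pass through the spectral decomposition $A=\sum_{\af\in\sigma(A)}\af E_{\af}$ and then argue directly, because that would require grouping terms $f^{-1}(\af)$ and reassembling them via lemma~\ref{lemma: finiteAdditivity}; proposition~\ref{prop: expectedValue} already packages that computation, so the cleanest path is to invoke it directly.
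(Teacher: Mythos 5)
Your proof is correct and follows exactly the same route as the paper's: apply proposition~\ref{prop: functionalCalculus} to write $A=\sum_{\gamma\in\sigma(C)}f(\gamma)\chi_{\{\gamma\}}(C)$, invoke proposition~\ref{prop: expectedValue} to get $\Exp{A}=\sum_{\gamma}f(\gamma)\Exp{\chi_{\{\gamma\}}(C)}=\sum_{\gamma}f(\gamma)P_{\rho}^{C}(\{\gamma\})=\mathbb{E}_{\rho}(f)$. Your remark that proposition~\ref{prop: expectedValue} is what absorbs the non-injectivity of $f$ is exactly the right observation and matches the paper's reliance on that proposition.
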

\begin{proof}
According to proposition~\ref{prop: functionalCalculus}, we have $A = \sum_{\gamma \in \sigma(C)} f(\gamma) F_{\gamma}$, where $F_{\gamma} = \chi_{\{\gamma\}}(C)$ for every $\gamma$. According to proposition~\ref{prop: expectedValue},
\begin{align*}
    \Exp{A} &= \sum_{\gamma \in \sigma(C)} f(\gamma)\Exp{F_{\gamma}} =  \sum_{\gamma \in \sigma(C)} f(\gamma)P_{\rho}^{C}(\{\gamma\}) = \mathbb{E}_{\rho}(f).
\end{align*}
\end{proof}

To conclude this section, let's show that states define partially linear mappings on $\mathcal{O}$:

\begin{proposition}[Partial linearity]\label{prop: expLinear}
    Let $A_{1},\dots, A_{m}$ be pairwise compatible observables, and let $a_{1},\dots,a_{m}$ be real numbers. Then, for any state $\rho$,
    \begin{align}
        \Exp{\sum_{i=1}^{m}a_{i}A_{i}} = \sum_{i=1}^{m}a_{i}\Exp{A_{i}}.
    \end{align}
    \end{proposition}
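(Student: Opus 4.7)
My plan is to reduce the claim to the classical linearity of expected values on the single probability space associated with a common fine-graining, which is exactly what Specker's principle provides. Concretely, since $A_{1},\dots,A_{m}$ are pairwise compatible, Corollary~\ref{cor: specker} (Specker's principle) yields an observable $C$ and functions $f_{1},\dots,f_{m}$ on $\sigma(C)$ such that $A_{i}=f_{i}(C)$ for every $i$. By Definition~\ref{def: algebraicOperations} (applied inductively, using that the partial operations are associative as recorded in Proposition~\ref{prop: algebraicProperties}), the observable $\sum_{i=1}^{m}a_{i}A_{i}$ is well defined and equals $\bigl(\sum_{i=1}^{m}a_{i}f_{i}\bigr)(C)$, i.e., it is itself a function of the single observable $C$.

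Now invoke Proposition~\ref{prop: classicalExp}, which identifies the expectation of any function of $C$ with the expected value of the corresponding random variable on the probability space $(\sigma(C),\mathscr{P}(\sigma(C)),P_{\rho}^{C})$. This gives, on the one hand,
\begin{align*}
    \Exp{\sum_{i=1}^{m}a_{i}A_{i}} = \mathbb{E}_{\rho}\Bigl(\sum_{i=1}^{m}a_{i}f_{i}\Bigr),
\end{align*}
and, on the other hand, $\Exp{A_{i}} = \mathbb{E}_{\rho}(f_{i})$ for each $i$. The conclusion then follows immediately from the fact that the expected value in classical probability theory is linear on random variables over a fixed probability space, i.e., $\mathbb{E}_{\rho}\bigl(\sum_{i=1}^{m}a_{i}f_{i}\bigr) = \sum_{i=1}^{m}a_{i}\mathbb{E}_{\rho}(f_{i})$.

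There is essentially no hard step: once Specker's principle is invoked, every operation is performed inside a single commutative function algebra $\mathfrak{C}^{\ast}(C)$, where linearity of expectations is a routine fact from finite probability theory that can be expanded by hand using the spectral decomposition of $C$ if one prefers a self-contained argument. The only subtlety worth flagging is checking that $\sum_{i=1}^{m}a_{i}A_{i}$ is unambiguously defined despite being an iterated sum of pairwise (but not jointly a priori) combined observables; this is handled by Lemma~\ref{lemma: compatibleWithConjunction} together with Theorem~\ref{thm: productCompatibility}, which guarantee that any observable built algebraically from $A_{1},\dots,A_{k}$ remains compatible with $A_{k+1}$, so the iterated sum falls under the scope of the common fine-graining $C$ at every stage.
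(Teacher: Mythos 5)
Your proof is correct and follows essentially the same route as the paper: take a cone $C \xrightarrow{f_{i}} A_{i}$ via Specker's principle, identify $\sum_{i}a_{i}A_{i}$ with $\bigl(\sum_{i}a_{i}f_{i}\bigr)(C)$, and apply Proposition~\ref{prop: classicalExp} together with the linearity of classical expectation on $(\sigma(C),\mathscr{P}(\sigma(C)),P_{\rho}^{C})$. The extra remarks on the well-definedness of the iterated sum are a fair point of care but are already covered by the paper's earlier discussion in Section~\ref{sec: algebraicOperations}.
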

    \begin{proof}
    Let $A_{1},\dots A_{m}$ be pairwise compatible observables, and let $C \xrightarrow{f_{i}} A_{i}$, $i=1,\dots,m$, be a cone for them. According to proposition~\ref{prop: classicalExp}, for any state $\rho$ we have
    \begin{align*}
        \Exp{\sum_{i=1}^{m}a_{i}A_{i}} = \mathbb{E}_{\rho}(\sum_{i=1}^{m}a_{i}f_{i}) = \sum_{i=1}^{m} a_{i}\mathbb{E}_{\rho}(f_{i}) = \sum_{i=1}^{m}a_{i}\Exp{A_{i}}.
    \end{align*}
    \end{proof}
\subsection{Traces of observables and density operators}\label{sec: trace}

In lemma~\ref{lemma: traceSimple}, we demonstrated that the trace of a projection $E$ can be written as $\Tra{E} = n \langle E \rangle_{\emptyset}$, where $n$ is the dimension of the system and $\emptyset$ is the completely mixed state. We proved in section~\ref{sec: StatesAsFunctionals} that, for any state $\rho$, the mapping $\Exp{ \ \cdot \ }: \mathcal{P} \rightarrow [0,1]$ can be extended to a mapping $\Exp{ \ \cdot \ }: \mathcal{O} \rightarrow \mathbb{R}$, where $\mathcal{O}$ denotes the collection of all observables of the system. This extension enables us to define the trace of any observable:

\begin{definition}[Trace]\label{def: trace} Let $\mathfrak{S}$ be a $n$-dimensional system. Let $A$ be any observable of $\mathfrak{S}$, and let $\emptyset$ be the completely mixed state. We define the trace of $A$ by
\begin{align}
        \Tra{A} \doteq n \langle A \rangle_{\emptyset}.
\end{align}
\end{definition}

Lemma~\ref{lemma: traceSimple} ensures that definitions~\ref{def: traceProjection} and~\ref{def: trace} coincide if $A$ is a projection. Furthermore, proposition~\ref{prop: expectedValue} ensures that the trace of an observable is invariant under decompositions:

\begin{proposition}[Trace and decompositions]\label{prop: traceAndDecompositions}
    Let $A$ be an observable, and let  $F_{1},\dots F_{m}$ be pairwise orthogonal projections such that $A = \sum_{i=1}^{m} \beta_{i}F_{i}$ for some real numbers $\beta_{i},\dots,\beta_{m}$. Then
    \begin{align}
        \Tra{A} = \sum_{i=1}^{m} \beta_{j} \Tra{F_{i}}.
    \end{align}
\end{proposition}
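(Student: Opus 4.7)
The plan is to reduce the statement to Proposition~\ref{prop: expectedValue} evaluated at the completely mixed state, using the fact that both sides of the desired identity are, up to the factor $n$, expectations with respect to $\emptyset$.

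First I would apply Definition~\ref{def: trace} to rewrite $\Tra{A}$ as $n\langle A\rangle_{\emptyset}$, where $n = \text{dim}(\mathfrak{S})$ and $\emptyset$ is the completely mixed state (Definition~\ref{def: completelyMixed}). Then, since the hypothesis gives a decomposition $A = \sum_{i=1}^{m}\beta_{i}F_{i}$ with $F_{1},\dots,F_{m}$ pairwise orthogonal projections, Proposition~\ref{prop: expectedValue} applies to $\rho = \emptyset$ and yields
\begin{align*}
    \langle A\rangle_{\emptyset} = \sum_{i=1}^{m}\beta_{i}\langle F_{i}\rangle_{\emptyset}.
\end{align*}

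Multiplying by $n$ and distributing gives
\begin{align*}
    \Tra{A} = n\langle A\rangle_{\emptyset} = \sum_{i=1}^{m}\beta_{i}\bigl(n\langle F_{i}\rangle_{\emptyset}\bigr) = \sum_{i=1}^{m}\beta_{i}\Tra{F_{i}},
\end{align*}
where the last equality uses Definition~\ref{def: trace} again (or, equivalently, Lemma~\ref{lemma: traceSimple}, which ensures that the general trace reduces to the combinatorial trace of Definition~\ref{def: traceProjection} when evaluated on projections).

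There is no real obstacle here: the content of the proposition is already packed into Proposition~\ref{prop: expectedValue}, and Definition~\ref{def: trace} is precisely designed to make the extension from projections to arbitrary observables $n$-linear in $\langle\cdot\rangle_{\emptyset}$. The only thing worth emphasising in the write-up is that the $F_{i}$ need not form a partition of the unit and the $\beta_{i}$ need not be distinct, so one is genuinely invoking the ``any decomposition'' version of Proposition~\ref{prop: expectedValue} rather than the spectral-theorem version.
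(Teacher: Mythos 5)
Your proof is correct and follows exactly the route the paper intends: the paper states this proposition without an explicit proof, remarking only that it follows from Proposition~\ref{prop: expectedValue} together with Definition~\ref{def: trace}, which is precisely your argument (evaluate at $\emptyset$, multiply by $n$, and use Lemma~\ref{lemma: traceSimple} to identify $n\langle F_{i}\rangle_{\emptyset}$ with $\Tra{F_{i}}$). Your closing remark that one needs the ``any decomposition'' form of Proposition~\ref{prop: expectedValue} rather than the spectral-theorem version is a fair and accurate observation.
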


Similarly, it easily follows from proposition~\ref{prop: expLinear} that the trace is a partially linear mapping on $\mathcal{O}$:
\begin{proposition}\label{prop: traceLinear} The mapping $\mathcal{O} \ni A \xmapsto{\text{tr}} \Tra{A} \in \mathbb{R}$ is partially linear, that is to say, if $A_{1},\dots,A_{m}$ are pairwise compatible observables and $a_{1},\dots,a_{m}$ are real numbers,
    \begin{align}
        \Tra{\sum_{i=1}^{m}a_{i}A_{i}} = \sum_{i=1}^{m} a_{i} \Tra{A_{i}}.
    \end{align}
\end{proposition}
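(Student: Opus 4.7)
The plan is to reduce the statement immediately to the partial linearity of the expectation (Proposition~\ref{prop: expLinear}) via the definition of trace. First I note that since $A_{1},\dots,A_{m}$ are pairwise compatible, the linear combination $\sum_{i=1}^{m}a_{i}A_{i}$ is a well-defined observable: by Definition~\ref{def: algebraicOperations} together with the associativity/distributivity properties summarized in Proposition~\ref{prop: algebraicProperties}, finite linear combinations of pairwise compatible observables lie in $\mathcal{O}$ (they are obtained as $\big(\sum_{i}a_{i}f_{i}\big)(C)$ from any shared fine-graining $C \xrightarrow{f_{i}} A_{i}$ guaranteed by Specker's principle, Corollary~\ref{cor: specker}).

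Next I apply Definition~\ref{def: trace} directly to $\sum_{i=1}^{m}a_{i}A_{i}$ to obtain
\begin{align*}
    \Tra{\sum_{i=1}^{m}a_{i}A_{i}} \;=\; n\,\Big\langle \sum_{i=1}^{m}a_{i}A_{i}\Big\rangle_{\emptyset},
\end{align*}
where $n \equiv \dim(\mathfrak{S})$ and the completely mixed state $\emptyset$ belongs to $\mathcal{S}$ by Postulate~\ref{post: observables}. Because the $A_{i}$ are pairwise compatible, Proposition~\ref{prop: expLinear} applies to the state $\rho = \emptyset$, giving
\begin{align*}
    \Big\langle \sum_{i=1}^{m}a_{i}A_{i}\Big\rangle_{\emptyset} \;=\; \sum_{i=1}^{m}a_{i}\,\langle A_{i}\rangle_{\emptyset}.
\end{align*}
Pulling the scalar $n$ through the finite sum and reapplying Definition~\ref{def: trace} to each summand then yields $\sum_{i=1}^{m}a_{i}\Tra{A_{i}}$, which is the desired identity.

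There is no genuine obstacle here: all the substantive work has already been done in establishing Proposition~\ref{prop: expLinear}, which in turn rested on Proposition~\ref{prop: classicalExp} and Specker's principle. The present proposition is essentially the translation of that linearity from the normalized functional $\langle \ \cdot \ \rangle_{\emptyset}$ to its unnormalized rescaling $n\langle \ \cdot \ \rangle_{\emptyset} = \Tra{ \ \cdot \ }$, so the proof is a one-line invocation of the two results.
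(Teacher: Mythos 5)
Your proof is correct and follows exactly the route the paper intends: the paper states that the result "easily follows from Proposition~\ref{prop: expLinear}," and your argument simply spells out that reduction via $\Tra{\cdot} = n\langle\cdot\rangle_{\emptyset}$. Nothing is missing.
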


We can now define density operators:
\begin{definition}[Density operator]\label{def: densityOperator} We say that an observable $A$ is a \textbf{positive operator} if $\sigma(A) \subset [0,\infty)$. A \textbf{density operator} is a positive operator $A$ that satisfies $\Tra{A} = 1$.
\end{definition}
We denote by $\mathcal{D}$ the set of all density operators.

Note that all rank-$1$ projections are density operators. More broadly, any projection canonically defines a density operator via normalization, that is, if $E$ is a projection, the observable
\begin{align*}
    \frac{E}{\Tra{E}} \equiv \frac{1}{\Tra{E}} \cdot E
\end{align*}
is a density operator. In fact, $\sigma(\frac{E}{\Tra{E}}) = \frac{1}{\Tra{E}}\sigma(E) \subset \{0,\frac{1}{\Tra{E}}\} \subset [0,\infty)$, so $E$ is positive, whereas $\Tra{\frac{E}{\Tra{E}}} = \frac{\Tra{E}}{\Tra{E}} = 1$.

Let $A$ be an observable, and let $\af$ be an eigenvalue of $A$. We define the \textbf{multiplicity} of $\af$ as the trace of the projection $E_{\af}$ associated with this eigenvalue (see definition~\ref{def: projectionOfEigenvalue}). A more precise way of putting it consists in defining the multiplicity of an objective  event $(\af,A)$ as the trace of the projection $E_{\af} \equiv \chi_{\{\af\}}(A)$ associated with it. Proposition~\ref{prop: nondegenerateDecomposition} ensures that the multiplicity of an eigenvalue $\af$ of $A$ tells us how many pairwise orthogonal rank-$1$ projections we must sum in order to obtain the projection $E_{\af}$ associated with it. More importantly, multiplicities enable us to relate the trace of $A$ with its spectrum:

\begin{proposition}[Trace and eigenvalues]\label{prop: traceSpectrum} Let $A$ be an observable in a $n$-dimensional system, and let $\sigma(A)$ be its spectrum. For each eigenvalue $\af \in \sigma(A)$, let $m_{\af}$ be its multiplicity, i.e., $m_{\af} \doteq \Tra{E_{\af}}$, where $E_{\af} \equiv \chi_{\{\af\}}$. Then
\begin{align}
    \Tra{A} = \sum_{\af \in \sigma(A)} m_{\af} \af.
\end{align}
Equivalently, if we define a sequence $\af_{1},\dots,\af_{n}$ satisfying $\{\af_{i}:i=1,\dots,n\} = \sigma(A)$ and, for every $\af \in \sigma(A)$, $\vert \{i: \af_{i} = \af\}\vert = m_{\af}$, we obtain
\begin{align}
    \Tra{A} = \sum_{i=1}^{n}\af_{i}.
\end{align}
\end{proposition}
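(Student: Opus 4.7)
The plan is to apply the spectral decomposition of $A$ together with the partial linearity of the trace. By definition~\ref{def: spectralDecomposition} we may write $A = \sum_{\af \in \sigma(A)} \af E_{\af}$, and since the projections $\{E_{\af}: \af \in \sigma(A)\}$ are pairwise orthogonal, they are in particular pairwise compatible, so the hypothesis of proposition~\ref{prop: traceLinear} is satisfied. Applying it, I obtain
\begin{align*}
    \Tra{A} = \Tra{\sum_{\af \in \sigma(A)} \af E_{\af}} = \sum_{\af \in \sigma(A)} \af \Tra{E_{\af}} = \sum_{\af \in \sigma(A)} m_{\af}\, \af,
\end{align*}
which establishes the first equality.

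For the second equality, I need to check that such a sequence $\af_1,\dots,\af_n$ actually exists (i.e., that the total multiplicity is exactly $n$). This will follow from the fact that $\{E_{\af}: \af \in \sigma(A)\}$ is a partition of the unit (definition~\ref{def: partitionOfUnit}): we have $\mathds{1} = \sum_{\af \in \sigma(A)} E_{\af}$, so applying proposition~\ref{prop: traceLinear} once more gives $\Tra{\mathds{1}} = \sum_{\af \in \sigma(A)} m_{\af}$. Combined with the fact that $\Tra{\mathds{1}} = n$ in a $n$-dimensional system (which follows immediately from definition~\ref{def: traceProjection}, since $\mathds{1} = \chi_{\sigma(C)}(C)$ for any nondegenerate $C$, and $\vert \sigma(C)\vert = n$), we get $\sum_{\af \in \sigma(A)} m_{\af} = n$, so a sequence $\af_1,\dots,\af_n$ with the prescribed multiplicities exists.

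Given such a sequence, the equality $\sum_{i=1}^{n}\af_i = \sum_{\af \in \sigma(A)} m_{\af}\,\af$ is a routine rearrangement: grouping the indices $i$ according to the value $\af_i$ takes (each value $\af \in \sigma(A)$ being taken exactly $m_{\af}$ times by construction), we obtain the desired identity, which together with the first part yields $\Tra{A} = \sum_{i=1}^{n}\af_i$.

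There is essentially no serious obstacle here; the argument is a direct two-line consequence of the spectral theorem and partial linearity. The only minor point worth being careful about is the existence of the sequence $\af_1,\dots,\af_n$, which requires noting that total multiplicity equals the dimension; this is handled by applying the trace to the identity decomposition induced by $A$.
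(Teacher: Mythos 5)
Your proof is correct and follows essentially the same route as the paper's: the paper also applies the spectral decomposition $A = \sum_{\af \in \sigma(A)} \af E_{\af}$ together with the partial linearity of the trace (proposition~\ref{prop: traceLinear}) to get $\Tra{A} = \sum_{\af} m_{\af}\af$. Your extra verification that $\sum_{\af} m_{\af} = n$ (via $\Tra{\mathds{1}} = n$) is a sound and welcome addition that the paper leaves implicit.
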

\begin{proof}
Let $A = \sum_{\af \in \sigma(A)} \af E_{\af}$ be the spectral decomposition of $A$ (definition~\ref{def: spectralDecomposition}). According to propositions~\ref{prop: nondegenerateDecomposition} and~\ref{prop: traceLinear}, $\Tra{A} = \sum_{\af \in \sigma(A)} \af \Tra{E_{\af}} = \sum_{\af \in \sigma(A)}  m_{\af}\af$.
\end{proof}

\begin{corollary}\label{cor: DensitySumToONe} Let $A$ be a density operator in a $n$-dimensional system. Then
\begin{align}
    \sum_{\af \in \sigma(A)} m_{\af} \af = 1,
\end{align}
Equivalently, if we define a sequence $\af_{1},\dots,\af_{n}$, where $\{\af_{i}:i=1,\dots,n\} = \sigma(A)$ and, for every $\af \in \sigma(A)$, $\vert \{i: \af_{i} = \af\}\vert = m_{\af}$, we obtain
\begin{align}
    \sum_{i=1}^{n}\af_{i} = 1,
\end{align}
and consequently $\sigma(A) \subset [0,1]$.
\end{corollary}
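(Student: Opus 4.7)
The plan is to read off the main equality from Proposition~\ref{prop: traceSpectrum} and then use positivity together with nonnegativity of multiplicities to squeeze the spectrum inside $[0,1]$. Since $A$ is a density operator, by Definition~\ref{def: densityOperator} we have both $\sigma(A) \subset [0,\infty)$ and $\Tra{A}=1$. Proposition~\ref{prop: traceSpectrum} states that $\Tra{A} = \sum_{\af \in \sigma(A)} m_{\af}\af$, which (because $\Tra{A}=1$) immediately yields the first displayed equation. The alternative form $\sum_{i=1}^{n}\af_{i}=1$ is just a rewriting using the sequence described in the statement, and again follows directly from Proposition~\ref{prop: traceSpectrum}.

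The only remaining content is $\sigma(A) \subset [0,1]$. The inclusion in $[0,\infty)$ is built into ``positive operator''. For the upper bound, I would first observe that every multiplicity $m_{\af}$ is a strictly positive integer: indeed, $\af \in \sigma(A)$ means $E_{\af} \equiv \chi_{\{\af\}}(A) \neq 0$ (Lemma~\ref{lemma: eigenvalues} and the discussion following Definition~\ref{def: projectionOfEigenvalue}), and a nonzero projection has trace at least one (the trace of a projection in an $n$-dimensional system is a natural number between $1$ and $n$, established right after Definition~\ref{def: traceProjection}). Hence $m_{\af} \geq 1$ for every $\af \in \sigma(A)$.

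Fix now any $\af_{0} \in \sigma(A)$. Since each term $m_{\af}\af$ in the sum $\sum_{\af \in \sigma(A)} m_{\af}\af = 1$ is nonnegative (by positivity), we can bound one term by the whole sum:
\begin{align*}
    \af_{0} \leq m_{\af_{0}}\af_{0} \leq \sum_{\af \in \sigma(A)} m_{\af}\af = 1.
\end{align*}
Combined with $\af_{0} \geq 0$, this gives $\af_{0} \in [0,1]$, and since $\af_{0}$ was arbitrary, $\sigma(A) \subset [0,1]$.

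There is no real obstacle here: the corollary is essentially an accounting step built on top of Proposition~\ref{prop: traceSpectrum}. The one subtlety worth flagging explicitly is justifying $m_{\af} \geq 1$, which is where Lemma~\ref{lemma: eigenvalues} (to know $E_{\af}\neq 0$) and the integrality of traces of projections (Definition~\ref{def: traceProjection} plus its surrounding remarks) get used; without that, the ``squeeze'' step that forces $\af \leq 1$ would not go through.
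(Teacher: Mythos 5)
Your proof is correct and follows exactly the route the paper intends: the paper presents this as an immediate corollary of Proposition~\ref{prop: traceSpectrum} together with the definition of a density operator, and your argument simply makes explicit the details it leaves implicit (in particular the observation that $m_{\af}\geq 1$, which justifies the squeeze giving $\sigma(A)\subset[0,1]$).
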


The following result is important.
\begin{proposition}\label{prop: densityDecomposition} Let $A$ be an observable in a $n$-dimensional system. Then $A$ is a density operator if and only if it can be written as a convex combination of $n$ pairwise orthogonal rank-$1$ projections.    
\end{proposition}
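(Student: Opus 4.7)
The plan is to prove both implications by passing through the spectral decomposition and then refining via a common nondegenerate cone, which Specker's principle (Corollary~\ref{cor: nondegenerateSpecker}) guarantees.

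For the easy direction ($\Leftarrow$), suppose $A = \sum_{i=1}^{n} a_{i} E_{i}$, where $a_{i} \geq 0$, $\sum_{i} a_{i} = 1$, and $E_{1}, \dots, E_{n}$ are pairwise orthogonal rank-$1$ projections. First I would invoke Corollary~\ref{cor: nondegenerateSpecker} to obtain a nondegenerate cone $C \xrightarrow{\chi_{\Delta_{i}}} E_{i}$. Since each $E_{i}$ has rank $1$ and the $\Delta_{i}$ are pairwise disjoint with $|\sigma(C)| = n$, each $\Delta_{i}$ is a singleton $\{\gamma_{i}\}$ and together they exhaust $\sigma(C)$; hence $\sum_{i} E_{i} = \chi_{\sigma(C)}(C) = \mathds{1}$, so $\{E_{1}, \dots, E_{n}\}$ is a partition of the unit. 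Defining $f: \sigma(C) \to \mathbb{R}$ by $f(\gamma_{i}) = a_{i}$, we get $A = f(C)$, so $\sigma(A) = f(\sigma(C)) = \{a_{1}, \dots, a_{n}\} \subset [0, \infty)$, establishing positivity. By Proposition~\ref{prop: traceAndDecompositions}, $\Tra{A} = \sum_{i} a_{i} \Tra{E_{i}} = \sum_{i} a_{i} = 1$, so $A$ is a density operator.

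For the converse ($\Rightarrow$), suppose $A$ is a density operator, and take the spectral decomposition $A = \sum_{\af \in \sigma(A)} \af E_{\af}$ (Definition~\ref{def: spectralDecomposition}). The plan is to refine each spectral projection into rank-$1$ pieces while maintaining global orthogonality. To do so, I would apply Corollary~\ref{cor: nondegenerateSpecker} to the partition of the unit $\{E_{\af}: \af \in \sigma(A)\}$, obtaining a nondegenerate cone $C \xrightarrow{\chi_{\Delta_{\af}}} E_{\af}$ with $\{\Delta_{\af}\}$ a partition of $\sigma(C)$, $|\sigma(C)| = n$. Setting $P_{\gamma} \doteq \chi_{\{\gamma\}}(C)$ for $\gamma \in \sigma(C)$ yields $n$ pairwise orthogonal rank-$1$ projections, and the decomposition $E_{\af} = \sum_{\gamma \in \Delta_{\af}} P_{\gamma}$ (from Proposition~\ref{prop: nondegenerateDecomposition}) gives
\begin{align*}
A = \sum_{\af \in \sigma(A)} \af \sum_{\gamma \in \Delta_{\af}} P_{\gamma} = \sum_{\gamma \in \sigma(C)} \lambda_{\gamma} P_{\gamma},
\end{align*}
where $\lambda_{\gamma} \doteq \af$ whenever $\gamma \in \Delta_{\af}$. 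Positivity of $A$ gives $\lambda_{\gamma} \in \sigma(A) \subset [0, \infty)$, and Corollary~\ref{cor: DensitySumToONe} (or directly Proposition~\ref{prop: traceAndDecompositions}) yields $\sum_{\gamma} \lambda_{\gamma} = \Tra{A} = 1$, producing the desired convex combination of $n$ pairwise orthogonal rank-$1$ projections.

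The only subtle point will be making sure that the rank-$1$ projections obtained by refining different spectral projections $E_{\af}$ remain mutually orthogonal; this is precisely why we cannot just invoke Proposition~\ref{prop: nondegenerateDecomposition} independently on each $E_{\af}$ but instead choose a single nondegenerate cone for the whole partition of the unit, so that all the rank-$1$ atoms $P_{\gamma}$ come from the same observable $C$ and are therefore pairwise orthogonal by construction. Everything else is essentially bookkeeping using the spectral theorem, the trace formula, and the characterization of projection rank from Section~\ref{sec: spectralTheory}.
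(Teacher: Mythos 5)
Your proof is correct and follows essentially the same route as the paper's: both directions ultimately rest on realizing $A$ as a function of a nondegenerate observable $C$, decomposing into the rank-$1$ atoms $\chi_{\{\gamma\}}(C)$, and invoking positivity of the spectrum together with the trace formula. The only cosmetic difference is that in the forward direction you pass first through the spectral decomposition of $A$ and then refine via a common nondegenerate cone, whereas the paper takes a nondegenerate fine-graining $A=f(C)$ directly and applies the functional calculus; the two are equivalent, and your explicit justification that the $\Delta_{i}$ are singletons exhausting $\sigma(C)$ in the converse direction is a detail the paper leaves implicit.
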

\begin{proof}
    Suppose that $A$ is a density operator, and let $C$ be any nondegenerate observable such that $A=f(C)$. Let $C=\sum_{i=1}^{n} \gamma_{i} F_{i}$ be the spectral decomposition of $C$. We know that $\sigma(A) = f(\sigma(C)) = \{\af_{1},\dots,\af_{n}\}$, where $\af_{i} \equiv f(\gamma_{i})$. Corollary~\ref{cor: DensitySumToONe} ensures that $\af_{i} \in [0,1]$ for every $i$ and $\sum_{i=1}^{n}\af_{i} = 1$. Furthermore, according to proposition~\ref{prop: functionalCalculus}, $A = \sum_{i=1}^{n} \af_{i} F_{i}$, which proves that $A$ is a convex combination of the pairwise orthogonal rank-$1$ projections $F_{1},\dots,F_{n}$. On the other hand, let $E_{1},\dots,E_{n}$ be pairwise orthogonal rank-$1$ projections, and let $\af_{1},\dots,\af_{n}$ be non-negative numbers satisfying $\sum_{i=1}^{n}\af_{i} = 1$. Define $A \doteq \sum_{i=1}^{n} \af_{i} E_{i}$. Then $\sigma(A) = \{\af_{1},\dots,\af_{n}\} \subset [0,1]$, and, according to proposition~\ref{prop: traceAndDecompositions}, $\Tra{A} = \sum_{i=1}^{n} \af_{i} \Tra{F_{i}} = \sum_{i=1}^{n} \af_{i} =1$, which shows that $A$ is a density operator. 
\end{proof}
\subsection{A glimpse of the Born rule}\label{sec: BornRule}
 Let $E$ be a projection. Given any state $\rho$, denote by $\rho_{E}$ the state $T_{E}(\rho)$ (see definition~\ref{def: projectionUpdate}). According to lemmas~\ref{lemma: objectiveProjectiveUpdate} and~\ref{lemma: projectionJointProbability}, for any projection $F$ that is compatible with $E$ we have
\begin{align*}
    \langle F \rangle_{\rho_{E}} &= P_{T_{(1;E)}(\rho)}(1;F) = \frac{p_{\rho}(1,1;E,F)}{p_{\rho}(1;E)} = \frac{\Exp{E  F}}{\Exp{E}} = \Exp{\frac{E}{\Exp{E}}  F}.
\end{align*}
 More broadly, let $B$ be any observable compatible with $E$, and let $B=\sum_{\beta \in \sigma(B)} \beta F_{\beta}$ be its spectral decomposition. For every $\beta \in \sigma(B)$, $F_{\beta}$ is compatible with $E$, thus, according to propositions~\ref{prop: algebraicProperties} and~\ref{prop: expLinear},
\begin{align*}
    \langle B \rangle_{\rho_{E}} &= \sum_{\beta \in \sigma(B)} \beta \ \langle F_{\beta}\rangle_{\rho_{E}} = \sum_{\beta \in \sigma(B)} \beta \Exp{\frac{E}{\Exp{E}}  F_{\beta}} = \Exp{\frac{E}{\Exp{E}}B}.
\end{align*}
In the particular case where $\rho = \emptyset$, for any observable $B$ that is compatible with $E$ we obtain
\begin{align}
    \langle B \rangle_{\emptyset_{E}} &=\label{eq: preBorn} \frac{n}{n}\Tra{\frac{E}{\Tra{E}}  B} = \Tra{\frac{E}{\Tra{E}}  B}.
\end{align} 
The following definition is important.
\begin{definition}[Projective and pure states]\label{def: projectiveState} A state $\rho$ is said to be \textbf{projective} if $\rho = T_{E}(\emptyset)$ for some projection $E$. If $E$ is rank-$1$, we say that the projective state $\rho=T_{E}(\emptyset)$ is \textbf{pure}. We usually denote the projective state $T_{E}(\emptyset)$ by $\emptyset_{E}$.
\end{definition}

Equation~\ref{eq: preBorn} establishes a connection between the projective state $\emptyset_{E}$ and the density operator $\frac{E}{\Tra{E}}$ (see definition~\ref{def: densityOperator}), and it is easy to see that a similar result holds for any density operator of the system. To begin with, recall that, according to proposition~\ref{prop: densityDecomposition}, density operators and convex combinations of pairwise orthogonal rank-$1$ projections are equivalent concepts. Let $E_{1},\dots,E_{m}$ be pairwise orthogonal rank-$1$ projections, and, for each $i$, let  $\emptyset_{E_{i}}$ be the projective state associated with $E_{i}$. Let $\rho \doteq \sum_{i=1}^{m} \af_{i} \rho_{i}$ be any convex combination of $\emptyset_{E_{1}},\dots,\emptyset_{E_{m}}$ in $\mathcal{S}$, and let $A$ be the density operator $A \doteq \sum_{i=1}^{n} \af_{i} E_{i}$ (see proposition~\ref{prop: densityDecomposition}). For any observable $B$ that is compatible with $A$ (and consequently with $E_{i}$ for every $i$) we have, according to propositions~\ref{prop: expectationConvex},~\ref{prop: expLinear},~\ref{prop: algebraicProperties} and equation~\ref{eq: preBorn},
\begin{align*}
    \Exp{B} &= \sum_{i=1}^{n} \af_{i} \langle B \rangle_{\emptyset_{E_{i}}} =  \sum_{i=1}^{m} \af_{i} \Tra{\frac{E_{i}}{\Tra{E_{i}}}  B} = \sum_{i=1}^{m} \af_{i} \Tra{E_{i}  B} = \Tra{AB}.
\end{align*}
It proves the following proposition.

\begin{proposition}\label{prop: preBorn} Let $A$ be a density operator in some $n$-dimensional system and $A=\sum_{i=1}^{n} \af_{i} E_{i}$ be any convex decomposition of $A$ in terms for pairwise orthogonal rank-$1$ projections (see proposition~\ref{prop: densityDecomposition}). Let $\emptyset_{A}$ be the convex combination $\emptyset_{A} \doteq \sum_{i=1}^{n} \af_{i} \emptyset_{E_{i}}$, where, for each $i$, $\emptyset_{E_{i}}$ is the pure state associated with $E_{i}$. Then, for any observable $B$ that is compatible with $A$,
\begin{align}
    \langle B \rangle_{\emptyset_{A}} =\label{eq: propPreBorn} \Tra{AB}.
\end{align}
\end{proposition}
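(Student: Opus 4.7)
The plan is to formalise the heuristic computation performed in the paragraph immediately preceding the statement, feeding together three ingredients already at our disposal: convexity of the expectation (proposition~\ref{prop: expectationConvex}), the relation $\langle B \rangle_{\emptyset_{E}} = \text{tr}(\frac{E}{\text{tr}(E)} B)$ established in equation~\ref{eq: preBorn} for any projection $E$ and any observable compatible with $E$, and the partial linearity of the trace (proposition~\ref{prop: traceLinear}).

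First, I would set up the compatibility bookkeeping. Each $E_{i}$ is a function of $A$: indeed, $A = \sum_{i=1}^{n} \af_{i} E_{i}$ gives a spectral-type decomposition, and, grouping together those $E_{i}$ that share a common coefficient, the sum matches the spectral decomposition of $A$ (theorem~\ref{thm: spectralTheorem}), so each $E_{i}$ is a sum of projections of the form $\chi_{\{\af\}}(A)$, which are in particular compatible with $A$; more directly, $E_{1},\dots,E_{n}$ are pairwise compatible (since they are pairwise orthogonal, hence compatible by definition~\ref{def: orthononality}) and compatible with $A = \sum_{i}\af_{i}E_{i}$ through lemma~\ref{lemma: compatibleWithConjunction}. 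The hypothesis that $B$ is compatible with $A$ then upgrades, via the same lemma applied iteratively along the conjunction $\lfloor E_{1},\dots,E_{n}\rfloor$, to the statement that $B$, $E_{1},\dots,E_{n}$, and $A$ are all pairwise compatible. This justifies every algebraic manipulation that follows.

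Second, I would apply proposition~\ref{prop: expectationConvex} to the convex combination $\emptyset_{A} = \sum_{i=1}^{n} \af_{i}\emptyset_{E_{i}}$, which lies in $\mathcal{S}$ by postulate~\ref{ax: subjectiveUpdate}, to obtain
\begin{align*}
    \langle B \rangle_{\emptyset_{A}} = \sum_{i=1}^{n} \af_{i} \langle B \rangle_{\emptyset_{E_{i}}}.
\end{align*}
Each $E_{i}$ is a rank-$1$ projection, so $\text{tr}(E_{i})=1$, and equation~\ref{eq: preBorn} (applied with $E = E_{i}$, which is licit because $B$ is compatible with $E_{i}$) yields $\langle B \rangle_{\emptyset_{E_{i}}} = \text{tr}(E_{i} B)$. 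Substituting gives
\begin{align*}
    \langle B \rangle_{\emptyset_{A}} = \sum_{i=1}^{n} \af_{i}\, \text{tr}(E_{i} B).
\end{align*}

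Third, using the distributive and scalar-multiplicative properties of the partial operations (proposition~\ref{prop: algebraicProperties}), which are available because $E_{1},\dots,E_{n},B$ are pairwise compatible, together with partial linearity of the trace (proposition~\ref{prop: traceLinear}), I conclude
\begin{align*}
    \sum_{i=1}^{n} \af_{i}\, \text{tr}(E_{i} B) = \text{tr}\!\left(\sum_{i=1}^{n} \af_{i} E_{i} B\right) = \text{tr}\!\left(\Bigl(\sum_{i=1}^{n} \af_{i} E_{i}\Bigr) \circ B\right) = \text{tr}(A B),
\end{align*}
which is the desired identity~\ref{eq: propPreBorn}. The only real obstacle is the compatibility bookkeeping in the first paragraph: once that is in place, the chain of equalities is automatic. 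No new postulate or technical lemma is required beyond what has already been established.
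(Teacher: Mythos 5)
Your computational core — convexity of the expectation (proposition~\ref{prop: expectationConvex}), equation~\ref{eq: preBorn} applied termwise to the rank-$1$ projections, and partial linearity of the trace (proposition~\ref{prop: traceLinear}) — is exactly the paper's proof, which appears as the displayed chain of equalities in the paragraph immediately preceding the statement.

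The problem is the ``compatibility bookkeeping'' in your first paragraph, which is the one place you go beyond what the paper writes down, and it does not work as stated. Grouping the $E_{i}$ that share a coefficient shows that each $\chi_{\{\af\}}(A)$ is a \emph{sum} of $E_{i}$'s, not that each $E_{i}$ is a sum of projections of the form $\chi_{\{\af\}}(A)$; when coefficients repeat, the individual $E_{i}$ are strictly finer than the spectral projections of $A$ and are not functions of $A$. Likewise, lemma~\ref{lemma: compatibleWithConjunction} transfers compatibility with $\conj{E_{1},\dots,E_{n}}$ \emph{down} to each $E_{i}$; knowing only that $B$ is compatible with $A$, which is a coarse-graining of $\conj{E_{1},\dots,E_{n}}$, gives you no handle on the conjunction itself, so the claimed ``upgrade'' to pairwise compatibility of $B,E_{1},\dots,E_{n}$ does not follow. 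Concretely, $A=\frac{1}{n}\mathds{1}=\sum_{i}\frac{1}{n}E_{i}$ is a density operator that is compatible with every observable $B$ (it is a constant function of $B$), yet $B$ need not be compatible with any of the $E_{i}$, and then the step $\langle B\rangle_{\emptyset_{E_{i}}}=\Tra{E_{i}B}$ is not licensed by equation~\ref{eq: preBorn}. To be fair, the paper itself asserts the same thing without proof (the parenthetical ``and consequently with $E_{i}$ for every $i$''), so you have reproduced a gap rather than created one; but since you supply an explicit justification and that justification reverses the direction of both implications it invokes, it has to be flagged. A clean statement either restricts to decompositions in which $B$ is compatible with each $E_{i}$ (e.g., decompositions refining a joint spectral decomposition of $A$ and $B$), or defers the general case to the transition-probability results of section~\ref{sec: transitionProbabilities}, where the ambiguity in $\emptyset_{A}$ is resolved.
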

Note that, in principle, the state $\emptyset_{A}$ may depend on the decomposition we choose, so there is a slight abuse of notation in writing simply $\emptyset_{A}$. Fortunately, however, this ambiguity will disappear in the next section.

Rank-$1$ projections are density operators, so, for the sake of consistency, the pure state $\emptyset_{E}$ associated with the rank-$1$ projection $E$ should coincide with the state defined in proposition~\ref{prop: preBorn}, and it is easy to see that this is the case. In fact, let $E$ be a rank-$1$ projection, and let $E=\sum_{i=1}^{n} \af_{i} E_{i}$ any convex decomposition of $E$ in terms of pairwise orthogonal rank-$1$ projections. Let $C\xrightarrow{\chi_{\{\gamma_{i}\}}} E_{i}$, $i=1,\dots,n$, be any cone for $E_{1},\dots,E_{n}$, and define $f\doteq \sum_{i=1}^{n} \gamma_{i} \chi_{\{\gamma_{i}\}}$. Then $E = f(C)$, and consequently $\{0,1\}=\sigma(f(C))=f(\sigma(C)) = \{\af_{1},\dots,\af_{n}\}$. According to proposition~\ref{prop: nondegenerateDecomposition}, a projection has rank $k$ if and only if it can be written as a sum of $k$ pairwise orthogonal rank-$1$ projections, therefore we have $\af_{i} \neq 0$ for only one $i \in \{1,\dots,n\}$. Denoting by $i_{0}$ this index, we obtain $E =E_{i_{0}}$, and $\sum_{i=1}^{n} \af_{i} \emptyset_{E_{i}} = \emptyset_{E_{i_{0}}} = \emptyset_{E}$. Hence, any decomposition $E = \sum_{i=1}^{n} \af_{i} E_{i}$ of $E$ in terms of pairwise orthogonal rank-$1$ projections is trivial, i.e., we have $\af_{i} \neq 0$ for only one $i_{0} \in \{1,\dots,n\}$, and the state defined by this decomposition, as in proposition~\ref{prop: preBorn}, is the pure state associated with $E$ (definition~\ref{def: projectiveState}).

Now let $E$ be any projection, and let $D_{E}$ be the density operator $\frac{E}{\Tra{E}}$. Following the same line of thought of the previous paragraph, one can easily prove that, given any convex decomposition $E=\sum_{i=1}^{n} \af_{i} E_{i}$ of $E$ in terms of $n$ pairwise orthogonal rank-$1$ projections, we have $\af_{i} \in \{0,\frac{1}{\Tra{E}}\}$ for each $i$ and $\vert \{\af_{i}: i=\{1,\dots,n\}, \af_{i} \neq 0\} \vert = \Tra{E}$, which means that any decomposition of $D_{E}$ as a convex combination of pairwise orthogonal rank-$1$ projections is determined by pairwise orthogonal projection $E_{1},\dots,E_{k}$ satisfying $E=\sum_{i=1}^{k}E_{i}$, where $k \equiv \Tra{E}$. That is, the only way of decomposing $D_{E} \equiv \frac{E}{\Tra{A}}$ as a convex combination of pairwise orthogonal rank-$1$ projections is by writing
\begin{align*}
    \frac{E}{\Tra{E}} &= \frac{1}{\Tra{E}} \sum_{i=1}^{k} E_{i} = \sum_{i=1}^{k} \frac{1}{\Tra{E}} E_{i},
\end{align*}
where $k\equiv \Tra{E}$, for some set $E_{1},\dots,E_{k}$ of pairwise orthogonal rank-$1$  projections satisfying
\begin{align*}
    E=\sum_{i=1}^{k}E_{i}.
\end{align*}
Furthermore, given any cone $C \xrightarrow{\chi_{\{\gamma_{i}\}}} E_{i}$ for $E_{1},\dots,E_{k}$, the state $\emptyset_{D_{E}}$ determined by the decomposition $D_{E} = \sum_{i=1}^{k}\frac{E_{i}}{\Tra{E}}$ satisfies
\begin{align}
    \emptyset_{D_{E}}&=\label{eq: subjectiveNondegenerateMixedUpdate} \sum_{i=1}^{k}\frac{1}{\Tra{E}} \emptyset_{E_{i}} = \sum_{i=1}^{k}\frac{P_{\emptyset}(\gamma_{i};C)}{P_{\emptyset}(\Gamma;C)} \emptyset_{E_{i}} = \sum_{i=1}^{k}P_{\emptyset}^{C}(\{\gamma_{i}\}\vert \Gamma)T_{(\gamma_{i};C)}(\emptyset)
    \\
    &=T_{(\Gamma\vert C)}(\emptyset),
\end{align}
where $\Gamma \doteq \{\gamma_{i},\dots,\gamma_{k}\}$ and, consequently, $E = \chi_{\{\Gamma\}}(C)$. In the next section, we will show that equivalent  events update the \textit{completely mixed state}  in the same way, which implies that the projective state $\emptyset_{E}$ and the state $\emptyset_{D_{E}}$ are equal; this is shown in proposition~\ref{prop: updatingMixed}. In particular, it implies that all convex decompositions of a density operator define the same state, eliminating the aforementioned ambiguity that we find in the definition of $\emptyset_{A}$ presented in proposition~\ref{prop: preBorn}, and allowing us to introduce definition~\ref{def: densityState}.
\section{Connecting incompatible observables}\label{sec: connectingIncompatible}

To continue our derivation, we need to connect incompatible observables. In quantum mechanics, this connection is established by the inner product of the Hilbert space that represents the system, or equivalently by the Hilbert-Schmidt product of rank-$1$ projections. The distinctive way in which quantum mechanics connects incompatible observables is an essential part of the theory, and it has to be assimilated by our system for the quantum formalism to fully arise.
\subsection{Pure states and transition probabilities}\label{sec: transitionProbabilities}

In section~\ref{sec: basicFramework} we showed that, in a finite-dimensional system (definition~\ref{def: finiteSystem}), the spectrum of an observable $A$ is the set of all real numbers that can be obtained in a measurement of $A$, i.e., $\af \in\sigma(A)$ if and only if $p_{\rho}(\af;A) >0$ for some state $\rho$ (see equation~\ref{eq: preEigenvector}). In section~\ref{sec: observableEvents} we proved that, thanks to postulate~\ref{post: selfCompatibility}, the spectrum of an observable $A$ coincides with its point spectrum, i.e., with the set of all its eigenvalues. It means that, for any $\af \in \sigma(A)$, there exists a state $\rho_{\af}$ satisfying $p_{\rho_{\af}}(\af;A)=1$, as lemma~\ref{lemma: eigenvalues} shows. After introducing postulate~\ref{post: compatibility}, we learned how to explicitly construct the state $\rho_{\af}$: this state can be obtained by updating the completely mixed state with the observable event $(\af;A)$, i.e., $\rho_{\af} \doteq T_{(\af;A)}(\emptyset)$, which in turn corresponds to the projective state $\emptyset_{E_{\af}}$  associated with the projection $E_{\af} \equiv\chi_{\{\af\}}(A)$ (see definition~\ref{def: projectiveState}). Now let $C$ be any nondegenerate fine graining of $A$, i.e., $C$ is a nondegenerate observable and $A=f(C)$ for some function $f$. For each $i \in \sigma(C)$, let $F_{i}$ be the projection $\chi_{\{i\}}(C)$, and define $\Gamma_{\af} \doteq f^{-1}(\af)$ for every $\af \in \sigma(A)$. For each $i \in \sigma(C)$, let $\emptyset_{F_{i}}$ be the pure state associated with $F_{i}$, i.e., $\emptyset_{F_{i}} =T_{(1;F_{i})}(\emptyset) = T_{(i;C)}(\emptyset)$ (see definition~\ref{def: projectiveState}). Then, if $i \in \Gamma_{\af}$, we obtain
\begin{align}
    P_{\emptyset_{F_{i}}}(\af;A) =\label{eq: prePureState} \frac{P_{\emptyset}(\{i\} \cap \Gamma_{\af};C)}{P_{\emptyset}(\{i\};C)} = P_{\rho}^{C}(\gamma_{\af}\vert \{i\}) = 1.
\end{align}
We know that $\vert \Delta_{\af} \vert = \Tra{E_{\af}}$, so, for a given $\af \in \sigma(A)$, the nondegenerate fine graining $C$ of $A$ determines $k \equiv \Tra{E_{\af}}$ distinct pure states for which, in a measurement of $A$, the outcome $\af$ is necessarily obtained. Note also that, if $i \neq j$, the projection $F_{i}$ and $F_{j}$ are orthogonal, and that  $P_{\emptyset_{F_{i}}}(\af;A) = 0$ if $i \notin \Delta_{\af}$. Motivated by quantum theory, we introduce the following definition:

\begin{definition}[Eigenstate]\label{def: eigenstate} Let $A$ be an observable, and let $\af$ be any eigenvalue of $A$ (see lemma \ref{lemma: eigenvalues}). An eigenstate of $A$ corresponding to the eigenvalue $\af$ is a pure state $\emptyset_{F}$ satisfying $P_{\emptyset_{F}}(\af;A)=1$. Hence, an eigenstate of $A$ is a pure state that enables us to predict, with certainty, the outcome of a measurement of $A$.
\end{definition}
We have proved the following lemma.
\begin{lemma}\label{lemma: eigenstates} Let $A$ be an observable, and let $C$ be a nondegenerate fine graining of it, i.e., $C$ is a nondegenerate observable satisfying $A=f(C)$ for some function $f$. For each $i \in \sigma(C)$, let $F_{i}$ be the projection $\chi_{\{i\}}(C)$, and let $\emptyset_{F_{i}}$ be the pure state associated with it. Then $\emptyset_{F_{i}}$ is an eigenstate of $A$ corresponding to the eigenvalue $\af$ if and only if $i \in f^{-1}(\af)$.
\end{lemma}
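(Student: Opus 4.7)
The plan is to prove both directions of the equivalence by a single direct computation of $P_{\emptyset_{F_i}}(\af;A)$, showing that it equals $1$ when $i \in f^{-1}(\af)$ and $0$ otherwise. This is essentially the computation already carried out in equation~\ref{eq: prePureState} in the discussion preceding the lemma, so the proof will just be a matter of assembling the pieces carefully.

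First I would unpack the definitions. Since $F_i = \chi_{\{i\}}(C)$ is the projection associated with the eigenvalue $i$ of the nondegenerate observable $C$, lemma~\ref{lemma: objectiveProjectiveUpdate} gives $T_{F_i} = T_{(i;C)}$, so the pure state associated with $F_i$ is $\emptyset_{F_i} = T_{(i;C)}(\emptyset)$. Next I would translate the event from $A$ to $C$ using $A = f(C)$: by definition~\ref{def: functionalRelation} we have $P_{\emptyset_{F_i}}(\af;A) = P_{\emptyset_{F_i}}(f^{-1}(\af);C)$, that is,
\begin{align*}
    P_{\emptyset_{F_i}}(\af;A) = P_{\emptyset}\bigl(C^{f^{-1}(\af)} \bigm\vert C^{\{i\}}\bigr).
\end{align*}

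At this point postulate~\ref{post: selfCompatibility} (or equivalently lemma~\ref{lemma: selfCompatibility}) applies, since both conditioning and conditioned events are $C$-events, and yields
\begin{align*}
    P_{\emptyset_{F_i}}(\af;A) = \frac{P_{\emptyset}^{C}\bigl(f^{-1}(\af)\cap\{i\}\bigr)}{P_{\emptyset}^{C}(\{i\})}.
\end{align*}
Because $C$ is nondegenerate, definition~\ref{def: completelyMixed} ensures the denominator is $1/n \neq 0$, where $n = \dim(\mathfrak{S})$. The numerator is $1/n$ when $i \in f^{-1}(\af)$ and $0$ otherwise, so the whole quotient equals $1$ in the first case and $0$ in the second. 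This proves both directions of the equivalence simultaneously.

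There is no real obstacle here; the only thing to be mindful of is invoking the right earlier results in the right order. The identification $\emptyset_{F_i} = T_{(i;C)}(\emptyset)$ relies on lemma~\ref{lemma: objectiveProjectiveUpdate}, the translation of probabilities from $A$ to $C$ is the defining property of a functional relation (definition~\ref{def: functionalRelation}), and the reduction of the conditional probability to a marginal on $\sigma(C)$ is postulate~\ref{post: selfCompatibility}. Since nondegeneracy of $C$ guarantees $P_{\emptyset}^{C}(\{i\})\neq 0$ for every $i \in \sigma(C)$, no case distinction or subtlety regarding the convention $0/0 = 0$ is needed.
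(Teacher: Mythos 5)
Your proof is correct and follows essentially the same route as the paper: the paper establishes the lemma via the computation in equation~\ref{eq: prePureState}, which is exactly your reduction of $P_{\emptyset_{F_i}}(\af;A)$ to the conditional probability $P_{\emptyset}^{C}(f^{-1}(\af)\vert\{i\})$ using the functional relation and postulate~\ref{post: selfCompatibility}. Your explicit handling of the $i \notin f^{-1}(\af)$ case matches the paper's remark that $P_{\emptyset_{F_i}}(\af;A)=0$ there, so nothing is missing.
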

Since nondegenerate fine grainings always exist (postulate~\ref{post: observables}), we have the following corollary.
\begin{corollary}\label{cor: eigenstates} Let $A$ be an observable, and let $k_{\af}$ the the trace of the projection associated with the eigenvalue $\af$ of $A$. Then there is a (not necessarily unique) set $\{\emptyset_{F_{1}},\dots,\emptyset_{F_{k_{\af}}}\}$   of $k_{\af}$ pairwise orthogonal eigenstates  of $A$ corresponding to $\af$, where by pairwise orthogonal we mean that the projections $F_{1},\dots,F_{k_{\af}}$ are pairwise orthogonal.    
\end{corollary}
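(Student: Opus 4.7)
The plan is to build the required set of eigenstates directly from a nondegenerate fine-graining of $A$, using the previous lemma as the main tool. First, I would invoke Postulate~\ref{post: observables} to obtain a nondegenerate observable $C$ and a function $f:\sigma(C)\ri\sigma(A)$ with $A=f(C)$. The projection associated with the eigenvalue $\af$ of $A$ is then $E_{\af}=\chi_{\{\af\}}(A)=\chi_{\{\af\}}(f(C))=\chi_{f^{-1}(\af)}(C)$ by Lemma~\ref{lemma: KSdefinitionProjections}. Because $C$ is nondegenerate, Definition~\ref{def: traceProjection} gives immediately $k_{\af}\equiv\Tra{E_{\af}}=\vert f^{-1}(\af)\vert$, so the preimage $f^{-1}(\af)$ has exactly $k_{\af}$ elements.

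Next, I would enumerate $f^{-1}(\af)=\{i_{1},\dots,i_{k_{\af}}\}$ with pairwise distinct entries, and for each $j=1,\dots,k_{\af}$ set $F_{j}\doteq\chi_{\{i_{j}\}}(C)$. Since $C$ is nondegenerate, each $F_{j}$ is a rank-$1$ projection (it is the projection associated with a single eigenvalue $i_{j}$ of the nondegenerate observable $C$, as discussed just before Proposition~\ref{prop: nondegenerateDecomposition}). Because the singletons $\{i_{j}\}$ are pairwise disjoint, we have $F_{j}\circ F_{\ell}=\chi_{\{i_{j}\}\cap\{i_{\ell}\}}(C)=\chi_{\oldemptyset}(C)=0$ whenever $j\neq\ell$, so the projections $F_{1},\dots,F_{k_{\af}}$ are pairwise orthogonal in the sense of Definition~\ref{def: orthononality}.

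Finally, since each $i_{j}$ lies in $f^{-1}(\af)$, Lemma~\ref{lemma: eigenstates} tells us that the pure state $\emptyset_{F_{j}}$ is an eigenstate of $A$ corresponding to $\af$, for every $j$. Thus $\{\emptyset_{F_{1}},\dots,\emptyset_{F_{k_{\af}}}\}$ is the desired set of $k_{\af}$ pairwise orthogonal eigenstates. No step presents a genuine obstacle here: the corollary is essentially a packaging of Lemma~\ref{lemma: eigenstates}, with the only point requiring care being the identification of $k_{\af}$ with the cardinality of $f^{-1}(\af)$, which is guaranteed by the choice of a \emph{nondegenerate} fine graining.
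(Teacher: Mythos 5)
Your proof is correct and follows the paper's own route exactly: the paper derives this corollary from Lemma~\ref{lemma: eigenstates} together with the existence of nondegenerate fine-grainings (Postulate~\ref{post: observables}), using the same projections $F_{j}=\chi_{\{i_{j}\}}(C)$ for $i_{j}\in f^{-1}(\af)$ and the same identification $k_{\af}=\vert f^{-1}(\af)\vert$. Nothing further is needed.
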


The following result immediately follows from definition~\ref{def: eigenstate} and lemma~\ref{lemma: statisticalEquivalence}.
\begin{lemma}
    Let $A,B$ be observables satisfying $\chi_{\{\af\}}(A)=\chi_{\{\beta\}}(B)$ for some pair $(\af,\beta) \in \sigma(A)\times \sigma(B)$, and let $F$ be any rank-$1$ projection. Then $\emptyset_{F}$ is an eigenstate of $A$ corresponding to $\af$ if and only if $\emptyset_{F}$ is an eigenstate of $B$ corresponding to $\beta$.
\end{lemma}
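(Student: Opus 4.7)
The plan is to chain together the definition of eigenstate with the statistical-equivalence criterion for projections established earlier. By Definition~\ref{def: eigenstate}, the statement that $\emptyset_{F}$ is an eigenstate of $A$ corresponding to $\af$ reduces to the single numerical condition $P_{\emptyset_{F}}(\af;A)=1$, and analogously ``$\emptyset_{F}$ is an eigenstate of $B$ corresponding to $\beta$'' reduces to $P_{\emptyset_{F}}(\beta;B)=1$. So the lemma collapses to showing that these two probabilities coincide.

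To obtain that equality I would invoke Lemma~\ref{lemma: statisticalEquivalence}, which says that two observable events are statistically equivalent (equiprobable in every state) if and only if they are associated with the same projection. The hypothesis $\chi_{\{\af\}}(A)=\chi_{\{\beta\}}(B)$ is precisely the condition that the events $(\af;A)$ and $(\beta;B)$ share a projection, so $P_{\rho}(\af;A)=P_{\rho}(\beta;B)$ for every state $\rho$. Specializing $\rho$ to the pure state $\emptyset_{F}$ (which is a legitimate state by Definition~\ref{def: projectiveState}, since $F$ is a rank-$1$ projection) yields $P_{\emptyset_{F}}(\af;A)=P_{\emptyset_{F}}(\beta;B)$, and the biconditional in the lemma follows immediately.

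I do not expect any serious obstacle; the lemma is essentially a transcription of Lemma~\ref{lemma: statisticalEquivalence} through the notational layer provided by Definition~\ref{def: eigenstate}. The only point that merits explicit mention is that Lemma~\ref{lemma: statisticalEquivalence} is stated for general observable events and all states, so no additional compatibility or cone-theoretic machinery is invoked — the argument is purely at the level of expectations assigned to a projection.
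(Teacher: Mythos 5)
Your argument is correct and matches the paper exactly: the paper states that this lemma follows immediately from Definition~\ref{def: eigenstate} and Lemma~\ref{lemma: statisticalEquivalence}, which are precisely the two ingredients you chain together. Nothing further is needed.
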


Let $A$ be an observable, and let $C$ be a nondegenerate observable such that $A=f(C)$ for some function $f$. For each $i \in \sigma(C)$, let $\emptyset_{F_{i}}$ be the pure state associated with the projection $F_{i} \equiv \chi_{\{i\}}(C)$. The reason why  $\emptyset_{F_{i}}$ determines the outcome of $A$ is clear: for any $\af \in \sigma(A)$, the event $(\af;A)$ is statistically equivalent to the event $(f^{-1}(\af);C)$, which means that $P_{\rho}(\af;A)=P_{\rho}(f^{-1}(\af);C)$ for every state $\rho$, so, thanks to postulate~\ref{post: selfCompatibility}, the information encoded in the pure state $\emptyset_{F_{i}} = T_{(i;C)}(\emptyset)$ enables us to conclude, with certainty, that a measurement of $A$ will return the outcome $\af_{i} \in \sigma(A)$, where $\af_{i}$ denotes the unique eigenvalue $\af$ of $A$ for which $i \in f^{-1}(\af)$. Hence, we can predict with certainty the outcome of $A$ for a system in the state $\emptyset_{F_{i}}$ essentially because there is a cone $F_{i} \xleftarrow{\chi_{\{i\}}} C \xrightarrow{f} A$ for $F_{i}$ and $A$, which, thanks to postulate~\ref{post: selfCompatibility}, transforms the probability $P_{\emptyset_{F_{i}}}(\af;A)$ into the probability probability of the event $ \Gamma_{\af}\subset \sigma(C)$ occurring, under the evidence that the event $\{i\} \subset \sigma(C)$ has occur, in the probability space $(\sigma(C),\mathcal{P}(\sigma(C)),P^{C}_{\emptyset})$, as equation \ref{eq: prePureState} shows. In short, the cone  $F_{i} \xleftarrow{\chi_{\{i\}}} C \xrightarrow{f} A$ provides the informational link between the state $\emptyset_{F_{i}}$ and the observable event $(\af;A)$, allowing us to think about them in purely statistical terms. If a rank-$1$ projection $F$ does not commute with the observable $A$, i.e., if there is no cone for $F$ and $A$, then there is no spectrum  where the events $(1,F)$ and $(\af;A)$, $\af \in \sigma(A)$, can be embedded, and consequently, just by knowing that the state of the system is $\emptyset_{F}$ (equivalently, just by knowing that the event $(1,F)$ has occurred), we cannot predict with certainty the outcome of a measurement of $A$. It suggests that the first part of postulate~\ref{post: eigenstates} is a reasonable condition. 

An eigenstate of $A$ enables us to predict with certainty  the outcome of a measurement of $A$, but not all states satisfying this property are eigenstates of $A$. In fact, consider again a nondegenerate fine graining $C$ of $A$, where $A=f(C)$. For each $i \in \sigma(C)$, let $\emptyset_{F_{i}}$ be the pure state determined by the projection $F_{i} \equiv \chi_{\{i\}}(C)$, and  define, for any $\Gamma \subset \sigma(C)$,
\begin{align}
    \emptyset_{\Gamma} \doteq\label{eq: gammaConvex} \sum_{i \in \Gamma} \frac{1}{\Tra{F_{\Gamma}}}\emptyset_{F_{i}},
\end{align}
where $F_{\Gamma} \equiv \chi_{\Gamma}(C)$ and, consequently, $\Tra{F_{\Gamma}} = \vert \Gamma \vert$. Now let $\af$ be an eigenvalue of $\af$, and define $\Gamma_{\af} \doteq f^{-1}(\af)$. We saw in section~\ref{sec: BornRule} that, for any $\Gamma \subset \sigma(C)$, we have $\emptyset_{\Gamma} = T_{(\Gamma;C)}(\emptyset)$, so, given any $\af \in \sigma(A)$ and any $\Gamma \subset \Gamma_{\af}$, we obtain
\begin{align}
    P_{\emptyset_{\Gamma}}(\af;A) = P_{\emptyset}(\Gamma_{\af}\vert \Gamma) = 1.
\end{align}
The state $\emptyset_{G_{\Sigma}}$ enables us to predict with certainty the outcome of a measurement of $A$ for precisely the same reason the eigenstate $\emptyset_{F_{i}}$ enables us to do it: thanks to  postulate~\ref{post: selfCompatibility} and to the statistical equivalence between $(\af;A)$ and $(\Gamma_{\af};C)$, information encoded in the state $\emptyset_{\Gamma}$ ensures the occurrence of the event $(\af;A)$ in an eventual measurement of $A$. More specifically, as we discussed in section~\ref{sec: observableEvents}, knowing that the state of the system is $\emptyset_{\Gamma} =T_{(\Gamma;C)}(\emptyset)$ consists in knowing that the state is $\emptyset_{F_{i}}$, $i \in \Gamma$, with probability $\frac{1}{\vert \Gamma \vert}$. Since $\Gamma \subset \Gamma_{\af}$, all pure states $\emptyset_{F_{i}}$, $i \in \Gamma$, ensures the occurrence of $(\af;A)$ in a measurement of $A$, so knowing that the state is $\emptyset_{\Gamma} =T_{(\Gamma;C)}(\emptyset)$ is sufficient to assert, with certainty, that a measurement of $A$ will return the outcome $\af$. Now, consider the particular case where $A$ is nondegenerate. We know that, in this case, if $A=f(C)$ for some observable $C$ and some function $f$, then $A$ and $C$ are isomorphic objects in the category of observables, which means that the function $f$ is an isomorphism between $\sigma(C)$ and $\sigma(A)$. For each $i \in \sigma(C)$, write $\af_{i} \equiv f(i)$. According to lemma~\ref{lemma: KSdefinitionProjections}, for all $i \in \sigma(C)$ we have $E_{\af_{i}} \equiv \chi_{\{\af_{i}\}}(A) = \chi_{i}(C) \equiv F_{i}$, thus the eigenstate of $A$ defined by the projection $F_{i}$ coincides with the pure state $\emptyset_{E_{\af_{i}}}$. Hence, given any observable $D$ such that $A=g(D)$ for some function $g$, and any eigenvalue $\af \in \sigma(A)$, there is only one eigenstate corresponding to $\af$ defined by  $D$, which is the pure state $\emptyset_{E_{\af}}$, where $E_{\af} \equiv \chi_{\{\af\}}(A)$. Put differently, in order to be able to predict, with certainty, the outcome of a measurement of $A$, it is sufficient to prepare a (not necessarily pure)  state using an observable $C$ such that $A=f(C)$ and an event $(\Gamma;C)$ whose occurrence implies the occurrence of $(f^{-1}(\af);C)$ for some $\af \in \sigma(A)$, by which we mean that $\Gamma \subset \Gamma_{\af}$. If $A$ is nondegenerate, then $f^{-1}(\af)$ is  a singleton, so there is no room for considering convex combinations, as in equation~\ref{eq: gammaConvex}, or projective events that are not pure. Since we interpret states as epistemological entities, the second part of postulate~\ref{post: eigenstates} is also reasonable.

\begin{postulate}[Eigenstates]\label{post: eigenstates} If $\emptyset_{F}$ is an eigenstate of $A$, then $F$ and $A$ are compatible. Put differently, if a pure state $\emptyset_{F}$ enables us to predict with certainty the outcome of a measurement of $A$, then $F$ and $A$ are compatible. Furthermore, if $A$ is nondegenerate and $\rho \in \mathcal{S}$ satisfies $P_{\rho}(\af;A)=1$ for some $\af$, then $\rho$ is a pure state. It means that only pure states enable us to predict outcomes of measurements of nondegenerate observables with certainty.
\end{postulate}

Roughly speaking, postulate~\ref{post: eigenstates} says that the capacity to predict with certainty the outcome of some measurement is always a matter of gathering the right amount of information. 

The following lemma shows that there is a one-to-one correspondence between rank-$1$ projections and pure states. 
\begin{lemma}\label{lemma: rank1AndPureStates} Two rank-$1$ projections $E,F$ satisfy $\emptyset_{E}=\emptyset_{F}$ if and only if they are equal. 
\end{lemma}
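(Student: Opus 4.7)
The forward direction is immediate from the definition, so the task is to show that $\emptyset_E = \emptyset_F$ forces $E = F$ when both are rank-$1$ projections.

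My first step is to observe that $\emptyset_E$ is an eigenstate of $E$ in the sense of definition~\ref{def: eigenstate}, with eigenvalue $1$. Indeed, by postulate~\ref{post: selfCompatibility}, $T_{(1;E)}\circ T_{(1;E)} = T_{(1;E)}$, so $T_E(\emptyset_E) = \emptyset_E$ and hence $\langle E\rangle_{\emptyset_E} = p_{\emptyset_E}(1;E) = 1$. Assuming $\emptyset_E = \emptyset_F$, the same computation (applied to $F$) yields $\langle F\rangle_{\emptyset_E} = \langle F\rangle_{\emptyset_F} = 1$, so $\emptyset_E$ is simultaneously an eigenstate of $F$. By postulate~\ref{post: eigenstates}, $E$ and $F$ must be compatible; this is the crucial ingredient, and is the only place the new postulate enters the argument.

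Once compatibility is secured, I can form the projection $E\circ F$ (definition~\ref{def: productProjections}) and pick, via corollary~\ref{cor: nondegenerateSpecker}, a nondegenerate observable $C$ with $E=\chi_{\Delta}(C)$ and $F=\chi_{\Sigma}(C)$, so that $E\circ F=\chi_{\Delta\cap\Sigma}(C)$. Since $E$ and $F$ have rank $1$, definition~\ref{def: traceProjection} gives $|\Delta|=|\Sigma|=1$. Using $T_E(\emptyset_E)=\emptyset_E$ and definition~\ref{def: sequentialProbability}, I compute
\begin{align*}
   \langle E\circ F\rangle_{\emptyset_E}
   \;=\; P_{\emptyset_E}(1,1;E,F)
   \;=\; p_{\emptyset_E}(1;E)\,P_{T_E(\emptyset_E)}(1;F)
   \;=\; \langle F\rangle_{\emptyset_E}
   \;=\; 1,
\end{align*}
which, by lemma~\ref{lemma: separatingProjections}, rules out $E\circ F=0$. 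Hence $|\Delta\cap\Sigma|\ge 1$, which combined with $|\Delta|=|\Sigma|=1$ forces $\Delta\cap\Sigma=\Delta=\Sigma$, giving $E\circ F = E = F$.

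I do not expect a serious obstacle here: the whole argument rests on (i) the informational content of a rank-$1$ projective update (postulate~\ref{post: selfCompatibility}), (ii) the implication that $\emptyset_E=\emptyset_F$ turns $\emptyset_E$ into an eigenstate of $F$, and (iii) postulate~\ref{post: eigenstates} translating this into compatibility. The only subtlety is keeping the roles of ``projection as observable'' and ``projection as event'' straight when invoking lemma~\ref{lemma: sequentialExpectation} and definition~\ref{def: productProjections}; once $E\leftrightarrow F$ is established, the rank-$1$ hypothesis does essentially all the remaining work.
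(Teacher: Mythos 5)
Your proof is correct and follows essentially the same route as the paper's: both arguments use postulate~\ref{post: eigenstates} to upgrade the certainty prediction coming from $\emptyset_{E}=\emptyset_{F}$ into compatibility of the relevant projections, and then let the rank-$1$ hypothesis force equality. The only minor difference is that the paper applies the postulate to a nondegenerate fine-graining $A$ of $E$ and concludes $F=\chi_{\{\alpha\}}(A)=E$ directly, whereas you apply it to $F$ itself and finish via $E\circ F\neq 0$; both routes are sound.
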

\begin{proof}
     We already know that $E=F$ implies $\emptyset_{E}=\emptyset_{F}$, thus we just need to show that $E=F$ follows from $\emptyset_{E}=\emptyset_{F}$. So let $E,F$ be rank-$1$ projections satisfying $\emptyset_{E}=\emptyset_{F}$, and let $A$ be any nondegenerate observable such that $E=\chi_{\{\af\}}(C)$ for some $\af \in \sigma(A)$. Then $P_{\emptyset_{F}}(\af;A)=P_{\emptyset_{E}}(\af;A)=P^{A}_{\emptyset}(\{\af\}\vert \{\af\})=1$. According to postulate~\ref{post: eigenstates}, it implies that $F$ and $C$ commutes, and since $P_{\emptyset_{F}}(\af;A)=1$, we have $F=\chi_{\{\af\}}(A)=E$. 
\end{proof}

From now on, we will usually denote the expectation defined by a pure state $\emptyset_{E}$ by $\langle \ \cdot \ \rangle_{E}$ instead of $\langle \ \cdot \ \rangle_{\emptyset_{E}}$. Lemma~\ref{lemma: rank1AndPureStates} ensures that there is no ambiguity in doing so.

\begin{lemma}\label{lemma: uniquenessEigenvalue} Let $A$ be an observable such that $\Tra{E_{\af}} = 1$ for some $\af \in \sigma(A)$, where $E_{\af} \equiv \chi_{\{\af\}}(A)$. Let $\af$ be an eigenvalue of $A$ satisfying $\Tra{E_{\af}}=1$, and let $\rho$ be a state for which $P_{\rho}(\af;A) =1$. Then $\rho$ is the pure state $\emptyset_{E_{\af}}$   
\end{lemma}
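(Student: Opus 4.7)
The plan is to reduce the situation to a nondegenerate observable, where postulate~\ref{post: eigenstates} applies directly. First, using postulate~\ref{post: observables}, write $A = f(C)$ for some nondegenerate observable $C$ and function $f$. By lemma~\ref{lemma: KSdefinitionProjections}, the projection associated with $\af$ satisfies $E_{\af} = \chi_{\{\af\}}(A) = \chi_{f^{-1}(\af)}(C)$. Since $C$ is nondegenerate, $\Tra{E_{\af}} = |f^{-1}(\af)|$, and by hypothesis this equals $1$, so $f^{-1}(\af) = \{c\}$ for a unique $c \in \sigma(C)$. Hence $E_{\af} = \chi_{\{c\}}(C)$, and consequently $\emptyset_{E_{\af}} = T_{E_{\af}}(\emptyset) = T_{(c;C)}(\emptyset)$.

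Next, translate the hypothesis on $\rho$ to $C$. Since $(\af;A)$ and $(\{c\};C)$ are statistically equivalent (they correspond to the same projection), we have $P_{\rho}(c;C) = P_{\rho}(\af;A) = 1$. Because $C$ is nondegenerate, the second part of postulate~\ref{post: eigenstates} applies and forces $\rho$ to be a pure state, say $\rho = \emptyset_{F}$ for some rank-$1$ projection $F$. In particular, $\emptyset_{F}$ is an eigenstate of $C$ corresponding to the eigenvalue $c$.

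The key step is now to show $F = E_{\af}$. The first part of postulate~\ref{post: eigenstates} guarantees that $F$ is compatible with $C$. Using Specker's principle (corollary~\ref{cor: specker}) together with postulate~\ref{post: observables}, there is a nondegenerate observable $D$ and functions $h_{1}, h_{2}$ with $F = h_{1}(D)$ and $C = h_{2}(D)$; since both $C$ and $D$ have spectra of cardinality $n$ and $h_{2}$ is surjective (lemma~\ref{lemma: functionalRelation}), $h_{2}$ is a bijection, i.e.\ an isomorphism in the category of observables. Composing $h_{1}$ with $h_{2}^{-1}$ gives an arrow $C \to F$, so $F = g(C)$ for some $g$. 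By lemma~\ref{lemma: projection}, $g = \chi_{\Sigma}$ for some $\Sigma \subset \sigma(C)$, and the rank condition $\Tra{F} = 1$ together with the nondegeneracy of $C$ forces $|\Sigma| = 1$, say $\Sigma = \{c'\}$. Thus $F = \chi_{\{c'\}}(C)$ and $\emptyset_{F} = T_{(c';C)}(\emptyset)$.

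Finally, applying postulate~\ref{post: selfCompatibility} yields
\begin{align*}
1 = P_{\emptyset_{F}}(c;C) = P_{T_{(c';C)}(\emptyset)}(c;C) = P_{\emptyset}^{C}(\{c\}\vert \{c'\}) = \delta_{c,c'},
\end{align*}
so $c' = c$, whence $F = \chi_{\{c\}}(C) = E_{\af}$ and $\rho = \emptyset_{E_{\af}}$. The only delicate point is the third paragraph — showing that a rank-$1$ projection compatible with a nondegenerate observable $C$ is necessarily of the form $\chi_{\{c'\}}(C)$ — and the argument above reduces this to Specker's principle plus the fact that surjections between equinumerous finite sets are bijections.
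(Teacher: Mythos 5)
Your proof is correct and follows essentially the same route as the paper's: pass to a nondegenerate observable $C$ with $E_{\af}=\chi_{\{c\}}(C)$, use statistical equivalence to transfer the hypothesis to $C$, invoke postulate~\ref{post: eigenstates} to get purity and compatibility of the resulting rank-$1$ projection with $C$, and conclude. Your third paragraph merely spells out in more detail a step the paper leaves implicit (that a rank-$1$ projection compatible with a nondegenerate observable must be $\chi_{\{c'\}}(C)$), which is a welcome clarification but not a different argument.
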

\begin{proof}
    Let $A$ be an observable, and suppose that, for some $\af \in \sigma(A)$, we have $\Tra{E_{\af}}=1$. Fix an eigenvalue $\af_{0} \in \sigma(A)$ satisfying this condition, and let $\rho_{0}$ be a state for which $P_{\rho_{0}}(\af_{0};A) =1$. Since $\Tra{E_{\af_{0}}}=1$, there is a nondegenerate observable $C$ such that $E_{\af_{0}} = \chi_{\{i\}}(C)$ for some $i \in \sigma(C)$, and consequently the events $(\af_{0};A)$ and $(i;C)$ are statistically equivalent. Therefore,
    \begin{align*}
        P_{\rho_{0}}(i;C) = P_{\rho_{0}}(\af;A) = 1.
    \end{align*}
    According to postulate~\ref{post: eigenstates}, it implies that $\rho_{0}$ is a pure state, i.e., that $\rho_{0}=T_{G}(\emptyset) \equiv \emptyset_{G}$ for some rank-$1$ projection $G$, whereas definition \ref{def: eigenstate} ensures that $\emptyset_{G}$ is an eigenstate of $C$. Postulate~\ref{post: eigenstates} again implies that $G$ and $C$ are compatible, which in turn implies that $G=\chi_{\{i\}}(C)$, and therefore $\rho_{0}=T_{G}(\emptyset)=T_{\chi_{\{i\}}(C)}(\emptyset) =T_{E_{\af_{0}}}(\emptyset) =\emptyset_{E_{\af_{0}}}$. 
\end{proof}

\begin{corollary}\label{cor: pureStateUpdate} Let $E$ be a rank-$1$ projection, and let $\emptyset_{E}$ be the pure state associated with it. Then, for any state $\rho$, $T_{E}(\rho)=\emptyset_{E}$. 
\end{corollary}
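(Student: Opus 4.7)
The plan is to apply Lemma~\ref{lemma: uniquenessEigenvalue} directly, taking $A := E$ and $\af := 1$. Two hypotheses of that lemma then need to be verified: (i) that $\Tra{\chi_{\{1\}}(E)} = 1$, and (ii) that the candidate state $T_{E}(\rho)$ satisfies $P_{T_{E}(\rho)}(1;E) = 1$. Once both are in place, the lemma delivers $T_{E}(\rho) = \emptyset_{\chi_{\{1\}}(E)}$, which I will identify with $\emptyset_{E}$.

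For (i), since $E$ is itself a projection, the projection associated with the eigenvalue $1$ of $E$ is $E$ itself. When $0 \neq E \neq \mathds{1}$ this is immediate from the spectral decomposition $E = 1\cdot E + 0\cdot E^{\perp}$ combined with the uniqueness statement in Theorem~\ref{thm: spectralTheorem}; alternatively, $\sigma(E) \subseteq \{0,1\}$ (Proposition~\ref{prop: projectionAndSquare}) so $\chi_{\{1\}}$ restricted to $\sigma(E)$ is the identity, and Proposition~\ref{prop: functionalCalculus} gives $\chi_{\{1\}}(E) = E$. The degenerate cases $E=\mathds{1}$ (which forces $n=1$ by rank-$1$) and the spectrum reduction to $\{1\}$ are handled the same way. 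Hence $\Tra{\chi_{\{1\}}(E)} = \Tra{E} = 1$ by the rank-$1$ hypothesis.

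For (ii), I invoke self-compatibility. By Lemma~\ref{lemma: selfCompatibility},
\begin{align*}
    T_{E} \circ T_{E} \;=\; T_{(1;E)} \circ T_{(1;E)} \;=\; T_{(\{1\}\cap\{1\};E)} \;=\; T_{E},
\end{align*}
so applying $T_{E}$ to $T_{E}(\rho)$ leaves it unchanged. Unwinding the definitions,
\begin{align*}
    P_{T_{E}(\rho)}^{E}(\{1\}) \;=\; P_{\rho}^{E}(\{1\}\,\vert\,\{1\}) \;=\; \frac{P_{\rho}^{E}(\{1\})}{P_{\rho}^{E}(\{1\})} \;=\; 1
\end{align*}
whenever $P_{\rho}^{E}(\{1\}) \neq 0$.

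With both hypotheses in hand, Lemma~\ref{lemma: uniquenessEigenvalue} yields $T_{E}(\rho) = \emptyset_{\chi_{\{1\}}(E)} = \emptyset_{E}$, which is the claim. There is no serious obstacle here; this really is a one-line corollary, and the work is in recognizing that $E$ itself plays the role of ``$A$'' in Lemma~\ref{lemma: uniquenessEigenvalue}. The only mild subtlety is the edge case $P_{\rho}(1;E) = 0$, where the $0/0$ convention forces $T_{E}(\rho) = 0$ (the null state) rather than $\emptyset_{E}$; this is presumably excluded implicitly by the statement ``for any state $\rho$'', and the argument above covers every $\rho$ for which $T_{E}(\rho)$ is non-null.
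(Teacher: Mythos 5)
Your proof is correct and follows essentially the same route as the paper: use self-compatibility to show that the updated state assigns probability $1$ to the relevant eigenvalue, then invoke Lemma~\ref{lemma: uniquenessEigenvalue}; the only cosmetic difference is that you apply that lemma directly with $A:=E$ and $\af:=1$, whereas the paper first writes $E=\chi_{\{\af\}}(A)$ for a nondegenerate $A$ (a step the lemma performs internally anyway). The $P_{\rho}(1;E)=0$ edge case you flag is present in the paper's own proof as well, so it is not a gap you introduced.
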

\begin{proof}
    Let $E$ be a rank-$1$ projection, and let $A$ be any nondegenerate observable such that $E=\chi_{\{\af\}}(A)$ for some $\af \in \sigma(A)$ (which exists, according to postulate~\ref{post: observables} and lemma~\ref{lemma: projection}). For any state $\rho$, we have
    \begin{align*}
        P_{T_{E}(\rho)}(\af;A)&=P_{T_{(\af;A)}(\rho)}(\af;A)= P_{\rho}^{A}(\{\af\}\vert \{\af\}) = 1.
    \end{align*}
    It follows from lemma~\ref{lemma: uniquenessEigenvalue} that $T_{E}(\rho)=\emptyset_{E}$.
\end{proof}

Corollary~\ref{cor: pureStateUpdate} enables us to talk about transition probabilities between pure states. In fact, let $\emptyset_{E}$, $\emptyset_{F}$ be pure states, and let $A$ be an observable satisfying $\chi_{\{\af_{0}\}}(A)=F$ for some $\af_{0} \in \sigma(A)$. If we prepare the state $\emptyset_{E}$, measure $A$ and obtain $\af_{0}$, we end up with the state $T_{(\af_{0};A)}(\emptyset_{E})$, which, according to corollary~\ref{cor: pureStateUpdate}, corresponds to $\emptyset_{F}$, so we transition from the state $\emptyset_{E}$ to the state $\emptyset_{F}$. The probability of this transition consists in the probability of obtaining $\af_{0}$ in a measurement of $A$ for a system in the state $\emptyset_{F}$, which is given by $P_{\emptyset_{E}}(\af_{0};A)  = \langle F \rangle_{E}$. This probability depends only on the initial and final states $\emptyset_{E}$, $\emptyset_{F}$, or equivalently on the projections $E$ and $F$, and not on the measurement $A$. We can thus introduce the following definition.

\begin{definition}[Transition probability]\label{def: transitionProbability} Let $E,F$ be rank-$1$ projections. We define the ``probability of transitioning from $E$ to $F$'', denoted $P(E \rightarrow F)$, by
\begin{align}
    P(E \rightarrow F) \doteq \langle F \rangle_{E}.
\end{align}
This probability corresponds to the probability of transitioning from the pure state $\emptyset_{E}$ to the pure state $\emptyset_{F}$ by preparing  $\emptyset_{E}$ and performing a measurement $A$ such that $F=\chi_{\{\af\}}(A)$ for some $\af \in \sigma(A)$.
\end{definition}

Note that
\begin{align*}
    P(E \rightarrow F) = \langle F \rangle_{E} = P_{T_{(1,E)}(\emptyset)}(1,F) = \frac{P_{\emptyset}(1,1,E,F)}{P_{\emptyset}(1,E)} = n P_{\emptyset}(1,1,E,F),
\end{align*}
where $n \equiv \text{dim}(\mathfrak{S})$. 

Equality and orthogonality between rank-$1$ projections can by characterized via transition probabilities:

\begin{proposition}\label{prop: orthogonalityTransition}
Let $E,F$ be rank-$1$ projections. Then the following conditions are satisfied.
\begin{itemize}
    \item[(a)] $E$ and $F$ are orthogonal if and only if $P(E \rightarrow F) = 0$.
    \item[(b)] $E$ and $F$ are equal if and only if $P(E \rightarrow F) = 1$.
\end{itemize}
\end{proposition}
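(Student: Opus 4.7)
The plan is to handle (b) first and then (a), because (b) falls out almost directly from earlier lemmas while (a) will require invoking postulate~\ref{post: eigenstates} for the harder direction. For the ``only if'' part of (b), I would compute $\langle E \rangle_{E}$ directly: since $\emptyset_{E} = T_{E}(\emptyset)$, lemma~\ref{lemma: selfCompatibility} (or, more explicitly, corollary~\ref{cor: pureStateUpdate} combined with self-compatibility) gives $P_{\emptyset_{E}}(1;E) = 1$, hence $P(E\rightarrow E) = 1$. For the ``if'' direction, suppose $P(E \rightarrow F) = 1$, i.e.\ $\langle F \rangle_{E} = 1$. Since $F$ is rank-$1$, the projection $\chi_{\{1\}}(F) = F$ itself has trace $1$, so lemma~\ref{lemma: uniquenessEigenvalue} applied to $A=F$ and $\af=1$ forces $\emptyset_{E} = \emptyset_{F}$, and lemma~\ref{lemma: rank1AndPureStates} then yields $E=F$.

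For the ``if'' direction of (a), assume $E \perp F$. Then $E$ and $F$ are compatible with $E \circ F = 0$, so by lemma~\ref{lemma: projectionJointProbability} I can write
\begin{align*}
    P_{\emptyset}(1,1;E,F) = \langle E \circ F \rangle_{\emptyset} = 0.
\end{align*}
On the other hand, expanding this joint probability in the sequential way and using $\Tra{E}=1$, I get $P_{\emptyset}(1,1;E,F) = \frac{1}{n}\langle F \rangle_{E} = \frac{1}{n} P(E\rightarrow F)$, so $P(E\rightarrow F)=0$.

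The more delicate direction is the ``only if'' of (a): here I first need to recover compatibility before talking about $E \circ F$. Assume $P(E\rightarrow F) = \langle F \rangle_{E} = 0$. By lemma~\ref{lemma: orthocomplement}(g), $\langle F^{\perp}\rangle_{E} = 1$, so $P_{\emptyset_{E}}(1;F^{\perp}) = 1$, which exhibits $\emptyset_{E}$ as an eigenstate of $F^{\perp}$ corresponding to the eigenvalue $1$. Postulate~\ref{post: eigenstates} then forces $E$ to be compatible with $F^{\perp}$; since $F = \mathds{1} - F^{\perp}$, $E$ is also compatible with $F$. Once compatibility is in hand, the chain of equalities from the previous paragraph runs in reverse: $\langle E \circ F \rangle_{\emptyset} = \frac{1}{n}\langle F \rangle_{E} = 0$, so the rank-one-or-zero projection $E \circ F$ has trace zero, i.e.\ $E \circ F = 0$.

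The main obstacle is precisely this last direction: without postulate~\ref{post: eigenstates} there is no route from the single numerical datum $\langle F \rangle_{E} = 0$ to the structural conclusion $E \leftrightarrow F$, and the clean way to invoke it is by reframing the hypothesis as an eigenstate statement for $F^{\perp}$ (rather than for $F$ with eigenvalue $0$, which is also valid but matches the postulate's wording less cleanly). Everything else is a bookkeeping calculation against lemma~\ref{lemma: projectionJointProbability} and definition~\ref{def: transitionProbability}.
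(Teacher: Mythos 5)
Your proof is correct and follows essentially the same route as the paper: part (b) via lemma~\ref{lemma: uniquenessEigenvalue} and lemma~\ref{lemma: rank1AndPureStates}, the easy direction of (a) via lemma~\ref{lemma: projectionJointProbability}, and the hard direction of (a) by first recovering compatibility from postulate~\ref{post: eigenstates}. The only cosmetic difference is that you pass through $F^{\perp}$ with eigenvalue $1$ where the paper reads $P_{\emptyset_{E}}(0;F)=1$ directly as exhibiting $\emptyset_{E}$ as an eigenstate of $F$ for the eigenvalue $0$; both are equally valid.
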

\begin{proof}
Let $E,F$ be rank-$1$ projections. We have $P(E \rightarrow F)=P_{\emptyset_{E}}(1,F)$, so, according to lemma~\ref{lemma: uniquenessEigenvalue}, $P(E \rightarrow F)=1$ if and only if $\emptyset_{E}=\emptyset_{\chi_{\{1\}}(F)}=\emptyset_{F}$, which in turn, thanks to lemma~\ref{lemma: rank1AndPureStates}, is equivalent to saying that $E=F$. It proves item $(a)$. Now let $E,F$ be orthogonal projections. According to lemma~\ref{lemma: projectionJointProbability},
\begin{align*}
    P(E\rightarrow F) = P_{\emptyset_{E}}(1;F) = \frac{P_{\emptyset}(1,1;E,F)}{P_{\emptyset}(1;E)} = \frac{\langle EF\rangle_{\emptyset}}{\langle E \rangle_{\emptyset}} = 0.
\end{align*}
Finally, let $E,F$ be rank-$1$ projections satisfying $P(E \rightarrow F)=0$. Then $P_{\emptyset_{E}}(0;F) = 1-P_{\emptyset_{E}}(1;F) = 1-\langle F \rangle_{E} = 1$, which implies that $\emptyset_{E}$ is an eigenstate of $F$, and therefore, according to postulate~\ref{post: eigenstates}, $E$ and $F$ are compatible. Hence,
\begin{align*}
    0=P_{\emptyset_{E}}(1;F)=\frac{\langle EF\rangle_{\emptyset}}{\langle E \rangle_{\emptyset}},
\end{align*}
which implies $EF=0$, i.e., $E \perp F$. 
\end{proof}

Note that, thanks to corollary~\ref{cor: pureStateUpdate}, it makes sense to assign probabilities for  sequences of transitions like $E_{1}\rightarrow E_{2} \rightarrow \dots \rightarrow E_{m}$ by defining
\begin{align*}
    P(E_{1}\rightarrow E_{2} \rightarrow \dots \rightarrow E_{m}) \doteq \prod_{i=1}^{m-1}P(E_{i} \rightarrow E_{i+1}).
\end{align*}
In fact, let $E_{1},\dots,E_{m}$ be rank-$1$ projections. For each $i\in \{1,\dots,m\}$, define $\rho_{i} \doteq (T_{E_{m}}\circ \dots \circ T_{E_{1}})(\emptyset)$. Since $E_{i}$ is rank-$1$ for all $i$, corollary~\ref{cor: pureStateUpdate} implies that $\rho_{i} = \emptyset_{E_{i}}$, and therefore, according to definition~\ref{def: sequentialMeasure},
\begin{align*}
    P(E_{1}\rightarrow E_{2} \rightarrow &\dots \rightarrow E_{m}) \doteq \prod_{i=1}^{m-1}P(E_{i} \rightarrow E_{i+1}) = \prod_{i=1}^{m} P_{\emptyset_{E_{i}}}(1,E_{i+1}) = \prod_{i=1}^{m} P_{\rho_{i}}(1,E_{i+1})
    \\
    &=P_{\emptyset_{E_{1}}}(1,\dots,1;E_{2},\dots,E_{m}) = n P_{\emptyset}(1,\dots,1;E_{1},\dots,E_{m}),
\end{align*}
where $n \equiv \text{dim}(\mathfrak{S})$. Let's emphasize this definition.
\begin{definition}\label{def: sequentialTransitionProbability} Let $E_{1},\dots,E_{m}$ be rank-$1$ projections. The number
\begin{align}
    P(E_{1} \ri \dots \ri E_{m}) \doteq \prod_{i=1}^{m}P(E_{i} \ri E_{m})
\end{align}
is said to be the probability associated with the sequence of transitions $E_{1},\dots,E_{m}$.    
\end{definition}

In the simplest case, the probability of transitioning between two pure states does not depend on the order of the transition, i.e., we have $P(E \rightarrow F)=P(F \rightarrow E)$ for any pair of rank-$1$ projections $E,F$. We introduce this condition in postulate~\ref{post: transitionProbability}, and we assume that it is valid from now on. Note that it easily follows from this condition that, if $E_{1},\dots,E_{m}$ are rank-$1$ projections, we have
\begin{align*}
    P(E_{1}\ri \dots E_{m}) = P(E_{m} \ri E_{m-1} \ri \dots E_{1}).
\end{align*}

If $P(E \ri F) = P(F \ri E)$, then there is a relation between the transition $E \rightarrow F$ and the ``closed cycle'', let's say, $E \rightarrow F \rightarrow E$. In fact, we have
\begin{align}
    P(E \rightarrow F)&= \sqrt{P(E \rightarrow F) P(E \rightarrow F)} = \sqrt{P(E \rightarrow F) P(F \rightarrow E)}
    \\
    &=\label{eq: preCycle}\sqrt{P(E \rightarrow F \rightarrow E)}
    \\
    &= \sqrt{P(F \ri E \ri F)}
    \\
    &= P(F \ri E).
\end{align}
Let's explore this relation. Let $E,F$ be rank-$1$ projections. Let $C$ be a nondegenerate observable, and, for each $i \in \sigma(C)$, let $G_{i}$ be the projection $\chi_{\{i\}}(C)$. A measurement of $C$ between the transition $E \rightarrow F$ alters the probability of this transition, whether or not the experimentalist takes the outcome of this measurement into account. That is, suppose that an experimentalist prepares the state $\emptyset_{E}$ and measures the observable $F$ many times, without knowing that, in each run of the experiment, another agent measures $C$ after the preparation of $\emptyset_{E}$ and before the measurement of $F$. According to our postulates, the relative frequency, denoted $T(E \rightarrow F)$, of the transition $E \rightarrow F$  the experimentalist will end up with will be correctly described by the probability measure $P_{T_{(\sigma(C);C)}(\emptyset_{E})}( \ \cdot \ ;F)$, and therefore, after many runs of the experiment, she will (ideally) obtain
\begin{align*}
    T(E \rightarrow F) &= p_{T_{(\sigma(C);C)}(\emptyset_{E})}(1;F) =\frac{P_{\emptyset_{E}}(\sigma(C) \times \{1\};C,F)}{P_{\emptyset_{E}}(\sigma(C);C)} = \sum_{i \in \sigma(C)} p_{\emptyset_{E}}(i,1;C,F)
    \\
    &= \sum_{i \in \sigma(C)} P(E \rightarrow G_{i})P(G_{i} \rightarrow F) = \sum_{i \in \sigma(C)}P(E \rightarrow G_{i} \rightarrow F).
\end{align*}
Now, according to equation~\ref{eq: preCycle},
\begin{align*}
    T(E \rightarrow F) &= \sum_{i \in \sigma(C)} \sqrt{P(E \rightarrow G_{i}) P(G_{i} \ri E)P(G_{i} \rightarrow F)P(F \ri G_{i})}
    \\
    &= \sum_{i \in \sigma(C)} \sqrt{P(E \rightarrow G_{i} \ri F \ri G_{i} \ri E)}.
\end{align*}
The sequence of transitions $E \rightarrow G_{i} \ri F \ri G_{i} \ri E$ can be depicted as follows.
\begin{center}
    \begin{tikzcd}
        & G_{i}\arrow[dr] &
        \\
        E \arrow[ru] & & F\arrow[dl]
        \\
        & G_{i}\arrow[ul] & 
    \end{tikzcd}
\end{center}
Roughly speaking, we can say that the measurement $C$ ``breaks the symmetry'' between the transition $E \rightarrow F$ and the cycle $E \rightarrow F \rightarrow E$. In fact, thinking in terms of a measurement of $C$ between both transitions $E \rightarrow{F}$ and $F \rightarrow E$, the diagram presented above does not account for all possible transitions that can happen, because there is no reason for the outcome of $C$ in the transition $E \rightarrow F$ to be equal to its outcome in the transition $F \rightarrow E$. By summing over all possible outcomes, we do not end up with $T(E \rightarrow F)$, but actually with
\begin{align}
    \sum_{i,j \in \sigma(C)}  \sqrt{P(E \rightarrow G_{i} \ri F \ri G_{j} \ri E)} &=\sum_{i,j \in \sigma(C)}  \sqrt{P(E \rightarrow G_{i} \ri F)P(F \ri G_{j} \ri E)}
    \\
    &=\sum_{i,j \in \sigma(C)}  \sqrt{P(E \rightarrow G_{i} \ri F)P(E \ri G_{j} \ri F)}
    \\
    &=\label{eq: square}\left(\sum_{i \in \sigma(C)} \sqrt{P(E \rightarrow G_{i} \ri F)}\right)^{2}.
\end{align}
Note that 
\begin{align*}
    \sum_{i,j \in \sigma(C)}  \sqrt{P(E \rightarrow G_{i} \ri F \ri G_{j} \ri E)} &=\sum_{i,j \in \sigma(C)}\sqrt{p_{\emptyset}(1,i,1,j,1;E,C,F,C,E)}.
\end{align*}

If we had $T(E \rightarrow F) = P(E \ri F)$, it would follow from equation $P(E \ri F) = P(F \ri E)$ that $\sqrt{P(E \ri F \ri E)} = \sum_{i \in \sigma(C)} \sqrt{P(E \ri G_{i} \ri F \ri G_{i} \ri E)}$, which seems incoherent. In terms of marginalization, what seems reasonable to obtain is 
\begin{align*}
    \sqrt{P(E \ri F \ri E)} = \sum_{i,j \in \sigma(C)} \sqrt{P(E \ri G_{i} \ri F \ri G_{j} \ri E)},
\end{align*}
which in turn is equivalent to saying that $P(E \ri F) = \sum_{i,j \in \sigma(C)}  \sqrt{P(E \rightarrow G_{i} \ri F \ri G_{j} \ri E)}$. As we will see in section~\ref{sec: quantumTheory}, this equality, and consequently the distinction between $T(E \ri  F)$ and $P(E \ri F)$, is necessary for the emergence of \textit{interference terms} between non-orthogonal pure states. We thus have the following necessary conditions for the emergency of the quantum formalism.

\begin{postulate}[Transition probability]\label{post: transitionProbability} Let $E,F$ be rank-1 projections in a $n$-dimensional system. The probability of transitioning from $E$ to $F$ is equals to the probability of transitioning from $F$ to $E$, i.e.,
\begin{align}
    P(E \rightarrow F) &=\label{eq: transitioning} P(F \rightarrow E).
\end{align}
It enables us to talk about the ``transition probability between $E$ and $F$'' without  ambiguity. Furthermore, given any set $G_{1},\dots,G_{n}$ of pairwise orthogonal rank-$1$ projections, we have
\begin{align}
    \sqrt{P(E \ri F \ri E)} =\label{eq: interference} \sum_{i,j} \sqrt{P(E \ri G_{i}\ri F \ri G_{j} \ri E)}.
\end{align}
\end{postulate}
It is worth emphasizing the following result.
\begin{lemma}\label{lemma: transitionSquare} Let $E,F$ be rank-$1$ projections in a $n$-dimensional system $\mathfrak{S}$, and let $G_{1},\dots,G_{n}$ be pairwise orthogonal rank-$1$ projections in $\mathfrak{S}$. Then equation \ref{eq: interference} from postulate~\ref{post: transitionProbability} is satisfied if and only if
\begin{align}
    P(E \ri F) = \left(\sum_{i \in \sigma(C)} \sqrt{P(E \rightarrow G_{i} \ri F)}\right)^{2}.
\end{align}
\end{lemma}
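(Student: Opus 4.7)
My plan is to prove the equivalence by manipulating both sides of equation~\ref{eq: interference} using definition~\ref{def: sequentialTransitionProbability} together with the first part of postulate~\ref{post: transitionProbability}, namely the symmetry $P(E \ri F) = P(F \ri E)$ for arbitrary rank-$1$ projections $E,F$.

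First I would simplify the left-hand side of equation~\ref{eq: interference}. By definition~\ref{def: sequentialTransitionProbability}, $P(E \ri F \ri E) = P(E \ri F) \cdot P(F \ri E)$, and the symmetry condition from postulate~\ref{post: transitionProbability} gives $P(E \ri F) \cdot P(F \ri E) = P(E \ri F)^{2}$. Since $P(E \ri F) \geq 0$, this yields $\sqrt{P(E \ri F \ri E)} = P(E \ri F)$, so the left-hand side of equation~\ref{eq: interference} is exactly $P(E \ri F)$.

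Next I would refactor each term in the double sum on the right-hand side. Using definition~\ref{def: sequentialTransitionProbability} and the symmetries $P(F \ri G_{j}) = P(G_{j} \ri F)$ and $P(G_{j} \ri E) = P(E \ri G_{j})$, I get
\begin{align*}
P(E \ri G_{i} \ri F \ri G_{j} \ri E) &= P(E \ri G_{i}) P(G_{i} \ri F) P(F \ri G_{j}) P(G_{j} \ri E) \\
&= \bigl[P(E \ri G_{i}) P(G_{i} \ri F)\bigr]\bigl[P(E \ri G_{j}) P(G_{j} \ri F)\bigr] \\
&= P(E \ri G_{i} \ri F) \cdot P(E \ri G_{j} \ri F).
\end{align*}
Taking square roots (all quantities are non-negative probabilities), each term factorizes as $\sqrt{P(E \ri G_{i} \ri F)} \cdot \sqrt{P(E \ri G_{j} \ri F)}$, and the double sum collapses into the square $\bigl(\sum_{i} \sqrt{P(E \ri G_{i} \ri F)}\bigr)^{2}$.

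Combining these two computations, equation~\ref{eq: interference} is equivalent to $P(E \ri F) = \bigl(\sum_{i} \sqrt{P(E \ri G_{i} \ri F)}\bigr)^{2}$, which is precisely the claimed identity. I expect no substantive obstacle: the entire argument reduces to rewriting the defining products of transition probabilities and exploiting the already-postulated symmetry $P(E \ri F) = P(F \ri E)$; the only minor point to mention is that non-negativity legitimizes passing to and from square roots.
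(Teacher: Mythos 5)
Your proof is correct and follows essentially the same route as the paper: the left-hand side is reduced to $P(E \ri F)$ via the symmetry $P(E\ri F)=P(F\ri E)$, and the double sum is factored term-by-term (the paper splits each cycle into the two chains $P(E\ri G_i\ri F)$ and $P(F\ri G_j\ri E)$ and reverses the second, which is the same computation as your four-factor factorization) so that it collapses into the claimed square. No gaps.
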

\begin{proof}
    According to postulate~\ref{post: transitionProbability}, we have $P(E \ri F) = \sqrt{P(E \ri F \ri E)}$. On the other hand, it follows from equation~\ref{eq: square} that $\sum_{i,j} \sqrt{P(E \ri G_{i} F \ri G_{j} \ri E)}= \left(\sum_{i \in \sigma(C)} \sqrt{P(E \rightarrow G_{i} \ri F)}\right)^{2}$, so the proof is complete. 
\end{proof}

Now let's prove that, for any projection $E$, the projective state $\emptyset_{E}$ coincide with the state $\emptyset_{\frac{E}{\Tra{E}}}$ given by proposition~\ref{prop: preBorn}. It means that, if $E_{1},\dots,E_{k}$ are pairwise orthogonal rank-$1$ projections satisfying $E=\sum_{i=1}^{k} E_{i}$, then $\emptyset_{E} = \sum_{i=1}^{k} \frac{1}{\Tra{E}} \emptyset_{E_{i}}$. In order to do that, we need some preparatory results. 

\begin{lemma}\label{lemma: nondegenerateNonUpdate} Let $\emptyset$ be the completely mixed state. Then, for any nondegenerate observable $C$,
\begin{align}
    T_{(\sigma(C);C)}(\emptyset) &=\label{eq: nondegenerateNonUpdate} \emptyset.
\end{align}
\end{lemma}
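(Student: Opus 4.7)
The plan is to first unfold the subjective update using proposition~\ref{prop: subjectiveUpdate}. For a nondegenerate $C$ with spectrum $\sigma(C) = \{\gamma_1,\dots,\gamma_n\}$, the completely mixed state satisfies $p_{\emptyset}^{C}(\gamma) = 1/n$ for every $\gamma$, so $P_{\emptyset}^{C}(\{\gamma\}\vert \sigma(C)) = 1/n$. Writing $F_\gamma \equiv \chi_{\{\gamma\}}(C)$ (rank-$1$ projections, pairwise orthogonal, summing to $\mathds{1}$), we get
\begin{align*}
    T_{(\sigma(C);C)}(\emptyset) \;=\; \frac{1}{n}\sum_{\gamma \in \sigma(C)} T_{(\gamma;C)}(\emptyset) \;=\; \frac{1}{n}\sum_{\gamma \in \sigma(C)} \emptyset_{F_\gamma}.
\end{align*}
Call this state $\rho$. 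By lemma~\ref{lemma: expectationAsState} (or directly postulate~\ref{ax: separability}) it suffices to verify $\langle \cdot \rangle_\rho = \langle \cdot \rangle_{\emptyset}$.

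Next, I would reduce the check to nondegenerate observables. By postulate~\ref{post: observables}, every observable is a function of a nondegenerate one, and definition~\ref{def: functionalRelation} together with the definition of $\emptyset$ (definition~\ref{def: completelyMixed}) show that it is enough to verify $p_{\rho}^{A}(\af) = 1/n$ for every nondegenerate $A$ and every $\af \in \sigma(A)$. By convexity of $\rho \mapsto P_\rho$ (postulate~\ref{ax: subjectiveUpdate}),
\begin{align*}
    p_{\rho}^{A}(\af) \;=\; \frac{1}{n}\sum_{\gamma \in \sigma(C)} p_{\emptyset_{F_\gamma}}^{A}(\af) \;=\; \frac{1}{n}\sum_{\gamma \in \sigma(C)} P(F_\gamma \rightarrow E_\af),
\end{align*}
where $E_\af \equiv \chi_{\{\af\}}(A)$ is rank-$1$ because $A$ is nondegenerate (see definitions~\ref{def: traceProjection}, \ref{def: projectionOfEigenvalue} and the discussion after proposition~\ref{prop: nondegenerateDecomposition}).

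The crux is therefore to evaluate the sum $\sum_\gamma P(F_\gamma \rightarrow E_\af)$. Here the symmetry of transition probabilities (postulate~\ref{post: transitionProbability}) does the key work: $P(F_\gamma \rightarrow E_\af) = P(E_\af \rightarrow F_\gamma) = \langle F_\gamma \rangle_{E_\af}$. Since $\{F_\gamma\}_{\gamma \in \sigma(C)}$ is a partition of the unit, finite additivity (lemma~\ref{lemma: finiteAdditivity}) yields
\begin{align*}
    \sum_{\gamma \in \sigma(C)} \langle F_\gamma \rangle_{E_\af} \;=\; \Big\langle \sum_{\gamma \in \sigma(C)} F_\gamma \Big\rangle_{E_\af} \;=\; \langle \mathds{1} \rangle_{E_\af} \;=\; 1,
\end{align*}
so $p_{\rho}^{A}(\af) = 1/n = p_{\emptyset}^{A}(\af)$, proving $\rho = \emptyset$.

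The main obstacle is the manoeuvre in the last step: without postulate~\ref{post: transitionProbability}, the sum $\sum_\gamma P(F_\gamma \rightarrow E_\af)$ is a sum of expectations evaluated at \emph{different} states, and there is no a priori reason for it to collapse. The symmetry exchanges the roles of the state and the projection, turning it into a single expectation of a sum of projections at a fixed state, at which point finite additivity finishes the job. Everything else (unfolding the subjective update, reducing to nondegenerate observables via functional relations) is routine bookkeeping built from postulates~\ref{ax: subjectiveUpdate}, \ref{post: selfCompatibility} and~\ref{post: observables}.
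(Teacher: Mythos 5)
Your proof is correct and follows essentially the same route as the paper: unfold the subjective update into a uniform mixture of the pure states $\emptyset_{F_\gamma}$, use the symmetry of transition probabilities (postulate~\ref{post: transitionProbability}) to swap the roles of the state and the projection, and then apply finite additivity to collapse the sum to $\langle \mathds{1}\rangle = 1$, giving the uniform distribution $1/n$ on every nondegenerate observable. The only cosmetic difference is that the paper phrases the computation for an arbitrary rank-$1$ projection $F$ rather than for the eigenprojections $E_{\af}$ of a nondegenerate $A$, but these are the same thing.
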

\begin{proof}
Let $C$ be a nondegenerate observable in a $n$-dimensional system $\mathfrak{S}$. For each $i \in \sigma(C)$, let $E_{i}$ be the projection associated with $i$, i.e., $E_{i} \equiv \chi_{\{i\}}(C)$. According to proposition~\ref{prop: subjectiveUpdate} and postulate~\ref{post: transitionProbability}, given any rank-$1$ projection $F$,
\begin{align*}
    \langle F \rangle_{T_{(\sigma(C);C)}(\emptyset)} &= \sum_{i \in \sigma(C)} P_{\emptyset}^{C}(\{i\}\vert \sigma(C))\langle F \rangle_{T_{(i;C)}(\emptyset)} = \sum_{i \in \sigma(C)}\frac{1}{n} \langle F \rangle_{\emptyset_{E_{i}}} =  \frac{1}{n}  \sum_{i \in \sigma(C)}\langle E_{i} \rangle_{\emptyset_{F}} = \frac{1}{n} \langle \mathds{1}\rangle_{\emptyset_{F}}
    \\
    &= \frac{1}{n}.
\end{align*}
In implies that, for any nondegenerate observable $A$, $P_{T_{(\sigma(C);C)}(\emptyset)}^{A}$ is the uniform probability distribution on $\sigma(A)$, which is equivalent to saying that $P_{T_{(\sigma(C);C)}(\emptyset)}^{A} = P_{\emptyset}^{A}$. This in turn is equivalent to the fact that $P_{T_{(\sigma(C);C)}(\emptyset)}^{A} = P_{\emptyset}^{A}$ for any observable $A$, and therefore equation~\ref{eq: nondegenerateNonUpdate} is satisfied.
\end{proof}

\begin{lemma}\label{lemma: UpdatingMixed} Let $E$ be a projection, and let $C$ be  any nondegenerate observable satisfying $E=\chi_{\Gamma}(C)$ for some $\Gamma \subset \sigma(C)$. Then
\begin{align}
    T_{(1;E)}(\emptyset)=T_{(\Gamma;C)}(\emptyset).
\end{align}
\end{lemma}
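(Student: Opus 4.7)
The plan is to show that both $T_{(1;E)}(\emptyset)$ and $T_{(\Gamma;C)}(\emptyset)$ coincide with the explicit convex combination
\[
\sum_{i\in\Gamma}\tfrac{1}{|\Gamma|}\,\emptyset_{F_i}, \qquad F_i\equiv\chi_{\{i\}}(C).
\]
The subjective side $T_{(\Gamma;C)}(\emptyset)$ is the easy one: I would apply proposition~\ref{prop: subjectiveUpdate} directly and then lemma~\ref{lemma: objectiveProjectiveUpdate} to rewrite each $T_{(i;C)}(\emptyset)$ as $\emptyset_{F_i}$. The weights are uniform, $P_\emptyset^C(\{i\}\vert\Gamma)=1/|\Gamma|$, because $C$ is nondegenerate and so $p_\emptyset^C$ is uniform on $\sigma(C)$.

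For the objective side $T_{(1;E)}(\emptyset)$, the idea is to insert a ``trivial'' nondegenerate measurement of $C$ in front using lemma~\ref{lemma: nondegenerateNonUpdate}, rewriting
\[
T_{(1;E)}(\emptyset) \;=\; T_{(1;E)}\bigl(T_{(\sigma(C);C)}(\emptyset)\bigr) \;=\; T_{(\sigma(C),\,1;\,C,\,E)}(\emptyset),
\]
and then applying proposition~\ref{prop: subjectiveSequentialUpdate} to expand this as a convex combination indexed by $i\in\sigma(C)$. Since $E=\chi_\Gamma(C)$, the pair $(C,E)$ is compatible with common fine graining $C$, so postulate~\ref{post: compatibility} yields $p_\emptyset(i,1;C,E)=P_\emptyset(\{i\}\cap\Gamma;C)$, which vanishes off $\Gamma$ and equals $1/n$ on $\Gamma$; after normalising by $P_\emptyset(\Gamma;C)=|\Gamma|/n$ the resulting weights are $1/|\Gamma|$ supported on $\Gamma$. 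For the accompanying objective updates, lemma~\ref{lemma: objectiveProjectiveUpdate} gives $T_{(i;C)}=T_{F_i}$, and postulate~\ref{post: compatibility} then gives $T_{(1;E)}\circ T_{(1;F_i)} = T_{(1;E\circ F_i)}$ with $E\circ F_i=\chi_{\Gamma\cap\{i\}}(C)$, which collapses to $F_i$ precisely when $i\in\Gamma$; indices $i\notin\Gamma$ are killed by their zero weight. Plugging weights and updates back produces the same convex combination as on the subjective side, completing the proof.

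The main obstacle to avoid is assuming anything rank-$1$ about $E$: corollary~\ref{cor: pureStateUpdate} does \emph{not} apply (since $E$ may have rank $|\Gamma|>1$), so $T_E(\emptyset)$ is not \emph{a priori} a pure state and cannot be collapsed by a single appeal to that corollary. The whole point of the detour through lemma~\ref{lemma: nondegenerateNonUpdate} is to re-express this possibly mixed update as a sequential update by a pair of \emph{compatible} observables $(C,E)$, which is exactly the regime where postulate~\ref{post: compatibility} and the projection-product calculus of section~\ref{sec: compatibility} are available. A small book-keeping step along the way is the identification $\chi_{\{1\}}(E)=E$ (from the spectral decomposition of a projection), needed to apply postulate~\ref{post: compatibility} in the form $T_{(1;E)}\circ T_{(1;F_i)} = T_{(1;E\circ F_i)}$.
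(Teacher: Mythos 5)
Your proposal is correct and follows essentially the same route as the paper's proof: insert the trivial nondegenerate measurement via lemma~\ref{lemma: nondegenerateNonUpdate}, expand the resulting sequential update with proposition~\ref{prop: subjectiveSequentialUpdate}, use the compatibility of $C$ and $E$ (so that $p_{\emptyset}(i,1;C,E)=P_{\emptyset}(\{i\}\cap\Gamma;C)$ and $E\circ F_{i}=F_{i}$ for $i\in\Gamma$), and match the surviving convex combination against the expansion of $T_{(\Gamma;C)}(\emptyset)$. Your explicit warning that corollary~\ref{cor: pureStateUpdate} cannot be invoked for a rank-$|\Gamma|>1$ projection is exactly the point the paper's detour is designed to handle.
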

\begin{proof}
Suppose that $E=\chi_{\Gamma}(C)$, where $E$ is a rank-$k$ projection, $C$ is a nondegenerate observable and $\Gamma \subset \sigma(C)$.  For each $i \in \sigma(C)$, let $E_{i}$ be the projection $\chi_{\{i\}}(C)$. Note that, by construction, $E$, $C$ and $E_{i}$ are compatible. Furthermore, note that $E \circ E_{i} = \chi_{\Gamma \cap \{i\}}(C)$, which implies that $E \circ E_{i} = E_{i}$ if $i \in \Gamma$ and $E \circ E_{i}=0$ otherwise, which implies that $T_{(1,E)}\circ T_{(i;C)} = T_{(i;C)}$ if $i \in \Gamma$.  According to proposition~\ref{prop: subjectiveSequentialUpdate} and lemma~\ref{lemma: nondegenerateNonUpdate},
\begin{align*}
    T_{(1;E)}(\emptyset) &= (T_{(1;E)} \circ T_{(\sigma(C);C)})(\emptyset) = \sum_{i \in \sigma(C)} P_{\emptyset}^{(C,E)}(\{(i,1)\}\vert \sigma(C) \times \{1\}) (T_{(1,E)} \circ T_{(i;C)})(\emptyset)
    \\
    &= \sum_{i \in \sigma(C)} \frac{ P_{\emptyset}^{(C,E)}(\{(i,1)\})}{P_{\emptyset}^{(C,E)}(\sigma(C) \times \{1\})} (T_{(1,E)} \circ T_{(i;C)})(\emptyset)
    \\
    &= \sum_{i \in \sigma(C)} \frac{P_{\emptyset}^{C}(\{i\} \cap \Gamma)}{P_{\emptyset}^{C}(\sigma(C) \cap \Gamma)} (T_{(1,E)} \circ T_{(i;C)})(\emptyset) = \sum_{i \in \Gamma} P^{C}_{\emptyset}(\{i\}\vert \Gamma)T_{(i;C)}(\emptyset)
    \\
    &=T_{(\Gamma;C)}(\emptyset).
\end{align*}
\end{proof}

\begin{proposition}\label{prop: updatingMixed} Let E be a projection, and let $(\Delta,A)$ be any observable event associated with $E$, i.e., $E=\chi_{\Delta}(A)$. Then
\begin{align}
    T_{(1;E)}(\emptyset) = T_{(\Delta;A)}(\emptyset).
\end{align}
In particular, the projective state $\emptyset_{E}$ associated with $E$ (see definition~\ref{def: projectiveState}) coincide with the state $\emptyset_{\frac{E}{\Tra{E}}}$ defined in proposition~\ref{prop: preBorn}.
\end{proposition}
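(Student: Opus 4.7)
The plan is to reduce the general statement to the case already handled by the preceding lemma (UpdatingMixed), which treats the situation where the observable is nondegenerate. Given $E = \chi_{\Delta}(A)$, postulate~\ref{post: observables} yields a nondegenerate observable $C$ and a function $f$ with $A = f(C)$, and then lemma~\ref{lemma: KSdefinitionProjections} gives $E = \chi_{f^{-1}(\Delta)}(C)$. Applying the preceding lemma to this nondegenerate representation of $E$ immediately gives $T_{(1;E)}(\emptyset) = T_{(f^{-1}(\Delta);C)}(\emptyset)$, so the whole content of the proposition collapses to the intermediate identity
\begin{align*}
    T_{(\Delta;A)}(\emptyset) = T_{(f^{-1}(\Delta);C)}(\emptyset).
\end{align*}

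To establish that intermediate identity, I would expand both sides using the subjective update formula (proposition~\ref{prop: subjectiveUpdate}). The right-hand side is directly $\sum_{\gamma \in f^{-1}(\Delta)} P_{\emptyset}^{C}(\{\gamma\}\vert f^{-1}(\Delta))\,T_{(\gamma;C)}(\emptyset)$. For the left-hand side, I would first write $T_{(\Delta;A)}(\emptyset) = \sum_{\alpha \in \Delta} P_{\emptyset}^{A}(\{\alpha\}\vert \Delta)\,T_{(\alpha;A)}(\emptyset)$, then use lemma~\ref{lemma: objectiveProjectiveUpdate} to replace $T_{(\alpha;A)}$ with $T_{(1;E_{\alpha})}$ where $E_{\alpha} = \chi_{\{\alpha\}}(A) = \chi_{f^{-1}(\alpha)}(C)$, and finally apply the preceding lemma once more to each $E_{\alpha}$ to get $T_{(\alpha;A)}(\emptyset) = T_{(f^{-1}(\alpha);C)}(\emptyset)$. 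A second application of proposition~\ref{prop: subjectiveUpdate} expands this into a sum indexed by $\gamma \in f^{-1}(\alpha)$, and reindexing by $\gamma \in f^{-1}(\Delta)$ produces a single sum of the same shape as the right-hand side.

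The equality then reduces to the numerical identity
\begin{align*}
    P_{\emptyset}^{A}(\{f(\gamma)\}\vert \Delta)\cdot P_{\emptyset}^{C}(\{\gamma\}\vert f^{-1}(f(\gamma))) = P_{\emptyset}^{C}(\{\gamma\}\vert f^{-1}(\Delta)).
\end{align*}
Since $C$ is nondegenerate, $p_{\emptyset}^{C}$ is the uniform distribution, so each conditional on the right-hand side reduces to $1/\vert f^{-1}(\Delta)\vert$. On the left, the factor $P_{\emptyset}^{C}(\{\gamma\}\vert f^{-1}(f(\gamma)))$ equals $1/\vert f^{-1}(f(\gamma))\vert$, while the functional relation $p_{\emptyset}^{A}(f(\gamma)) = P_{\emptyset}^{C}(f^{-1}(f(\gamma))) = \vert f^{-1}(f(\gamma))\vert/n$ makes $P_{\emptyset}^{A}(\{f(\gamma)\}\vert \Delta) = \vert f^{-1}(f(\gamma))\vert/\vert f^{-1}(\Delta)\vert$, so the product telescopes to $1/\vert f^{-1}(\Delta)\vert$.

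The only mildly delicate step is the bookkeeping in converting the double sum over $\alpha \in \Delta$ and $\gamma \in f^{-1}(\alpha)$ into a single sum over $\gamma \in f^{-1}(\Delta)$ while tracking the conditional probabilities correctly; this is where I expect a reader to want to see the details, though once the uniformity of $p_{\emptyset}^{C}$ is used the arithmetic is immediate. For the final sentence of the proposition, comparing with equation~\ref{eq: subjectiveNondegenerateMixedUpdate} from section~\ref{sec: BornRule}, which already identifies $\emptyset_{\frac{E}{\Tra{E}}}$ with $T_{(\Gamma;C)}(\emptyset)$ for any nondegenerate $C$ satisfying $E = \chi_{\Gamma}(C)$, gives $\emptyset_{E} \equiv T_{(1;E)}(\emptyset) = T_{(\Gamma;C)}(\emptyset) = \emptyset_{\frac{E}{\Tra{E}}}$, which incidentally removes the ambiguity in the definition of $\emptyset_{A}$ noted after proposition~\ref{prop: preBorn}.
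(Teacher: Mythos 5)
Your proposal is correct and follows essentially the same route as the paper's own proof: pass to a nondegenerate fine-graining $C$ with $A=f(C)$, apply lemma~\ref{lemma: UpdatingMixed} both to $E=\chi_{f^{-1}(\Delta)}(C)$ and to each $E_{\af}=\chi_{f^{-1}(\af)}(C)$, and match the two subjective-update expansions via the telescoping identity for the conditional probabilities of the completely mixed state. The paper writes this as a single chain of equalities starting from $T_{(1;E)}(\emptyset)$ and regrouping the sum over $i\in f^{-1}(\Delta)$ by fibers of $f$, with the telescoping expressed through ratios $\langle F_{i}\rangle_{\emptyset}/\langle E\rangle_{\emptyset}$ rather than through the explicit uniform-distribution arithmetic, but the content is identical.
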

\begin{proof}
    Let $E$ be a projection, and let $(\Delta;A)$ be an event satisfying $E=\chi_{\Delta}(A)$. Recall that $E = \sum_{\af \in \Delta}E_{\af}$, where $E_{\af} \equiv \chi_{\{\af\}}(A)$. Now let $C$ be any nondegenerate fine-graining of $A$, i.e., C is nondegenerate and there exists an arrow $C \xrightarrow{f} A$ in the category of observables. We have $E = (\chi_{\Delta} \circ f)(C) = \chi_{f^{-1}(\Delta)}(C) = \sum_{i \in f^{-1}(\Delta)} F_{i}$, where $F_{i} \equiv \chi_{\{i\}}(C)$ for each $i \in \sigma(C)$, and  $E_{\af} = \chi_{f^{-1}(\af)}(C) = \sum_{i \in f^{-1}(\af)} F_{i}$. Therefore, according to proposition~\ref{prop: subjectiveUpdate} and lemma~\ref{lemma: UpdatingMixed},
    \begin{align*}
        T_{(1;E)}(\emptyset) &= T_{(f^{-1}(\Delta);C)}(\emptyset) = \sum_{i \in f^{-1}(\Delta)} P_{\emptyset}^{C}(\{i\}\vert f^{-1}(\Delta)) T_{(i;C)}(\emptyset)
        \\
        &= \sum_{i \in f^{-1}(\Delta)} \frac{\langle F_{i} \rangle_{\emptyset}}{\langle E \rangle_{\emptyset}} T_{(i;C)}(\emptyset) = \sum_{\af \in \Delta} \frac{\langle E_{\af}\rangle_{\emptyset}}{\langle E \rangle_{\emptyset}}\sum_{i \in f^{-1}(\af)}\frac{\langle F_{i}\rangle_{\emptyset}}{\langle E_{\af}\rangle_{\emptyset}} T_{(i;C)}(\emptyset)
        \\
        &= \sum_{\af \in \Delta} \frac{\langle E_{\af}\rangle_{\emptyset}}{\langle E \rangle_{\emptyset}}\sum_{i \in f^{-1}(\af)}P_{\emptyset}^{C}(\{i\}\vert f^{-1}(\af)) T_{(i;C)}(\emptyset) = \sum_{\af \in \Delta} \frac{\langle E_{\af}\rangle_{\emptyset}}{\langle E \rangle_{\emptyset}}T_{(f^{-1}(\af);C)}(\emptyset)
        \\
        &= \sum_{\af \in \Delta} P_{\emptyset}^{A}(\{\af\}\vert \Delta)T_{(1;E_{\af})}(\emptyset) = T_{(\Delta;A)}(\emptyset).
    \end{align*}
\end{proof}

\begin{corollary}\label{cor: NoUpdate} Let $\emptyset$ be the completely mixed state. Then, for any observable $A$,
\begin{align}
    T_{(\sigma(A),A)}(\emptyset) = \emptyset.
\end{align}
\end{corollary}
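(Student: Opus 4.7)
The plan is to invoke Proposition~\ref{prop: updatingMixed} to reduce the update determined by the observable event $(\sigma(A);A)$ to the update determined by a single projection, and then to recognize that projection as the unit $\mathds{1}$, which by Lemma~\ref{lemma: unitUpdate} acts trivially.

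More concretely, first I would note that, by Definition~\ref{def: zeroAndUnit}, $\chi_{\sigma(A)}(A) = \mathds{1}$ for every observable $A$. Thus the projection associated with the event $(\sigma(A);A)$ is the unit. Applying Proposition~\ref{prop: updatingMixed} with $\Delta = \sigma(A)$ and $E = \chi_{\sigma(A)}(A) = \mathds{1}$ gives
\begin{align*}
    T_{(\sigma(A);A)}(\emptyset) \;=\; T_{(1;\mathds{1})}(\emptyset) \;\equiv\; T_{\mathds{1}}(\emptyset).
\end{align*}
Then Lemma~\ref{lemma: unitUpdate} (which established that $T_{\mathds{1}}$ is the identity on $\mathcal{S}$) immediately yields $T_{\mathds{1}}(\emptyset) = \emptyset$, completing the proof.

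There is no real obstacle here: all the work has already been done in Lemma~\ref{lemma: nondegenerateNonUpdate} (the nondegenerate case), Lemma~\ref{lemma: UpdatingMixed} (extending to arbitrary projection-events on a nondegenerate fine-graining), and Proposition~\ref{prop: updatingMixed} (extending to arbitrary observable events via the canonical projection $\chi_{\Delta}(A)$). The corollary is simply the specialization of Proposition~\ref{prop: updatingMixed} to the total event $\Delta = \sigma(A)$, which corresponds to the unit projection; the only thing to check is that the two ingredients (the identification $\chi_{\sigma(A)}(A) = \mathds{1}$ from Definition~\ref{def: zeroAndUnit} and the triviality of $T_{\mathds{1}}$ from Lemma~\ref{lemma: unitUpdate}) mesh correctly, and they do. Hence the argument is essentially a two-line chain of equalities and no calculation beyond what is already in the paper is required.
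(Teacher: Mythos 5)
Your proof is correct and follows exactly the same route as the paper's: identify $\chi_{\sigma(A)}(A)=\mathds{1}$, apply Proposition~\ref{prop: updatingMixed} to get $T_{(\sigma(A);A)}(\emptyset)=T_{(1;\mathds{1})}(\emptyset)$, and conclude with Lemma~\ref{lemma: unitUpdate}. Nothing is missing.
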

\begin{proof}
    We know that, for any observable $A$, $\mathds{1}=\chi_{\sigma(A)}(A)$, thus it follows from proposition~\ref{prop: updatingMixed} that $T_{(1,\mathds{1})}(\emptyset) = T_{(\sigma(A);A)}(\emptyset)$. Lemma~\ref{lemma: unitUpdate} ensures that $T_{(1,\mathds{1})}(\emptyset) = \emptyset$, which completes the proof. 
\end{proof}

Let $A$ be any density operator, and let $A=\sum_{i=1}^{m}\af_{i}E_{i}$ be any convex decomposition of $A$ in terms of pairwise orthogonal rank-$1$ projections, i.e., $\sum_{i=1}^{m}\af_{i}=1$ and the rank-$1$ projections $E_{1},\dots,E_{m}$ are pairwise orthogonal. For any rank-$1$ projection $G$ we have
\begin{align*}
    \langle A \rangle_{\emptyset_{G}} &= \sum_{i=1}^{m}\af_{i}\langle E_{i} \rangle_{\emptyset_{G}} = \sum_{i=1}^{m}\af_{i}\langle G\rangle_{\emptyset_{ E_{i}}} = \langle G\rangle_{\sum_{i=1}^{m}\af_{i}\emptyset_{E_{i}}}.
\end{align*}
This is satisfied by any decomposition of $A$, thus, if $\sum_{i=1}^{m}\af_{i}E_{i}=A=\sum_{i=1}^{m'}\beta_{j}F_{j}$ are both convex decompositions of $A$ in terms of pairwise orthogonal rank-$1$ projections, we have
\begin{align*}
    \langle F\rangle_{\sum_{i=1}^{m}\af_{i}\emptyset_{E_{i}}} =  \langle F\rangle_{\sum_{i=1}^{m'}\beta_{j}\emptyset_{F_{j}}} 
\end{align*}
for any rank-$1$ projection $F$. Finally, it follows from propositions~\ref{prop: expLinear} and~\ref{prop: nondegenerateDecomposition} that this equation is valid for any projection $F$, thus lemma~\ref{lemma: expectationAsState} ensures that $\sum_{i=1}^{m}\af_{i}\emptyset_{E_{i}} = \sum_{i=1}^{m'}\beta_{j}\emptyset_{F_{j}}$. Hence, thanks to equation~\ref{eq: ordering}, the ambiguity in the definition of $\emptyset_{A}$ (see proposition~\ref{prop: preBorn}) mentioned in section~\ref{sec: BornRule} disappears, allowing us to introduce the following definition.

\begin{definition}[Density operators as states]\label{def: densityState} Let $A$ be a density operator in a $n$-dimensional system. We denote by $\emptyset_{A}$ the unique state satisfying
\begin{align}
    \emptyset_{A}&= \sum_{i=1}^{m}\af_{i}\emptyset_{E_{i}}
\end{align}
for every convex decomposition $A=\sum_{i=1}^{m}\af_{i}E_{i}$ of $A$ in terms of pairwise orthogonal rank-$1$ projections.    
\end{definition}

Let $E$ be any projection. Proposition~\ref{prop: updatingMixed} asserts that the projective state $\emptyset_{E}$ satisfies $\emptyset_{E} = T(\Delta;A)(\emptyset)$ for any event $\Delta;A)$ associated with $E$, and it easily follows from this result that, if $E_{1},\dots,E_{m}$ are pairwise orthogonal projections that sum to $E$, i.e., $E = \sum_{i=1}^{m}E_{i}$, then
\begin{align}
    \emptyset_{E} = \sum_{i=1}^{m}\frac{\Tra{E_{i}}}{\Tra{E}}\emptyset_{E_{i}},
\end{align}
where $\emptyset_{E_{i}}$ denotes the projective state associated with $E_{i}$. It is worth emphasizing this result.

\begin{lemma}[Projective states]\label{lemma: projectiveStates} Let $E$ be any projection, and let $\emptyset_{E} \equiv T_{E}(\emptyset)$ be the projective state associate with it (definition~\ref{def: projectiveState}). Let $\emptyset_{\frac{E}{\Tra{E}}}$ be the state defined by the density operator $\frac{E}{\Tra{E}}$, as in definition~\ref{def: densityState}. Then $\emptyset_{E} = \emptyset_{\frac{E}{\Tra{E}}}$. Furthermore, if $E_{1},\dots,E_{m}$ are pairwise orthogonal projections that sum to $E$, i.e.,  $E = \sum_{i=1}^{m}E_{i}$, then
\begin{align}
    \emptyset_{E} = \sum_{i=1}^{m}\frac{\Tra{E_{i}}}{\Tra{E}}\emptyset_{E_{i}},
\end{align}
where $\emptyset_{E_{i}}$ denotes the projective state associated with $E_{i}$.
\end{lemma}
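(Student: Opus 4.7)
The plan is to first establish the identification $\emptyset_{E}=\emptyset_{E/\Tra{E}}$ by exhibiting a common convex decomposition into pure states, and then to obtain the additivity-type formula by refining $E=\sum_{i}E_{i}$ down to rank-$1$ projections and regrouping.

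First, I would fix a projection $E$ of rank $k\equiv\Tra{E}$ and, using proposition~\ref{prop: nondegenerateDecomposition}, write $E=\sum_{i=1}^{k}F_{i}$ with $F_{1},\dots,F_{k}$ pairwise orthogonal rank-$1$ projections. Corollary~\ref{cor: nondegenerateSpecker} supplies a nondegenerate cone $C\xrightarrow{\chi_{\{\gamma_{i}\}}}F_{i}$, so that $E=\chi_{\Gamma}(C)$ with $\Gamma=\{\gamma_{1},\dots,\gamma_{k}\}$. Proposition~\ref{prop: updatingMixed} yields $\emptyset_{E}=T_{(1;E)}(\emptyset)=T_{(\Gamma;C)}(\emptyset)$, and proposition~\ref{prop: subjectiveUpdate} together with the uniformity of $P_{\emptyset}^{C}$ on $\sigma(C)$ (definition~\ref{def: completelyMixed}) gives
\begin{align*}
\emptyset_{E}=\sum_{i=1}^{k}P_{\emptyset}^{C}(\{\gamma_{i}\}\vert\Gamma)\,T_{(\gamma_{i};C)}(\emptyset)=\sum_{i=1}^{k}\frac{1}{k}\,\emptyset_{F_{i}},
\end{align*}
where I use lemma~\ref{lemma: objectiveProjectiveUpdate} to identify $T_{(\gamma_{i};C)}(\emptyset)$ with $T_{(1;F_{i})}(\emptyset)=\emptyset_{F_{i}}$. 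Since $\tfrac{E}{\Tra{E}}=\sum_{i=1}^{k}\tfrac{1}{k}F_{i}$ is a convex decomposition of the density operator $E/\Tra{E}$ into pairwise orthogonal rank-$1$ projections, definition~\ref{def: densityState} forces $\emptyset_{E/\Tra{E}}=\sum_{i=1}^{k}\tfrac{1}{k}\emptyset_{F_{i}}$, which matches the expression above and proves the first assertion.

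For the second claim, I would take pairwise orthogonal projections $E_{1},\dots,E_{m}$ summing to $E$ and apply Specker's principle (corollary~\ref{cor: specker}) to produce a common nondegenerate fine-graining $D$: each $E_{i}=\chi_{\Delta_{i}}(D)$ with the $\Delta_{i}\subset\sigma(D)$ pairwise disjoint, and $E=\chi_{\cup_{i}\Delta_{i}}(D)$. Writing each singleton $\{\delta\}\subset\Delta_{i}$ as a rank-$1$ projection $F_{ij}\equiv\chi_{\{\delta\}}(D)$ simultaneously refines every $E_{i}$ as $E_{i}=\sum_{j=1}^{\Tra{E_{i}}}F_{ij}$ and $E$ as $E=\sum_{i,j}F_{ij}$, with the entire family $\{F_{ij}\}_{i,j}$ pairwise orthogonal. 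Applying the formula from the first step both to $E$ (with the aggregated family) and to each $E_{i}$ (with its own subfamily) yields
\begin{align*}
\emptyset_{E}=\sum_{i,j}\frac{1}{\Tra{E}}\,\emptyset_{F_{ij}}=\sum_{i=1}^{m}\frac{\Tra{E_{i}}}{\Tra{E}}\left(\sum_{j=1}^{\Tra{E_{i}}}\frac{1}{\Tra{E_{i}}}\emptyset_{F_{ij}}\right)=\sum_{i=1}^{m}\frac{\Tra{E_{i}}}{\Tra{E}}\,\emptyset_{E_{i}},
\end{align*}
which is the claimed identity.

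The main obstacle I anticipate is the intermediate claim that the refinement of pairwise orthogonal projections via a common nondegenerate fine-graining produces a pairwise orthogonal family of rank-$1$ projections whose partial sums recover each $E_{i}$; but this is exactly what Specker's principle together with lemma~\ref{lemma: KSdefinitionProjections} provides, since the disjointness of the sets $\Delta_{i}\subset\sigma(D)$ translates orthogonality of the $E_{i}$ into disjointness of the singleton decompositions. Everything else is a direct calculation using proposition~\ref{prop: updatingMixed}, definition~\ref{def: densityState}, and the uniformity property of the completely mixed state.
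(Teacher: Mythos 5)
Your proof is correct and follows essentially the same route as the paper, which derives this lemma directly from proposition~\ref{prop: updatingMixed} together with the computation of $T_{(\Gamma;C)}(\emptyset)$ already carried out in section~\ref{sec: BornRule}; you are simply making explicit the "easily follows" steps (passing to a nondegenerate fine-graining, using the uniformity of $P_{\emptyset}^{C}$, and regrouping the convex combination). No gaps.
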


All explicit examples of states we have presented thus far are given by definition~\ref{def: densityState}, and consequently they all can  be written as convex combinations of pure states. On the other hand, postulate~\ref{post: observables} asserts that any state is a convex combination of experimentally accessible states (definition~\ref{def: accessibleState}), so, if we show that experimentally accessible states are convex combinations of pure states, we can immediately conclude that every state of the system satisfies the same condition. To assure that experimentally accessible states are  convex combinations of pure states, it is sufficient to assume that,  as in quantum theory, a measurement can only turn a pure state into a non-pure one if we  fail to acquire all the information that this measurement provides, i.e., if we update the pure state using a subjective event. It means that the transition from pure and non-pure states via measurements always involves some sort of loss of information. This is the last key feature of quantum systems that we assume in order to embed our system in a Hilbert space:

\begin{postulate}[Pure states]\label{post: pureState} Let $\rho$ be a pure state, and let $(\af;A)$ be any objective event. Then $T_{(\af;A)}(\rho)$ is a pure state. Equivalently, if $\rho$ is a pure state and $E$ is a projection, $T_{E}(\rho)$ is a pure state. 
\end{postulate}

\begin{proposition}\label{prop: accessibleState}
    Any experimentally accessible state (definition~\ref{def: accessibleState}) is a convex combination of pure states.
\end{proposition}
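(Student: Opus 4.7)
The plan is to split into two cases according to Definition~\ref{def: accessibleState}. If $\rho = \emptyset$, I pick any nondegenerate observable $C$ with spectrum $\{\gamma_1, \dots, \gamma_n\}$ and apply Lemma~\ref{lemma: nondegenerateNonUpdate} to write $\emptyset = T_{(\sigma(C);C)}(\emptyset)$; Proposition~\ref{prop: subjectiveUpdate} then decomposes this as $\emptyset = \sum_{i=1}^n \tfrac{1}{n}\, \emptyset_{F_i}$ with $F_i \doteq \chi_{\{\gamma_i\}}(C)$, and each $\emptyset_{F_i}$ is pure since $F_i$ is rank one (because $C$ is nondegenerate). If instead $\rho = T_{(\Delta_m;A_m)} \circ \cdots \circ T_{(\Delta_1;A_1)}(\emptyset)$, Proposition~\ref{prop: subjectiveSequentialUpdate} immediately writes $\rho$ as a convex combination of states of the form $T_{(\alpha_m;A_m)} \circ \cdots \circ T_{(\alpha_1;A_1)}(\emptyset)$ with $(\alpha_1,\dots,\alpha_m)$ ranging over $\Delta_1 \times \cdots \times \Delta_m$, so it suffices to handle sequences of purely objective events.

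Fix one such sequence and use Postulate~\ref{post: observables} to choose a nondegenerate $C_1$ with $A_1 = f_1(C_1)$, so that $\chi_{\{\alpha_1\}}(A_1) = \chi_{f_1^{-1}(\alpha_1)}(C_1)$. Proposition~\ref{prop: updatingMixed} gives $T_{(\alpha_1;A_1)}(\emptyset) = T_{(f_1^{-1}(\alpha_1);C_1)}(\emptyset)$, so
\begin{align*}
T_{(\alpha_m;A_m)} \circ \cdots \circ T_{(\alpha_1;A_1)}(\emptyset) = T_{(f_1^{-1}(\alpha_1),\alpha_2,\dots,\alpha_m;\, C_1, A_2,\dots,A_m)}(\emptyset),
\end{align*}
a sequential update in which only the first entry is subjective. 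Applying Proposition~\ref{prop: subjectiveSequentialUpdate} once more to expand that remaining subjective entry yields
\begin{align*}
T_{(\alpha_m;A_m)} \circ \cdots \circ T_{(\alpha_1;A_1)}(\emptyset) = \sum_{i \in f_1^{-1}(\alpha_1)} \lambda_i \, T_{(\alpha_m;A_m)} \circ \cdots \circ T_{(\alpha_2;A_2)}(\emptyset_{F_i}),
\end{align*}
where $F_i \doteq \chi_{\{i\}}(C_1)$ is rank one, $\emptyset_{F_i} = T_{(i;C_1)}(\emptyset)$ is pure, and the weights $\lambda_i \geq 0$ sum to one.

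It remains to check that every summand on the right is pure, which follows from Postulate~\ref{post: pureState} by a short induction on the number of remaining objective updates: $T_{(\alpha_2;A_2)}(\emptyset_{F_i})$ is pure because $\emptyset_{F_i}$ is, and each subsequent $T_{(\alpha_k;A_k)}$ preserves purity by the same postulate. Combined with the first paragraph, this writes every experimentally accessible state as a convex combination of pure states. The main obstacle I anticipate is exactly what this manoeuvre sidesteps: the maps $T_{(\alpha;A)}$ are not \emph{a priori} linear on convex combinations of states, so one cannot simply distribute the next update across an existing pure-state decomposition of the intermediate state. The trick is to refine the \emph{first} event of the chain by a nondegenerate observable and let Proposition~\ref{prop: subjectiveSequentialUpdate} do all the bookkeeping in one stroke, placing the pure states at the beginning of the chain where Postulate~\ref{post: pureState} can transport their purity forward through every subsequent objective update.
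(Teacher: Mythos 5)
Your proof is correct and follows essentially the same strategy as the paper's: reduce to chains of purely objective updates via Proposition~\ref{prop: subjectiveSequentialUpdate}, arrange for each chain to begin with a rank-$1$ projective update of $\emptyset$ (hence a pure state), and propagate purity forward through the remaining objective updates by finite induction with Postulate~\ref{post: pureState}. The only difference is mechanical: the paper prepends a trivial measurement of a fresh nondegenerate observable $A_{0}$, using $T_{(\sigma(A_{0});A_{0})}(\emptyset)=\emptyset$ from Lemma~\ref{lemma: nondegenerateNonUpdate}, whereas you refine the first event $(\af_{1};A_{1})$ through a nondegenerate fine-graining $C_{1}$ of $A_{1}$ via Proposition~\ref{prop: updatingMixed} and then expand the resulting subjective event --- both devices place the pure states at the head of the chain in the same way.
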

\begin{proof}
    Let $\rho$ be an experimentally accessible state, and let $(\Delta_{i};A_{i})$, $i=1,\dots,m$ be a sequence of observable events such that
\begin{align*}
    \rho = (T_{(\Delta_{m},A_{m})} \circ \dots \circ T_{(\Delta_{1},A_{1})})(\emptyset).
\end{align*}
Write $\underline{A} \equiv (A_{1},\dots,A_{m})$, $\underline{\Delta} \equiv \Delta_{1} \times \dots \times \Delta_{m}$, and $T_{(\underline{\Delta};\underline{A})} \equiv T_{(\Delta_{m},A_{m})} \circ \dots \circ T_{(\Delta_{1},A_{1})}$. It follows from proposition~\ref{prop: subjectiveSequentialUpdate} that
\begin{align*}
    \rho &=\sum_{\underline{\af} \in \underline{\Delta}} P_{\emptyset}^{\underline{A}}(\{\underline{\af}\} \vert \underline{\Delta}) T_{(\underline{\af},\underline{A})}(\emptyset),
\end{align*}
where $\underline{\af} \equiv (\af_{1},\dots,\af_{m})$ and  $T_{(\underline{\af};A)} = T_{(\af_{m};A_{m})} \circ \dots \circ T_{(\af_{1};A_{1})}$. Now let $A_{0}$ be any nondegenerate observable. With a slight abuse of notation, let's write $\underline{\Delta}^{(0)} \equiv \sigma(A_{0}) \times \dots \times \Delta_{m}$, $\underline{A}^{(0)} \equiv (A_{0},\dots,A_{m})$ and $\underline{\af}^{(0)}\equiv (\af_{0},\dots,\af_{m}) \in \prod_{i=0}^{m}\sigma(A_{i})$. According to proposition~\ref{prop: subjectiveSequentialUpdate} and lemma~\ref{lemma: nondegenerateNonUpdate},
\begin{align*}
    \rho &=\sum_{\underline{\af} \in \underline{\Delta}} P_{\emptyset}^{\underline{A}}(\{\underline{\af}\} \vert \underline{\Delta}) T_{(\underline{\af},\underline{A})}(T_{(\sigma(A_{0});A_{0})}(\emptyset)) = \sum_{\underline{\af}^{(0)} \in \underline{\Delta}^{(0)}} P_{\emptyset}^{\underline{A}^{(0)}}(\{\underline{\af}^{(0)}\} \vert \underline{\Delta}^{(0)}) T_{(\underline{\af}^{(0)};\underline{A}^{(0)})}(\emptyset))
    \\
    &=\sum_{\underline{\af}^{(0)} \in \underline{\Delta}^{(0)}} P_{\emptyset}^{\underline{A}^{(0)}}(\{\underline{\af}^{(0)}\} \vert \underline{\Delta}^{(0)}) \rho_{\underline{\af}^{(0)}},
\end{align*}
where, for each $\underline{\af}^{(0)} \in \underline{\Delta}^{(0)}$, we define $\rho_{\underline{\af}^{(0)}} \doteq T_{(\underline{\af}^{(0)};\underline{A}^{(0)})}(\emptyset))$. We have $\rho_{\underline{\af}^{(0)}} = T_{(\underline{\af};\underline{A})}(\emptyset_{E_{\af_{0}}})$,
where $E_{\af_{0}}$ is the rank-$1$ projection $\chi_{\{\af_{0}\}}(A_{0})$, thus it follows by finite induction that  $\rho_{\underline{\af}^{(0)}}$ is a pure state, which completes the proof. 
\end{proof}

\begin{corollary}\label{cor: convexHull} Any state can be written as a convex combination of pure states.
\end{corollary}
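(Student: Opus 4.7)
The plan is to combine the two pieces that are already in place: the structural input from postulate~\ref{post: observables}, which tells us that $\mathcal{S}$ is the convex hull of the set $\mathcal{S}_{\emptyset}$ of experimentally accessible states, and the content of proposition~\ref{prop: accessibleState}, which tells us that every element of $\mathcal{S}_{\emptyset}$ is itself a convex combination of pure states. Once both are invoked, the corollary reduces to the algebraic fact that a convex combination of convex combinations is again a convex combination.

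First I would take an arbitrary state $\rho \in \mathcal{S}$. By postulate~\ref{post: observables}, there exist experimentally accessible states $\sigma_{1},\dots,\sigma_{k} \in \mathcal{S}_{\emptyset}$ and non-negative coefficients $\lambda_{1},\dots,\lambda_{k}$ with $\sum_{j}\lambda_{j}=1$ such that
\begin{align*}
    \rho = \sum_{j=1}^{k}\lambda_{j}\sigma_{j}.
\end{align*}
Next, for each $j$, I would apply proposition~\ref{prop: accessibleState} to write
\begin{align*}
    \sigma_{j} = \sum_{\ell=1}^{m_{j}} \mu_{j,\ell}\, \pi_{j,\ell},
\end{align*}
where $\pi_{j,\ell}$ are pure states and $\mu_{j,\ell} \geq 0$ with $\sum_{\ell}\mu_{j,\ell}=1$. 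The one case that needs a separate remark is $\sigma_{j}=\emptyset$, which is not literally produced by a nonempty sequence of updates in definition~\ref{def: accessibleState}; here I would use corollary~\ref{cor: NoUpdate} to rewrite $\emptyset = T_{(\sigma(A_{0});A_{0})}(\emptyset)$ for a nondegenerate $A_{0}$, bringing it into the form covered by proposition~\ref{prop: accessibleState}.

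Finally, substituting these expansions back and using convexity of $\mathcal{S}$ (postulate~\ref{ax: subjectiveUpdate}) together with convexity of the map $\rho \mapsto P_{\rho}$, I would obtain
\begin{align*}
    \rho = \sum_{j=1}^{k}\sum_{\ell=1}^{m_{j}}\lambda_{j}\mu_{j,\ell}\,\pi_{j,\ell},
\end{align*}
which is a convex combination of pure states since the coefficients $\lambda_{j}\mu_{j,\ell}$ are non-negative and sum to $\sum_{j}\lambda_{j}\sum_{\ell}\mu_{j,\ell} = \sum_{j}\lambda_{j} = 1$. There is no serious obstacle here: the genuine work has been done in proposition~\ref{prop: accessibleState}, and the only point requiring a second of care is making sure $\emptyset$ is treated, which is handled by the trivial-update trick just described.
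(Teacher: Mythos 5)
Your proof is correct and follows exactly the route the paper intends: postulate~\ref{post: observables} gives the decomposition into experimentally accessible states, proposition~\ref{prop: accessibleState} decomposes each of those into pure states, and the nested convex combination is flattened. The remark about handling $\sigma_{j}=\emptyset$ via corollary~\ref{cor: NoUpdate} is a nice touch of care that the paper leaves implicit.
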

\subsection{Local orthogonality and the partially ordered set of projections}\label{sec: latticeOfProjections}

The first part of postulate~\ref{post: transitionProbability}, which states that transition probabilities do not depend on the order of the transition, enables us to show that the set of projections is a partially ordered set whose ordering is defined by functional relations. The results we prove in this section will be explored at length in the next section.

Let $E,F$ be compatible projections, and let $E \xleftarrow{\chi_{\Delta}} A \xrightarrow{\chi_{\Sigma}} F$ be a cone for them. We have $\Delta \subset \Sigma$ if and only if $\Delta \cap \Sigma = \Delta$, which in turn is equivalent to saying that $\chi_{\Delta \cap \Sigma}(A) = \chi_{\Delta}(A)$, that is, $E \circ F = E$. Given any other cone  $E \xleftarrow{\chi_{\Delta'}} B \xrightarrow{\chi_{\Sigma'}} F$ for $E$ and $F$, we know that $E \circ F = \chi_{\Delta' \cap \Sigma'}(B)$ and $\chi_{\Delta'}(B) = E$, thus $\Delta \subset \Sigma$ if and only if $\Delta' \subset \Sigma'$. It shows that the following relation is well-defined.

\begin{definition}[Partial order]\label{def: partialOrder} Let $E,F$ be projections of the system $\mathfrak{S}$. We say that $E$ is smaller than or equals to $F$, denoted $E \leq F$, if and only if $E$ and $F$ are compatible and $E \circ F = E$, which is equivalent to saying that, given any cone  $E \xleftarrow{\chi_{\Delta}} A \xrightarrow{\chi_{\Sigma}} F$ for $E$ and $F$, we have $\Delta \subset \Sigma$. This definition canonically induces a relation $\leq$ in the set $\mathcal{P}$ of projections of the system.  
\end{definition}

For any projection $E$ we have $E \circ E = E$, thus $E \leq E$, which shows that the relation $\leq$ is reflexive. Similarly, if two projections $E,F$ satisfy $E \leq F$ and $F \leq E$, we have $E = E \circ F = F \circ E = F$, which proves that $\leq$ is antisymmetric. To show that $\leq$ is a partial order, as the title of definition~\ref{def: partialOrder} suggests, we have to prove that $\leq$ is transitive. In order to do that, we need some preliminary results. The first one is a straightforward generalization of proposition~\ref{prop: orthogonalityTransition}:

\begin{lemma}[Orthogonality]\label{lemma: orthogonality} Let $E,F$ be projections. The following conditions are equivalent.
\begin{itemize}
    \item[(a)] $E$ and $F$ are orthogonal (definition~\ref{def: orthononality})
    \item[(b)] The ``probability of transitioning between $E$ and $F$'' is zero, that is, $\langle F \rangle_{E}$.
\end{itemize}
\end{lemma}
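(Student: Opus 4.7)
The direction (a) $\Rightarrow$ (b) is essentially a direct computation: if $E$ and $F$ are compatible with $E \circ F = 0$, then using proposition~\ref{prop: updatingMixed} together with lemma~\ref{lemma: projectionJointProbability},
\begin{align*}
\langle F \rangle_E = P_{T_E(\emptyset)}(1;F) = \frac{p_\emptyset(1,1;E,F)}{p_\emptyset(1;E)} = \frac{\langle E \circ F\rangle_\emptyset}{\langle E \rangle_\emptyset} = 0.
\end{align*}
So no surprise here; the real content is the converse.

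The direction (b) $\Rightarrow$ (a) is the one that requires work, because the hypothesis $\langle F \rangle_E = 0$ does not presuppose compatibility of $E$ and $F$. My plan is to reduce to the rank-$1$ case, where proposition~\ref{prop: orthogonalityTransition} already tells us that zero transition probability is equivalent to orthogonality. Specifically, I would use proposition~\ref{prop: nondegenerateDecomposition} to write $E = \sum_{i=1}^{k} E_i$ and $F = \sum_{j=1}^{\ell} F_j$ as sums of pairwise orthogonal rank-$1$ projections ($k = \text{tr}(E)$, $\ell = \text{tr}(F)$). Lemma~\ref{lemma: projectiveStates} then gives $\emptyset_E = \tfrac{1}{k}\sum_i \emptyset_{E_i}$, and combining proposition~\ref{prop: expectationConvex} with the finite additivity of states (lemma~\ref{lemma: finiteAdditivity}),
\begin{align*}
\langle F \rangle_E = \frac{1}{k}\sum_{i=1}^k \sum_{j=1}^\ell \langle F_j \rangle_{E_i} = \frac{1}{k}\sum_{i,j} P(E_i \ri F_j).
\end{align*}
Since every term is non-negative, $\langle F \rangle_E = 0$ forces $P(E_i \ri F_j) = 0$, and proposition~\ref{prop: orthogonalityTransition} then yields $E_i \perp F_j$ for every pair $(i,j)$.

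From here, the remaining step is to package these pairwise orthogonalities into $E \perp F$. The collection $\{E_1,\dots,E_k, F_1,\dots,F_\ell\}$ is pairwise compatible (pairwise orthogonal projections are in particular pairwise compatible), so by Specker's principle (corollary~\ref{cor: specker}) there is a single observable $C$ whose functions are all the $E_i$ and $F_j$. Iterated application of lemma~\ref{lemma: compatibleWithConjunction} then shows that $E = \sum_i E_i$ and $F = \sum_j F_j$ are compatible, and the distributivity of $\circ$ over $+$ (proposition~\ref{prop: algebraicProperties}) gives
\begin{align*}
E \circ F = \Bigl(\sum_i E_i\Bigr) \circ \Bigl(\sum_j F_j\Bigr) = \sum_{i,j} E_i \circ F_j = 0,
\end{align*}
so $E \perp F$ as required. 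The main obstacle I anticipate is not computational but bookkeeping: one must be careful to justify that the distributive law can be applied, which relies on having a common fine-graining for all the $E_i$ and $F_j$ simultaneously --- exactly what Specker's principle (and thus postulate~\ref{post: commutativity}) buys us.
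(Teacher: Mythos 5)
Your proposal is correct and follows essentially the same route as the paper's proof: the same direct computation for (a)$\Rightarrow$(b), and for (b)$\Rightarrow$(a) the same decomposition of $E$ and $F$ into pairwise orthogonal rank-$1$ projections via proposition~\ref{prop: nondegenerateDecomposition} and lemma~\ref{lemma: projectiveStates}, reduction to $P(E_i \ri F_j)=0$ and proposition~\ref{prop: orthogonalityTransition}, and a common cone from Specker's principle to conclude $E \perp F$. Your only addition is that you spell out the final packaging step (via distributivity of $\circ$ over $+$) that the paper leaves as ``one can easily show''.
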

\begin{proof}
If $E$ and $F$ are orthogonal, they are compatible, thus lemmas~\ref{lemma: traceSimple} and~\ref{lemma: projectionJointProbability} imply that
\begin{align*}
    \langle F\rangle_{E} = \frac{P_{\emptyset}(1,1;F,E)}{P_{\emptyset}(1,F)} = \frac{\Tra{FE}}{\Tra{F}} = \frac{\Tra{0}}{\Tra{F}} = 0,
\end{align*}
so item $(a)$ implies item $(b)$. On the other hand, assume that the projections $E$ and $F$ satisfy item $(b)$, and let $E=\sum_{i=1}^{m} E_{i}$ and $F=\sum_{i=1}^{k}F_{i}$ be decompositions of $E$ and $F$ in terms of pairwise orthogonal rank-$1$ projections (see proposition~\ref{prop: nondegenerateDecomposition}). According to lemma~\ref{lemma: projectiveStates}, we have
\begin{align*}
    \emptyset_{E} &= \sum_{i=1}^{m} \frac{\Tra{E_{i}}}{\Tra{E}}\emptyset_{E_{i}} = \sum_{i=1}^{m}\frac{1}{\Tra{E}}\emptyset_{E_{i}}
\end{align*}
and $\emptyset_{F} = \sum_{i=1}^{k} \frac{1}{\Tra{F}}\emptyset_{F_{i}}$ (recall that $m=\Tra{E}$ and $k=\Tra{F}$). Hence,
\begin{align*}
    \langle F \rangle_{E} &= \frac{1}{\Tra{E}}\sum_{j=1}^{k}\sum_{i=1}^{m} \langle F_{j}\rangle_{E_{i}}.
\end{align*}
We have $\langle F \rangle_{E}=0$ if and only if  $\langle F_{j}\rangle_{E_{i}}=0$ for every pair $i,j$, and, according to proposition~\ref{prop: orthogonalityTransition}, this is equivalent to saying that $E_{i}\perp F_{j}$. Using a cone for $E_{1},\dots,E_{m},F_{1},\dots,F_{k}$ (see corollary~\ref{cor: specker}), one can easily show that $E \perp F$, thus the proof is complete.
\end{proof}

Thanks to lemma~\ref{lemma: orthogonality}, we can use states to characterize orthogonality:

\begin{lemma}[Orthogonality via states]\label{lemma: stateOrthogonality} Let $E$, $F$ be projections. The following claims are equivalent.
\begin{itemize}
    \item[(a)] $E$ and $F$ are orthogonal (definition~\ref{def: orthononality}).
    \item[(b)] for every state $\rho$,
    \begin{align}\label{eq: pairwiseExclusivity}
        \Exp{E} + \Exp{F} \leq 1.
    \end{align}
\end{itemize}
\end{lemma}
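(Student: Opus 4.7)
The plan is to prove the two directions separately, exploiting the earlier machinery about orthogonal sums and transition probabilities.

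For the forward direction $(a) \Rightarrow (b)$, the approach is straightforward. Assume $E \perp F$. Since the sum of two orthogonal projections is again a projection (as established just before equation~\ref{eq: sumOfProjections}), $E + F$ is a projection. By finite additivity of expectations (lemma~\ref{lemma: finiteAdditivity}), for any state $\rho$ we have $\Exp{E} + \Exp{F} = \Exp{E+F}$, and since the expectation of a projection lies in $[0,1]$, the bound~\ref{eq: pairwiseExclusivity} follows immediately. No subtleties here.

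For the reverse direction $(b) \Rightarrow (a)$, the key idea is to test the inequality against the projective state $\emptyset_E$ and reduce to lemma~\ref{lemma: orthogonality}. First handle the trivial case $E = 0$: then $E \circ F = 0$ and compatibility is automatic, so $E \perp F$. For $E \neq 0$, I would evaluate $\langle E \rangle_E$ using self-compatibility: by definition $\emptyset_E = T_{(1;E)}(\emptyset)$, and lemma~\ref{lemma: selfCompatibility} gives $\langle E \rangle_E = P_{\emptyset}(E^{1} \mid E^{1}) = P^{E}_{\emptyset}(\{1\} \mid \{1\}) = 1$, where the denominator $\langle E \rangle_{\emptyset} = \Tra{E}/n$ is nonzero because $E \neq 0$. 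Instantiating~\ref{eq: pairwiseExclusivity} at $\rho = \emptyset_E$ then yields $\langle F \rangle_E \leq 0$, hence $\langle F \rangle_E = 0$. Lemma~\ref{lemma: orthogonality} now delivers $E \perp F$, closing the argument.

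I do not foresee any serious obstacle. The only point that requires slight care is making sure $\emptyset_E$ is a legitimate state against which to test the universally quantified inequality (guaranteed by proposition~\ref{prop: updatingMixed} which identifies $\emptyset_E$ with an ordinary updated state) and verifying the nondegeneracy condition $\langle E \rangle_{\emptyset} \neq 0$ needed to invoke self-compatibility, which follows from the trace formula $\langle E \rangle_{\emptyset} = \Tra{E}/n$ of lemma~\ref{lemma: traceSimple} together with $E \neq 0 \Rightarrow \Tra{E} \geq 1$. Everything else reduces to already-proved lemmas.
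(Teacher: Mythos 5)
Your proof is correct and follows essentially the same route as the paper: additivity for the forward direction, and testing the inequality at the projective state $\emptyset_{E}$ to force $\langle F\rangle_{E}=0$ and then invoking lemma~\ref{lemma: orthogonality} for the converse. Your explicit handling of the degenerate case $E=0$ (where $\emptyset_{E}$ would be the null state) is a small point of extra care that the paper's proof glosses over.
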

\begin{proof}
The additivity of probability measures ensures that item $(a)$ implies item $(b)$. On the other hand, assume that item $(b)$ is satisfied, and consider the projective state $\emptyset_{E} \equiv T_{(1,E)}(\emptyset)$ (definition~\ref{def: projectiveState}). We have $\langle E \rangle_{E} = 1$, thus the validity of equation~\ref{eq: pairwiseExclusivity}  for the state $\rho=\emptyset_{E}$ implies that $\langle F \rangle_{E}= 0$, thus, according to lemma~\ref{lemma: orthogonality}, $E$ and $F$ are orthogonal.
\end{proof}

More importantly, we can characterize the relation $\leq$ (definition~\ref{def: partialOrder}) using states:

\begin{proposition}[Ordering]\label{prop: ordering}
    Let $E,F$ be projections. Then the following claims are equivalent.
    \begin{itemize}
        \item[(a)] $E \leq F$ (definition~\ref{def: partialOrder}).
        \item[(b)] For every state $\rho$,
        \begin{align}
            \Exp{E} \leq\label{eq: ordering} \Exp{F}.
        \end{align}
    \end{itemize}
\end{proposition}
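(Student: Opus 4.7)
The plan is to prove the two implications separately, exploiting the orthocomplement (definition~\ref{def: orthocomplement}) and the characterization of orthogonality via states (lemma~\ref{lemma: stateOrthogonality}) to reduce (b) $\Rightarrow$ (a) to a known criterion.

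For (a) $\Rightarrow$ (b), I will unfold the definition of $E \leq F$. Compatibility gives a cone $E \xleftarrow{\chi_{\Delta}} A \xrightarrow{\chi_{\Sigma}} F$, and $E \circ F = E$ translates, via definition~\ref{def: productProjections}, into $\chi_{\Delta \cap \Sigma}(A) = \chi_{\Delta}(A)$, which by the separability postulate forces $\Delta \subset \Sigma$. Consequently, for every state $\rho$,
\begin{align*}
\Exp{E} = P_{\rho}(\Delta;A) \leq P_{\rho}(\Sigma;A) = \Exp{F}
\end{align*}
by monotonicity of the probability measure $P_{\rho}^{A}$.

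For (b) $\Rightarrow$ (a), I will pass to the orthocomplement: since $\Exp{F^{\perp}} = 1 - \Exp{F}$ (item (g) of lemma~\ref{lemma: orthocomplement}), the hypothesis gives $\Exp{E} + \Exp{F^{\perp}} \leq 1$ for every state $\rho$. Lemma~\ref{lemma: stateOrthogonality} then yields $E \perp F^{\perp}$, so $E$ and $F^{\perp}$ are compatible and $E \circ F^{\perp} = 0$. From compatibility of $E$ with $F^{\perp}$ I will extract compatibility of $E$ with $F$: pick any cone $E \xleftarrow{\chi_{\Delta}} C \xrightarrow{\chi_{\Sigma'}} F^{\perp}$, and observe that $\chi_{\sigma(C)\setminus \Sigma'}(C)$ is statistically equivalent to $F$ (since $F = \mathds{1} - F^{\perp}$ and $\chi_{\sigma(C)} - \chi_{\Sigma'} = \chi_{\sigma(C)\setminus \Sigma'}$), so $E \xleftarrow{\chi_{\Delta}} C \xrightarrow{\chi_{\sigma(C)\setminus \Sigma'}} F$ is a cone exhibiting compatibility of $E$ and $F$.

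With compatibility of $E$ and $F$ in hand, the product $E \circ F$ is defined, and proposition~\ref{prop: algebraicProperties} allows the manipulation
\begin{align*}
0 = E \circ F^{\perp} = E \circ (\mathds{1} - F) = E - E \circ F,
\end{align*}
so $E \circ F = E$, i.e., $E \leq F$. The only delicate point is the extraction of $E$--$F$ compatibility from $E$--$F^{\perp}$ compatibility, which I handle by the explicit cone construction above; everything else is a direct assembly of previously established lemmas.
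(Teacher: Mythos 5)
Your proof is correct and follows essentially the same route as the paper's: monotonicity of $P_{\rho}^{A}$ for (a)$\Rightarrow$(b), and for (b)$\Rightarrow$(a) the reduction to $\Exp{E}+\Exp{F^{\perp}}\leq 1$ followed by lemma~\ref{lemma: stateOrthogonality} to get $E\perp F^{\perp}$. The only (immaterial) difference is in the last step, where you recover $E\leq F$ algebraically via $0=E\circ(\mathds{1}-F)=E-E\circ F$ while the paper reads it off as the set inclusion $\Delta\subset\sigma(A)\setminus\Sigma$ in a cone; both are valid once compatibility of $E$ and $F$ is extracted from the cone for $E$ and $F^{\perp}$, as you do.
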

\begin{proof}
    The monotonicity of probability measures \cite{klenke2020probability} ensures that item $(a)$ implies item $(b)$. Now let $E,F$ be projections satisfying item $(b)$, and let $\rho$ be any state. According to lemma~\ref{lemma: orthocomplement}, $\Exp{F} = 1-\Exp{F^{\perp}}$, thus we have $\Exp{E} \leq \Exp{F}$ if and only if $\Exp{E} + \Exp{F^{\perp}} \leq 1$, which, according to lemma~\ref{lemma: stateOrthogonality}, is equivalent to saying that $E$ and $F^{\perp}$ are orthogonal.Finally, $E \perp F^{\perp}$ if and only if, for each cone $E \xleftarrow{\chi_{\Delta}} A \xrightarrow{\chi_{\Sigma}} F^{\perp}$, we have $\Delta \cap \Sigma = \oldemptyset$, or equivalently $\Delta \subset \sigma(A) \backslash \Sigma$, and since $F = \chi_{\sigma(A)\backslash \Sigma}$ (see lemma~\ref{lemma: orthocomplement}), we have $E \leq F$, which completes the proof. 
\end{proof}

It immediately follows from proposition~\ref{prop: ordering} that $\leq$ is  transitive, so $\leq$ is indeed a partial order in $\mathcal{P}$. Consequently, we have a partially ordered set (poset) of projections. 

Let $E_{1},\dots E_{m}$ be pairwise compatible projections, and let $F$ be a projection satisfying $F \leq E_{i}$ for every $i$. According to Specker's principle (corollary~\ref{cor: specker}) and lemma~\ref{lemma: projection}, there is an observable $A$ and sets $\Delta_{1},\dots,\Delta_{m},\Sigma \subset \sigma(A)$ such that $E_{i}=\chi_{\Delta_{i}}(A)$ for all $i$ and $F = \chi_{\Sigma}(A)$. According to definition~\ref{def: partialOrder}, $\Sigma \subset \Delta_{i}$ for every $i$, thus $\Sigma \subset \cap_{i=1}^{m} \Delta_{i}$, which implies that $\chi_{\Sigma}(A) \leq \chi_{\cap_{i=1}^{m} \Delta_{i}}(A)$, or equivalently $F \leq \prod_{i=1}^{m}E_{i}$. It proves that any set of pairwise compatible projections $\{E_{1},\dots,E_{m}\}$ has an \textbf{infimum} $\wedge_{i=1}^{m} E_{i} \equiv \wedge\{E_{1},\dots,E_{m}\}$ in the poset $\mathcal{P}$, which is given by the product $\prod_{i=1}^{m}E_{i}$, i.e.,
\begin{align}
    \bigwedge_{i=1}^{m}E_{i} =\label{eq: infimum} \prod_{i=1}^{m}E_{i} = \chi_{\cap_{i=1}^{m} \Delta_{i}}(A).
\end{align}
It is analogous to show that any set of pairwise compatible projections $\{E_{1},\dots,E_{m}\}$ has a \textbf{supremum} $\vee_{i=1}^{m} E_{i} \equiv \vee\{E_{1},\dots,E_{m}\}$ in the poset $\mathcal{P}$, which satisfies
\begin{align}
    \bigvee_{i=1}^{m}E_{i} = \chi_{\cup_{i=1}^{m} \Delta_{i}}(A)
\end{align}
for any cone $A\xrightarrow{\chi_{\Delta_{i}}}E_{i}$, $i=1,\dots,m$, of $E_{1},\dots,E_{m}$. 

The unit $\mathds{1}$ and the zero $0$ are \textbf{top} and \textbf{bottom} elements in the poset $\mathcal{P}$  respectively, i.e., for each projection $E$, we have  $0 \leq E \leq \mathds{1}$.  

It easily follows from proposition~\ref{prop: ordering} and lemma~\ref{lemma: traceSimple} that the trace is order-preserving:
\begin{lemma}[Trace and order]\label{lemma: traceAndOrder} Let $E$, $F$ be projections. If $E \leq F$, then $\Tra{E} \leq \Tra{F}$.    
\end{lemma}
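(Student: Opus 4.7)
The plan is to apply proposition~\ref{prop: ordering} to the completely mixed state and then convert the inequality on expectations into an inequality on traces via lemma~\ref{lemma: traceSimple}. Concretely, I would argue as follows.

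Assume $E \leq F$. By proposition~\ref{prop: ordering}, this hypothesis is equivalent to the statement that $\langle E \rangle_{\rho} \leq \langle F \rangle_{\rho}$ for \emph{every} state $\rho$. In particular, specialising to the completely mixed state $\emptyset$ (which exists in $\mathcal{S}$ by postulate~\ref{post: observables}), we obtain
\begin{align*}
    \langle E \rangle_{\emptyset} \leq \langle F \rangle_{\emptyset}.
\end{align*}
Now invoke lemma~\ref{lemma: traceSimple}, which identifies the intrinsic expectation of a projection with its normalized trace: $\langle E \rangle_{\emptyset} = \Tra{E}/n$ and $\langle F \rangle_{\emptyset} = \Tra{F}/n$, where $n \equiv \dim(\mathfrak{S})$. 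Multiplying the displayed inequality by $n > 0$ yields $\Tra{E} \leq \Tra{F}$, as desired.

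There is no genuine obstacle here: the content of the lemma is already packaged into the two preceding results, so the proof is essentially a one-line composition. The only thing worth double-checking is that both inputs are applicable simultaneously, namely that $\emptyset \in \mathcal{S}$ (guaranteed by postulate~\ref{post: observables}) and that $n \geq 1$ (guaranteed by definition~\ref{def: finiteSystem}, so that division/multiplication by $n$ preserves the inequality).
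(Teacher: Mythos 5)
Your proof is correct and is precisely the route the paper indicates: the lemma is stated there as following "from proposition~\ref{prop: ordering} and lemma~\ref{lemma: traceSimple}", i.e., specialize the expectation inequality to the completely mixed state and divide out the dimension. Nothing further is needed.
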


Although we proved the exclusivity principle (corollary~\ref{cor: exclusivityPrinciple}) back in section~\ref{sec: spectralTheory} --- and, as we mentioned, it could have been proved even earlier, in section~\ref{sec: specker} ---, only now we can show that our system satisfies the important principle of ``local orthogonality'' \cite{fritz2013local}. The reason is that, as far as we can tell, postulates \ref{ax: separability}-\ref{post: commutativity} do not imply that the equivalence we proved in proposition~\ref{prop: ordering} is valid, and therefore, if $\{E_{1},\dots,E_{m}\}$ and $\{F_{1},\dots,F_{k}\}$ are both sets of pairwise orthogonal projections and $E_{i} \perp F_{j}$ for some pair of indexes $i,j$, we cannot assure, based only on those postulates, that $\prod_{i'=1}^{m}E_{i'}$  and $\prod_{j'}^{k}F_{j'}$ (or equivalently $\wedge_{i'=1}^{m}E_{i'}$  and $\wedge_{j'}^{k}F_{j'}$) are orthogonal. Now that we have proposition~\ref{prop: ordering}, this result immediately follows from the following lemma.

\begin{lemma}\label{lemma: orderAndOrthogonality} Let $E,E',F,F'$ be projections satisfying $E' \leq E$ and $F' \leq F$. If $E \perp F$, then $E \perp E'$.
\end{lemma}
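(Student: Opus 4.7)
The conclusion as typeset, $E \perp E'$, cannot be the intended statement: combining $E' \leq E$ with $E \perp E'$ forces $E' \circ E = E'$ and $E' \circ E = 0$ simultaneously, hence $E' = 0$, which is not a consequence of the other hypotheses (take $E'=E$ any nonzero projection, $F=E^\perp$, $F'=0$). The statement must read $E' \perp F'$, which is the assertion actually needed to deduce that products of pairwise orthogonal families remain orthogonal in the sense invoked after lemma~\ref{lemma: orderAndOrthogonality}. I will outline a proof of that corrected statement; the literal statement admits no proof.

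The plan is to move everything to the level of expectations and chain inequalities. The tools are already in hand: proposition~\ref{prop: ordering} characterizes $\leq$ by pointwise domination of expectations, and lemma~\ref{lemma: stateOrthogonality} characterizes orthogonality by the inequality $\Exp{E} + \Exp{F} \leq 1$ holding for every state $\rho$. These two characterizations make the proof a one-line diagonal bound once the hypotheses are translated.

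Concretely, I would fix an arbitrary state $\rho$ and write
\begin{align*}
    \Exp{E'} + \Exp{F'} \;\leq\; \Exp{E} + \Exp{F'} \;\leq\; \Exp{E} + \Exp{F} \;\leq\; 1,
\end{align*}
where the first inequality uses proposition~\ref{prop: ordering} applied to $E' \leq E$, the second uses the same proposition applied to $F' \leq F$, and the third uses lemma~\ref{lemma: stateOrthogonality} applied to the hypothesis $E \perp F$. Since $\rho$ was arbitrary, the inequality $\Exp{E'} + \Exp{F'} \leq 1$ holds for every state, and a second appeal to lemma~\ref{lemma: stateOrthogonality} (in the reverse direction) yields $E' \perp F'$.

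There is essentially no obstacle; the result is immediate once one has the characterization of $\leq$ and of orthogonality by the behavior of states, which are precisely proposition~\ref{prop: ordering} and lemma~\ref{lemma: stateOrthogonality} proved immediately above. The only subtlety worth flagging is that, before these two equivalences were available (i.e., on the basis of postulates~\ref{ax: separability}-\ref{post: commutativity} alone), this monotonicity of orthogonality under $\leq$ was not automatic; it is exactly the ingredient that underwrites the principle of local orthogonality discussed in the paragraph preceding the lemma, and that is why the author is careful to place it here, after postulate~\ref{post: transitionProbability} has entered.
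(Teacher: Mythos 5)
Your proof is correct and essentially identical to the paper's: the paper likewise chains $\Exp{E'}+\Exp{F'} \leq \Exp{E}+\Exp{F} \leq 1$ using proposition~\ref{prop: ordering} on both hypotheses $E' \leq E$, $F' \leq F$, and then invokes lemma~\ref{lemma: stateOrthogonality} to conclude $E' \perp F'$. Your typo diagnosis is also confirmed by the paper itself, whose proof derives exactly $E' \perp F'$ rather than the literal conclusion $E \perp E'$ printed in the lemma statement, and your counterexample ($E'=E \neq 0$, $F=E^{\perp}$, $F'=0$) correctly shows the literal statement is unprovable.
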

\begin{proof}
    Let $E,E',F,F'$ be projections satisfying $E' \leq E$,  $F' \leq F$ and $E \perp F$. According to proposition~\ref{prop: ordering}, $P(E')+P(F') \leq P(E) + P(F) \leq 1$, thus it follows from lemma~\ref{lemma: stateOrthogonality} that $E' \perp F'$.
\end{proof}

More broadly:
\begin{corollary} Let  $(E_{1},\dots,E_{m})$ and $(F_{1},\dots,F_{m})$ both be sequences of pairwise compatible projections, and suppose that $E_{i} \leq F_{i}$ for each $i$. If $F_{1},\dots,F_{m}$ are pairwise orthogonal, then $E_{1},\dots,E_{m}$ are pairwise orthogonal.
\end{corollary}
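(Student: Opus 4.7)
The plan is to reduce the claim to a direct pairwise application of the previous lemma (lemma: orderAndOrthogonality). The hypothesis supplies two ingredients for every pair of distinct indices: an ordering $E_i \leq F_i$, $E_j \leq F_j$, and an orthogonality $F_i \perp F_j$. Since the lemma takes exactly this input and produces $E_i \perp E_j$, the corollary follows by simply iterating over all unordered pairs $\{i,j\} \subset \{1,\dots,m\}$ with $i \neq j$.

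Concretely, I would fix arbitrary distinct indices $i,j \in \{1,\dots,m\}$ and apply lemma: orderAndOrthogonality with the substitution $E \leftarrow F_i$, $F \leftarrow F_j$, $E' \leftarrow E_i$, $F' \leftarrow E_j$. The three hypotheses of the lemma are verified directly: $E_i \leq F_i$ by assumption, $E_j \leq F_j$ by assumption, and $F_i \perp F_j$ because the sequence $(F_1,\dots,F_m)$ is pairwise orthogonal. The conclusion $E_i \perp E_j$ is precisely what is needed. Since $i,j$ were arbitrary, the sequence $(E_1,\dots,E_m)$ is pairwise orthogonal.

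There is essentially no obstacle. The pairwise compatibility of the $E_i$'s stated in the hypothesis is not even needed for the conclusion (orthogonality already entails compatibility via definition: orthononality), but it is a natural condition ensuring the ordering $E_i \leq E_j$ even makes sense relative to the standard setup; the proof does not invoke it beyond what lemma: orderAndOrthogonality already implicitly uses. The whole argument is a single uniform application of the lemma, reflecting the fact that orthogonality is a strictly pairwise property and the passage from pairs to finite sequences is purely logical quantification. No use of Specker's principle, no simultaneous fine-graining argument, and no recourse to expectations is required beyond what was already packaged inside lemma: orderAndOrthogonality (which itself rested on proposition: ordering and lemma: stateOrthogonality).
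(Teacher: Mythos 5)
Your proof is correct and matches the paper's intent exactly: the corollary is stated there as an immediate pairwise application of Lemma~\ref{lemma: orderAndOrthogonality} (whose proof in fact establishes $E'\perp F'$, despite the typo ``$E\perp E'$'' in its statement), and no further ingredient is needed. Your observation that the pairwise compatibility of the $E_{i}$ is not actually used is also accurate.
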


In our framework, local orthogonality corresponds to the following result.

\begin{proposition}[Local orthogonality] Let $(A_{1}^{(j)},\dots,A_{m}^{(j)})$ be a sequence of pairwise compatible observables for each $j \in \{1,\dots, k\}$, and let $(\af_{1}^{(j)},\dots,\af_{m}^{(j)}) \in \prod_{i=1}^{m}\sigma(A_{i})$. Suppose that, for each pair of distinct indexes $j,j' \in \{1,\dots,m\}$, we have $A_{i}^{(j)}=A_{i}^{(j')}$ and $\af_{i}^{(j)} \neq \af_{i}^{(j')}$ for some $i \in \{1,\dots,m\}$. Then, for every state $\rho$,
\begin{align}
    \sum_{j=1}^{k} p_{\rho}(\af_{1}^{(j)},\dots,\af_{m}^{(j)};A_{1}^{(j)},\dots,A_{m}^{(j)}) \leq 1.
\end{align}
\end{proposition}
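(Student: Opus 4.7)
The plan is to rewrite each joint probability in the sum as the expectation of a single projection, show that the resulting projections are pairwise orthogonal, and then invoke the exclusivity principle.

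First, for each $j \in \{1,\dots,k\}$, since the observables $A_1^{(j)}, \dots, A_m^{(j)}$ are pairwise compatible, so are the projections $E_{\alpha_i^{(j)}}^{(j)} \doteq \chi_{\{\alpha_i^{(j)}\}}(A_i^{(j)})$ for $i=1,\dots,m$. Hence the product $P_j \doteq \prod_{i=1}^{m} E_{\alpha_i^{(j)}}^{(j)}$ is a well-defined projection, and by equation~\ref{eq: infimum} it coincides with the infimum $\bigwedge_{i=1}^{m} E_{\alpha_i^{(j)}}^{(j)}$ in the poset $\mathcal{P}$. Moreover, lemma~\ref{lemma: projectionJointProbability} gives, for every state $\rho$,
\begin{align*}
    p_\rho(\alpha_1^{(j)},\dots,\alpha_m^{(j)};A_1^{(j)},\dots,A_m^{(j)}) = \langle P_j \rangle_\rho.
\end{align*}

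Next I would establish pairwise orthogonality of $P_1,\dots,P_k$. Fix distinct indices $j,j' \in \{1,\dots,k\}$. By hypothesis, there exists $i_0$ with $A_{i_0}^{(j)}=A_{i_0}^{(j')}$ and $\alpha_{i_0}^{(j)} \neq \alpha_{i_0}^{(j')}$. The projections associated with distinct eigenvalues of the same observable are orthogonal (they are characteristic functions of disjoint singletons evaluated on the same observable; see the discussion following definition~\ref{def: projectionOfEigenvalue}), so $E_{\alpha_{i_0}^{(j)}}^{(j)} \perp E_{\alpha_{i_0}^{(j')}}^{(j')}$. Since $P_j$ is the infimum of the $E_{\alpha_i^{(j)}}^{(j)}$'s we have $P_j \leq E_{\alpha_{i_0}^{(j)}}^{(j)}$ and similarly $P_{j'} \leq E_{\alpha_{i_0}^{(j')}}^{(j')}$. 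Lemma~\ref{lemma: orderAndOrthogonality} then yields $P_j \perp P_{j'}$.

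Finally, the exclusivity principle (corollary~\ref{cor: exclusivityPrinciple}) applied to the pairwise orthogonal projections $P_1,\dots,P_k$ gives $\sum_{j=1}^{k} \langle P_j \rangle_\rho \leq 1$ for every state $\rho$, which is exactly the desired inequality.

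I do not anticipate any serious obstacle: each ingredient (rewriting sequential probabilities as projection expectations, using the lattice order to ``propagate'' orthogonality from a single coordinate to the whole product, and concluding via exclusivity) is already in place. The one point to be slightly careful about is that the hypothesis gives orthogonality of a particular \emph{pair} of coordinate projections for each pair $(j,j')$ rather than orthogonality along a fixed coordinate, which is precisely why lemma~\ref{lemma: orderAndOrthogonality} is the right tool: it lets the index $i_0$ depend on the pair $(j,j')$ while still delivering $P_j \perp P_{j'}$.
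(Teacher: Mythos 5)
Your proposal is correct and follows essentially the same route as the paper's proof: rewrite each sequential probability as $\Exp{\prod_{i}E^{(j)}_{i}}$ via lemma~\ref{lemma: projectionJointProbability}, note $\prod_{i}E^{(j)}_{i}\leq E^{(j)}_{i_0}$, propagate the coordinate-wise orthogonality to the products via lemma~\ref{lemma: orderAndOrthogonality}, and conclude with the exclusivity principle. Your closing remark about letting $i_0$ depend on the pair $(j,j')$ is exactly the point the paper handles with its choice of index $i^{(j,j')}$.
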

\begin{proof}
    For each $j \in \{1,\dots,k\}$ and  $i \in \{1,\dots,m\}$, denote by $E^{(j)}_{i}$ the projection associated with the observable event $(\af_{i}^{(j)};A_{i}^{(j)})$. 
    According to equation~\ref{eq: infimum} and proposition \ref{prop: ordering}, for each $j \in \{1,\dots,k\}$ and $i \in \{1,\dots,m\}$ we have $\prod_{i'=1}^{m} E^{(j)}_{i'} \leq E^{(j)}_{i}$. For any pair of distinct indexes $j,j' \in \{1,\dots,m\}$, denote by $i^{(j,j')}$ some of the indexes $i \in \{1,\dots,m\}$ for which $A_{i}^{(j)}=A_{i}^{(j')}$ and $\af_{i}^{(j)} \neq \af_{i}^{(j')}$. We have $E^{(j)}_{i^{(j,j')}} \perp E^{(j')}_{i^{(j,j')}}$ whenever $j \neq j'$, and therefore the projections $\prod_{i=1}^{m} E^{(1)}_{i}, \dots, \prod_{i=1}^{m}E^{(k)}_{i}$ are pairwise orthogonal. Finally, lemma~\ref{lemma: projectionJointProbability} and the exclusivity principle (corollary~\ref{cor: exclusivityPrinciple}) complete the proof.
\end{proof}

We can finally show that any system satisfying our postulates is a quantum system, by which we mean that it can be embedded in a Hilbert space. More precisely, we show that, if $\mathfrak{S}$ is a $n$-dimensional system satisfying postulates \ref{ax: separability}-\ref{post: pureState}, then its set of observables $\mathcal{O}$ can be embedded in $\mathcal{B}(H)_{\text{sa}}$ for some $n$-dimensional Hilbert space $H$ in such a way that everything that we have defined and proved about $\mathcal{O}$ coincides with its counterpart in $\mathcal{B}(H)_{\text{sa}}$, and that the set of states of $\mathfrak{S}$ can be embedded in the  set of density operators of $H$ so that the expectation we defined corresponds to the Born rule. This is stated more precisely in theorem~\ref{thm: quantumEmbedding}. In corollary~\ref{cor: quantumCorrelations}, we just emphasized that we have what is called ``quantum correlations'' \cite{amaral2018graph, csw2014graph}.

\section{Quantum mechanics}\label{sec: quantumTheory}

In this section we will assume that the reader is familiar with the standard terminology of functional analysis, so we will use terms like ``operator'', ``projection operator'', ``trace'',  and ``basis'' without explaining what we mean by them. Since we used the same terminology to refer to the definitions we introduced throughout the paper, there will be some ambiguity in our discussion here. However, we believe that the context will make it clear whether we are referring to our definitions, which apply to the system $\mathfrak{S}$, or to their counterparts in functional analysis, which apply to a Hilbert space $H$. To emphasize the distinction between $\mathfrak{S}$ and $H$, we will denote by $\mathcal{O}(\mathfrak{S})$, $\mathcal{P}(\mathfrak{S})$ and $\mathcal{S}(\mathfrak{S})$ the sets of observables, projections and states of $\mathfrak{S}$ respectively, whereas $\mathcal{B}(H)_{\text{sa}}$, $\mathcal{P}(H)$ and $\mathcal{D}(H)$ denote the sets of selfadjoint operators (i.e., quantum observables), projections and density operators (i.e., quantum states) in $H$ respectively. 

Let $\mathcal{P}_{1}(\mathfrak{S})$ be the set of all rank-$1$ projections in a $n$-dimensional system $\mathfrak{S}$. Fix a set $\mathcal{E} \equiv \{E_{1},\dots,E_{n}\}$ of $n$ pairwise orthogonal rank-$1$ projections in $\mathfrak{S}$, which exists according to lemma~\ref{lemma: dimension}. Let $H$ be a $n$ dimensional Hilbert space, and fix an orthonormal basis $\psi_{1},\dots,\psi_{n}$ of $H$. For each $F \in \mathcal{P}_{1}(\mathfrak{S})$, associate a vector $\psi_{F} \in H$ satisfying $\vert \langle \psi_{i}\vert \psi_{F} \rangle \vert^{2} = P(F \rightarrow E_{i})$ for every $i$ (see definition~\ref{def: transitionProbability}) in such a way that the phase of $\langle \psi_{i}\vert \psi_{F}\rangle$ is a phase shift between $F$ and $E_{i}$. That is, fix a function $\mathcal{P}_{1}(\mathfrak{S}) \ni G \xmapsto{\theta} \theta_{G} \in [0,2\pi)$ and, for each $F \in \mathcal{P}_{1}(\mathfrak{S})$, let $\psi_{F} \in H$ be the unique vector such that
\begin{align}\label{eq: component}
    \langle \psi_{i}\vert \psi_{F}\rangle &= \sqrt{P(F \rightarrow E_{i})} e^{i(\theta_{E_{i}} - \theta_{F})}.
\end{align}
for all $i$. It defines a mapping
\begin{align}
    \mathcal{P}_{1}(\mathfrak{S}) \ni F \xmapsto{\Psi} \Psi(F) \equiv \psi_{F} \in H.
\end{align}
For each $F \in \mathcal{P}_{1}(\mathfrak{S})$, $\psi_{F}$ is by construction a unit vector, because $\Vert \psi_{F} \Vert^{2} = \sum_{i=1}^{n} \vert \braket{\psi_{i}}{\psi_{F}}\vert^{2} = \sum_{i=1}^{n}P(F \rightarrow E_{i}) = \sum_{i=1}^{n} \langle E_{i} \rangle_{F} = 1$. Furthermore, according to proposition~\ref{prop: orthogonalityTransition}, for any $j$ we have
\begin{align}
    \psi_{E_{j}} &= \sum_{i=1}^{n} \langle \psi_{i}\vert \psi_{E_{j}} \rangle \psi_{i}= \sum_{i=1}^{n}  \sqrt{P(E_{j} \rightarrow E_{i})} e^{i(\theta_{E_{i}} - \theta_{E_{j}})} \psi_{i} = \sum_{i=1}^{n} \delta_{i,j}e^{i(\theta_{E_{i}}-\theta_{E_{j}})} \psi_{i} = \psi_{j},
\end{align}
i.e., $\Psi$ assigns the rank-1 projections $E_{1},\dots,E_{n} \in \mathcal{E}$ to the basis of $H$ that we fixed. From now on, unless explicitly stated otherwise, by $\psi_{i}$ and $\theta_{i}$ we mean $\psi_{E_{i}}$ and $\theta_{E_{i}}$ respectively. We say that the mapping $\Psi$ is a \textbf{vector assignment defined by the set} $\mathcal{E}$. 

Let $F,G$ be rank-$1$ projections in $\mathfrak{S}$. Then
\begin{align}
    \braket{\psi_{F}}{\psi_{G}} &= \sum_{i}^{n} \braket{\psi_{F}}{\psi_{i}}   \braket{\psi_{i}}{\psi_{G}} = \sum_{i=1}^{n} \sqrt{P(F \rightarrow E_{i}) P(G \rightarrow E_{i})}e^{-i(\theta_{i} - \theta_{F})}e^{i(\theta_{i}-\theta_{G})}
    \\
    &=\label{eq: innerProduct} e^{i(\theta_{F} - \theta_{G})}\sum_{i=1}^{n}\sqrt{P(F \rightarrow E_{i} \rightarrow G)}
\end{align}
It leads us to the following lemma.

\begin{lemma}[Transition probability]\label{lemma: transition} Let $\mathfrak{S}$ be a $n$-dimensional system and $H$ a $n$-dimensional Hilbert space. Let $\Psi: \mathcal{P}_{1}(\mathfrak{S}) \rightarrow H$ be a vector assignment defined by a set $\mathcal{E} \equiv \{E_{1},\dots,E_{n}\} \subset \mathcal{P}_{1}(\mathfrak{S})$ of pairwise orthogonal rank-$1$ projections, as above. Then, for any $F,G \in \mathcal{P}_{1}(\mathfrak{S})$,
\begin{align}
    P(F \rightarrow G)&= \vert \langle \psi_{F} \vert \psi_{G} \rangle \vert^{2},
\end{align}
where $\psi_{J} \equiv  \Psi(J)$ for all $J \in \mathcal{P}_{1}(\mathfrak{S})$.
\end{lemma}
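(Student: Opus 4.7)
The plan is to read off the conclusion directly from the two pieces already assembled in the text preceding the lemma. The computation of $\langle \psi_F \vert \psi_G\rangle$ has already been carried out in equation~\ref{eq: innerProduct}, and postulate~\ref{post: transitionProbability} has been reformulated in lemma~\ref{lemma: transitionSquare} in exactly the shape needed to match the modulus squared of that inner product.

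First, I would take moduli in equation~\ref{eq: innerProduct}. Since the phase factor $e^{i(\theta_F - \theta_G)}$ has modulus one, and since each summand $\sqrt{P(F \to E_i \to G)}$ is a non-negative real number (as a product of non-negative real probabilities, see definition~\ref{def: sequentialTransitionProbability}), we obtain
\begin{align*}
    \vert \langle \psi_F \vert \psi_G \rangle \vert^{2} = \left(\sum_{i=1}^{n} \sqrt{P(F \to E_i \to G)}\right)^{2}.
\end{align*}
No sign or phase subtleties appear here: the key point is precisely that each ``cycle probability'' $P(F \to E_i \to G)$ is a non-negative real, so we never need to worry about cancellations inside the sum.

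Second, I would invoke lemma~\ref{lemma: transitionSquare}. By construction, the set $\mathcal{E} = \{E_1, \dots, E_n\}$ consists of $n$ pairwise orthogonal rank-$1$ projections in the $n$-dimensional system $\mathfrak{S}$, which is exactly the hypothesis of that lemma. Applied to the pair $(F,G)$ with this set $\mathcal{E}$, the lemma yields
\begin{align*}
    P(F \to G) = \left(\sum_{i=1}^{n} \sqrt{P(F \to E_i \to G)}\right)^{2},
\end{align*}
and comparing with the previous display gives $P(F \to G) = \vert \langle \psi_F \vert \psi_G \rangle \vert^{2}$.

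There is no hard step here: all the work is in postulate~\ref{post: transitionProbability} (the marginalization/interference identity) and in the definition~\ref{def: transitionProbability} setup that made lemma~\ref{lemma: transitionSquare} available. If anything deserves emphasis in the write-up, it is that the symmetry $P(F \to E_i) = P(E_i \to F)$ from postulate~\ref{post: transitionProbability} is tacitly used in lemma~\ref{lemma: transitionSquare} to rewrite $\sum_{i,j}\sqrt{P(F \to E_i \to G \to E_j \to F)}$ as the square of $\sum_i \sqrt{P(F \to E_i \to G)}$, which is the nontrivial content behind the otherwise purely bookkeeping argument above.
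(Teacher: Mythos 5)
Your proposal is correct and follows essentially the same route as the paper's own proof: the paper likewise combines equation~\ref{eq: innerProduct} (taking the modulus squared, with the phase factor dropping out) with lemma~\ref{lemma: transitionSquare} to conclude. Your additional remark about where the symmetry $P(F \to E_i) = P(E_i \to F)$ enters is accurate but already absorbed into lemma~\ref{lemma: transitionSquare} as proved in the paper.
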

\begin{proof}
According to lemma~\ref{lemma: transitionSquare} and equations~\ref{eq: component},~\ref{eq: innerProduct},
\begin{align*}
    \vert \langle \psi_{F} \vert \psi_{G} \rangle \vert^{2} &=  \left \vert \sum_{i=1}^{n}\sqrt{P(F \rightarrow E_{i}\rightarrow G)} \right\vert^{2}
    = P(F \rightarrow G).
\end{align*}
\end{proof}

It immediately follows from proposition~\ref{prop: orthogonalityTransition} and lemma~\ref{lemma: transition} that $\psi_{F}$ and $\psi_{G}$ are orthogonal vectors in $H \equiv \mathbb{C}^{n}$ if and only if $F,G$ are orthogonal projections in $\mathfrak{S}$. Furthermore, the same results ensure that $\psi_{F}$, $\psi_{G}$ are linearly dependent if and only if $E = F$, which in turn implies that $\psi_{E}=\psi_{F}$ (because $\Psi$ is a function). Let's emphasize these results.

\begin{corollary}[Vector assignment]\label{cor: vectorAssignment} Let $\mathfrak{S}$ be a $n$-dimensional system and $H$ be a $n$-dimensional Hilbert space. Let $\mathcal{P}_{1}(\mathfrak{S}) \ni F \xmapsto{\Psi} \Psi(F) \equiv \psi_{F} \in H$ be a vector assignment defined by a set $\mathcal{E} \equiv \{E_{1},\dots,E_{n}\} \subset \mathcal{P}_{1}(\mathfrak{S})$ of pairwise orthogonal rank-$1$ projections. Then the following conditions are satisfied.
\begin{itemize}
    \item[(a)] $\Psi$ is injective, and, for any pair $F,G \in \mathcal{P}_{1}(\mathfrak{S})$, $\psi_{F}$ and $\psi_{G}$ are linear dependent if and only if $F=G$ (which in turn implies  $\psi_{F}=\psi_{G}$). 
    \item[(b)] $\Psi$ preserves  and reflects orthogonality, i.e., the vectors $\psi_{F},\psi_{G}$ are orthogonal in $H$ if and only if the projections $F,G$ are orthogonal in $\mathfrak{S}$ (definition~\ref{def: orthononality}). 
    \item[(c)] Let $F_{1},\dots,F_{m}$ be rank-$1$ projections in $\mathfrak{S}$. Then $\psi_{F_{1}},\dots,\psi_{F_{m}}$ are pairwise orthogonal vectors in $H$ if and only if $F_{1},\dots,F_{m}$ are pairwise orthogonal projections in $\mathfrak{S}$. In particular, $\psi_{F_{1}},\dots,\psi_{F_{m}}$ is a basis of $H$ if and and only if $m=n$ and $F_{1},\dots,F_{m} \in \mathcal{P}_{1}(\mathfrak{S})$ are pairwise orthogonal.
\end{itemize}
\end{corollary}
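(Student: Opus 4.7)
The plan is to derive all three items as direct consequences of Lemma~\ref{lemma: transition} together with Proposition~\ref{prop: orthogonalityTransition}, exploiting the fact that every $\psi_F$ is a unit vector by construction (as noted right after the definition of $\Psi$, since $\Vert \psi_F\Vert^2 = \sum_{i=1}^n P(F \to E_i) = \langle \mathds{1}\rangle_F = 1$). In particular, no new computation about phases is needed beyond what already appears in the derivation of equation~\ref{eq: innerProduct}.

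For item (a), I would first observe that $\Psi$ is a function, so $F = G$ trivially gives $\psi_F = \psi_G$. Conversely, if $\psi_F$ and $\psi_G$ are linearly dependent, then, being unit vectors, they differ by a unit-modulus scalar, so $|\langle\psi_F|\psi_G\rangle|^2 = 1$. By Lemma~\ref{lemma: transition} this says $P(F \to G) = 1$, and item $(b)$ of Proposition~\ref{prop: orthogonalityTransition} then forces $F = G$; applying $\Psi$ once more gives $\psi_F = \psi_G$. Injectivity of $\Psi$ is a special case of the same argument (take the dependence coming from equality of the two vectors).

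For item (b), Lemma~\ref{lemma: transition} gives $\langle\psi_F|\psi_G\rangle = 0$ if and only if $P(F\to G) = 0$, and item $(a)$ of Proposition~\ref{prop: orthogonalityTransition} identifies the right-hand condition with $F \perp G$. This is literally one chain of equivalences.

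Item (c) is then an immediate pairwise extension of (b): the vectors $\psi_{F_1},\dots,\psi_{F_m}$ are pairwise orthogonal in $H$ iff for all $i \neq j$ we have $\psi_{F_i}\perp\psi_{F_j}$, iff $F_i \perp F_j$ for all $i\neq j$. For the ``basis'' clause, recall that in an $n$-dimensional Hilbert space an orthonormal basis is precisely a set of $n$ pairwise orthogonal unit vectors; since every $\psi_{F_k}$ is already a unit vector, the basis condition collapses to ``$m = n$ and the $\psi_{F_k}$ are pairwise orthogonal'', which by the first part of (c) translates to ``$m = n$ and the $F_k$ are pairwise orthogonal projections''. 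I do not anticipate any obstacle here: the content of the corollary is entirely packaged in the two cited results, and the role of the present statement is mainly to record their translation through $\Psi$ for use in the upcoming embedding theorem.
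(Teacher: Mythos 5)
Your proof is correct and follows essentially the same route as the paper, which obtains the corollary directly from Lemma~\ref{lemma: transition} together with Proposition~\ref{prop: orthogonalityTransition}; your only addition is to spell out the unit-vector/unit-modulus-scalar step behind the linear-dependence claim in item (a), which the paper leaves implicit. The one point worth recording is that the ``basis'' clause in item (c) must be read as ``orthonormal basis'' for the only-if direction to hold, which is the reading you correctly adopt.
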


Let $\mathcal{P}_{1}(H)$ be the collection of rank-1 projections of $H$. Define a mapping $\pi_{1}: \mathcal{P}_{1}(\mathfrak{S}) \rightarrow \mathcal{P}_{1}(H)$ by
\begin{align}
    \pi_{1}(E) \doteq \ketbra{\Psi(E)} \equiv \ketbra{\psi_{E}},
\end{align}
where $\ketbra{\psi_{E}}$ denotes the projection onto the subspace spanned by $\Psi(E) \equiv \psi_{E}$. Recall that
\begin{align}
     \ketbra{\psi_{E}} = \sum_{i,j=1}^{n} \braket{\psi_{i}}{\psi_{E}} \braket{\psi_{E}}{\psi_{j}} \ketbra{\psi_{i}}{\psi_{j}},
\end{align}
where $\ketbra{\psi_{i}}{\psi_{j}}$ is the linear mapping given by $\forall_{\phi \in H}: \ketbra{\psi_{i}}{\psi_{j}} \phi \doteq \braket{\psi_{j}}{\phi} \psi_{i}$. According to corollary~\ref{cor: vectorAssignment}, if $E,F \in \mathcal{P}_{1}(\mathfrak{S})$ are distinct, $\psi_{E}$ and $\psi_{F}$ are linearly independent, which implies that the projections $\pi_{1}(E)$, $\pi_{1}(F)$ are different. Furthermore, if $E,F$ are orthogonal,  $\pi_{1}(E)$ and $\pi_{1}(F)$ are orthogonal projections, i.e., $\pi_{1}(E) \pi_{1}(F) = 0$, where, as usual, $\pi_{1}(E) \pi_{1}(F)$ denotes the product of the operators $\pi_{1}(E)$ and  $\pi_{1}(F)$. Hence, the mapping $\pi_{1}: \mathcal{P}_{1}(\mathfrak{S}) \rightarrow \mathcal{P}_{1}(H)$  is injective and preserves orthogonality. It is also important to note that, for any pair $F,G \in \mathcal{P}_{1}(\mathfrak{S})$,
\begin{align}
    \langle G \rangle_{F} = P(F \rightarrow G) = \Tra{\pi_{1}(F)\pi_{1}(G)},
\end{align}
where $\text{tr}$ denotes the trace in $H$. 

Let $G$ be any projection of $\mathfrak{S}$, and let $k$ be its trace (definition~\ref{def: traceProjection}). According to proposition~\ref{prop: nondegenerateDecomposition}, there are $k$ pairwise orthogonal projections $G_{1},\dots,G_{k} \in \mathcal{P}_{1}(\mathfrak{S})$ such that $G = \sum_{i=1}^{k}G_{i}$. Lemma~\ref{lemma: finiteAdditivity}  ensures that, for any rank-$1$ projection $F$,
\begin{align}
    \langle G \rangle_{F} &=\label{eq: extendingPi} \sum_{i=1}^{k} \langle G_{i} \rangle_{F} = \sum_{i=1}^{k} \Tra{\pi_{1}(F) \pi_{1}(G_{i})} =  \Tra{\pi_{1}(F)  \sum_{i=1}^{k}\pi_{1}(G_{i})} = \Tra{\pi_{1}(F) \pi(G)},
\end{align}
where $\pi(G) \equiv \sum_{i=1}^{k} \pi_{1}(G_{i})$. Now suppose that $G = \sum_{i=1}^{k'}G_{i}'$ for some other pairwise orthogonal projections $G_{1}',\dots,G_{k'}' \in \mathcal{P}_{1}(\mathfrak{S})$, and define $\pi(G)' \equiv \sum_{i=1}^{k'} \pi_{1}(G_{i}')$. According to proposition~\ref{prop: traceLinear}, $k=k'$, whereas equation~\ref{eq: extendingPi} implies that, for any $F \in \mathcal{P}_{1}(\mathfrak{S})$,
\begin{align}
    \Tra{\pi_{1}(F) \pi(G) } =\label{eq: extendingPi'}  \langle G \rangle_{F} = \Tra{\pi_{1}(F) \pi(G)'}.
\end{align}
In particular, for any $j \in \{1,\dots,k\}$,
\begin{align}
    1=\langle G \rangle_{G_{j}} &= \Tra{\pi_{1}(G_{j}) \pi(G)'} = \braket{\psi_{G_{j}}}{\pi(G)' \psi_{G_{j}}},
\end{align}
which means that $\pi(G_{j}) \leq \pi(G)'$, where $\leq $ denotes the standard order of projections\footnote{Recall that two (selfadjoint) projections $E,F$ in a separable Hilbert space $H$ satisfy $E \leq F$ if and only if $E(H) \subset F(H)$, which in turn is equivalent to saying that $EF=E$.}. Consequently, $\vee_{i=1}^{k}\pi(G_{j}) \leq \pi(G)'$, where $\vee_{i=1}^{k}\pi(G_{j})$ denotes the supremum of the set $\{\pi(G_{i}): i=1,\dots,k\}$ in the complete orthocomplemented lattice $\mathcal{P}(H)$, and since $\pi(G_{1}), \dots, \pi(G_{k})$ are pairwise orthogonal (see corollary~\ref{cor: vectorAssignment}), we have $\vee_{i=1}^{k}\pi(G_{i}) = \sum_{j=1}^{n} \pi(G_{j})$. We have shown that $\pi(G) \leq \pi(G)'$, and it is analogous to prove that $\pi(G)' \leq \pi(G)$, therefore $\pi(G) = \pi(G)'$. It enables us to extent $\pi_{1}$ to a mapping $\pi: \mathcal{P}(\mathfrak{S}) \rightarrow \mathcal{P}(H)$ given by
\begin{align}
    \pi(G) \doteq \sum_{i=1}^{k} \pi_{1}(G_{i})
\end{align}
if $G \in \mathcal{P}(\mathfrak{S})\backslash \{0\}$, where $\{G_{1},\dots,G_{k}\}$ is any set of pairwise orthogonal rank-$1$ projections in $\mathfrak{S}$ satisfying $G = \sum_{i=1}^{k} G_{i}$, and $\pi(0) \doteq 0$\footnote{The symbol $0$ at the left-hand side of this equation represents the null observable in $\mathfrak{S}$ (see section~\ref{sec: projections}), whereas at the right-hand side, it represents the projection $0 \in \mathcal{P}(H)$}. It leads us to the following definition.

\begin{definition}[Projection assignment]\label{def: projectionAssignment} The projection assignment induced by the vector assignment $\mathcal{P}_{1}(\mathfrak{S}) \ni F \xmapsto{\Psi} \Psi(F) \equiv \psi_{F} \in H$ consists in the unique extension $\pi: \mathcal{P}(\mathfrak{S}) \rightarrow \mathcal{P}(H)$ of $\mathcal{P}_{1}(\mathfrak{S}) \ni F \xmapsto{\pi_{1}} \pi(F) \equiv \ketbra{\psi_{F}} \in \mathcal{P}_{1}(H)$ such that, for any set $E_{1},\dots,E_{m} \in \mathcal{P}_{1}(\mathfrak{S})$ of pairwise orthogonal rank-$1$ projections, we have
\begin{align}
    \pi(\sum_{i=1}^{m}E_{i}) = \sum_{i=1}^{m}\pi(E_{i}).
\end{align}
\end{definition}
It is worth emphasizing the following results.
\begin{lemma}\label{lemma: additivityVectorAssignment} Let $\pi:\mathcal{P}(\mathfrak{S}) \rightarrow \mathcal{P}(H)$ be the projection assignment induced by the vector assignment $\Psi$. If $E_{1},\dots,E_{m} \in \mathcal{P}(\mathfrak{S})$ are pairwise orthogonal, we have
\begin{align}
    \pi(\sum_{i=1}^{m}E_{i}) = \sum_{i=1}^{m}\pi(E_{i}).
\end{align}
\end{lemma}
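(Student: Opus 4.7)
The plan is to reduce the statement to the already-built-in additivity of $\pi$ over rank-$1$ decompositions (definition~\ref{def: projectionAssignment}) by refining each $E_i$ into pairwise orthogonal rank-$1$ pieces and then showing that the refinements of distinct $E_i$'s remain pairwise orthogonal. Concretely, I would start by applying proposition~\ref{prop: nondegenerateDecomposition} to each $E_i$ to obtain pairwise orthogonal rank-$1$ projections $E_{i,1},\dots,E_{i,k_i}$ with $E_i = \sum_{j=1}^{k_i} E_{i,j}$ and $k_i = \Tra{E_i}$.

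Next I would verify that the whole family $\{E_{i,j}\}_{i,j}$ is pairwise orthogonal. Orthogonality within a single $i$ is given by construction. For $i\neq i'$, since $E_{i,j}$ is one of finitely many pairwise orthogonal rank-$1$ summands of $E_i$, lemma~\ref{lemma: finiteAdditivity} gives $\langle E_{i,j}\rangle_\rho \leq \langle E_i\rangle_\rho$ for every state $\rho$, hence $E_{i,j} \leq E_i$ by proposition~\ref{prop: ordering}; analogously $E_{i',j'} \leq E_{i'}$. Applying lemma~\ref{lemma: orderAndOrthogonality} to $E_{i,j} \leq E_i \perp E_{i'} \geq E_{i',j'}$ yields $E_{i,j} \perp E_{i',j'}$.

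Having established pairwise orthogonality of the refined family, the sum $\sum_{i=1}^{m} E_i = \sum_{i,j} E_{i,j}$ is itself a decomposition into pairwise orthogonal rank-$1$ projections, so definition~\ref{def: projectionAssignment} applied directly gives
\begin{align*}
\pi\!\left(\sum_{i=1}^{m} E_i\right) \;=\; \sum_{i=1}^{m}\sum_{j=1}^{k_i} \pi(E_{i,j}).
\end{align*}
On the other hand, applying definition~\ref{def: projectionAssignment} to each $E_i$ separately yields $\pi(E_i) = \sum_{j=1}^{k_i}\pi(E_{i,j})$, and summing over $i$ gives the same right-hand side. Comparing the two expressions completes the proof.

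The only nontrivial step is the cross-index orthogonality of the refinements; everything else is either a direct application of definition~\ref{def: projectionAssignment} or of the additivity of the expectation. That step is the main obstacle, but it is immediate once one invokes the order-via-expectations characterization (proposition~\ref{prop: ordering}) together with the compatibility of order and orthogonality (lemma~\ref{lemma: orderAndOrthogonality}), both of which are already established.
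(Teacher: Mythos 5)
Your proof is correct and follows essentially the same route as the paper: refine each $E_i$ into pairwise orthogonal rank-$1$ summands via proposition~\ref{prop: nondegenerateDecomposition}, check that the refinements of distinct $E_i$'s remain pairwise orthogonal, and then apply definition~\ref{def: projectionAssignment} to both sides. The only difference is that the paper dismisses the cross-index orthogonality as ``easy to see'' while you justify it explicitly via proposition~\ref{prop: ordering} and lemma~\ref{lemma: orderAndOrthogonality}, which is a valid (and welcome) filling-in of that step.
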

\begin{proof}
    For each $i$, let $k_{i}$ be the rank of $E_{i}$. According to proposition~\ref{prop: nondegenerateDecomposition}, for every $i$ we have $E_{i} = \sum_{j=1}^{k_{i}} F_{j}^{(i)}$, where $F_{1}^{(i)},\dots,F_{k_{i}}^{(i)}$ are pairwise orthogonal rank-$1$ projections. Also, it is easy to see that $F^{(i)}_{i} \perp F^{(i')_{j'}}$ whenever $i \neq i'$, thus
    \begin{align*}
         \pi(\sum_{i=1}^{m}E_{i}) &=\pi(\sum_{i=1}^{m} \sum_{j=1}^{k_{i}} F^{(i)}_{j}) = \sum_{i=1}^{m} \sum_{j=1}^{k_{i}} \pi(F^{(i)}_{j}) \sum_{i=1}^{m}\pi(E_{i}).
    \end{align*}
\end{proof}
\begin{lemma}\label{lemma: BornForProjections} Let $\pi:\mathcal{P}(\mathfrak{S}) \rightarrow \mathcal{P}(H)$ be the projection assignment induced by the vector assignment $\Psi$, and let $F \in \mathcal{P}(\mathfrak{S})$ be a rank-$1$ projection (equivalently, a pure state in $\mathfrak{S}$). Then, for any projection $E$,
\begin{align}
\langle E \rangle_{F} = \Tra{\pi(F) \pi(E)}.
\end{align}
\end{lemma}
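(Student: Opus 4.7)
The plan is to reduce the claim to the rank-$1$ case for $E$ and then extend by additivity. First I would note that, by the very construction of $\pi_{1}$ and the definition of the trace in $H$, if $F, G \in \mathcal{P}_{1}(\mathfrak{S})$ then
\begin{align*}
\Tra{\pi(F)\pi(G)} = \Tra{\ketbra{\psi_{F}}\ketbra{\psi_{G}}} = \vert\braket{\psi_{F}}{\psi_{G}}\vert^{2},
\end{align*}
which, by lemma~\ref{lemma: transition}, equals $P(F \rightarrow G)$. On the other hand, definition~\ref{def: transitionProbability} tells us that $P(F \rightarrow G) = \langle G\rangle_{F}$, so the equality $\langle G\rangle_{F} = \Tra{\pi(F)\pi(G)}$ already holds whenever both $F$ and $G$ are rank-$1$.

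Next I would decompose a general projection $E \in \mathcal{P}(\mathfrak{S})$ as a sum of pairwise orthogonal rank-$1$ projections. Proposition~\ref{prop: nondegenerateDecomposition} provides $E_{1},\dots,E_{k} \in \mathcal{P}_{1}(\mathfrak{S})$, pairwise orthogonal, with $E = \sum_{i=1}^{k} E_{i}$ (if $E = 0$ the claim is trivial since $\pi(0) = 0$ and $\langle 0\rangle_{F} = 0$). Applying lemma~\ref{lemma: additivityVectorAssignment} gives $\pi(E) = \sum_{i=1}^{k}\pi(E_{i})$, and applying lemma~\ref{lemma: finiteAdditivity} gives $\langle E\rangle_{F} = \sum_{i=1}^{k}\langle E_{i}\rangle_{F}$.

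Now I would combine these ingredients. Using the linearity of the Hilbert-space trace, together with the rank-$1$ identity established in the first step,
\begin{align*}
\Tra{\pi(F)\pi(E)} = \Tra{\pi(F)\sum_{i=1}^{k}\pi(E_{i})} = \sum_{i=1}^{k}\Tra{\pi(F)\pi(E_{i})} = \sum_{i=1}^{k}\langle E_{i}\rangle_{F} = \langle E\rangle_{F},
\end{align*}
which is the desired identity.

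I do not anticipate a real obstacle here: the main lemma needed (the rank-$1$ identity $\Tra{\pi(F)\pi(G)} = P(F\rightarrow G)$) is essentially what we designed the vector assignment $\Psi$ to satisfy, and it was already recorded in the discussion preceding equation~\ref{eq: extendingPi}. The only thing to be careful about is to use additivity on the $\mathfrak{S}$-side (lemma~\ref{lemma: finiteAdditivity}) and additivity on the $H$-side (lemma~\ref{lemma: additivityVectorAssignment}, combined with the linearity of $\text{tr}$ on $\mathcal{B}(H)$) with the \emph{same} rank-$1$ decomposition of $E$, which is immediate since the decomposition is shared by both sums.
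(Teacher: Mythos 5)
Your proposal is correct and follows essentially the same route as the paper: the paper's one-line proof cites equation~\ref{eq: extendingPi'}, which is itself derived in the preceding text exactly as you do it --- reduce to the rank-$1$ case via lemma~\ref{lemma: transition} and definition~\ref{def: transitionProbability}, then extend using the decomposition of proposition~\ref{prop: nondegenerateDecomposition}, finite additivity (lemma~\ref{lemma: finiteAdditivity}) on the $\mathfrak{S}$-side, and linearity of the trace on the $H$-side. Your explicit remark about using the same rank-$1$ decomposition on both sides is a sensible precaution, and it is harmless here because definition~\ref{def: projectionAssignment} already guarantees $\pi(E)$ is independent of the chosen decomposition.
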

\begin{proof}
    This result immediately follows from equation~\ref{eq: extendingPi'} and definition~\ref{def: projectionAssignment}.
\end{proof}

\begin{proposition}[Trace of projections]\label{prop: tracePreserving}  The projection assignment $\pi:\mathcal{P}(\mathfrak{S}) \rightarrow \mathcal{P}(H)$ induced by an vector assignment $\Psi$ is trace preserving, i.e., for any projection $E \in \mathcal{P}(\mathfrak{S})$ we have $\Tra{\pi(E)} = \Tra{E}$.    
\end{proposition}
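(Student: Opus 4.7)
The plan is to exploit the fact that the trace in $\mathfrak{S}$ (definition~\ref{def: traceProjection}, together with proposition~\ref{prop: nondegenerateDecomposition}) counts the number of pairwise orthogonal rank-$1$ projections summing to $E$, while the trace in $H$ of a finite-rank projection equals the dimension of its range, i.e., the length of any orthogonal decomposition into rank-$1$ projections. Since $\pi$ is defined precisely to be additive on such decompositions (definition~\ref{def: projectionAssignment}, extended via lemma~\ref{lemma: additivityVectorAssignment}), and since $\Psi$ preserves orthogonality between rank-$1$ projections (corollary~\ref{cor: vectorAssignment}), these two counts must agree.

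Concretely, given a projection $E \in \mathcal{P}(\mathfrak{S})$ with $\Tra{E} = k$, I would invoke proposition~\ref{prop: nondegenerateDecomposition} to write
\begin{align*}
    E = \sum_{i=1}^{k} E_{i}
\end{align*}
for some pairwise orthogonal rank-$1$ projections $E_{1},\dots,E_{k} \in \mathcal{P}_{1}(\mathfrak{S})$. By definition~\ref{def: projectionAssignment},
\begin{align*}
    \pi(E) = \sum_{i=1}^{k} \pi(E_{i}) = \sum_{i=1}^{k} \ketbra{\psi_{E_{i}}}{\psi_{E_{i}}}.
\end{align*}
Corollary~\ref{cor: vectorAssignment}(c) then guarantees that the vectors $\psi_{E_{1}},\dots,\psi_{E_{k}}$ are pairwise orthogonal (and each is a unit vector by construction of $\Psi$), so the operators $\ketbra{\psi_{E_{i}}}{\psi_{E_{i}}}$ are mutually orthogonal rank-$1$ projections in $H$. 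Hence $\pi(E)$ is the orthogonal projection onto the $k$-dimensional subspace $\mathrm{span}\{\psi_{E_{1}},\dots,\psi_{E_{k}}\}$, and therefore $\Tra{\pi(E)} = k = \Tra{E}$.

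The boundary case $E = 0$ is handled separately by the convention $\pi(0) \doteq 0$ in definition~\ref{def: projectionAssignment}, which gives $\Tra{\pi(0)} = 0 = \Tra{0}$. I do not foresee a real obstacle here: essentially all the content of the proposition is already packaged into corollary~\ref{cor: vectorAssignment}(c) (orthogonality is both preserved and reflected by $\Psi$) and the additivity built into definition~\ref{def: projectionAssignment}. The only mild subtlety is to note that the decomposition $E = \sum_{i} E_{i}$ is not unique, but definition~\ref{def: projectionAssignment} has already been proved to be independent of the choice (via equation~\ref{eq: extendingPi'}), so no ambiguity arises.
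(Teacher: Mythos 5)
Your proof is correct and follows essentially the same route as the paper's: decompose $E$ into $\Tra{E}$ pairwise orthogonal rank-$1$ projections via proposition~\ref{prop: nondegenerateDecomposition}, push the decomposition through $\pi$ using its additivity, and use the orthogonality-preservation of $\Psi$ to conclude that $\pi(E)$ is a projection of the same rank in $H$. You merely spell out more explicitly (via corollary~\ref{cor: vectorAssignment}(c) and the span of the $\psi_{E_{i}}$) what the paper compresses into the remark that the analogous rank-counting result holds in $\mathcal{P}(H)$.
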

\begin{proof}
Proposition~\ref{prop: nondegenerateDecomposition} asserts that a projection $E \in \mathcal{P}(\mathfrak{S})$ has trace $k$ iff it can be written as the sum of $k$ pairwise orthogonal rank-$1$ projections, and it is well known that the analogous result holds in the lattice of projections $\mathcal{P}(H)$ \cite{landsman2017foundations}. The proposition thus follows from definition~\ref{def: projectionAssignment} and from the fact that, if $F \in \mathcal{P}(\mathfrak{S})$ is rank-$1$ projection, $\pi(F)$ is a rank-$1$ projection in $\mathcal{P}(H)$.
\end{proof}

Finally, we can extend the projection assignment $\pi$ to an observable assignment  $\pi: \mathcal{O}(\mathfrak{S})\rightarrow \mathcal{O}(H)$, where $\mathcal{O}(\mathfrak{S})$ denotes the set of observables of $\mathfrak{S}$ and $\mathcal{O}(H) \equiv \mathcal{B}(H)_{\text{sa}}$ denotes the set of selfadjoint operators on $H$. In fact, let $A$ be an observable of $\mathfrak{S}$, and let $A=\sum_{\af \in \sigma(A)} \af E_{\af}$ be its spectral decomposition (see definition~\ref{def: spectralDecomposition}). Define
\begin{align}
    \pi(A) \doteq\label{eq: observableAssigned} \sum_{\af \in \sigma(A)} \af \pi(E_{\af}).
\end{align}
According to the spectral theorem (the real spectral theorem, which applies to Hilbert spaces \cite{kadison1997fundamentalsI, landsman2017foundations}), $\pi(A)$ is a selfadjoint operator, so the function $\pi: \mathcal{O}(\mathfrak{S})\rightarrow \mathcal{O}(H)$ is well defined. Furthermore, given any decomposition of $A$ in terms or pairwise orthogonal projections, i.e., for any decomposition $A=\sum_{i=1}^{m} \beta_{i}F_{i}$, where $F_{1},\dots,F_{m}$ are pairwise orthogonal projections and $\beta_{1},\dots,\beta_{m}$ are real numbers, we have
\begin{align}
    \pi(A) = \sum_{i=1}^{m} \beta_{i} \pi(F_{i}).
\end{align}
In fact, let $C \xrightarrow{\chi_{\Delta_{i}}} F_{i}$, $i=1,\dots,m$, be a cone for $F_{1},\dots,F_{m}$ (see corollary~\ref{cor: specker}), and define $f \doteq \sum_{i=1}^{m} \beta_{i}\chi_{\Delta_{i}}$. For the same reason we presented in the proof of proposition~\ref{prop: functionalCalculus}, we can assume, without loss of generality, that $\{F_{1},\dots,F_{m}\}$ is a partition of unit (definition~\ref{def: partitionOfUnit}). We have $A = \sum_{i=1}^{m}\beta_{i} \chi_{\Delta_{i}}(C) = \left( \sum_{i=1}^{m}\beta_{i} \chi_{\Delta_{i}}\right)(C)=f(C)$, thus $\sigma(A) = \{\beta_{1},\dots,\beta_{m}\}$. Furthermore, according to lemma~\ref{lemma: KSdefinitionProjections},  for any $\af \in \sigma(A)$ we obtain $E_{\af} \equiv \chi_{\{\af\}}(f(A)) = \chi_{f^{-1}(\af)}(C) = \sum_{\substack{i=1 \\ \beta_{i}=\af}}^{m} F_{i}$. Thus, according to lemma~\ref{lemma: additivityVectorAssignment},
\begin{align}
    \pi(A) = \sum_{\af \in \sigma(A)} \af \pi(E_{\af}) = \sum_{\af \in \sigma(A)} \af \sum_{\substack{i=1 \\ \beta_{i}=\af}}^{m} \pi(F_{i}) = \sum_{i=1}^{m}\beta_{i} \pi(F_{i}).
\end{align}
It leads us to the following definition.

\begin{definition}[Observable assignment]\label{def: observableAssignment} Let $\pi: \mathcal{P}(\mathfrak{S}) \rightarrow \mathcal{P}(H)$ be the projection assignment induced by the vector assignment $\Psi: \mathcal{S}(\mathfrak{S}) \rightarrow H$. The unique extension $\pi:\mathcal{O}(\mathfrak{S}) \rightarrow \mathcal{O}(H)$ of $\pi: \mathcal{P}(\mathfrak{S}) \rightarrow \mathcal{P}(H)$  satisfying
\begin{align}
    \pi(\sum_{i=1}^{m} \af_{i} E_{i}) = \sum_{i=1}^{m}\af_{i}\pi(E_{i})
\end{align}
for any set of pairwise orthogonal projections $E_{1},\dots,E_{m}$ and any set of real numbers $\af_{1},\dots,\af_{m}$ is said to be the observable assignment induced by $\Psi$.
\end{definition}

It is important to note that  $\pi(\mathds{1}) = \mathds{1}$ and $\pi(0) = 0$, where, with a slight abuse of notation, we denote both the unit of $\mathfrak{S}$ (definition \ref{def: zeroAndUnit}) and the identity operator of $H$ by $\mathds{1}$, and, similarly, we denote both the zero operator of $\mathfrak{S}$ (definition \ref{def: zeroAndUnit}) and the zero operator of $H$ by $0$.

The observable assignment  $\pi: \mathcal{O}(\mathfrak{S}) \rightarrow \mathcal{O}(H)$ preserves functional relations between observables. That is, given any arrow $A \xrightarrow{f} B$ in the category of observables (equivalently, if $B=f(A)$), we have $\pi(f(A)) = f(\pi(A))$, where $f(\pi(A))$ is defined by the (real) functional calculus \cite{kadison1997fundamentalsI, landsman2017foundations}. In fact, it follows from definition~\ref{def: observableAssignment}, proposition~\ref{prop: functionalCalculus} and from the functional calculus \cite{kadison1997fundamentalsI, landsman2017foundations} that
\begin{align*}
    \pi(f(A)) &= \pi (\sum_{\af \in \sigma(A)} f(\af) E_{\af}) = \sum_{\af \in \sigma(A)}f(\af) \pi(E_{\af}) = f(\pi(A)),
\end{align*}
where $A = \sum_{\af \in \sigma(A)} \af E_{\af}$ is the spectral decomposition of $A$. It leads us to the following theorem.

\begin{theorem}[Functional relations]\label{thm: functionalRelations} The observable assignment $\pi: \mathcal{O}(\mathfrak{S}) \rightarrow \mathcal{O}(H)$  induced by the vector assignment $\Psi$ preserves functional relations between observables. That is, for any observable $A$ and any real function $f$ on the spectrum of $A$,
    \begin{align}
        \pi(f(A)) =f(\pi(A)).
    \end{align}
\end{theorem}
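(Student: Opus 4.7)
The plan is to chase both sides of the desired equality $\pi(f(A)) = f(\pi(A))$ through the spectral decomposition of $A$ and to reduce everything to two applications of the functional calculus: one in $\mathfrak{S}$ (proposition~\ref{prop: functionalCalculus}) and one in $H$ (the standard real functional calculus for selfadjoint operators on a finite-dimensional Hilbert space). Since $\pi$ was defined in definition~\ref{def: observableAssignment} precisely by its action on spectral-type sums of pairwise orthogonal projections, this should be a short computation once the pieces are lined up.

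First I would write the spectral decomposition $A = \sum_{\af \in \sigma(A)} \af E_{\af}$, where $E_{\af} \equiv \chi_{\{\af\}}(A)$. By proposition~\ref{prop: functionalCalculus}, $f(A) = \sum_{\af \in \sigma(A)} f(\af) E_{\af}$, which expresses $f(A)$ as a real linear combination of pairwise orthogonal projections (the coefficients $f(\af)$ need not be distinct, but that is unimportant). Definition~\ref{def: observableAssignment} therefore gives
\begin{align*}
    \pi(f(A)) = \sum_{\af \in \sigma(A)} f(\af)\, \pi(E_{\af}).
\end{align*}

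Next I would verify that $\pi(A) = \sum_{\af \in \sigma(A)} \af\, \pi(E_{\af})$ is the spectral decomposition of $\pi(A)$ inside $\mathcal{B}(H)_{\text{sa}}$. The projections $\{E_{\af}\}_{\af \in \sigma(A)}$ are pairwise orthogonal and sum to $\mathds{1}$ in $\mathfrak{S}$, hence $\{\pi(E_{\af})\}$ are pairwise orthogonal in $H$ (by corollary~\ref{cor: vectorAssignment} together with lemma~\ref{lemma: additivityVectorAssignment}) and sum to $\pi(\mathds{1}) = \mathds{1}_{H}$; the eigenvalues $\af \in \sigma(A)$ are distinct by construction. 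Consequently the real functional calculus on $\mathcal{B}(H)_{\text{sa}}$ yields
\begin{align*}
    f(\pi(A)) = \sum_{\af \in \sigma(A)} f(\af)\, \pi(E_{\af}),
\end{align*}
which matches the expression for $\pi(f(A))$ obtained in the previous step.

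The only point requiring care, and the closest thing to an obstacle, is ensuring that $\pi(A) = \sum_{\af} \af\, \pi(E_{\af})$ is genuinely the Hilbert-space spectral decomposition of $\pi(A)$ — in particular, that the $\pi(E_{\af})$ are nonzero (equivalently, distinct from $0 \in \mathcal{P}(H)$) and that no collapsing of eigenvalues occurs. Nonvanishing follows because each $E_{\af}$ is a nonzero projection, so proposition~\ref{prop: tracePreserving} gives $\Tra{\pi(E_{\af})} = \Tra{E_{\af}} \geq 1$; distinctness of the labels $\af$ is immediate from the definition of the spectrum. With these observations in place the argument is complete.
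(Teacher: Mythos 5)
Your proof is correct and follows essentially the same route as the paper: decompose $A$ spectrally, push $f(A)=\sum_{\af\in\sigma(A)}f(\af)E_{\af}$ through definition~\ref{def: observableAssignment}, and identify the result with $f(\pi(A))$ via the real functional calculus on $H$. The one place you go beyond the paper is the explicit check that $\pi(A)=\sum_{\af}\af\,\pi(E_{\af})$ really is the Hilbert-space spectral decomposition (pairwise orthogonal, nonzero images summing to $\mathds{1}_{H}$, distinct eigenvalues); the paper leaves this implicit in the final equality and only records it afterwards as corollary~\ref{cor: spectralDecompositions}, whose proof in turn cites the theorem, so deriving it directly from corollary~\ref{cor: vectorAssignment}, lemma~\ref{lemma: additivityVectorAssignment} and proposition~\ref{prop: tracePreserving} as you do is the cleaner ordering and closes a small circularity in the exposition.
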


The reader who is familiar with category theory will easily see that $\pi$ induces a \textbf{faithful covariant functor} from the category of observables to the category of selfadjoint operators on $H$, i.e., the category whose objects are selfadjoint operators and whose arrows are the functional relations between them, analogously to definition~\ref{def: categoryOfObservables}.

It is important to emphasize the following corollaries of theorem~\ref{thm: functionalRelations}.

\begin{corollary}[Spectral decompositions]\label{cor: spectralDecompositions} Let $\pi: \mathcal{O}(\mathfrak{S}) \rightarrow \mathcal{O}(H)$ be the observable assignment induced by the vector assignment $\Psi$. Let $A$ be any observable of $\mathfrak{S}$, and let $A=\sum_{\af \in \sigma(A)} \af E_{\af}$ be its spectral decomposition (see definition~\ref{def: spectralDecomposition}). Then the spectral decomposition of $\pi(A)$ in $H$ is given by $\pi(A) = \sum_{\af \in \sigma(A)} \af \pi(E_{\af})$, and consequently  we have $\sigma(\pi(A)) = \sigma(A)$.
\end{corollary}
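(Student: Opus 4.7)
The plan is to verify that the decomposition $\pi(A) = \sum_{\af \in \sigma(A)} \af \, \pi(E_{\af})$, provided directly by definition~\ref{def: observableAssignment}, satisfies the hypotheses of the spectral theorem for selfadjoint operators on the Hilbert space $H$. Concretely, I need to show that $\{\pi(E_{\af}) : \af \in \sigma(A)\}$ is a family of pairwise orthogonal, nonzero projections in $\mathcal{P}(H)$ which sums to the identity of $H$, with pairwise distinct coefficients $\af$. Once this is established, uniqueness of the spectral decomposition in $H$ forces this expression to be \emph{the} spectral decomposition of $\pi(A)$, and consequently $\sigma(\pi(A)) = \{\af : \af \in \sigma(A)\} = \sigma(A)$.

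The first ingredient is immediate: by definition~\ref{def: observableAssignment}, $\pi(A) = \sum_{\af \in \sigma(A)} \af\, \pi(E_{\af})$, and the coefficients $\af$ are pairwise distinct by definition of $\sigma(A)$. Each $\pi(E_{\af})$ is a projection in $H$ by construction of the projection assignment (definition~\ref{def: projectionAssignment}), and it is nonzero thanks to proposition~\ref{prop: tracePreserving}, which gives $\Tra{\pi(E_{\af})} = \Tra{E_{\af}} \geq 1$. To obtain pairwise orthogonality, fix distinct $\af, \af' \in \sigma(A)$. The projections $E_{\af}, E_{\af'}$ are orthogonal in $\mathfrak{S}$, so by proposition~\ref{prop: nondegenerateDecomposition} we may write $E_{\af} = \sum_i F_i$ and $E_{\af'} = \sum_j F'_j$ with $\{F_i\} \cup \{F'_j\}$ a collection of pairwise orthogonal rank-$1$ projections in $\mathfrak{S}$. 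Corollary~\ref{cor: vectorAssignment}(c) then yields pairwise orthogonality of $\{\pi(F_i)\} \cup \{\pi(F'_j)\}$ in $H$, whence by definition~\ref{def: projectionAssignment} the product $\pi(E_{\af})\, \pi(E_{\af'}) = \bigl(\sum_i \pi(F_i)\bigr)\bigl(\sum_j \pi(F'_j)\bigr)$ vanishes term by term.

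Finally, the summation to the identity follows from $\sum_{\af \in \sigma(A)} E_{\af} = \mathds{1}$, since $\{E_{\af}\}$ is a partition of the unit in $\mathfrak{S}$ (definition~\ref{def: partitionOfUnit} and definition~\ref{def: spectralDecomposition}), combined with lemma~\ref{lemma: additivityVectorAssignment}, which gives $\sum_{\af} \pi(E_{\af}) = \pi(\mathds{1})$. One then checks $\pi(\mathds{1}) = \mathds{1}$ in $H$ by picking (via lemma~\ref{lemma: dimension}) a set $F_1, \ldots, F_n$ of pairwise orthogonal rank-$1$ projections in $\mathfrak{S}$ with $\sum_i F_i = \mathds{1}$ and computing $\pi(\mathds{1}) = \sum_{i=1}^n \pi(F_i) = \sum_i \ketbra{\psi_{F_i}}$, which equals the identity operator of $H$ because $\{\psi_{F_i}\}_{i=1}^n$ is an orthonormal basis of $H$ by corollary~\ref{cor: vectorAssignment}(c). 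There is no substantive obstacle in this argument; it is essentially bookkeeping that exploits the additivity and orthogonality-preservation properties of $\pi$ already established in the preceding lemmas, together with the uniqueness clause of the quantum spectral theorem in $H$.
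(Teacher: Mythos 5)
Your proof is correct, but it establishes the two key facts — pairwise orthogonality of the $\pi(E_{\af})$ and their summation to the identity of $H$ — by a genuinely different route than the paper. For orthogonality, the paper applies theorem~\ref{thm: functionalRelations} to write $\pi(E_{\af}) = \chi_{\{\af\}}(\pi(A))$ and then invokes the functional calculus in $H$ to conclude $(\chi_{\{\af\}}\cdot\chi_{\{\af'\}})(\pi(A))=0$; you instead descend to rank-$1$ decompositions and use corollary~\ref{cor: vectorAssignment} to kill the product $\pi(E_{\af})\pi(E_{\af'})$ term by term. For the identity, the paper computes the trace of the projection $\sum_{\af}\pi(E_{\af})$, finds it equal to $n$ via proposition~\ref{prop: tracePreserving}, and concludes it must be $\mathds{1}$; you instead use lemma~\ref{lemma: additivityVectorAssignment} to identify the sum with $\pi(\mathds{1})$ and verify directly that $\pi(\mathds{1})$ is the identity of $H$ because the images of $n$ pairwise orthogonal rank-$1$ projections form an orthonormal basis of $H$. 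Both routes are sound. The paper's is shorter because it reuses the functor property already proved; yours is more self-contained in that it leans only on the vector assignment and the additivity of $\pi$, which sidesteps any worry that theorem~\ref{thm: functionalRelations} (whose own proof manipulates the decomposition $\sum_{\af}f(\af)\pi(E_{\af})$) might implicitly presuppose the conclusion of this corollary. Your extra observations that each $\pi(E_{\af})$ is nonzero and that the coefficients are pairwise distinct are exactly what the uniqueness clause of the spectral theorem in $H$ requires, and the paper leaves them tacit.
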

\begin{proof}
    Let $A=\sum_{\af \in \sigma(A)} \af E_{\af}$ be the spectral decomposition of $A \in \mathcal{O}(\mathfrak{S})$. According to theorem~\ref{thm: functionalRelations} and to the functional calculus \cite{kadison1997fundamentalsI, landsman2017foundations}, if $\af,\af'$ are distinct eigenvalues of $A$ we obtain
    \begin{align*}
        \pi(E_{\af})\pi(E_{\af'}) &= \pi(\chi_{\{\af\}}(A))\pi(\chi_{\{\af'\}}(A)) =\chi_{\{\af\}}( \pi(A))\chi_{\{\af'\}}(\pi(A)) = (\chi_{\{\af\}} \cdot \chi_{\{\af'\}})(\pi(A))
        \\
        &=0
        \\
        &= \pi(E_{\af'})\pi(E_{\af}).
    \end{align*}
    It shows that $\{\pi(E_{\af}): \af \in \sigma(A)\}$ is a set of pairwise orthogonal projections in $H$, and therefore $\sum_{\af \in \sigma(A)} \pi(E_{\af})$ is a projection. Definition~\ref{def: partitionOfUnit} and proposition~\ref{prop: tracePreserving} ensure that
    \begin{align*}
        \Tra{\sum_{\af \in \sigma(A)} \pi(E_{\af})} = \sum_{\af \in \sigma(A)} \Tra{\pi(E_{\af})} =  \sum_{\af \in \sigma(A)} \Tra{E_{\af}} =n,
    \end{align*}
    where $n$ is the dimension of $\mathfrak{S}$ and $H$, which in turn implies that $\sum_{\af \in \sigma(A)} \pi(E_{\af}) = \mathds{1}$. Hence, $\{\pi(E_{\af}): \af \in \sigma(A)\}$ is a partition of the unit, and since $\pi(A) = \sum_{\af \in \sigma(A)} \af \pi(E_{\af})$, this equation has to be, by definition, the spectral decomposition of $\pi(A)$, and consequently $\sigma(\pi(A)) = \sigma(A)$.
\end{proof}

\begin{corollary}[Compatibility]\label{cor: compatibility}  Let $\pi: \mathcal{O}(\mathfrak{S}) \rightarrow \mathcal{O}(H)$ be the observable assignment induced by the vector assignment $\Psi$. If $A,B \in \mathcal{O}(\mathfrak{S})$ are compatible (definition \ref{def: compatibility}), $\pi(A)$ and $\pi(B)$ are compatible in $H$, i.e., they are commuting operators.    
\end{corollary}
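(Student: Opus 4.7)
The plan is to reduce the compatibility of $\pi(A)$ and $\pi(B)$ to the simple fact, in $\mathcal{B}(H)_{\text{sa}}$, that any two functions of the same selfadjoint operator commute. This lets me leverage Theorem~\ref{thm: functionalRelations} directly, without having to reason about spectral projections or sequential measures in $H$.

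First, I would invoke definition~\ref{def: compatibility}: since $A$ and $B$ are compatible, there exists an observable $C \in \mathcal{O}(\mathfrak{S})$ and functions $f$, $g$ on $\sigma(C)$ with $A = f(C)$ and $B = g(C)$. Applying theorem~\ref{thm: functionalRelations} (which asserts that the observable assignment $\pi$ preserves functional relations) to each of these equalities gives
\begin{align*}
    \pi(A) = \pi(f(C)) = f(\pi(C)), \qquad \pi(B) = \pi(g(C)) = g(\pi(C)).
\end{align*}
Hence both $\pi(A)$ and $\pi(B)$ are obtained from the single selfadjoint operator $\pi(C) \in \mathcal{B}(H)_{\text{sa}}$ by the (real) functional calculus.

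Next I would appeal to the standard property of the functional calculus on a Hilbert space (see, e.g., \cite{kadison1997fundamentalsI, landsman2017foundations}): for any selfadjoint operator $T$ and any real Borel functions $f, g$ on $\sigma(T)$, the operators $f(T)$ and $g(T)$ commute, since they are both diagonal in the spectral resolution of $T$. Concretely, using the spectral decomposition $\pi(C) = \sum_{\gamma \in \sigma(\pi(C))} \gamma\, P_\gamma$ afforded by corollary~\ref{cor: spectralDecompositions}, we have $f(\pi(C)) = \sum_\gamma f(\gamma) P_\gamma$ and $g(\pi(C)) = \sum_\gamma g(\gamma) P_\gamma$, so the product in either order equals $\sum_\gamma f(\gamma) g(\gamma) P_\gamma$. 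Therefore $\pi(A) \pi(B) = \pi(B) \pi(A)$, which is precisely what ``compatible in $H$'' means for selfadjoint operators.

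There is essentially no obstacle here: all the work was done in establishing theorem~\ref{thm: functionalRelations}, and this corollary is merely the observation that compatibility in $\mathcal{O}(\mathfrak{S})$ is defined in terms of a shared pre-image under functional relations, a structure $\pi$ transports faithfully. The only thing to be slightly careful about is making sure one uses the \emph{same} $\pi(C)$ in the functional calculus expressions for $\pi(A)$ and $\pi(B)$, which is automatic since $\pi$ is a function.
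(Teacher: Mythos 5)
Your proof is correct and follows essentially the same route as the paper: take a cone $A \xleftarrow{f} C \xrightarrow{g} B$, apply theorem~\ref{thm: functionalRelations} to write $\pi(A)=f(\pi(C))$ and $\pi(B)=g(\pi(C))$, and conclude commutativity because both are functions of the same selfadjoint operator. The paper's proof is just a terser version of yours, leaving the final functional-calculus step implicit.
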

\begin{proof}
    Suppose that $A$ and $B$ are compatible in $\mathfrak{S}$, and let $A \xleftarrow{f} C \xrightarrow{g} B$ be any cone for them (see definition \ref{def: compatibility}). According to theorem \ref{thm: functionalRelations}, we have $\pi(A) = f(\pi(C))$ and $\pi(B)=g(\pi(C))$, which implies that $\pi(A)$ and $\pi(B)$ commute.
\end{proof}

We can now easily prove the following proposition.

\begin{proposition}[Projection embedding]\label{prop: projectionEmbedding} The projection assignment  $\pi: \mathcal{P}(\mathfrak{S}) \rightarrow \mathcal{P}(H)$ induced by a vector assignment $\Psi: \mathfrak{S} \rightarrow H$ satisfies the following properties.
\begin{itemize}
    \item[(a)] $\pi$ is an order embedding, i.e., two projections $E,F \in \mathcal{P}(\mathfrak{S})$ satisfy $E \leq F$ (definition~\ref{def: partialOrder}) if and only if $\pi(E) \leq \pi(F)$. In particular, $\pi: \mathcal{P}(\mathfrak{S}) \ri \mathcal{P}(H)$ is injective.
    \item[(b)] $\pi$ preserves supremum and infimum of pairwise compatible projections, i.e., if $E_{1},\dots,E_{m} \in \mathcal{P}(\mathfrak{S})$ are pairwise compatible, then
    \begin{align}
        \pi(\wedge_{i=1}^{m}E_{i})= \wedge_{i=1}^{m}\pi(E_{i}), 
        \\
        \pi(\vee_{i=1}^{m}E_{i})= \vee_{i=1}^{m}\pi(E_{i}).
    \end{align}
\end{itemize}
\end{proposition}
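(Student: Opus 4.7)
My plan is to prove (a) by reducing each direction to a statement about expectations via the Born-type formula in lemma~\ref{lemma: BornForProjections} and the order characterization in proposition~\ref{prop: ordering}. For the forward direction $E \leq F \Rightarrow \pi(E) \leq \pi(F)$, I would use the fact that $E \leq F$ via a cone $E \xleftarrow{\chi_{\Delta}} A \xrightarrow{\chi_{\Sigma}} F$ with $\Delta \subseteq \Sigma$ allows one to write $F = E + (F - E)$ where $F - E = \chi_{\Sigma \setminus \Delta}(A)$ is a projection orthogonal to $E$; then by lemma~\ref{lemma: additivityVectorAssignment}, $\pi(F) = \pi(E) + \pi(F - E)$ with $\pi(F - E)$ a projection in $H$, whence $\pi(E) \leq \pi(F)$.

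For the reverse direction, the plan is to assume $\pi(E) \leq \pi(F)$ and to verify $\Exp{E} \leq \Exp{F}$ for every state $\rho$, which gives $E \leq F$ by proposition~\ref{prop: ordering}. By corollary~\ref{cor: convexHull} every state is a convex combination of pure states $\emptyset_{G}$ with $G$ rank-$1$, and by proposition~\ref{prop: expectationConvex} the expectation is convex in $\rho$, so it suffices to check the inequality on such pure states. On these, lemma~\ref{lemma: BornForProjections} gives $\Exp{E}_{G} = \Tra{\pi(G)\pi(E)}$, and since $\pi(F) - \pi(E) \geq 0$ while $\pi(G) \geq 0$, one has $\Tra{\pi(G)(\pi(F) - \pi(E))} \geq 0$, i.e., $\Exp{E}_{G} \leq \Exp{F}_{G}$. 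Injectivity of $\pi$ follows immediately from antisymmetry of both orders.

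For (b), I would invoke Specker's principle (corollary~\ref{cor: specker}) to obtain a common fine-graining $A$ with $E_{i} = \chi_{\Delta_{i}}(A)$, so that $\wedge_{i=1}^{m} E_{i} = \chi_{\cap_{i} \Delta_{i}}(A)$ and $\vee_{i=1}^{m} E_{i} = \chi_{\cup_{i} \Delta_{i}}(A)$ by equation~\ref{eq: infimum} and its sup analogue. Applying theorem~\ref{thm: functionalRelations}, which says $\pi$ commutes with the functional calculus, I get $\pi(E_{i}) = \chi_{\Delta_{i}}(\pi(A))$, $\pi(\wedge_{i} E_{i}) = \chi_{\cap_{i} \Delta_{i}}(\pi(A))$ and $\pi(\vee_{i} E_{i}) = \chi_{\cup_{i} \Delta_{i}}(\pi(A))$. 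The standard Borel functional calculus for the selfadjoint operator $\pi(A)$ on $H$ satisfies $\chi_{\cap_{i} \Delta_{i}}(\pi(A)) = \wedge_{i} \chi_{\Delta_{i}}(\pi(A))$ and $\chi_{\cup_{i} \Delta_{i}}(\pi(A)) = \vee_{i} \chi_{\Delta_{i}}(\pi(A))$ (these are lattice operations of commuting spectral projections), which delivers the claim.

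The main obstacle will be the reverse direction of (a): one must argue that knowledge of $\pi(E), \pi(F)$ as operators in $H$ controls the $\mathfrak{S}$-valued order, and this requires all three of the convex-hull structure of $\mathcal{S}$ (corollary~\ref{cor: convexHull}), the Born-type formula on pure states (lemma~\ref{lemma: BornForProjections}), and the state-based criterion for the order (proposition~\ref{prop: ordering}). Everything else is a clean bookkeeping argument via the functional calculus and the additivity of $\pi$ on orthogonal sums.
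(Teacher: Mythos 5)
Your proposal is correct and follows essentially the same route as the paper's proof: the reverse direction of (a) via corollary~\ref{cor: convexHull}, proposition~\ref{prop: expectationConvex}, lemma~\ref{lemma: BornForProjections} and proposition~\ref{prop: ordering}, and part (b) via Specker's principle together with theorem~\ref{thm: functionalRelations} and the lattice identities for commuting spectral projections, are exactly the paper's arguments. The only (immaterial) difference is in the forward direction of (a), where you decompose $F=E+(F-E)$ and invoke additivity of $\pi$ on orthogonal sums, whereas the paper applies $\pi$ to a common cone and uses $\Delta\subset\Sigma$ directly to get $\chi_{\Delta}(\pi(C))\leq\chi_{\Sigma}(\pi(C))$; both rest on the same machinery.
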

\begin{proof}
Let $E,F$ be projections in  $\in \mathcal{P}(\mathfrak{S})$. Suppose that $E \leq F$, and let $E \xleftarrow{\chi_{\Delta}} C \xrightarrow{\chi_{\Sigma}} F$ be a cone for them. According to definition~\ref{def: partialOrder}, $\Delta \subset \Sigma$. Proposition~\ref{prop: observableAssignment} implies that $\pi(E) = \chi_{\Delta}(\pi(C))$ and $\pi(F) = \chi_{\Sigma}(\pi(C))$, thus the inclusion $\Delta \subset \Sigma$ ensures that $\pi(E) \leq \pi(F)$. On the other hand, assume that $\pi(E) \leq \pi(F)$, which is equivalent to saying that $\Tra{Q \pi(E)} \leq \Tra{Q \pi(F)}$ for every rank-$1$ projection $Q \in \mathcal{P}_{1}(H)$. Let $\rho$ be any state of $\mathfrak{S}$, and let $G_{1},\dots,G_{m} \in \mathcal{P}(\mathfrak{S})$ be rank-$1$ projections such that $\rho=\sum_{i=1}^{m}\af_{i} \emptyset_{G_{i}}$ for some sequence of non negative real numbers $\af_{1},\dots,\af_{m}$ that sum to one (see corollary~\ref{cor: convexHull}). According to proposition~\ref{prop: expectationConvex} and lemma~\ref{lemma: BornForProjections},
\begin{align*}
    \Exp{E} = \sum_{i=1}^{m} \af_{i} \langle E \rangle_{G_{i}} = \sum_{i=1}^{m}\af_{i} \Tra{\pi(G_{i})\pi(E) }  \leq \sum_{i=1}^{m}\af_{i} \Tra{\pi(G_{i})\pi(F) } =  \sum_{i=1}^{m} \af_{i} \langle F \rangle_{G_{i}} = \Exp{F}.
\end{align*}
Proposition~\ref{prop: ordering} ensures that this is equivalent to saying that $E \leq F$, thus the proof of item $(a)$ is complete. Next, let $E_{1},\dots,E_{m} \in \mathcal{P}(\mathfrak{S})$ be pairwise compatible projections, and let $C \xrightarrow{\chi_{\Delta_{i}}} E_{i}$, $i=1,\dots,m$, be a cone for them (see corollary~\ref{cor: specker}). In section~\ref{sec: spectralTheory}, we have seen that $\wedge_{i=1}^{m} E_{i} = \chi_{\cap_{i=1}^{m}\Delta_{i}}(C)$ and $\vee_{i=1}^{m} E_{i} = \chi_{\cup_{i=1}^{m}}(C)$, thus proposition~\ref{prop: projectionEmbedding} implies that $\pi(\wedge_{i=1}^{m} E_{i}) = \chi_{\cap_{i=1}^{m}\Delta_{i}}(\pi(C))$ and $\pi(\vee_{i=1}^{m} E_{i}) = \chi_{\cup_{i=1}^{m}}(\pi(C))$. Well known results from functional analysis ensure that $\chi_{\cap_{i=1}^{m}\Delta_{i}}(\pi(C)) = \wedge_{i=1}^{m}\chi_{\Delta_{i}}(\pi(C))$ and $\chi_{\cup_{i=1}^{m}}(\pi(C)) = \vee_{i=1}^{m} \chi_{\Delta_{i}}(\pi(C))$, and since $\chi_{\Delta_{i}}(\pi(C))=\pi(\chi_{\Delta_{i}}(C))=\pi(E_{i})$ for each $i$, item $(b)$ follows.
\end{proof}

The following proposition is important.

\begin{proposition}[Observable embedding]  The observable assignment $\pi: \mathcal{O}(\mathfrak{S}) \rightarrow \mathcal{O}(H)$  induced by $\Psi$ is injective, i.e., two observables $A,B \in \mathcal{O}(\mathfrak{S})$ satisfy $\pi(A)=\pi(B)$ if and only if they are equal.
\end{proposition}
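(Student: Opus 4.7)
The plan is to reduce injectivity of the observable assignment to injectivity of the projection assignment, which has already been established in proposition~\ref{prop: projectionEmbedding}(a). I would first take observables $A, B \in \mathcal{O}(\mathfrak{S})$ such that $\pi(A) = \pi(B)$ and consider the spectral decompositions $A = \sum_{\af \in \sigma(A)} \af E_{\af}$ and $B = \sum_{\beta \in \sigma(B)} \beta F_{\beta}$ provided by definition~\ref{def: spectralDecomposition}.

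Next I would appeal to corollary~\ref{cor: spectralDecompositions}, which states that the spectral decomposition of $\pi(A)$ in $H$ is precisely $\pi(A) = \sum_{\af \in \sigma(A)} \af \pi(E_{\af})$, and analogously for $\pi(B)$. Since $\pi(A) = \pi(B)$ as selfadjoint operators on $H$, the uniqueness of the spectral decomposition in finite-dimensional Hilbert spaces forces $\sigma(A) = \sigma(\pi(A)) = \sigma(\pi(B)) = \sigma(B)$, and, for each common eigenvalue $\af$, the projection onto the $\af$-eigenspace must coincide, i.e., $\pi(E_{\af}) = \pi(F_{\af})$.

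At this point I invoke item~(a) of proposition~\ref{prop: projectionEmbedding}, according to which the projection assignment $\pi: \mathcal{P}(\mathfrak{S}) \ri \mathcal{P}(H)$ is an order embedding and, in particular, injective. From $\pi(E_{\af}) = \pi(F_{\af})$ I conclude that $E_{\af} = F_{\af}$ for every $\af \in \sigma(A) = \sigma(B)$, and hence
\begin{align*}
A = \sum_{\af \in \sigma(A)} \af E_{\af} = \sum_{\af \in \sigma(B)} \af F_{\af} = B,
\end{align*}
which establishes injectivity.

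I do not expect any substantial obstacle: the argument is essentially a bookkeeping exercise that combines the uniqueness of spectral decompositions in $H$ with the already proved injectivity of $\pi$ on $\mathcal{P}(\mathfrak{S})$. The only delicate point worth double-checking is that corollary~\ref{cor: spectralDecompositions} genuinely asserts that the eigenvalues and their associated spectral projections match on both sides, so that after identifying eigenvalues one can index the projections in $\mathfrak{S}$ and in $H$ by the same label $\af$ and conclude eigenspace by eigenspace. Once that identification is in place, the separability postulate (postulate~\ref{ax: separability}) is not even needed: injectivity follows purely from the uniqueness of the spectral decomposition combined with proposition~\ref{prop: projectionEmbedding}(a).
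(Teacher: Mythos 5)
Your proposal is correct and follows essentially the same route as the paper's own proof: both pass through corollary~\ref{cor: spectralDecompositions}, use the uniqueness of spectral decompositions in $H$ to match eigenvalues and spectral projections, and then invoke the injectivity of the projection assignment from item~(a) of proposition~\ref{prop: projectionEmbedding}. The only cosmetic difference is that the paper closes by citing theorem~\ref{thm: spectralTheorem} where you conclude directly from the equality of the spectral data, which is equivalent.
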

\begin{proof}
    $\pi$ is a function, so we just need to prove that it is injective, i.e., that $\pi(A)=\pi(B)$ implies $A=B$. Let $A,B$ be observables in $\mathfrak{S}$ satisfying $\pi(A)=\pi(B)$, and let $A=\sum_{\af \in \sigma(A)} \af E_{\af}$, $B=\sum_{\beta \in \sigma(B)} \beta F_{\beta}$ be their spectral decompositions (definition~\ref{def: spectralDecomposition}). According to corollary~\ref{cor: spectralDecompositions}, the spectral decompositions of $\pi(A)$ and $\pi(B)$ are $\pi(A)=\sum_{\af \in \sigma(A)} \af \pi(E_{\af})$ and $\pi(B)=\sum_{\beta \in \sigma(B)} \beta \pi(F_{\beta})$, and since $\pi(A)=\pi(B)$, we have $\sigma(A)=\sigma(\pi(A))=\sigma(\pi(B))=\sigma(B)$ and $\{\pi(E_{\af}): \af \in \sigma(A)\}= \{\pi(F_{\beta}): \beta\in \sigma(B)\}$. It now follows from item $(a)$ of proposition~\ref{prop: projectionEmbedding} that $\{E_{\af}: \af \in \sigma(A)\}$ and  $\{F_{\beta}: \beta\in \sigma(B)\}$ are the same partition of the unit in $\mathfrak{S}$, thus theorem~\ref{thm: spectralTheorem} implies that $A=B$.
\end{proof}

We have seen in section~\ref{sec: latticeOfProjections} that the infimum $\wedge_{i=1}^{m}E_{i}$ of a set of pairwise compatible projections $E_{1},\dots,E_{m} \in \mathcal{P}(\mathfrak{S})$ is the product $\prod_{i=1}^{m}E_{i}$, and it is well known that the analogous result holds in $\mathcal{P}(H)$. Hence, it follows from proposition~\ref{prop: projectionEmbedding} that,  for any set of pairwise compatible projections $E_{1},\dots,E_{m} \in \mathcal{P}(\mathfrak{S})$, we have $\pi(\prod_{i=1}^{m} E_{i}) = \prod_{i=1}^{m}\pi(E_{i})$. Now let $A,B \in \mathcal{O}(\mathfrak{S})$ be compatible observables, and let $A=\sum_{\af \in \sigma(A)}\af E_{\af}$, $B=\sum_{\beta \in \sigma(B)} \beta F_{\beta}$ be their spectral decompositions (definition~\ref{def: spectralDecomposition}). We have seen in section~\ref{sec: spectralTheory} that $A + B = \sum_{\af \in \sigma(A)} (\af + \beta) E_{\af} F_{\beta}$ and $A B = \sum_{\af \in \sigma(A)} (\af \cdot \beta) E_{\af} F_{\beta}$, and it is easy to see that $\{E_{\af}F_{\beta} \in \mathcal{P}(\mathfrak{S}): (\af,\beta) \in \sigma(A) \times \sigma(B)\}$ is a set of pairwise orthogonal projections in $\mathfrak{S}$. Therefore, it follows from definition~\ref{def: observableAssignment} and proposition~\ref{prop: projectionEmbedding} that
\begin{align*}
    \pi(A+B) &= \pi(\sum_{\af \in \sigma(A)}\sum_{\beta \in \sigma(B)} (\af+ \beta) E_{\af} F_{\beta}) = \sum_{\af \in \sigma(A)}\sum_{\beta \in \sigma(B)} (\af+ \beta) \pi(E_{\af} F_{\beta})
    \\
    &= \sum_{\af \in \sigma(A)}\sum_{\beta \in \sigma(B)} (\af+ \beta) \pi(E_{\af}) \pi(F_{\beta}) = \left(\sum_{\af \in \sigma(A)} \af \pi(E_{\af})\right) + \left(\sum_{\beta \in \sigma(B)} \beta \pi(F_{\beta})\right)
    \\
    &=\pi(A) + \pi(B),
\end{align*}
and it is analogous to prove that 
\begin{align*}
    \pi(AB) &= \pi(A)  \pi(B).
\end{align*}
These results can easily be generalized from any set of pairwise compatible observables, i.e., if $A_{1},\dots,A_{m} \in \mathcal{O}(\mathfrak{S})$ are pairwise compatible, we have $\pi(\sum_{1}^{m}A_{i}) = \sum_{i=1}^{m} \pi(A_{i})$ and  $\pi(\prod_{1}^{m}A_{i}) = \prod_{i=1}^{m} \pi(A_{i})$. Finally, it is easy to see that, for any observable $A$ and any real number $\af$, we have $\pi(\af A) = \af \pi(A)$. 

In proposition~\ref{prop: traceSpectrum}, we saw that for any observable $A \in \mathcal{O}(\mathfrak{S})$ we have $\tr(A) = \sum_{\af \in \sigma(A)} m_{\af} \af$, where $m_{\af}$ is the multiplicity of the eigenvalue $\af$ of $A$, i.e., $m_{\af}$ is the trace of the projection $E_{\af} \equiv \chi_{\{\af\}}(A)$. It is well known that an analogous result holds in $\mathcal{O}(H) \equiv \mathcal{B}(H)_{\text{sa}}$, thus it follows from definition~\ref{def: observableAssignment} and theorem~\ref{thm: functionalRelations} that $\pi: \mathcal{O}(\mathfrak{S}) \rightarrow\mathcal{O}(H)$ is trace preserving. Consequently, if an observable $A \in \mathcal{O}(\mathfrak{S})$ is a density operator (definition~\ref{def: densityOperator}), $\pi(A) \in \mathcal{B}(H)_{\text{sa}}$ is a density operator, i.e., a positive operator of trace one (note that theorem~\ref{thm: functionalRelations} ensures that $\pi(A)$ is a positive operator, i.e., a selfadjoint operator whose spectrum contains only non-negative numbers \cite{kadison1997fundamentalsI, landsman2017foundations}). We summarize these results in the following proposition.

\begin{proposition}\label{prop: observableAssignment} The observable assignment $\pi: \mathcal{O}(\mathfrak{S}) \rightarrow \mathcal{O}(H)$  induced by $\Psi$ satisfies the following properties.
\begin{itemize}
    \item[(a)] $\pi$ is partially linear, i.e., if $A_{1},\dots,A_{m} \in \mathcal{O}(\mathfrak{S})$ are pairwise compatible observables and $\af_{1},\dots,\af_{m}$ are real numbers, we have
\begin{align}
    \pi(\sum_{i=1}^{m} \af_{i} A_{i}) &= \sum_{i=1}^{m} \af_{i} \pi(A_{i}).
\end{align}
\item[(b)] For any set $A_{1},\dots,A_{m} \in \mathcal{O}(\mathfrak{S})$ of pairwise compatible observables,
\begin{align}
    \pi(\prod_{i=1}^{m} A_{i}) &= \prod_{i=1}^{m} \pi(A_{i}).
\end{align}
\item[(c)] $\pi$ is trace preserving, i.e., for any observable $A$,
\begin{align}
    \tr(\pi(A)) = \tr(A).
\end{align}
\item[(d)] If $A$ is a density operator in $\mathfrak{S}$ (definition~\ref{def: densityOperator}), $\pi(A)$ is a density operator in $\mathcal{O}(H) \equiv \mathcal{B}(H)_{\text{sa}}$.
\end{itemize}
\end{proposition}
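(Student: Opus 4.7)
The plan is to reduce every claim to a statement about projections, where all the heavy lifting has already been done (Proposition on projection embedding, Proposition on trace preservation for projections, Theorem on preservation of functional relations). The key engine is that, for pairwise compatible projections $E_{1},\dots,E_{m}$, the product coincides with their infimum (Section~\ref{sec: spectralTheory}), and the projection assignment $\pi$ preserves infima of pairwise compatible projections by item $(b)$ of Proposition~\ref{prop: projectionEmbedding}; hence $\pi\bigl(\prod_{i} E_{i}\bigr)=\prod_{i}\pi(E_{i})$ for any pairwise compatible family of projections.

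For items $(a)$ and $(b)$, I would proceed as follows. Given pairwise compatible observables $A_{1},\dots,A_{m}$ with spectral decompositions $A_{i}=\sum_{\af_{i}\in\sigma(A_{i})}\af_{i}E^{(i)}_{\af_{i}}$, Specker's principle (Corollary~\ref{cor: specker}) together with the material at the end of Section~\ref{sec: spectralTheory} lets me expand
\begin{align*}
\sum_{i=1}^{m}\af_{i}A_{i}&=\sum_{\underline{\af}}\Bigl(\sum_{i=1}^{m}\af_{i}\af_{i}\Bigr)\prod_{i=1}^{m}E^{(i)}_{\af_{i}}, \qquad \prod_{i=1}^{m}A_{i}=\sum_{\underline{\af}}\Bigl(\prod_{i=1}^{m}\af_{i}\Bigr)\prod_{i=1}^{m}E^{(i)}_{\af_{i}},
\end{align*}
where $\underline{\af}$ ranges over $\prod_{i}\sigma(A_{i})$ and the projections $\{\prod_{i}E^{(i)}_{\af_{i}}\}_{\underline{\af}}$ are pairwise orthogonal (two distinct tuples differ in at least one coordinate, and projections associated with distinct eigenvalues of the same observable are orthogonal by Definition~\ref{def: projectionOfEigenvalue}). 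Since $\pi$ is additive on sums of pairwise orthogonal projections (Definition~\ref{def: observableAssignment}) and converts compatible products into products (by the paragraph above), each equation maps termwise under $\pi$ and the same identities hold in $\mathcal{B}(H)_{\text{sa}}$, yielding $\pi(\sum_{i}\af_{i}A_{i})=\sum_{i}\af_{i}\pi(A_{i})$ and $\pi(\prod_{i}A_{i})=\prod_{i}\pi(A_{i})$. For the scalar case $\pi(\af A)=\af\pi(A)$, I would simply read it off from Definition~\ref{def: observableAssignment} applied to the spectral decomposition of $A$.

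For item $(c)$, I would combine Proposition~\ref{prop: traceSpectrum} on the trace of an observable in $\mathfrak{S}$ with its analogue in $\mathcal{B}(H)_{\text{sa}}$: writing $A=\sum_{\af}\af E_{\af}$, Proposition~\ref{prop: traceLinear} gives $\Tra{A}=\sum_{\af}\af\,\Tra{E_{\af}}$, and by Corollary~\ref{cor: spectralDecompositions} the operator $\pi(A)=\sum_{\af}\af\,\pi(E_{\af})$ is the spectral decomposition of $\pi(A)$ in $H$; applying the standard trace formula in $H$ and invoking trace preservation for projections (Proposition~\ref{prop: tracePreserving}) finishes the argument. Item $(d)$ then follows immediately: positivity of $\pi(A)$ is guaranteed by Corollary~\ref{cor: spectralDecompositions}, which gives $\sigma(\pi(A))=\sigma(A)\subset[0,\infty)$, and trace preservation gives $\Tra{\pi(A)}=1$.

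The main obstacle, conceptually, is already absorbed into the earlier machinery: one needs that $\pi$ converts products of pairwise compatible projections into operator products in $H$, which rests on the nontrivial equivalence between ``infimum'' and ``product'' of compatible projections (both in $\mathfrak{S}$ and in $\mathcal{P}(H)$) and on item $(b)$ of Proposition~\ref{prop: projectionEmbedding}. Once that fact is in hand, the present proposition is bookkeeping: every assertion is obtained by writing observables as finite linear combinations of pairwise orthogonal projections and pushing them through $\pi$ one term at a time.
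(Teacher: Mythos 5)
Your proposal is correct and follows essentially the same route as the paper: the paper likewise reduces everything to products of compatible spectral projections (proved orthogonal across distinct eigenvalue tuples), pushes sums of orthogonal projections through $\pi$ via definition~\ref{def: observableAssignment}, uses the infimum-equals-product fact together with item $(b)$ of proposition~\ref{prop: projectionEmbedding} for products, and derives $(c)$ and $(d)$ from proposition~\ref{prop: traceSpectrum}, corollary~\ref{cor: spectralDecompositions} and proposition~\ref{prop: tracePreserving}. The only cosmetic difference is that you treat the $m$-fold expansion directly where the paper does the binary case and appeals to an easy generalization; do fix the notational collision in $\sum_{i}\af_{i}\af_{i}$, where $\af_{i}$ is used both for the scalar coefficient and for the running eigenvalue of $A_{i}$.
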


In section \ref{sec: transitionProbabilities}, we saw that a density operator $A \in \mathcal{O}(\mathfrak{S})$ defines a state $\emptyset_{A} \in \mathcal{S}(\mathfrak{S})$, which consists in the unique state satisfying
\begin{align}
    \emptyset_{A}&= \sum_{i=1}^{m}\af_{i}\emptyset_{E_{i}}
\end{align}
for every convex decomposition $A=\sum_{i=1}^{m}\af_{i}E_{i}$ of $A$ in terms of pairwise orthogonal rank-$1$ projections (see definition \ref{def: densityState}). Fix a convex decomposition $A=\sum_{i=1}^{m}\af_{i}E_{i}$ of $A$ in terms or pairwise orthogonal rank-$1$ projections, and let $B \in \mathcal{O}(\mathfrak{S})$ be any observable. According to proposition \ref{prop: observableAssignment},
\begin{align*}
    \langle B \rangle_{A}&= \sum_{i=1}^{m} \langle B \rangle_{E_{i}} = \sum_{i=1}^{m} \Tra{\pi(E_{i})\pi(B)} = \Tra{\pi(A)\pi(B)},
\end{align*}
where $\langle \cdot \rangle_{A} \equiv \langle \cdot \rangle_{\emptyset_{A}}$. Let's emphasize this result.

\begin{proposition}[Born Rule for density operators]\label{prop: BornForDensity} Let $A \in \mathcal{O}(\mathfrak{S})$ be a density operator (see definition \ref{def: densityOperator}), and let $\langle \ \cdot \ \rangle_{A}$ be the  state defined by it (definition \ref{def: densityState}). Then, for any observable $B \in \mathcal{O}(\mathfrak{S})$,
\begin{align}
    \langle B \rangle_{A} = \tr(\pi(A)\pi(B)). 
\end{align}
\end{proposition}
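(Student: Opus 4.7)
My plan is to assemble the identity $\langle B\rangle_A=\tr(\pi(A)\pi(B))$ in three moves: first decompose the state $\emptyset_A$ as a convex mixture of pure states, then verify the Born rule on each pure component (extending Lemma~\ref{lemma: BornForProjections} from projections to arbitrary observables), and finally contract the convex combination back inside the trace.

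First, I would invoke Proposition~\ref{prop: densityDecomposition} to write the density operator $A$ as a convex combination $A=\sum_{i=1}^{m}\af_{i}E_{i}$ of pairwise orthogonal rank-$1$ projections. By Definition~\ref{def: densityState}, this decomposition yields $\emptyset_{A}=\sum_{i=1}^{m}\af_{i}\emptyset_{E_{i}}$ as a convex combination in $\mathcal{S}(\mathfrak{S})$, and Proposition~\ref{prop: expectationConvex} then gives the scalar identity
\begin{equation*}
    \langle B\rangle_{A}=\sum_{i=1}^{m}\af_{i}\langle B\rangle_{E_{i}}.
\end{equation*}
So the problem is reduced to establishing the Born rule on pure states for a general observable $B$, not just a projection.

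Next, I would take the spectral decomposition $B=\sum_{\beta\in\sigma(B)}\beta F_{\beta}$ of $B$ (Definition~\ref{def: spectralDecomposition}). For each rank-$1$ projection $E_{i}$, Definition~\ref{def: Expectation} gives $\langle B\rangle_{E_{i}}=\sum_{\beta}\beta\langle F_{\beta}\rangle_{E_{i}}$, and Lemma~\ref{lemma: BornForProjections} lets me rewrite each $\langle F_{\beta}\rangle_{E_{i}}$ as $\tr(\pi(E_{i})\pi(F_{\beta}))$. Using the linearity of the trace on $\mathcal{B}(H)_{\text{sa}}$ together with the expression $\pi(B)=\sum_{\beta}\beta\,\pi(F_{\beta})$ provided by Corollary~\ref{cor: spectralDecompositions} (equivalently by part $(a)$ of Proposition~\ref{prop: observableAssignment} applied to the spectral decomposition), this collapses to $\langle B\rangle_{E_{i}}=\tr(\pi(E_{i})\pi(B))$.

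Finally, I would substitute back and once more pull the scalar coefficients $\af_{i}$ inside the trace via linearity, so that
\begin{equation*}
    \langle B\rangle_{A}=\sum_{i=1}^{m}\af_{i}\tr\!\left(\pi(E_{i})\pi(B)\right)=\tr\!\left(\Bigl(\sum_{i=1}^{m}\af_{i}\pi(E_{i})\Bigr)\pi(B)\right)=\tr(\pi(A)\pi(B)),
\end{equation*}
where the last equality is part $(a)$ of Proposition~\ref{prop: observableAssignment} applied to the pairwise compatible (indeed pairwise orthogonal) projections $E_{1},\dots,E_{m}$. There is no genuine obstacle here; the argument is essentially bookkeeping once one recognizes that the ``hard work'' has already been done in proving the projection-level Born rule (Lemma~\ref{lemma: BornForProjections}) and the partial linearity of $\pi$. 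The only subtle point is ensuring the identity $\pi(B)=\sum_{\beta}\beta\,\pi(F_{\beta})$ is invoked as an identity in $\mathcal{B}(H)_{\text{sa}}$, which is legitimate because the $F_{\beta}$ are pairwise orthogonal projections and $\pi$ is linear on such sums.
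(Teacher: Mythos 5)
Your proof is correct and follows essentially the same route as the paper: decompose $A$ into pairwise orthogonal rank-$1$ projections, use convexity of the expectation, apply the projection-level Born rule, and contract via partial linearity of $\pi$. You actually make explicit a step the paper glosses over (passing from Lemma~\ref{lemma: BornForProjections}, which covers projections, to a general observable $B$ via its spectral decomposition), and you correctly carry the coefficients $\af_{i}$ that the paper's displayed computation omits.
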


Now let $\rho$ be any state. According to theorem~\ref{cor: convexHull}, $\rho$ is a convex combination of pure states, so let $E_{1},\dots,E_{m}$ be rank-$1$ projections such that $\rho=\sum_{i=1}^{m}\af_{i} \emptyset_{E_{i}}$, where $\af_{1},\dots,\af_{m}$ are non negative numbers that sum to one. For any observable $B$,
\begin{align}
    \Exp{B} = \sum_{i=1}^{m} \af_{i}\langle B \rangle_{E_{i}} = \sum_{i=1}^{m} \af_{i}\Tra{\pi(E_{i})\pi(B)} = \Tra{\sum_{i=1}^{m}\af_{i}\pi(E_{i})\pi(B)} = \Tra{D_{\rho} \pi(B)},
\end{align}
where $D_{\rho} \doteq \sum_{i=1}^{m}\af_{i}\pi(E_{i})$. Note that $D_{\rho}$ is a density operator of $H$. 

It leads us to the following proposition.
\begin{proposition}[Born rule]\label{prop: BornRule} Let $\rho$ be any state of $\mathfrak{S}$, and let $\rho = \sum_{i=1}^{m}\af_{i} \emptyset_{E_{i}}$ be a convex decomposition of $\rho$ in terms of pure states  (see theorem~\ref{cor: convexHull}). Define  the density operator $D_{\rho} \doteq \sum_{i=1}^{m} \af_{i} \pi(E_{i})$ in $H$. Then, for any observable $B \in \mathcal{O}(\mathfrak{S})$,
\begin{align}
    \langle B \rangle_{\rho} = \Tra{D_{\rho} \pi(B)}. 
\end{align}
\end{proposition}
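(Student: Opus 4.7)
My plan is to reduce the claim to the pure-state case already handled in Lemma (Born for projections), using convexity of the expectation and partial linearity of the observable assignment $\pi$. The argument should be short: essentially one decomposition, one application of spectral theory, and one application of trace linearity. No serious obstacle is anticipated since every ingredient is already in place.

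First, I would use Proposition (expectation convex), which says $\rho \mapsto \langle\,\cdot\,\rangle_\rho$ is convex. Applied to the given decomposition $\rho=\sum_{i=1}^m \af_i\,\emptyset_{E_i}$, this yields
\begin{align*}
\langle B\rangle_\rho \;=\; \sum_{i=1}^m \af_i\,\langle B\rangle_{E_i}.
\end{align*}
Thus it suffices to prove the identity $\langle B\rangle_{E_i}=\tr(\pi(E_i)\pi(B))$ for each pure state $\emptyset_{E_i}$.

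Next, for each rank-$1$ projection $E_i$, I would take the spectral decomposition $B=\sum_{\beta\in\sigma(B)}\beta F_\beta$ (Definition, spectral decomposition) and use Definition (Expectation) together with Lemma (Born for projections) to write
\begin{align*}
\langle B\rangle_{E_i} \;=\; \sum_{\beta\in\sigma(B)} \beta\,\langle F_\beta\rangle_{E_i} \;=\; \sum_{\beta\in\sigma(B)} \beta\,\tr\!\bigl(\pi(E_i)\pi(F_\beta)\bigr).
\end{align*}
Now I would invoke Proposition (observable assignment), specifically the partial linearity $\pi(B)=\sum_{\beta}\beta\,\pi(F_\beta)$ (the $F_\beta$ are pairwise orthogonal, hence pairwise compatible), and pull the sum through the trace by its linearity to get $\langle B\rangle_{E_i}=\tr(\pi(E_i)\pi(B))$.

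Finally, plugging this back into the convex decomposition and using linearity of the trace one more time,
\begin{align*}
\langle B\rangle_\rho \;=\; \sum_{i=1}^m \af_i\,\tr\!\bigl(\pi(E_i)\pi(B)\bigr) \;=\; \tr\!\Bigl(\Bigl(\sum_{i=1}^m \af_i\,\pi(E_i)\Bigr)\pi(B)\Bigr) \;=\; \tr\!\bigl(D_\rho\,\pi(B)\bigr),
\end{align*}
which is the desired identity. No subtlety about the choice of decomposition of $\rho$ needs to be addressed here, since the proposition as stated only claims the identity for the specific $D_\rho$ built from the chosen decomposition; uniqueness of $D_\rho$ across decompositions (which does follow, because $\pi$ is injective and the expectation determines the state) is not required for the proof itself.
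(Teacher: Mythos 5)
Your proof is correct and follows essentially the same route as the paper: convexity of the expectation to reduce to pure states, the Born rule for projections extended to arbitrary observables via the spectral decomposition and partial linearity of $\pi$, and linearity of the trace to reassemble $D_{\rho}$. You merely spell out the projection-to-observable step more explicitly than the paper, which leaves it implicit in the discussion preceding its Proposition on the Born rule for density operators.
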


Together with the axiom of choice \cite{levy2002basic}, propositions \ref{prop: BornForDensity} and \ref{prop: BornRule} imply the following.

\begin{corollary}[State Assignment]\label{cor: stateAssignment}  Let $\pi: \mathcal{O}(\mathfrak{S}) \rightarrow \mathcal{O}(H)$ be the observable assignment induced by the vector assignment $\Psi$. Then there exists a mapping $\theta: \mathcal{S}(\mathfrak{S}) \rightarrow \mathcal{D}(H)$, which is said to be a \textbf{state assignment induced by $\pi$}, satisfying the following properties.
\begin{itemize}
    \item[(a)] If $\rho  \in \mathcal{S}(\mathfrak{S})$ is is defined by a density operator, i.e., if $\rho=\emptyset_{A}$ for some $A \in \mathcal{D}(\mathfrak{S})$ (see definition \ref{def: densityState}), then $\theta(\emptyset_{A})= \pi(A)$.
    \item[(b)] for any state $\rho$ and any observable $B \in \mathcal{O}(\mathfrak{S})$,
    \begin{align}
        \Exp{B} = \Tra{\theta(\rho)\pi(B)}.
    \end{align}
\end{itemize}
\end{corollary}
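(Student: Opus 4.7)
The construction is a direct application of the axiom of choice to select, for each state $\rho \in \mathcal{S}(\mathfrak{S})$, a convex decomposition into pure states, combined with the explicit Born-rule formula from proposition \ref{prop: BornRule}. My plan is to set up the choice function with a priority clause for density-operator states so that property (a) is immediate, and then argue that the resulting map is well-defined.

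First, I partition $\mathcal{S}(\mathfrak{S})$ into the subset $S_{1} \doteq \{\emptyset_{A} : A \in \mathcal{D}(\mathfrak{S})\}$ of states induced by density operators (definition \ref{def: densityState}) and its complement $S_{2}$. For each $\rho \in S_{1}$ I invoke the axiom of choice to pick a density operator $A_{\rho}$ with $\rho = \emptyset_{A_{\rho}}$, and then pick a convex decomposition $A_{\rho} = \sum_{i=1}^{n}\alpha_{i}E_{i}$ in terms of pairwise orthogonal rank-$1$ projections (proposition \ref{prop: densityDecomposition}); this yields the decomposition $\rho = \sum_{i=1}^{n}\alpha_{i}\emptyset_{E_{i}}$ via definition \ref{def: densityState}. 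For $\rho \in S_{2}$, corollary \ref{cor: convexHull} together with the axiom of choice supplies some convex decomposition $\rho = \sum_{i=1}^{m}\alpha_{i}\emptyset_{E_{i}}$. In both cases, set $\theta(\rho) \doteq \sum_{i}\alpha_{i}\pi(E_{i})$.

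By corollary \ref{cor: vectorAssignment}, each $\pi(E_{i})$ is a rank-$1$ projection of $H$, and since $\alpha_{i} \geq 0$ with $\sum_{i}\alpha_{i} = 1$, the operator $\theta(\rho)$ is a convex combination of rank-$1$ projections and hence a density operator of $H$. Item (b) is now a verbatim restatement of proposition \ref{prop: BornRule}: for any observable $B \in \mathcal{O}(\mathfrak{S})$, $\langle B \rangle_{\rho} = \text{tr}(\theta(\rho)\pi(B))$. For item (a), when $\rho = \emptyset_{A}$ for some $A \in \mathcal{D}(\mathfrak{S})$, the chosen decomposition is $A_{\rho} = \sum_{i}\alpha_{i}E_{i}$, so by definition \ref{def: observableAssignment} applied to $A_{\rho}$ one obtains $\theta(\rho) = \sum_{i}\alpha_{i}\pi(E_{i}) = \pi(A_{\rho})$.

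The main obstacle is then to verify that $\pi(A_{\rho}) = \pi(A)$ for every density operator $A$ satisfying $\emptyset_{A} = \rho$, not merely for the chosen representative $A_{\rho}$; equivalently, one needs the map $A \mapsto \emptyset_{A}$ restricted to $\mathcal{D}(\mathfrak{S})$ to pull back uniquely through $\pi$. By proposition \ref{prop: BornForDensity}, $\emptyset_{A} = \emptyset_{A'}$ forces $\text{tr}\bigl((\pi(A) - \pi(A'))\pi(B)\bigr) = 0$ for every $B \in \mathcal{O}(\mathfrak{S})$, so the problem reduces to showing that $\pi(\mathcal{O}(\mathfrak{S}))$ is trace-separating in $\mathcal{B}(H)_{\text{sa}}$. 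This in turn follows by combining theorem \ref{thm: functionalRelations} and postulate \ref{post: observables}: for each nondegenerate $C \in \mathcal{O}(\mathfrak{S})$, every real linear combination of the spectral projections of $\pi(C)$ lies in the image of $\pi$, and varying $C$ over the full family of nondegenerate observables of $\mathfrak{S}$ produces, via the vector assignment $\Psi$, enough distinct orthonormal bases of $H$ to span $\mathcal{B}(H)_{\text{sa}}$ as a real vector space. This spanning argument is essentially the technical content that will also be needed for the embedding theorem \ref{thm: quantumEmbedding}, and it is the step that I expect to require the most care in a detailed write-up.
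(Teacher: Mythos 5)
Your construction --- the axiom of choice to fix a pure-state decomposition of each state, with a priority clause for states of the form $\emptyset_{A}$ --- matches the paper's one-line proof in structure, and you are right that item (a) forces a consistency check the paper does not spell out: one must show that $\emptyset_{A}=\emptyset_{A'}$ implies $\pi(A)=\pi(A')$. The gap is in your proposed resolution of that check. You reduce it to showing that $\pi(\mathcal{O}(\mathfrak{S}))$ is trace-separating in $\mathcal{B}(H)_{\text{sa}}$, to be proved by exhibiting enough orthonormal bases to span $\mathcal{B}(H)_{\text{sa}}$ as a real vector space. That spanning claim is not derivable from postulates \ref{ax: separability}--\ref{post: pureState} and is in fact false in general: a purely ``classical'' system consisting of one nondegenerate observable $C$ together with all of its functions, with $\mathcal{S}$ the simplex of probability distributions on $\sigma(C)$, satisfies every postulate, and for it the image of $\pi$ is the $n$-dimensional space of operators diagonal in the basis $\psi_{1},\dots,\psi_{n}$ --- a proper subspace of the $n^{2}$-dimensional space $\mathcal{B}(H)_{\text{sa}}$ as soon as $n\geq 2$. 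So the route through trace-separation cannot be completed, even though the corollary remains true for such systems.

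The step can be closed without any spanning hypothesis. Let $V\subset\mathcal{B}(H)_{\text{sa}}$ be the real span of $\{\pi(G):G\in\mathcal{P}_{1}(\mathfrak{S})\}$ and set $X\doteq\pi(A)-\pi(A')$ (the same works for $X\doteq\theta(\rho)-\pi(A)$). By definition~\ref{def: observableAssignment}, both $\pi(A)$ and $\pi(A')$ are real combinations of images of rank-$1$ projections, so $X\in V$. On the other hand, proposition~\ref{prop: BornForDensity} applied to the hypothesis $\emptyset_{A}=\emptyset_{A'}$ gives $\Tra{X\,\pi(G)}=0$ for every $G\in\mathcal{P}_{1}(\mathfrak{S})$, hence $\Tra{XY}=0$ for every $Y\in V$ by linearity of the trace; taking $Y=X$ yields $\Tra{X^{2}}=0$ and therefore $X=0$, since $X$ is selfadjoint. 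In other words, you only need $X$ to lie in the span of the operators against which it is tested, not that this span exhaust $\mathcal{B}(H)_{\text{sa}}$. With this replacement your argument goes through; the remainder of the proposal --- the choice-function setup, the verification that $\theta(\rho)$ is a convex combination of rank-$1$ projections and hence a density operator of $H$, and the derivation of item (b) from propositions~\ref{prop: BornForDensity} and~\ref{prop: BornRule} --- is correct and coincides with the paper's intended argument.
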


\begin{lemma}[State embedding]\label{lemma: stateEmbedding} Let $\theta: \mathcal{S}(\mathfrak{S}) \rightarrow \mathcal{D}(H)$ be a state assignment induced by some observable assignment $\pi: \mathcal{O}(\mathfrak{S}) \rightarrow \mathcal{O}(H)$. Then $\theta$ is injective.
\end{lemma}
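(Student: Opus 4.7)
The plan is to show injectivity by a direct application of the defining property of the state assignment together with the separation result for states established earlier. Suppose $\rho_1, \rho_2 \in \mathcal{S}(\mathfrak{S})$ satisfy $\theta(\rho_1) = \theta(\rho_2)$. By property (b) of corollary~\ref{cor: stateAssignment}, for every observable $B \in \mathcal{O}(\mathfrak{S})$ we have
\begin{align*}
\langle B \rangle_{\rho_1} = \Tra{\theta(\rho_1)\pi(B)} = \Tra{\theta(\rho_2)\pi(B)} = \langle B \rangle_{\rho_2}.
\end{align*}

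Specializing to projections $E \in \mathcal{P}(\mathfrak{S}) \subset \mathcal{O}(\mathfrak{S})$ gives $\langle E \rangle_{\rho_1} = \langle E \rangle_{\rho_2}$ for all $E \in \mathcal{P}(\mathfrak{S})$. By lemma~\ref{lemma: expectationAsState}, this equality of expectations on all projections forces $\rho_1 = \rho_2$, which establishes injectivity.

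There is no real obstacle here: the entire content of the lemma is already packaged into the Born-rule form of the state assignment (which transports equality of images to equality of the functionals $\Exp{\cdot}$) together with the separation principle for states via projections that was proved back in lemma~\ref{lemma: expectationAsState}. The only thing worth flagging is that one must use property (b) on \emph{all} observables (or at least all projections) rather than merely on the observables in the image of $\pi$, but since $\pi$ is defined on all of $\mathcal{O}(\mathfrak{S})$ this is immediate.
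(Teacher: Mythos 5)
Your proof is correct and follows essentially the same route as the paper's: apply property (b) of corollary~\ref{cor: stateAssignment} to projections to get $\langle E \rangle_{\rho_1} = \langle E \rangle_{\rho_2}$ for all $E \in \mathcal{P}(\mathfrak{S})$, then invoke lemma~\ref{lemma: expectationAsState}. Nothing further is needed.
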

\begin{proof}
    Let $\rho, \rho'$ be states satisfying $\theta(\rho)=\theta(\rho')$. For each projection $F \in \mathcal{P}(\mathfrak{S})$, we have
    \begin{align*}
        \langle F \rangle_{\rho} &= \Tra{\theta(\rho) \pi(F)} = \Tra{\theta(\rho') \pi(F)} = \langle F\rangle_{\rho'},
    \end{align*}
    thus it follows from lemma~\ref{lemma: expectationAsState} that $\rho=\rho'$.
\end{proof}

We have proved that a quantum embedding for $\mathfrak{S}$ exists:

\begin{theorem}[Quantum embedding]\label{thm: quantumEmbedding} Let $\mathfrak{S}$ be a $n$-dimensional system satisfying postulates \ref{ax: separability}-\ref{post: pureState}. Then there exists a $n$-dimensional Hilbert space $H$ and injective mappings $\pi: \mathcal{O}(\mathfrak{S}) \rightarrow \mathcal{B}(H)_{\text{sa}}$, $\theta: \mathcal{S}(\mathfrak{S}) \rightarrow \mathcal{D}(H)$ satisfying the following conditions. 
\begin{itemize}
    \item[(a)] $\pi$ preserves functional relations between observables. That is, for any observable $A \in \mathcal{O}(\mathfrak{S})$ and any real function $f$ on the spectrum of $A$,
    \begin{align}
        \pi(f(A)) =f(\pi(A)).
    \end{align}
    Consequently, if $A$ is a projection of $\mathfrak{S}$ (definition \ref{def: projection}), then $\pi(A)$ is a projection of $H$.
    \item[(b)] $\pi$ preserves the spectrum of all observables, i.e., for any observable $A \in \mathcal{O}(\mathfrak{S})$,
    \begin{align}
        \sigma(\pi(A)) = \sigma(A).
    \end{align}    
    \item[(c)] $\pi$ is trace-preserving, i.e., for any observable $A$, 
    \begin{align}
        \Tra{\pi(A)}=\Tra{A}.
    \end{align}
    \item[(d)] $\pi$ preserves algebraic operations, i.e., if $A_{1},\dots,A_{m} \in \mathcal{O}(\mathfrak{S})$ are pairwise compatible observables and $\af_{1},\dots,\af_{m}$ are real numbers, we have
    \begin{align}
        \pi(\sum_{i=1}^{m} \af_{i}A_{i}) &= \sum_{i=1}^{m} \af_{i} \pi(A_{i}),
        \\
        \pi(\prod_{i=1}^{m}A_{i}) &= \sum_{i=1}^{m} \af_{i} \pi(A_{i}).
    \end{align}
    \item[(e)] $\pi$ preserves positive and density operators. That is, if $A$ is positive in $\mathfrak{S}$ (see definition~\ref{def: densityOperator}), then $\pi(A)$ is positive in $H$, and analogously for density operators.
    \item[(f)] When restricted to the lattice of projections $\mathcal{P}(\mathfrak{S})$, $\pi$ is an order embedding that preserves supremum and infimum of pairwise compatible projections. That is, $\pi$ is order-preserving, order-reflecting, and for any set $E_{1},\dots,E_{m} \in \mathcal{P}(\mathfrak{S})$ of pairwise compatible projections we have
    \begin{align}
        \pi(\wedge_{i=1}^{m}E_{i})= \wedge_{i=1}^{m}\pi(E_{i}), 
        \\
        \pi(\vee_{i=1}^{m}E_{i})= \vee_{i=1}^{m}\pi(E_{i}).
    \end{align}    
    \item[(g)] $\theta$ preserves density operators. That is, if $\rho  \in \mathcal{S}(\mathfrak{S})$ is the state $\emptyset_{A}$ defined by a density operator $A \in \mathcal{D}(\mathfrak{S})$ (see definition \ref{def: densityState}), then $\theta(\emptyset_{A})= \pi(A)$.
    \item[(h)] Expectations in $\mathfrak{S}$ satisfy the Born rule, i.e., for any state $\rho$ and any observable $A \in \mathcal{O}(\mathfrak{S})$,
    \begin{align}
        \Exp{A} = \Tra{\theta(\rho)\pi(A)}.
    \end{align}
\end{itemize}
\end{theorem}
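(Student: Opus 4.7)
The plan is to assemble the theorem directly from the sequence of constructions and propositions proved earlier in this section, so the proof will be largely a curation of existing results rather than fresh work. First, I would fix a maximal set $\mathcal{E}=\{E_1,\dots,E_n\}$ of pairwise orthogonal rank-$1$ projections in $\mathfrak{S}$ (existence by lemma~\ref{lemma: dimension}), let $H$ be any $n$-dimensional complex Hilbert space with an orthonormal basis $\{\psi_1,\dots,\psi_n\}$, pick an arbitrary phase function $\mathcal{P}_1(\mathfrak{S})\ni G\mapsto \theta_G\in[0,2\pi)$, and let $\Psi:\mathcal{P}_1(\mathfrak{S})\to H$ be the vector assignment determined by equation~\ref{eq: component}. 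From this, define $\pi:\mathcal{P}(\mathfrak{S})\to\mathcal{P}(H)$ by definition~\ref{def: projectionAssignment} and extend to $\pi:\mathcal{O}(\mathfrak{S})\to\mathcal{B}(H)_{\text{sa}}$ by definition~\ref{def: observableAssignment} (i.e., via spectral decompositions, equation~\ref{eq: observableAssigned}).

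Next, I would verify the eight properties by direct citation: (a) is theorem~\ref{thm: functionalRelations}; (b) is corollary~\ref{cor: spectralDecompositions}; (c) and (d) are items (c),(a),(b) of proposition~\ref{prop: observableAssignment}; (e) is item (d) of the same proposition together with theorem~\ref{thm: functionalRelations} (positivity corresponds to $\sigma(A)\subset[0,\infty)$, which is preserved); (f) is proposition~\ref{prop: projectionEmbedding}; (g) and (h) follow from corollary~\ref{cor: stateAssignment} and propositions~\ref{prop: BornForDensity},~\ref{prop: BornRule}. Injectivity of $\pi$ on $\mathcal{O}(\mathfrak{S})$ was shown in the proposition immediately after proposition~\ref{prop: projectionEmbedding}, and injectivity of $\theta$ is lemma~\ref{lemma: stateEmbedding}.

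The only genuine piece of work left is the construction of $\theta:\mathcal{S}(\mathfrak{S})\to\mathcal{D}(H)$. For a state of the form $\emptyset_A$ with $A$ a density operator, one is forced (by property (g)) to set $\theta(\emptyset_A)\doteq\pi(A)$, and proposition~\ref{prop: BornForDensity} verifies the Born rule in this case. For an arbitrary state $\rho$, corollary~\ref{cor: convexHull} gives a (non-unique) convex decomposition $\rho=\sum_i\af_i\emptyset_{E_i}$ into pure states, and proposition~\ref{prop: BornRule} shows that the operator $D_\rho\doteq\sum_i\af_i\pi(E_i)\in\mathcal{D}(H)$ recovers $\Exp{B}$ via the Born rule. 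I would then set $\theta(\rho)\doteq D_\rho$; to turn this into a function I invoke the axiom of choice to select, for each $\rho$, one such decomposition.

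The main obstacle, in my view, is checking that $\theta$ is well-defined and injective once the choice is made. Well-definedness on density-operator states (i.e., that $\theta(\emptyset_A)=\pi(A)$ does not conflict with a pure-state decomposition) follows because proposition~\ref{prop: BornRule} forces $D_\rho$ to be the unique density operator in $H$ implementing the Born rule for $\rho$: if $D,D'\in\mathcal{D}(H)$ satisfy $\tr(D\pi(B))=\tr(D'\pi(B))$ for all $B\in\mathcal{O}(\mathfrak{S})$, then testing against $B\in\mathcal{P}(\mathfrak{S})$ and using that $\pi(\mathcal{P}(\mathfrak{S}))$ contains all rank-$1$ projections of $H$ (since every unit vector in $H$ is $\psi_F=\Psi(F)$ for some $F\in\mathcal{P}_1(\mathfrak{S})$ by lemma~\ref{lemma: transition} and corollary~\ref{cor: vectorAssignment}) forces $D=D'$; this uniqueness simultaneously eliminates the dependence on the chosen decomposition and yields injectivity of $\theta$ via lemma~\ref{lemma: stateEmbedding}. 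Once this is in place, all eight properties (a)-(h) follow from the cited prior results and the theorem is complete.
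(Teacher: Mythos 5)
Your proposal follows the paper's own route essentially verbatim: the theorem is proved there exactly as a curation of the preceding constructions (the vector assignment $\Psi$ built from a maximal orthogonal set via lemma~\ref{lemma: dimension} and equation~\ref{eq: component}, the projection and observable assignments of definitions~\ref{def: projectionAssignment} and~\ref{def: observableAssignment}, and then theorem~\ref{thm: functionalRelations}, corollary~\ref{cor: spectralDecompositions}, propositions~\ref{prop: projectionEmbedding} and~\ref{prop: observableAssignment}, propositions~\ref{prop: BornForDensity} and~\ref{prop: BornRule}, corollary~\ref{cor: stateAssignment} and lemma~\ref{lemma: stateEmbedding}). The citations you give for items (a)--(h) are the correct ones.

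The one step I would not accept as written is your justification of well-definedness of $\theta$. You claim that every unit vector of $H$ is of the form $\psi_{F}=\Psi(F)$ for some $F\in\mathcal{P}_{1}(\mathfrak{S})$, so that $\pi(\mathcal{P}(\mathfrak{S}))$ contains all rank-$1$ projections of $H$ and hence separates density operators. Neither lemma~\ref{lemma: transition} nor corollary~\ref{cor: vectorAssignment} gives surjectivity of $\Psi$ onto the rays of $H$; they only give injectivity and preservation/reflection of orthogonality, and nothing in the paper forces the image of $\pi$ to exhaust $\mathcal{P}_{1}(H)$. So your uniqueness argument for $D_{\rho}$ (and with it the claimed independence of the chosen convex decomposition) is not supported. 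Fortunately it is also not needed: the paper's corollary~\ref{cor: stateAssignment} only asserts the \emph{existence} of a map $\theta$ obtained by selecting, via the axiom of choice, one pure-state decomposition per state (taking the canonical one $\theta(\emptyset_{A})=\pi(A)$ on density-operator states, which is consistent by definition~\ref{def: densityState} and proposition~\ref{prop: BornForDensity}), and injectivity of $\theta$ follows from lemma~\ref{lemma: stateEmbedding}, which separates states using only the projections of $\mathfrak{S}$ through lemma~\ref{lemma: expectationAsState} rather than all projections of $H$. If you replace your uniqueness step by this choice-based construction, the proof is complete and coincides with the paper's.
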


To conclude, let's emphasize that we have singled out the so-called set of quantum correlations:

\begin{corollary}[Joint measurements]\label{cor: quantumCorrelations} Let $\mathfrak{S}$ be a $n$-dimensional system satisfying postulates \ref{ax: separability}-\ref{post: transitionProbability}, and let $(H,\theta,\pi)$ be a quantum embedding for $\mathfrak{S}$, i.e., $H$ is a n-dimensional Hilbert space and $\pi: \mathcal{O}(\mathfrak{S}) \rightarrow \mathcal{B}(H)_{\text{sa}}$, $\theta: \mathcal{S}(\mathfrak{S}) \rightarrow \mathcal{D}(H)$ are the mappings whose existence has been proved in theorem \ref{thm: quantumEmbedding}. Let $A_{1},\dots,A_{m} \in \mathcal{O}(\mathfrak{S})$ be pairwise compatible observables, and, for each $i$, let $\Delta_{i} \subset \sigma(A_{i})$. Then, for any state $\rho$,
\begin{align}
    P_{\rho}(\Delta_{1}\times \dots \times \Delta_{m};A_{1},\dots,A_{m}) = \Tra{\theta(\rho)\prod_{i=1}^{m}\chi_{\Delta_{i}}(\pi(A_{i}))}.
\end{align}
\end{corollary}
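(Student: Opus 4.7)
My plan is to show that the corollary is essentially an immediate consequence of stringing together three prior results: Lemma \ref{lemma: projectionJointProbability} (which converts joint probabilities to expectations of products of projections), the multiplicativity/linearity of $\pi$ on pairwise compatible observables from Theorem \ref{thm: quantumEmbedding}(d), the preservation of functional relations from Theorem \ref{thm: quantumEmbedding}(a), and the Born rule from Theorem \ref{thm: quantumEmbedding}(h). No new machinery should be needed.

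The first step is to write, for each $i$, the projection $E_{\Delta_i} \equiv \chi_{\Delta_i}(A_i)$ and observe that these projections are pairwise compatible. This follows because $A_1, \dots, A_m$ are pairwise compatible, so by Corollary \ref{cor: specker} they admit a common cone $C \xrightarrow{f_i} A_i$, whence each $E_{\Delta_i} = \chi_{f_i^{-1}(\Delta_i)}(C)$ is a function of the single observable $C$; thus $E_{\Delta_1}, \dots, E_{\Delta_m}$ are pairwise compatible (in fact, all functions of $C$). With pairwise compatibility in hand, Lemma \ref{lemma: projectionJointProbability} yields
\begin{align*}
    P_\rho(\Delta_1 \times \dots \times \Delta_m; A_1, \dots, A_m) = \Exp{\prod_{i=1}^m E_{\Delta_i}}.
\end{align*}

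The second step applies the Born rule from Theorem \ref{thm: quantumEmbedding}(h) with the observable $B \equiv \prod_{i=1}^m E_{\Delta_i}$, giving
\begin{align*}
    \Exp{\prod_{i=1}^m E_{\Delta_i}} = \Tra{\theta(\rho) \, \pi\!\left(\prod_{i=1}^m E_{\Delta_i}\right)}.
\end{align*}
Next, because the $E_{\Delta_i}$ are pairwise compatible, Theorem \ref{thm: quantumEmbedding}(d) lets me pull $\pi$ through the product:
\begin{align*}
    \pi\!\left(\prod_{i=1}^m E_{\Delta_i}\right) = \prod_{i=1}^m \pi(E_{\Delta_i}).
\end{align*}
Finally, Theorem \ref{thm: quantumEmbedding}(a) gives $\pi(E_{\Delta_i}) = \pi(\chi_{\Delta_i}(A_i)) = \chi_{\Delta_i}(\pi(A_i))$, and assembling these equalities produces exactly the claimed identity.

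There is no real obstacle: the only subtle point is justifying that the bullet points of Theorem \ref{thm: quantumEmbedding} genuinely apply to the operator $\prod_i E_{\Delta_i}$, which reduces to verifying pairwise compatibility of the projections $E_{\Delta_i}$. That verification is cheap thanks to Specker's principle, so the proof will be essentially a short chain of substitutions. I would present it as a four-line display with a brief sentence at the beginning noting the pairwise compatibility of the $E_{\Delta_i}$.
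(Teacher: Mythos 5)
Your proof is correct and follows exactly the same chain as the paper's own argument: Lemma~\ref{lemma: projectionJointProbability}, then the Born rule, multiplicativity, and functional-relation preservation from Theorem~\ref{thm: quantumEmbedding}. The only difference is that you explicitly justify the pairwise compatibility of the projections $\chi_{\Delta_i}(A_i)$, which the paper leaves implicit; that is a harmless (and slightly more careful) addition.
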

\begin{proof}
According to lemma~\ref{lemma: projectionJointProbability} and theorem~\ref{thm: quantumEmbedding},
\begin{align*}
    P_{\rho}(\Delta_{1}\times \dots \times \Delta_{m};A_{1},\dots,A_{m}) &= \Exp{\prod_{i=1}^{m}\chi_{\Delta_{i}}(A_{i})} = \Tra{\theta(\rho)\pi(\prod_{i=1}^{m}\chi_{\Delta_{i}}(A_{i}))}
    \\
    &= \Tra{\theta(\rho)\prod_{i=1}^{m}\pi(\chi_{\Delta_{i}}(A_{i}))} = \Tra{\theta(\rho)\prod_{i=1}^{m}\chi_{\Delta_{i}}(\pi(A_{i}))}.
\end{align*}
\end{proof}

\section{Concluding remarks}\label{sec: conclusion}

We have identified key theory-independent features of the quantum formalism that are conjointly sufficient for the emergence of the algebraic structure of quantum mechanics in any system which includes states, real-valued observables and a state update, and we have gradually derived the entire formalism from them. Our step-by-step approach has enabled us to identify which aspects of this algebraic structure are incorporated by the system when each one of these features is postulated, helping us to understand why the formalism is the way it is. This approach has also enabled us to single out the mathematical elements of quantum theory that a physical system must assimilate in order to satisfy principles  known to be important for understanding quantum correlations, such as local orthogonality \cite{fritz2013local}, Specker's principle \cite{cabello2012specker, gonda2018almost} and  the exclusivity principle \cite{amaral2014exclusivity}, and it has enabled us to show that the distinctive way in which quantum systems connect incompatible observables plays a major role in shaping both the algebraic and the statistical facets of quantum mechanics. Besides, most of our postulates have a strong informational appeal, including those concerning states and the state update, so we believe that our work can shed light on the longstanding debate over the collapse (i.e., update) of the quantum state and over its ontological status \cite{halvorson2019realist, leifer2014real, spekkens2007toy, norsen2017foundations}. Consequently, it can shed light on the problem of interpreting quantum mechanics.

Functional relations are old-fashioned. They do not occupy a prominent place in quantum foundations nowadays, and there is no room for them in many attempts to explain the quantum formalism and the set of quantum correlations. Even in contextuality analysis, the field that arose from Kochen-Specker theorem \cite{kochen1967problem}, they have been left aside and replaced by contexts \cite{amaral2018graph, abramsky2011sheaf}. In our paper, on the contrary, functional relations play a pivotal role. We use them, directly or indirectly, to define compatibility, projections, algebraic operations, traces of observables, and so on. They also appear in essentially every single lemma, proposition, and theorem we prove throughout the paper, which in turn are well-known results in quantum mechanics. It is above all due to the existence of functional relations that, way before we are able to embed our system in a Hilbert space, essentially all the ``commutative part'' of the quantum formalism is already present in it, as we showed in the first part of the work (sections~\ref{sec: observableEvents}-\ref{sec: commutativePart}). We believe that a revival of Kochen and Specker's emphasis on functional relations --- which, as they point out, are naturally present in any system \cite{kochen1967problem} --- can shed light on many important issues in contemporary quantum foundations. An important example is contextuality, which we discuss in the appendix. Other instances have already been discussed throughout the paper, such as the distinction between local orthogonality and the exclusivity principle. 

In the first part of the paper (sections~\ref{sec: observableEvents}-\ref{sec: commutativePart}), we showed that, when functional relations and the state update are taken into account, basically all the ``commutative part'' of the quantum formalism is naturally present in a system where (1) nondegenerate observables are as simple as $n$-sided dices (postulate~\ref{post: observables}) (2) degenerate observables are coarse-grainings of nondegenerate ones (postulate~\ref{post: observables}), (3) a sequential measurement of two compatible observables is equivalent to a measurement of a single ``fine graining'' that they share in common (postulates~\ref{post: selfCompatibility} and \ref{post: compatibility}), and (4) commutativity requires compatibility (postulate~\ref{post: commutativity}). These restrictions, however, are not enough for the quantum formalism to arise fully. What is missing is a connection between incompatible observables, which we introduced in the second part of the paper (sections \ref{sec: connectingIncompatible}-\ref{sec: quantumTheory}). This connection is primarily comprised of postulate~\ref{post: transitionProbability}, which poses constraints on the probability of transitioning between two pure states (or, equivalently, two rank-$1$ projections). In quantum theory, transition probabilities between pure states are determined by the inner product of the Hilbert space representing the system, and the constraints we pose in postulate \ref{post: transitionProbability} are necessary to ensure that we can associate a unit vector to each pure state of our system in such a way that their inner products match their transition probabilities, as we showed in lemma~\ref{lemma: transition}. This result implies that our system is in agreement with the interference terms that appear in quantum mechanics when we evaluate the expectation of an observable with respect to some pure state by decomposing this state as a linear combination of vectors of a certain orthonormal basis. The appearance of interference terms is one of the most characteristic features of the quantum formalism \cite{tausk2018foundations, norsen2017foundations}, so sooner or later we would be forced to take it into account. Although in quantum mechanics we are usually interested only in sequential measurements of compatible observables, the connection between incompatible observables is naturally present in the formalism and plays a crucial role in the theory. It is the connection between incompatible observables that gathers everything (i.e., states and observables) together in a single mathematical object, namely the $C^{\ast}$ algebra of  operators associated with the Hilbert space representing the system. It is precisely this connection that makes a quantum system more than a collection of intersecting contexts, i.e., more than a collection of intersecting sets of pairwise compatible observables. Any system satisfying postulates \ref{ax: separability}-\ref{post: commutativity} is nondisturbing (see lemma~\ref{lemma: nondisturbance}) and satisfies the exclusivity principle (corollary~\ref{cor: exclusivityPrinciple}), but, despite all the similarities between this system and quantum systems that we proved in the first part of the paper, it is not a quantum system yet, and consequently its predictions are not ``quantum correlations'' \cite{amaral2018graph, csw2014graph}. To go from nondisturbing to quantum correlations in our formalism, we need to connect incompatible observables, as we showed in the second part of the paper. Much effort has been made recently to explain the set of quantum correlations, which proved to be a very difficult task \cite{gonda2018almost, navascues2015almost, elie2018geometry}. However, essentially all attempts have been based on frameworks where no connection between incompatible observables exists \cite{csw2014graph, navascues2015almost, amaral2014exclusivity, fritz2013local}. Our work shows that this difficulty disappears when this connection  is taken into account, but to do so we introduced a postulate that, despite being a key feature of quantum systems, has no clear explanatory power. The way in which quantum mechanics connects incompatible observables is an essential part of the theory and, as our paper shows, it is a powerful tool for deriving the quantum formalism and the set of quantum correlations. For these reasons, we  believe that a better understanding of this connection is important.

\section*{Acknowledgments} \label{sec:acknowledgements}
The author thanks Bárbara Amaral, Roberto Baldijão, and Giulio Halisson for reading and commenting on previous versions of this manuscript and Adán Cabello for insightful discussions and criticisms.

\appendix
\setcounter{secnumdepth}{0}
\section{Appendix: Kochen-Specker theorem, contextuality and order dependence}\label{sec: Contextuality}

In their remarkable paper on ``the problem of hidden variables in quantum mechanics'' \cite{kochen1967problem}, Kochen and Specker assert that one cannot consistently interpret all observables of a quantum system as representing properties simultaneously possessed by it \cite{kochen1967problem, isham1998topos, landsman2017foundations, hermens2010fromRealism}. According to them, a physically consistent assignment of values to observables of a physical system   ---  which must exist for the aforementioned interpretation to be possible --- should respect the functional relations that exist between observables (see definitions~\ref{def: functionalRelation} and \ref{def: categoryOfObservables})  and should also associate  an element of $\sigma(A)$ (see section~\ref{sec: basicFramework}) to each observable $A$, as in definition~\ref{def: valuation}. They then show that, as stated by theorem~\ref{thm: kochenSpecker}, such an assignment cannot exist in any quantum system described by a  Hilbert space of dimension larger than $2$.

\begin{definition}[Valuation function \cite{kochen1967problem, doring2005kochen}]\label{def: valuation} Let $\mathcal{O}$ be the set of observables of some physical system $\mathfrak{S}$. A function $V: \mathcal{O} \rightarrow \mathbb{R}$ is said to be a valuation on $\mathcal{O}$ if it satisfies the following conditions.
\begin{itemize}
    \item[(a)] \textit{Spectrum rule:} $V(A) \in \sigma(A)$ for every observable $A$
    \item[(b)] \textit{Functional composition principle:} If two observables $A,B$ satisfy $B=g(A)$ for some function $g$ on $\sigma(A)$, then $V(B) = g(V(A))$.
\end{itemize}
\end{definition}

\begin{theorem}[Kochen-Specker,  \cite{kochen1967problem, doring2005kochen}]\label{thm: kochenSpecker} Let $\mathcal{O}$ be the set of selfadjoint operators on a Hilbert space $H$. If $\text{dim}(H)>2$, there is no valuation function on $\mathcal{O}$. 
\end{theorem}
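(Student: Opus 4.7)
The plan is to reduce the theorem to a statement about two-valued measures on the lattice of projection operators, and then to defeat such measures by producing a finite configuration of vectors that admits no consistent coloring. I would proceed in three main stages.

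First, I would restrict attention to projections. If $V$ is a valuation on $\mathcal{O}$, then for any projection $P$ the spectrum rule gives $V(P)\in\sigma(P)\subseteq\{0,1\}$, so $V$ restricts to a $\{0,1\}$-valued function on the set $\mathcal{P}(H)$ of projections. Using the functional composition principle, I would derive the key combinatorial constraint: if $P_1,\dots,P_n\in\mathcal{P}(H)$ are pairwise orthogonal and sum to the identity, then exactly one of the $V(P_i)$ equals $1$. To see this, pick pairwise distinct real numbers $\alpha_1,\dots,\alpha_n$ and form $A=\sum_i \alpha_i P_i\in\mathcal{O}$; since $P_i=\chi_{\{\alpha_i\}}(A)$, the functional composition principle gives $V(P_i)=\chi_{\{\alpha_i\}}(V(A))$, and the spectrum rule forces $V(A)\in\sigma(A)=\{\alpha_1,\dots,\alpha_n\}$, so exactly one index is selected.

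Second, I would translate this into a statement about unit vectors. Since any set of pairwise orthogonal one-dimensional projections extends to a resolution of the identity (here we use $\dim(H)\geq 3$ to have enough room, and in fact it suffices to work inside any fixed three-dimensional subspace), the rank-$1$ projections $P_\psi=\ketbra{\psi}$ inherit the coloring rule: in any orthonormal triple $\{\psi_1,\psi_2,\psi_3\}\subset H$, exactly one of $V(P_{\psi_i})$ equals $1$. Thus the problem reduces to showing that no function $f$ from the set of unit vectors in a three-dimensional real Hilbert space to $\{0,1\}$ can satisfy: in every orthonormal triple, exactly one vector is assigned $1$.

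Third, and this is the hard part, I would exhibit an explicit finite configuration of unit vectors in $\mathbb{R}^3$ for which this coloring rule is self-contradictory. This is the original combinatorial core of Kochen and Specker's argument, and any modern proof invokes such a configuration; the shortest known presentations use a few dozen vectors (e.g.\ the $33$-vector set of Peres, or related $18$- and $31$-vector constructions in dimension $4$ which can be lifted from dimension $3$ constructions). The strategy is to list the vectors together with the orthogonal triples they form, and then show by a pigeonhole-style case analysis that the constraint ``exactly one $1$ per triple'' forces incompatible assignments at two vectors that happen to lie in a common orthogonal triple. Once this finite obstruction is in place, the composite assignment $\psi\mapsto V(P_\psi)$ cannot exist, so $V$ itself cannot exist, completing the proof for $\dim(H)=3$; the case $\dim(H)>3$ follows by embedding a three-dimensional subspace and noting that any valuation on $\mathcal{O}$ restricts to a valuation on the selfadjoint operators supported on that subspace.

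The main obstacle is clearly the third stage: neither the reduction to projections nor the passage to unit vectors uses anything deeper than the functional composition principle and the spectrum rule, but the existence of a finite uncolorable configuration is a genuinely combinatorial statement whose proof is essentially a bookkeeping argument on an explicit list of vectors. I would present the configuration, verify a short list of orthogonality relations, and run through the case analysis; alternatively, one may cite the existing $18$-vector proof in dimension $4$ (Cabello--Estebaranz--Garc\'ia-Alcaine) and reduce from $\dim(H)>2$ to that case by a subspace embedding.
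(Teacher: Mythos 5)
The paper does not prove this theorem at all: it is stated in the appendix as an imported result, with the proof deferred entirely to the cited literature (Kochen--Specker and D\"oring). So there is no ``paper's proof'' to compare against; your proposal has to be judged on its own. Your three-stage outline is the standard and correct strategy: restrict a valuation to projections, derive the ``exactly one $1$ per resolution of the identity'' rule from the spectrum rule plus the functional composition principle applied to $A=\sum_i\alpha_iP_i$ with distinct $\alpha_i$, reduce to a coloring problem on orthonormal triples in a three-dimensional subspace, and defeat it with a finite uncolorable vector configuration. The first stage is carried out correctly and completely.

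Two points deserve tightening. First, the reduction from $\dim H>3$ to $\dim H=3$ is not literally ``any valuation on $\mathcal{O}$ restricts to a valuation on the selfadjoint operators supported on a subspace $K$'': an operator supported on $K$ and extended by zero has $0$ adjoined to its spectrum, so the restricted function need not satisfy the spectrum rule relative to $K$, and more importantly the induced $\{0,1\}$ assignment on rank-$1$ projections inside $K$ could be identically zero, which is perfectly consistent with every orthonormal triple of $K$ summing to $V(P_K)=0$. The fix is easy but should be said: pick any orthonormal basis of $H$, note that exactly one basis projection receives the value $1$, let $K$ be the span of that vector together with two others, and use finite additivity (itself a consequence of the functional composition principle) to conclude $V(P_K)=1$; only then does every orthonormal triple of $K$ carry exactly one $1$. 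Second, the entire combinatorial core --- the existence of a finite uncolorable configuration --- is deferred to citation rather than exhibited, so as written your argument is a reduction to a known result rather than a self-contained proof. That is a legitimate way to present the theorem (and is effectively what the paper itself does), but you should be explicit that the pigeonhole case analysis on the $33$-vector (Peres) or $18$-vector (Cabello--Estebaranz--Garc\'{\i}a-Alcaine) set is being invoked, not reproved.
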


As Kochen and Specker point out \cite{kochen1967problem, doring2005kochen}, this theorem immediately implies that, if $\mathfrak{S}$ is a quantum system of dimension larger than $2$, the following type of classical \textit{realist} model \cite{doring2010thing, tezzin2021KCBS} cannot exist for $\mathfrak{S}$.\footnote{By evoking postulates \ref{ax: separability} - \ref{post: compatibility}, we implicitly restrict the definition of Kochen-Specker model to finite-dimensional systems. We do it, however, just for convenience. Kochen-Specker theorem applies to any separable Hilbert space \cite{kochen1967problem, isham1998topos}, and it can even be proved to any von Neumann algebra without summands of types $I_{1}$ and $I_{2}$ \cite{doring2005kochen}.}

\begin{definition}[Kochen-Specker model]\label{def: KSmodel} Let $\mathfrak{S}\equiv (\mathcal{O},\mathcal{S},P,T)$ be a physical system (definition~\ref{def: system}) satisfying postulates \ref{ax: separability}-\ref{post: compatibility}. A Kochen-Specker (KS) model for  $\mathfrak{S}$ consists of a measurable space $\boldsymbol{\Lambda} \equiv (\Lambda,\Sigma)$ and mappings $\Phi$, $\Xi$ satisfying the following conditions.
\begin{itemize}
    \item[(a)] $\Xi$ assigns, to each observable $A \in \mathcal{O}$, a measurable function $f_{A}: \Lambda \ri \sigma(A)$. Furthermore, given any arrow $A \xrightarrow{g} B$ in the category of observables, i.e., if $B=g(A)$ for some function $g$, we obtain
    \begin{align}
        f_{g(A)} = g \circ f_{A}.
    \end{align}
    \item[(b)]  $\Phi$ assigns, to each non null state $\rho$, a probability measure $\mu_{\rho}$ on $\boldsymbol{\Lambda}$, and it assigns the null measure $\mu_{0}$ to the null state. Furthermore, given any observable $A$ and any state $\rho$, the measure $P_{\rho}( \ \cdot \ ; A)$ (see definition~\ref{def: system}) is the pushforward of $\mu_{\rho}$ along $f_{A}$, which means that, for any $\Delta \subset \sigma(A)$,
    \begin{align}
        P_{\rho}(\Delta;A) = \mu_{\rho}(f_{A}^{-1}(\Delta)).
    \end{align}
\end{itemize}
\end{definition}

\begin{corollary}[Kochen-Specker, \cite{kochen1967problem, doring2005kochen}]\label{cor: kochenSpecker} Let $\mathfrak{S}$ be a quantum system represented by a Hilbert space $H$ of dimension larger than $2$. Then there is no KS-model for $\mathfrak{S}$.
\end{corollary}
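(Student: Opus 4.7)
The plan is to argue by contradiction: assuming a KS-model $(\boldsymbol{\Lambda},\Phi,\Xi)$ exists for the quantum system $\mathfrak{S}$, I will extract from each point $\lambda\in\Lambda$ a valuation function on $\mathcal{O}$, thereby contradicting Theorem~\ref{thm: kochenSpecker}. Concretely, for each $\lambda\in\Lambda$ I would define
\begin{align*}
V_{\lambda}:\mathcal{O}\to\mathbb{R},\qquad V_{\lambda}(A)\doteq f_{A}(\lambda),
\end{align*}
where $f_{A}$ is the measurable function $\Lambda\to\sigma(A)$ assigned to $A$ by $\Xi$. The task then reduces to verifying that $V_{\lambda}$ satisfies the two conditions of Definition~\ref{def: valuation} and ensuring that at least one such $\lambda$ exists.

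The spectrum rule is immediate: by item (a) of Definition~\ref{def: KSmodel}, $f_{A}$ takes values in $\sigma(A)$, so $V_{\lambda}(A)=f_{A}(\lambda)\in\sigma(A)$. The functional composition principle is equally direct: if $B=g(A)$ for some function $g$ on $\sigma(A)$, then by the same item $f_{B}=f_{g(A)}=g\circ f_{A}$, whence
\begin{align*}
V_{\lambda}(B)=f_{B}(\lambda)=(g\circ f_{A})(\lambda)=g(f_{A}(\lambda))=g(V_{\lambda}(A)).
\end{align*}
So $V_{\lambda}$ is a valuation function on $\mathcal{O}$ for every $\lambda\in\Lambda$.

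The only remaining point is to guarantee that $\Lambda$ is nonempty. This follows because, by assumption, $\mathcal{S}$ contains a non-null state $\rho$; item (b) of Definition~\ref{def: KSmodel} then supplies a probability measure $\mu_{\rho}$ on $\boldsymbol{\Lambda}$, and since $\mu_{\rho}(\Lambda)=1$ we must have $\Lambda\neq\emptyset$. Picking any $\lambda\in\Lambda$ produces a valuation function on the set of selfadjoint operators of a Hilbert space of dimension $>2$, directly contradicting Theorem~\ref{thm: kochenSpecker}. Hence no KS-model for $\mathfrak{S}$ can exist.

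I do not expect a serious obstacle here: the corollary is essentially a translation between the ``global'' language of hidden-variable models and the ``pointwise'' language of valuations, and the structural conditions in Definition~\ref{def: KSmodel} are tailor-made so that each fiber of the hidden-variable space realizes a valuation. The only mild subtlety is the nonemptiness of $\Lambda$, which I would address as above; everything else is a direct unfolding of definitions.
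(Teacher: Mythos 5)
Your proof is correct and is exactly the argument the paper gives (in the prose following Definition~\ref{def: KSmodel}): each $\lambda\in\Lambda$ yields a valuation $V_{\lambda}(A)\doteq f_{A}(\lambda)$, whose spectrum rule and functional composition principle are immediate from items (a) of Definition~\ref{def: KSmodel}, contradicting Theorem~\ref{thm: kochenSpecker}. Your explicit check that $\Lambda\neq\emptyset$ via the probability measure $\mu_{\rho}$ of a non-null state is a small point the paper leaves implicit, but it is a welcome addition rather than a deviation.
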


Let $\mathfrak{M}\equiv (\boldsymbol{\Lambda},\Xi,\Phi)$ be a KS-model for a system $\mathfrak{S} \equiv (\mathcal{O},\mathcal{S},P,T)$. The elements of $\Lambda$ can be thought of as representing ``hidden states'' of the system under analysis, so the addition of $\Lambda$ ``completes'' the description provided by $\mathfrak{S}$: if an element of $\mathcal{S}$ represents the degree of knowledge of the experimentalist about the \textit{state of affairs} of the system, as we asserted in section~\ref{sec: categoryOfObservables}, the elements of $\Lambda$ represent the \textit{state of affairs} themselves. It means that, given a state $\rho \in \mathcal{S}$, $\mu_{\rho}(U)$ is the probability that the (hidden) state of the system (or equivalently its \textit{state of affairs}) lies in the measurable set $U$ \cite{kochen1967problem}. A hidden state $\lambda \in \Lambda$ assigns a definite value to each observable $A$, namely the value $V_{\lambda}(A) \doteq f_{A}(\lambda)$, and it immediately follows from definition~\ref{def: KSmodel} that the assignment $\mathcal{O} \ni A \xmapsto{V_{\lambda}} V_{\lambda}(A) \equiv f_{A}(\lambda) \in \mathbb{R}$ is a valuation function on $\mathcal{O}$. Hence, if a KS-model exists for a system $\mathfrak{S}$, there is no apparent inconsistency in interpreting observables of $\mathfrak{S}$ as representing properties simultaneously possessed by the physical system under description. A KS-model thus completes the description provided by $\mathfrak{S}$ in a way that is in agreement with classical realism \cite{doring2010thing, tezzin2021KCBS}. 

Kochen-Specker theorem obstructs the existence of hidden states for quantum systems of dimension larger than $2$ and, consequently, it rules out the possibility of constructing KS-models for them. To prove that these states cannot exist for a certain system $H$ one does not need to take all selfadjoint observables into account; a finite set of observables is usually sufficient, and different sets of observables provide different proofs of the theorem \cite{kochen1967problem, budroni2021review}. What all these proofs have in common, however, is the presence (possibly implicit) of intersecting \textit{contexts} and incompatible observables. In our formalism, contexts can be defined as follows.

\begin{definition}[Context]\label{def: context} Let $\mathfrak{S}$ be a system satisfying postulate \ref{ax: separability}-\ref{post: compatibility}. A context in $\mathfrak{S}$ consists of a nonempty set of pairwise compatible observables (see definition~\ref{def: compatibility}). Two contexts $\mathcal{C},\mathcal{D}$ are said to be compatible iff their union is also a context, and they are said to be incompatible otherwise.
\end{definition}

Compatibility is not a transitive relation, so, in principle, there exist distinct contexts in a system $\mathfrak{S}$. The trivial example of intersecting contexts whose union in itself is not a context is given by an observable $A$ that can be written as a function $A=g(C)=h(D)$ of incompatible observables $C,D$, as illustrated in the following diagram (which is defined in the category of observables).
\begin{center}
    \begin{tikzcd}
        C \arrow[r, "g"] & A &\arrow[l,swap, "h"] D
    \end{tikzcd}
\end{center}
Each arrow of this diagram is associated with a different context containing $A$: the arrow $A \xrightarrow{g} C$ is associated with the context $\{C,A\}$, and $D \xrightarrow{h} A$ is associated with $\{D,A\}$. On the other hand, if an observable $A$ belongs to two incompatible contexts $\mathcal{C}$, $\mathcal{D}$, then there are incompatible observables $C,D$ and arrows $C \xrightarrow{g} A \xleftarrow{h} D$. In fact, two contexts  $\mathcal{C}$ and $\mathcal{D}$ are incompatible if and only if there is a pair of observables $(C',D') \in \mathcal{C} \times \mathcal{D}$ such that $C'$ and $D'$ are incompatible. If $A \in \mathcal{C} \cap \mathcal{D}$, $A$ is compatible with $C'$ and $D'$, which means that the following cones exist in the category of observables.
\begin{center}
    \begin{tikzcd}
        & C\arrow[dl]\arrow[dr] & & D\arrow[dl]\arrow[dr] & \\
        C' & & A & & D'
    \end{tikzcd}
\end{center}
$C$ and $D$ are incompatible, because otherwise $C'$ and $D'$ would be compatible, so the proof is complete. Therefore, the existence of intersecting incompatible contexts is equivalent to the existence of observables that can be written as a function of incompatible observables. Put differently, there are incompatible contexts in a system $\mathfrak{S}$ if and only if there exist incompatible observables in $\mathfrak{S}$ that have a coarse-graining in common. As we said, to prove Kochen-Specker theorem we need to evoke, in one way or another, incompatible contexts with a nonempty intersection, so Kochen-Specker theorem essentially follows from the existence of incompatible observables that share  coarse-grainings \cite{doring2005kochen, isham1998topos}. 

Hidden states in KS-models are \textit{noncontextual} in the sense that the values they assign to observables have no dependence at all on contexts. Kochen-Specker theorem tells us that, in order to assign values to \textit{all} observables of a quantum system without rejecting the spectrum rule (see definition~\ref{def: valuation}),  we need to reject the functional composition principle, which asserts --- to use our terminology --- that the value of an observable $B$ depends only in the object representing $B$ in the category of observables, and not on any specific arrow whose codomain is $B$. It follows from the discussion above that a ``context-dependent'' assignment of values to observables is equivalent to an assignment that depends not only on the observable $B$ but also on the arrows that have $B$ as codomain. Hence, it follows from Kochen-Specker theorem that any conceivable hidden state assigning values to all observables of a quantum system without violating the spectrum rule will be context-dependent \cite{budroni2021review, hermens2010fromRealism}. This is one of the reasons why quantum systems are said to be ``contextual'' \cite{deRonde2020unscrambling, isham1998topos, hermens2010fromRealism}. To be faithful to our formalism, we define contextual assignments using the category of observables:

\begin{definition}[Value assignment]\label{def: contextualAssignment} Let $\mathfrak{S}$ be a system satisfying postulates \ref{ax: separability}-\ref{post: compatibility}. Let $\mathcal{O}$ be the category of observables of $\mathfrak{S}$ (see definition~\ref{def: categoryOfObservables}), and let $\mathcal{O}_{1}$ be its set of arrows. A function $V: \mathcal{O}_{1} \rightarrow \mathbb{R}$ is said to be a value assignment on $\mathcal{O}$ if it satisfies the following conditions.
\begin{itemize}
    \item[(a)] For any arrow $g$ we have $V(g) \in \sigma(\text{cod}(g))= g(\sigma(\text{dom}(g)))$. More explicitly, given any arrow $A \xrightarrow{g} B$,
\begin{align}
    V(A \xrightarrow{g} B) \in \sigma(B).
\end{align}
\item[(b)] For each arrow  $A \xrightarrow{g} B$,
\begin{align}
    V(A \xrightarrow{g}B) = g(V(A \xrightarrow{\text{id}_{A}} A)).
\end{align}
\end{itemize}
A value assignment is said to be \textbf{noncontextual} if, for every arrow $A \xrightarrow{g} B$,
\begin{align}
    V(A \xrightarrow{g} B) = V(B \xrightarrow{\id_{B}} B),
\end{align}
which is equivalent to saying that $V(g)=V(h)$ whenever $\text{cod}(g)=\text{cod}(h)$. Otherwise, $V$ is said to be contextual.
\end{definition}

There is a clear one-to-one correspondence between noncontextual value assignments and valuation functions. In fact, given any value assignment $V: \mathcal{O}_{1} \ri \mathbb{R}$, we can define a valuation function $V_{0}: \mathcal{O}_{0} \rightarrow \mathbb{R}$\footnote{Recall that we usually denote the category of observables and the set of observables both by $\mathcal{O}$. Here, to distinguish the set of observables from the set of arrows, it is convenient to use the standard notation followed in category theory and to denote the set of observables by $\mathcal{O}_{0}$. This is why we are saying that a valuation function is defined in $\mathcal{O}_{0}$, whereas, in definition~\ref{def: valuation}, we say that it is defined in $\mathcal{O}$; in both cases, the domain of valuation functions is the set of observables.} by $V_{0}(A) \doteq V(A \rightarrow A)$ for each observable $A$ (recall that the category of observables is a thin category, so the only arrow from $A$ to $A$ is the identity arrow). Given any object $A$ and any arrow $A \xrightarrow{g} B$, we have $g(V_{0}(A)) = g(V(A \ri A)) = V(A \xrightarrow{g} B)$. If $V$ is noncontextual, then $ V(A \xrightarrow{g} B) = V(B \ri B) = V_{0}(B)$, and therefore $V_{0}(g(A)) = V_{0}(B) = g(V_{0}(A))$. It shows that $V_{0}$ satisfies the functional composition principle, and it is trivial to prove that $V_{0}$ satisfies the spectrum rule. On the other hand,  given any valuation function $V_{0}: \mathcal{O} \ri \mathbb{R}$, we can define a value assignment $V: \mathcal{O}_{1} \ri \mathbb{R}$ by $V(A \xrightarrow{g} B) \doteq V_{0}(B)$ for each arrow $A \xrightarrow{g} B$. Since $V_{0}$ satisfies the functional composition principle, we have $V(A\xrightarrow{g} B) = V_{0}(B) = V_{0}(g(A))=g(V_{0}(A))= g(V(A \ri A))$, which means that item $(b)$ from definition~\ref{def: contextualAssignment} is satisfied, and it easily follows from the spectral rule that item $(a)$ from definition~\ref{def: contextualAssignment} is also satisfied. Finally, $V$ is noncontextual by construction, and $V_{0}$ clearly corresponds to the valuation function induced by $V$. Hence, we have the following lemma. 

\begin{lemma} Let $\mathfrak{S}$ be a system satisfying postulates \ref{ax: separability}-\ref{post: compatibility}. Let $\mathcal{O}$ be its category of observables, and let $\mathcal{O}_{0}$, $\mathcal{O}_{1}$ be the sets of objects and arrows of $\mathcal{O}$ respectively. Any noncontextual value assignment $V: \mathcal{O}_{1} \ri \mathbb{R}$ canonically defines a valuation function $V_{0}$ given by $V_{0}(A) \doteq V(A \rightarrow A)$ for each $A \in \mathcal{O}_{0}$. Furthermore, any valuation function on $\mathcal{O}$ is induced by a noncontextual value assignment in this way.
\end{lemma}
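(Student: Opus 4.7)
The plan is to establish the claimed bijection between noncontextual value assignments on $\mathcal{O}_{1}$ and valuation functions on $\mathcal{O}_{0}$ by exhibiting two explicit constructions and verifying that they are mutually inverse. For the forward direction, starting from a noncontextual value assignment $V$, I will define $V_{0}(A) \doteq V(A \xrightarrow{\id_{A}} A)$ and check the two conditions of Definition~\ref{def: valuation}. The spectrum rule is immediate from item $(a)$ of Definition~\ref{def: contextualAssignment} applied to the identity arrow. For the functional composition principle, I will take any arrow $A \xrightarrow{g} B$ and use item $(b)$ of Definition~\ref{def: contextualAssignment} to write $V(A \xrightarrow{g} B) = g(V(A \xrightarrow{\id_{A}} A)) = g(V_{0}(A))$; then noncontextuality identifies $V(A \xrightarrow{g} B)$ with $V(B \xrightarrow{\id_{B}} B) = V_{0}(B) = V_{0}(g(A))$, giving $V_{0}(g(A)) = g(V_{0}(A))$.

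For the reverse direction, starting from a valuation function $V_{0}$, I will define $V(A \xrightarrow{g} B) \doteq V_{0}(B)$ for every arrow. Item $(a)$ of Definition~\ref{def: contextualAssignment} is immediate because $V_{0}(B) \in \sigma(B)$ by the spectrum rule. Item $(b)$ unpacks to $V(A \xrightarrow{g} B) = V_{0}(B) = V_{0}(g(A)) = g(V_{0}(A)) = g(V(A \xrightarrow{\id_{A}} A))$, where the middle equality uses the functional composition principle satisfied by $V_{0}$. Noncontextuality holds tautologically because, by construction, $V(g)$ depends only on $\text{cod}(g)$.

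I will then close the loop by verifying that the two constructions are mutual inverses. Starting from $V$, passing to $V_{0}$ and then back, the recovered value assignment sends $A \xrightarrow{g} B$ to $V_{0}(B) = V(B \xrightarrow{\id_{B}} B)$, which equals $V(A \xrightarrow{g} B)$ precisely by noncontextuality of the original $V$. In the other direction, starting from $V_{0}$, building $V$ and taking $A \xmapsto{} V(A \xrightarrow{\id_{A}} A)$ gives back $V_{0}(A)$ by the very definition of $V$.

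The main obstacle is essentially bookkeeping rather than genuine difficulty: the argument is a routine unwinding of definitions, and the one delicate point is to remember that the category of observables is thin (Definition~\ref{def: categoryOfObservables}), so that the identity is the only arrow from $A$ to itself and hence the notation $V(A \to A)$ is unambiguous. Once this is noted, both implications reduce to direct substitutions of the two conditions defining a value assignment against the two conditions defining a valuation function.
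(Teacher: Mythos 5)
Your proof is correct and follows essentially the same route as the paper: the same two explicit constructions ($V_{0}(A) \doteq V(A \xrightarrow{\id_{A}} A)$ in one direction, $V(A \xrightarrow{g} B) \doteq V_{0}(B)$ in the other), with the same use of thinness to disambiguate $V(A \to A)$. Your extra verification that the constructions are mutually inverse is a harmless addition the paper only gestures at.
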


We can finally make the aforementioned notion of contextuality precise. Motivated by Ref.~\cite{deRonde2020unscrambling}, we call it ontic contextuality.

\begin{definition}[Ontic contextuality]\label{def: onticContextuality} Let $\mathfrak{S}$ be a system satisfying postulates \ref{ax: separability}-\ref{post: compatibility}. We say that $\mathfrak{S}$ is ontic-noncontextual if there is a noncontextual value assignment on its category of observables. We say that $\mathfrak{S}$ is  ontic-contextual otherwise.
\end{definition}

Hidden states on KS-models define valuation functions (equivalently, noncontextual value assignments), so we have the following lemma.

\begin{lemma}[Ontic contextuality and KS-models] If a system $\mathfrak{S}$ is ontic-contextual, there is no KS-model for it.
\end{lemma}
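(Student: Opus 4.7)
The plan is to prove the contrapositive: if there exists a KS-model for $\mathfrak{S}$, then $\mathfrak{S}$ admits a noncontextual value assignment on its category of observables, and is therefore ontic-noncontextual. By the lemma immediately preceding this statement, it suffices to exhibit a valuation function on the set of observables of $\mathfrak{S}$ (definition~\ref{def: valuation}).

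Suppose then that $\mathfrak{M} \equiv (\boldsymbol{\Lambda},\Xi,\Phi)$ is a KS-model for $\mathfrak{S}$. First I would argue that $\Lambda$ is nonempty: by assumption $\mathcal{S}$ contains a non-null state $\rho$, and for such $\rho$ the measure $\mu_\rho$ is a probability measure on $\boldsymbol{\Lambda}$, which forces $\Lambda \neq \emptyset$. Fix any hidden state $\lambda \in \Lambda$, and define
\begin{align*}
V_\lambda: \mathcal{O} \ri \mathbb{R}, \qquad V_\lambda(A) \doteq f_A(\lambda),
\end{align*}
where $f_A \equiv \Xi(A)$ is the measurable function $\Lambda \ri \sigma(A)$ associated with $A$ by the KS-model.

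The two conditions of definition~\ref{def: valuation} are now immediate from the definition of a KS-model (definition~\ref{def: KSmodel}). The spectrum rule holds because $f_A$ takes values in $\sigma(A)$ by item $(a)$ of definition~\ref{def: KSmodel}, so $V_\lambda(A) = f_A(\lambda) \in \sigma(A)$. For the functional composition principle, if $B = g(A)$ for some function $g$ on $\sigma(A)$, then the compatibility condition $f_{g(A)} = g \circ f_A$ built into the KS-model yields
\begin{align*}
V_\lambda(B) = f_B(\lambda) = f_{g(A)}(\lambda) = g(f_A(\lambda)) = g(V_\lambda(A)).
\end{align*}
Thus $V_\lambda$ is a valuation function on $\mathcal{O}$. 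By the correspondence established just before the statement, $V_\lambda$ canonically lifts to a noncontextual value assignment on the category of observables, so $\mathfrak{S}$ is ontic-noncontextual (definition~\ref{def: onticContextuality}). The contrapositive then gives the claim.

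I do not anticipate any real obstacle here: the whole content of the argument is that the coherence conditions baked into the definition of a KS-model are precisely those needed to turn a single point of $\Lambda$ into a valuation. The only subtlety worth flagging explicitly is the nonemptiness of $\Lambda$, which has to be extracted from the assumption that $\mathcal{S}$ contains non-null states (otherwise the statement ``a hidden state exists'' would be vacuous and the construction above would fail at its first line).
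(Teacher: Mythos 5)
Your proof is correct and follows essentially the same route as the paper: the paper also observes that any hidden state $\lambda \in \Lambda$ of a KS-model yields a valuation function $V_{\lambda}(A) \doteq f_{A}(\lambda)$ via the spectrum rule and the condition $f_{g(A)} = g \circ f_{A}$, identifies this with a noncontextual value assignment, and concludes by contraposition. Your explicit remark that $\Lambda \neq \emptyset$ (because $\mu_{\rho}$ is a probability measure for a non-null state $\rho$) is a small but worthwhile detail that the paper leaves implicit.
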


Kochen-Specker theorem can be reformulated as follows.

\begin{proposition}[Ontic contextuality in quantum systems] Let $\mathfrak{S}$ be a quantum system of dimension larger than $2$. Then $\mathfrak{S}$ is ontic-contextual.
\end{proposition}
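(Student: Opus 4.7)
The plan is to reduce the claim directly to the Kochen--Specker theorem (theorem~\ref{thm: kochenSpecker}) via the correspondence lemma immediately preceding the proposition. First I would unpack the hypothesis: by a quantum system of dimension larger than $2$ we mean (per the setup in section~\ref{sec: basicFramework} and section~\ref{sec: quantumTheory}) a system whose set of observables $\mathcal{O}_0$ is identified with the set $\mathcal{B}(H)_{\text{sa}}$ of selfadjoint operators on a Hilbert space $H$ with $\dim(H)>2$, and whose functional relations are those induced by the (real) functional calculus. In particular, the spectrum $\sigma(A)$ defined in section~\ref{sec: basicFramework} coincides with the usual operator spectrum (a finite set here), and $B=g(A)$ in the sense of definition~\ref{def: functionalRelation} iff $B=g(A)$ in the sense of the functional calculus, so the notions of valuation in definition~\ref{def: valuation} match those used in theorem~\ref{thm: kochenSpecker}.

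Next I would argue by contradiction. Suppose, aiming at a contradiction, that $\mathfrak{S}$ is \emph{not} ontic-contextual in the sense of definition~\ref{def: onticContextuality}. Then there exists a noncontextual value assignment $V\colon \mathcal{O}_1 \to \mathbb{R}$ on the category of observables $\mathcal{O}$ (definition~\ref{def: contextualAssignment}). Applying the correspondence lemma stated just before the proposition, $V$ induces a function $V_0\colon \mathcal{O}_0 \to \mathbb{R}$ defined by $V_0(A) \doteq V(A \xrightarrow{\id_A} A)$, which satisfies the spectrum rule and the functional composition principle of definition~\ref{def: valuation}; that is, $V_0$ is a valuation function on $\mathcal{O}_0 = \mathcal{B}(H)_{\text{sa}}$.

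Finally I would invoke theorem~\ref{thm: kochenSpecker} directly: since $\dim(H)>2$, no valuation function on $\mathcal{B}(H)_{\text{sa}}$ exists, contradicting the existence of $V_0$. Hence no noncontextual value assignment on $\mathcal{O}$ can exist, so $\mathfrak{S}$ is ontic-contextual, as required. The substantive mathematical content (the obstruction itself) is entirely absorbed into theorem~\ref{thm: kochenSpecker}, which is cited as a known result; there is no combinatorial or geometric work left to do here. The only real step to verify carefully is the identification between Kochen--Specker's notion of functional composition on $\mathcal{B}(H)_{\text{sa}}$ and the categorical functional relations in $\mathcal{O}$ of definition~\ref{def: categoryOfObservables}, so this compatibility check is the one place where I would slow down and write out explicitly that $B=g(A)$ in either sense defines the same arrows in the category, ensuring the correspondence lemma applies verbatim.
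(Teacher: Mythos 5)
Your proof is correct and is exactly the argument the paper intends: the proposition is presented as an immediate reformulation of theorem~\ref{thm: kochenSpecker} via the correspondence lemma between noncontextual value assignments and valuation functions, which is precisely your reduction. The compatibility check you flag (that functional relations in the category of observables of a quantum system coincide with those given by the functional calculus) is the right place to be careful, and the paper takes it for granted in section~\ref{sec: basicFramework}.
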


Another reason why quantum systems are said to be contextual is that some ``quantum predictions'' do not admit a ``global probability distribution'' \cite{budroni2021review, abramsky2011sheaf, amaral2018graph}. This probabilistic notion of contextuality has received much attention in recent years \cite{budroni2021review, amaral2018graph}, going so far as to be studied outside the realm of physics \cite{cervantes2018snow, jones2019causal, wang2021analysing}, and, just as ontic contextuality, it  implies that KS-models for quantum systems do not exist \cite{budroni2021review, amaral2018graph}. We will turn our attention to it from now on.

Let $\mathfrak{S}$ be a physical system satisfying postulates \ref{ax: separability}-\ref{post: compatibility}. As we saw in section~\ref{sec: compatibility}, if $A,B$ are compatible observables in $\mathfrak{S}$, then $A$ and $B$ commute (definition~\ref{def: commutativity}), which means that, for any $\Delta \subset \sigma(A)$, $\Sigma \subset \sigma(B)$ and any state $\rho$, we have
\begin{align*}
    P_{\rho}(\Delta \times \Sigma;A,B)=P_{\rho}(\Sigma \times \Delta;B,A),
\end{align*}
or equivalently 
\begin{align*}
    P_{\rho}(A^{\Delta})P_{\rho}(B^{\Sigma} \vert A^{\Delta}) = P_{\rho}(B^{\Sigma})P_{\rho}(A^{\Delta} \vert B^{\Sigma})  
\end{align*}
(the notation is explained in definition~\ref{def: updatedState}). Furthermore, lemma~\ref{lemma: updateCommute} ensure that
\begin{align*}
    T_{(\Sigma;B)} \circ T_{(\Delta;A)} = T_{(\Delta;A)} \circ T_{(\Sigma;B)}.
\end{align*}

Together with the well-known fact that any permutation can be written as a product of transpositions \cite{clark1984elements}, these results imply that the following lemma holds true.

\begin{lemma}\label{lemma: permutingProbabilityAppendix} Let $\mathfrak{S}$ be a physical system satisfying postulates \ref{ax: separability}-\ref{post: compatibility}. Let $A_{1},\dots,A_{m}$ be pairwise compatible observables in $\mathfrak{S}$, and let $\pi$ be any permutation of $\{1,\dots,m\}$ Then, for any state $\rho$ and any $\Delta_{1}\times \dots \times \Delta_{m} \subset \prod_{i=1}^{m}\sigma(A_{i})$,
\begin{align}
    P_{\rho}(\Delta_{1} \times \dots \times\Delta_{m};A_{1},\dots,A_{m}) &=P_{\rho}(\Delta_{\pi(1)}\times\dots\times\Delta_{\pi(m)};A_{\pi(1)},\dots,A_{\pi(m)}),\\
    T_{(\Delta_{1},\dots,\Delta_{m};A_{1},\dots,A_{m})} &=T_{(\Delta_{\pi(1)},\dots,\Delta_{\pi(m)};A_{\pi(1)},\dots,A_{\pi(m)})}.
\end{align}
\end{lemma}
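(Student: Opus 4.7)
The plan is to reduce the statement to the case of a single adjacent transposition $(i,i{+}1)$ and then invoke the general fact that every permutation of $\{1,\dots,m\}$ is a composition of such adjacent transpositions. Since the collection $\{A_{1},\dots,A_{m}\}$ is pairwise compatible, every pair $A_{i},A_{i+1}$ is itself compatible; in particular, the adjacent swaps remain legal in our framework and the swapped sequences are still sequences of pairwise compatible observables, so the statement propagates through the decomposition of $\pi$.

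For the state update identity, I would fix an arbitrary $i$ and note that \lemma~\ref{lemma: updateCommute} gives $T_{(\Delta_{i+1};A_{i+1})}\circ T_{(\Delta_{i};A_{i})}=T_{(\Delta_{i};A_{i})}\circ T_{(\Delta_{i+1};A_{i+1})}$, because $A_{i}$ and $A_{i+1}$ are compatible. Since $T_{(\underline{\Delta};\underline{A})}$ is defined as a left-to-right composition (definition~\ref{def: sequentialEvent}), composing on the left by $T_{(\Delta_{m};A_{m})}\circ\dots\circ T_{(\Delta_{i+2};A_{i+2})}$ and on the right by $T_{(\Delta_{i-1};A_{i-1})}\circ\dots\circ T_{(\Delta_{1};A_{1})}$ shows that $T_{(\underline{\Delta};\underline{A})}$ is unchanged when we swap positions $i$ and $i+1$. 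Iterating over the transpositions that realize $\pi$ yields the full $T$-identity.

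For the probability identity, I would invoke definition~\ref{def: commutativity} together with the fact (established just before the lemma, and originally used in \prop~\ref{prop: compatibilityAndCommutativity}) that compatibility implies the Bayes rule. Writing out $P_{\rho}$ via definition~\ref{def: sequentialProbability}, the factors for positions $<i$ and $>i{+}1$ depend only on the iterated update $T_{(\Delta_{i+1},\Delta_{i};A_{i+1},A_{i})}\circ T_{(\Delta_{i-1},\dots;A_{i-1},\dots)}$, which, by the first part of the argument, is invariant under swapping $i$ and $i{+}1$; and the two ``middle'' factors combine into $P_{\rho'}(\Delta_{i}\times\Delta_{i+1};A_{i},A_{i+1})$, where $\rho'\equiv T_{(\Delta_{1},\dots,\Delta_{i-1};A_{1},\dots,A_{i-1})}(\rho)$. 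The Bayes rule, applied in the state $\rho'$, then identifies this with $P_{\rho'}(\Delta_{i+1}\times\Delta_{i};A_{i+1},A_{i})$, completing the swap. Induction over a reduced word for $\pi$ in adjacent transpositions concludes.

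The main obstacle is purely bookkeeping: making precise that swapping positions $i$ and $i{+}1$ leaves both the ``prefix state'' $\rho'$ and the ``suffix update'' unaffected, so that the swap is genuinely local. This is exactly what \lemma~\ref{lemma: updateCommute} plus the Bayes rule supply, but some care is needed to phrase the induction so that at each step of rewriting $\pi$ as a product of adjacent transpositions, the intermediate permuted sequence is still pairwise compatible (which is automatic here, since the property is stable under reordering of any subset).
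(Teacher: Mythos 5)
Your proposal is correct and follows essentially the same route as the paper: the paper likewise reduces to the pairwise facts that compatible observables satisfy the Bayes rule (definition~\ref{def: commutativity}) and that their updates commute (lemma~\ref{lemma: updateCommute}), and then invokes the decomposition of an arbitrary permutation into transpositions. Your version merely makes explicit the bookkeeping the paper leaves implicit, by working with adjacent transpositions and checking that the prefix state and suffix update are unaffected by each swap.
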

Recall that $T_{(\Delta_{1},\dots,\Delta_{m};A_{1},\dots,A_{m})} \equiv T_{(\Delta_{m};A_{m})} \circ \dots \circ T_{(\Delta_{1};A_{1})}$, as in definition~\ref{def: sequentialEvent}. Note that this is exactly the statement of corollary~\ref{cor: permutingProbability}, which was proved for any system satisfying postulates \ref{ax: separability}-\ref{post: commutativity}.

Lemma~\ref{lemma: permutingProbabilityAppendix} implies that pairwise compatible observables are permutable:

\begin{definition}[Permutable sequence of observables]\label{def: permutableSet} Let $\mathfrak{S}$ be a physical system satisfying postulates \ref{ax: separability}-\ref{post: compatibility}, and let $A_{1},\dots,A_{m}$ be observables in $\mathfrak{S}$. Then $A_{1},\dots,A_{m}$ are said to be permutable --- more precisely, the sequence $(A_{1},\dots,A_{m})$ is said to be permutable --- if for any permutation $\pi$ of $\{1,\dots,m\}$, any state $\rho$ and any $(\af_{1}, \dots , \af_{m}) \in \prod_{i=1}^{m}\sigma(A_{i})$,
\begin{align}
    p_{\rho}(\af_{1}, \dots,\af_{m};A_{1},\dots,A_{m}) &= p_{\rho}(\af_{\pi(1)},\dots,\af_{\pi(m)};A_{\pi(1)},\dots,A_{\pi(m)}),\\
    T_{(\af_{1},\dots,\af_{m};A_{1},\dots,A_{m})} &=T_{(\af_{\pi(1)},\dots,\af_{\pi(m)};A_{\pi(1)},\dots,A_{\pi(m)})}.
\end{align}
\end{definition}
It is worth emphasizing that permutability extends to the probability measure $P_{\rho}$ induced by the distribution $p_{\rho}$:
\begin{lemma} Let $\mathfrak{S}$ be a physical system satisfying postulates \ref{ax: separability}-\ref{post: compatibility}, and let $A_{1},\dots,A_{m}$ be permutable observables in $\mathfrak{S}$. Let $\pi$ any permutation $\pi$ of $\{1,\dots,m\}$. Then, for any state $\rho$ and any $\Delta_{1}\times \dots \times\Delta_{m} \subset \prod_{i=1}^{m}\sigma(A_{i})$,
\begin{align}
    p_{\rho}(\Delta_{1} \times \dots \times\Delta_{m};A_{1},\dots,A_{m}) &=p_{\rho}(\Delta_{\pi(1)}\times\dots\times\Delta_{\pi(m)};A_{\pi(1)},\dots,A_{\pi(m)}),\\
    T_{(\Delta_{1},\dots,\Delta_{m};A_{1},\dots,A_{m})} &=T_{(\Delta_{\pi(1)},\dots,\Delta_{\pi(m)};A_{\pi(1)},\dots,A_{\pi(m)})}.
\end{align}
\end{lemma}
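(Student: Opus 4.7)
The plan is to reduce the claim to the already established objective-event version of permutability (definition~\ref{def: permutableSet}) by expanding both sides via proposition~\ref{prop: sequentialMeasure} and proposition~\ref{prop: subjectiveSequentialUpdate}. Write $\underline{A} \equiv (A_1,\dots,A_m)$, $\underline{\Delta} \equiv \Delta_1 \times \dots \times \Delta_m$, and let $\underline{A}_\pi$, $\underline{\Delta}_\pi$ denote their permutations. The key observation is that a permutation $\pi$ of $\{1,\dots,m\}$ induces a bijection between $\underline{\Delta}$ and $\underline{\Delta}_\pi$ sending $\underline{\alpha} \equiv (\alpha_1,\dots,\alpha_m)$ to $\underline{\alpha}_\pi \equiv (\alpha_{\pi(1)},\dots,\alpha_{\pi(m)})$, so any sum of the form $\sum_{\underline{\alpha} \in \underline{\Delta}} g(\underline{\alpha})$ can be relabelled as $\sum_{\underline{\alpha}_\pi \in \underline{\Delta}_\pi} g(\underline{\alpha})$.

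First I would prove the equality of probabilities. By proposition~\ref{prop: sequentialMeasure}, $p_{\rho}(\underline{\Delta};\underline{A}) = \sum_{\underline{\alpha} \in \underline{\Delta}} p_{\rho}(\underline{\alpha};\underline{A})$. Since $A_1,\dots,A_m$ are permutable, each term satisfies $p_{\rho}(\underline{\alpha};\underline{A}) = p_{\rho}(\underline{\alpha}_\pi;\underline{A}_\pi)$ by definition~\ref{def: permutableSet}. Re-indexing the sum via the bijection $\underline{\alpha} \mapsto \underline{\alpha}_\pi$ yields $\sum_{\underline{\alpha}_\pi \in \underline{\Delta}_\pi} p_{\rho}(\underline{\alpha}_\pi;\underline{A}_\pi) = p_{\rho}(\underline{\Delta}_\pi;\underline{A}_\pi)$, which is the desired identity.

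Next I would tackle the update. By proposition~\ref{prop: subjectiveSequentialUpdate}, for any state $\rho$,
\begin{align*}
    T_{(\underline{\Delta};\underline{A})}(\rho) = \sum_{\underline{\alpha} \in \underline{\Delta}} \frac{p_{\rho}(\underline{\alpha};\underline{A})}{p_{\rho}(\underline{\Delta};\underline{A})} \, T_{(\underline{\alpha};\underline{A})}(\rho).
\end{align*}
Permutability of $A_1,\dots,A_m$ gives $T_{(\underline{\alpha};\underline{A})}(\rho) = T_{(\underline{\alpha}_\pi;\underline{A}_\pi)}(\rho)$ and $p_{\rho}(\underline{\alpha};\underline{A}) = p_{\rho}(\underline{\alpha}_\pi;\underline{A}_\pi)$, while the first part of the lemma (just established) gives $p_{\rho}(\underline{\Delta};\underline{A}) = p_{\rho}(\underline{\Delta}_\pi;\underline{A}_\pi)$. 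Re-indexing the sum once more turns the right-hand side into the expansion of $T_{(\underline{\Delta}_\pi;\underline{A}_\pi)}(\rho)$ provided by proposition~\ref{prop: subjectiveSequentialUpdate}. Since this equality holds for every state $\rho$, the two updates coincide as mappings $\mathcal{S} \to \mathcal{S}$.

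I do not anticipate a serious obstacle: everything reduces to a change of summation variable together with a direct appeal to definition~\ref{def: permutableSet}. The only point that requires a moment of care is making sure the normalizing denominator $p_{\rho}(\underline{\Delta};\underline{A})$ in proposition~\ref{prop: subjectiveSequentialUpdate} is itself permutation-invariant so that the convex coefficients match on both sides; but this is exactly what the first equality guarantees, which is why it is convenient to prove the probability identity first.
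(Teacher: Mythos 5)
Your proof is correct and follows exactly the route the paper intends: the paper states this lemma without an explicit proof, remarking only that permutability ``extends to the probability measure induced by the distribution,'' and your expansion via proposition~\ref{prop: sequentialMeasure} and proposition~\ref{prop: subjectiveSequentialUpdate} followed by termwise application of definition~\ref{def: permutableSet} and re-indexing supplies precisely the details left implicit. Your observation that the probability identity must be established first so that the normalizing denominators in the convex coefficients match is the right point of care.
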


Permutable sequences satisfy the so-called nondisturbance condition, which, for convenience, we enunciate in a slightly different way than usual \cite{budroni2021review, abramsky2011sheaf, amaral2018graph}:

\begin{lemma}[Nondisturbance]\label{lemma: nondisturbance} Let $\mathfrak{S}$ be a system satisfying postulates \ref{ax: separability}-\ref{post: compatibility}. Let $A_{1},\dots,A_{m}$ be permutable observables in $\mathfrak{S}$, and let $A_{i_{1}},\dots,A_{i_{k}}$ be any subsequence of $A_{1},\dots,A_{m}$. For each $j \in \{1,\dots,k\}$, fix a set $\Delta_{i_{j}} \subset \sigma(A_{i_{j}})$. Define $\Delta_{i} \doteq \Delta_{i_{j}}$ if $i=i_{j}$ for some $j \in \{1,\dots,k\}$, and define $\Delta_{i} \doteq \sigma(A_{i})$ otherwise. Then, for any state $\rho$,
\begin{align}
    P_{\rho}(\Delta_{i_{1}}\times\dots\times \Delta_{i_{k}};A_{i_{1}},\dots,A_{i_{k}}) = P_{\rho}(\Delta_{1}\times\dots\times\Delta_{m};A_{1},\dots,A_{m}).
\end{align}
Equivalently, for any $(\af_{i_{1}},\dots,\af_{i_{k}}) \in \prod_{j=1}^{k} \sigma(A_{i_{j}})$,
\begin{align}
    p_{\rho}(\af_{i_{1}},\dots, \af_{i_{k}};A_{i_{1}},\dots,A_{i_{k}}) = \sum_{i \notin \{i_{1},\dots,i_{k}\}}\sum_{\af_{i} \in \sigma(A_{i})}p_{\rho}(\af_{1},\dots,\af_{m};A_{1},\dots,A_{m}).
\end{align}
\end{lemma}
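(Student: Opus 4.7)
The key observation is that permutability (definition~\ref{def: permutableSet}) allows us to reorder the sequence so that the ``marginalized'' observables sit at the end, and then the definition of sequential measure (definition~\ref{def: sequentialMeasure}) makes the marginalization trivial. Concretely, I would first choose a permutation $\pi$ of $\{1,\dots,m\}$ such that $\pi(j) = i_{j}$ for $j=1,\dots,k$ and $\pi(j)$ ranges over $\{1,\dots,m\} \setminus \{i_{1},\dots,i_{k}\}$ for $j=k+1,\dots,m$. Setting $B_{j} \doteq A_{\pi(j)}$ and $\Sigma_{j} \doteq \Delta_{\pi(j)}$, we get $\Sigma_{j}=\Delta_{i_{j}}$ for $j \leq k$ and $\Sigma_{j}=\sigma(B_{j})$ for $j>k$. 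By permutability (applied via the probability-measure version stated in the lemma immediately above definition~\ref{def: permutableSet}),
\begin{align*}
    P_{\rho}(\Delta_{1}\times \dots \times \Delta_{m};A_{1},\dots,A_{m}) = P_{\rho}(\Sigma_{1}\times \dots \times \Sigma_{m};B_{1},\dots,B_{m}).
\end{align*}

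Next, I would expand the right-hand side using definition~\ref{def: sequentialMeasure}:
\begin{align*}
    P_{\rho}(\Sigma_{1}\times \dots \times \Sigma_{m};B_{1},\dots,B_{m}) = \prod_{i=1}^{m} P_{\rho}(B_{i}^{\Sigma_{i}} \mid (B_{j}^{\Sigma_{j}})_{j=1}^{i-1}).
\end{align*}
For each $i>k$, $\Sigma_{i}=\sigma(B_{i})$, and the factor $P_{\rho}(B_{i}^{\sigma(B_{i})} \mid (B_{j}^{\Sigma_{j}})_{j=1}^{i-1}) = P_{\tau_{i-1}}^{B_{i}}(\sigma(B_{i}))$, where $\tau_{i-1}$ denotes the state obtained from $\rho$ after the first $i-1$ updates. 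Since $P_{\tau_{i-1}}^{B_{i}}$ is a probability measure on $\sigma(B_{i})$ (see definition~\ref{def: system}), this factor equals $1$. Thus the product collapses to
\begin{align*}
    \prod_{i=1}^{k} P_{\rho}(B_{i}^{\Sigma_{i}} \mid (B_{j}^{\Sigma_{j}})_{j=1}^{i-1}) = P_{\rho}(\Sigma_{1}\times \dots \times \Sigma_{k};B_{1},\dots,B_{k}) = P_{\rho}(\Delta_{i_{1}}\times\dots\times \Delta_{i_{k}};A_{i_{1}},\dots,A_{i_{k}}),
\end{align*}
where the last equality is by construction of the $B_{j}$ and $\Sigma_{j}$.

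The second (``equivalent'') form of the statement, involving sums of the joint distribution $p_{\rho}$, follows immediately by writing $P_{\rho}(\Delta_{1}\times\dots\times\Delta_{m};A_{1},\dots,A_{m}) = \sum_{\underline{\af} \in \underline{\Delta}} p_{\rho}(\underline{\af};\underline{A})$ via proposition~\ref{prop: sequentialMeasure} and specializing to singletons $\Delta_{i_{j}}=\{\af_{i_{j}}\}$ while keeping $\Delta_{i}=\sigma(A_{i})$ for the marginalized indices. There is no real obstacle here: the entire content of the lemma is \emph{already} packaged into permutability together with the fact that probability measures are normalized. The only subtle point is a bookkeeping one, namely keeping track of how the permutation $\pi$ lines up the $\Sigma_{j}$ with the $\Delta_{i}$, but this is routine. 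In particular, I do not need to invoke postulates beyond those already used to establish permutability (namely \ref{ax: separability}--\ref{post: compatibility}), which is consistent with the scope advertised in the lemma's hypotheses.
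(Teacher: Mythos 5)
Your proof is correct and follows essentially the same route as the paper's: both choose a permutation placing the unconstrained observables after the constrained ones, invoke permutability, and then use the fact that the conditional measures assign probability $1$ to the full spectra; you merely run the computation from the full joint down to the marginal, whereas the paper builds the joint up from the marginal by appending factors equal to $1$. (One small slip: the box-version of permutability you need is the unnamed lemma stated immediately \emph{after} definition~\ref{def: permutableSet}, not before it, but this does not affect the argument.)
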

\begin{proof}
Let $A_{1},\dots,A_{m}$ be permutable observables in a system $\mathfrak{S}$ satisfying postulates \ref{ax: separability}-\ref{post: compatibility}. Let $A_{i_{1}},\dots,A_{i_{k}}$ be any subsequence of $A_{1},\dots,A_{m}$, and fix some $(\af_{i_{1}},\dots,\af_{i_{k}}) \in \prod_{j=1}^{k}\sigma(A_{i_{k}})$. Let $\pi$ be any permutation of $\{1,\dots,m\}$ according to which $\pi(j) = i_{j}$ for all $j \in \{1,\dots,k\}$. Finally, let $\rho$ be any state, and define $\rho_{k} \doteq T_{(\af_{i_{1}},\dots,\af_{i_{k}};A_{i_{1}},\dots,A_{i_{k}})}(\rho) = T_{(\af_{\pi(1)},\dots,\af_{\pi(k)};A_{\pi(1)},\dots,A_{\pi(k)})}(\rho) $. It follows from definitions~\ref{def: sequentialMeasure} and \ref{def: permutableSet} that
\begin{align*}
    p_{\rho}(\af_{i_{1}},\dots, \af_{i_{k}};A_{i_{1}},\dots,A_{i_{k}})&= p_{\rho}(\af_{i_{1}},\dots, \af_{i_{k}};A_{i_{1}},\dots,A_{i_{k}})
    \\
    &\times P_{\rho_{k}}(\sigma(A_{\pi(k+1)})\times \dots \times \sigma(A_{\pi(m)});A_{\pi(k+1)},\dots,A_{\pi(m)})
    \\
    &= p_{\rho}(\af_{\pi(1)},\dots, \af_{\pi(k)};A_{\pi(1)},\dots,A_{\pi(k)})
    \\
    &\times\sum_{j=k+1}^{m} \sum_{\af_{\pi(j)} \in \sigma(A_{\pi(j)})}p_{\rho_{k}}(\af_{\pi(j)},\dots,\af_{\pi(m)};A_{\pi(k+1)},\dots,A_{\pi(m)})
    \\
    &=\sum_{j=k+1}^{m} \sum_{\af_{\pi(j)} \in \sigma(A_{\pi(j)})}p_{\rho}(\af_{\pi(1)},\dots,\af_{\pi(m)};A_{\pi(1)},\dots,A_{\pi(m)})
    \\
    &=\sum_{i \notin \{i_{1},\dots,i_{k}\}}\sum_{\af_{i} \in \sigma(A_{i})}p_{\rho}(\af_{1},\dots,\af_{m};A_{1},\dots,A_{m}).
\end{align*}
\end{proof}

It easily follows from this result that permutable observables commute pairwise:

\begin{corollary}\label{cor: permutableCommute} Let  $\mathfrak{S}$ be a physical system satisfying postulates \ref{ax: separability}-\ref{post: compatibility}, and let $A_{1},\dots,A_{m}$ be permutable observables in $\mathfrak{S}$. Then, for each pair $i,j \in \{1,\dots,m\}$, $A_{i}$ and $A_{j}$ commute.    
\end{corollary}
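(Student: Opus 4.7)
The plan is to prove the corollary by directly applying Lemma \ref{lemma: nondisturbance} twice and reading off commutativity via Lemma \ref{lemma: commutativityObjective}. Fix distinct indices $i,j \in \{1,\dots,m\}$, a state $\rho$, and outcomes $\af \in \sigma(A_{i})$, $\beta \in \sigma(A_{j})$. By Lemma \ref{lemma: commutativityObjective}, showing that $A_{i}$ and $A_{j}$ commute amounts to verifying the identity $p_{\rho}(\af,\beta;A_{i},A_{j}) = p_{\rho}(\beta,\af;A_{j},A_{i})$.

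First, I would apply Lemma \ref{lemma: nondisturbance} to the two-element subsequence $(A_{i},A_{j})$ of the permutable sequence $(A_{1},\dots,A_{m})$ (the lemma, as used in its own proof, permits any ordering of distinct indices, since a suitable permutation $\pi$ sending $1\mapsto i$, $2\mapsto j$ always exists). This yields
\begin{align*}
p_{\rho}(\af,\beta;A_{i},A_{j}) &= \sum_{\substack{\af_{k} \in \sigma(A_{k}) \\ k \notin \{i,j\}}} p_{\rho}(\af_{1},\dots,\af_{m};A_{1},\dots,A_{m}),
\end{align*}
where the summand is evaluated with $\af_{i}$ fixed at $\af$ and $\af_{j}$ fixed at $\beta$. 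Next, I would apply the same lemma to the reversed subsequence $(A_{j},A_{i})$, which is just as legitimate a subsequence of the permutable tuple $(A_{1},\dots,A_{m})$; this produces
\begin{align*}
p_{\rho}(\beta,\af;A_{j},A_{i}) &= \sum_{\substack{\af_{k} \in \sigma(A_{k}) \\ k \notin \{i,j\}}} p_{\rho}(\af_{1},\dots,\af_{m};A_{1},\dots,A_{m}),
\end{align*}
again with $\af_{i} = \af$, $\af_{j} = \beta$.

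The two right-hand sides are literally the same sum over the same index set with the same fixed marginals, so the two distributions agree. Applying Lemma \ref{lemma: commutativityObjective} concludes that $A_{i}$ and $A_{j}$ commute. Since this argument is valid for every pair of distinct indices, $A_{1},\dots,A_{m}$ commute pairwise. The only delicate point, which I would flag explicitly, is the justification that Lemma \ref{lemma: nondisturbance} can be applied to the reverse subsequence $(A_{j},A_{i})$: this is immediate from the permutability hypothesis, because the permutation $\pi$ used inside that lemma's proof is not required to preserve the original order of the subsequence — any bijection sending $\{1,2\}$ to $\{j,i\}$ will do, and the permutable property of $(A_{1},\dots,A_{m})$ legitimizes the corresponding rearrangement of the joint measure. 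No further calculation is needed.
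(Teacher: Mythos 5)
Your proof is correct and follows exactly the route the paper intends: the corollary is stated immediately after Lemma~\ref{lemma: nondisturbance} with the remark that it ``easily follows from this result,'' and your argument --- marginalizing the full permutation-invariant joint down to the two orderings $(A_i,A_j)$ and $(A_j,A_i)$, observing that both marginals are the same sum, and invoking Lemma~\ref{lemma: commutativityObjective} --- is precisely that deduction. Your explicit justification that the lemma applies to an order-reversed pair of indices (since the permutation $\pi$ in its proof only needs $\pi(l)=i_l$ for distinct indices, not increasing ones) is a legitimate and worthwhile clarification, but it does not constitute a departure from the paper's approach.
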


It follows from postulate~\ref{post: commutativity} that compatibility and commutativity are equivalent concepts, thus:

\begin{corollary}\label{cor: permutabilityAndCompatibility} If, in addition to postulates \ref{ax: separability}-\ref{post: compatibility}, a system $\mathfrak{S}$ satisfies postulate~\ref{post: commutativity}, then $A_{1},\dots,A_{m}$ are pairwise compatible observables in $\mathfrak{S}$ if and only if they are permutable.   
\end{corollary}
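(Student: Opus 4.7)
The proof proceeds by combining results already established in the appendix with Postulate~\ref{post: commutativity}.

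For the forward direction, suppose $A_{1},\dots,A_{m}$ are pairwise compatible. I would simply invoke Lemma~\ref{lemma: permutingProbabilityAppendix}, which asserts that for any permutation $\pi$ of $\{1,\dots,m\}$, any state $\rho$, and any box $\Delta_{1}\times\dots\times\Delta_{m} \subset \prod_{i=1}^{m}\sigma(A_{i})$, both the sequential probability and the composed state update are invariant under $\pi$. Restricting this to singletons $\Delta_{i}=\{\af_{i}\}$ yields exactly the conditions of Definition~\ref{def: permutableSet}, so $(A_{1},\dots,A_{m})$ is permutable.

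For the reverse direction, suppose $(A_{1},\dots,A_{m})$ is permutable. By Corollary~\ref{cor: permutableCommute}, any two observables $A_{i}, A_{j}$ in the sequence commute. Postulate~\ref{post: commutativity} then upgrades commutativity to compatibility, so $A_{i}$ and $A_{j}$ are compatible for every pair $i,j$. Hence $A_{1},\dots,A_{m}$ are pairwise compatible.

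There is no real obstacle here: the entire corollary is a clean packaging of Lemma~\ref{lemma: permutingProbabilityAppendix} (forward), Corollary~\ref{cor: permutableCommute} (which extracts pairwise commutativity from permutability via nondisturbance/marginalization in Lemma~\ref{lemma: nondisturbance}), and Postulate~\ref{post: commutativity} (which closes the gap between commutativity and compatibility). The only substantive content in the whole chain is that permutability --- a statement about all $m$ observables together --- is already captured by the pairwise commutation relations, which is what Corollary~\ref{cor: permutableCommute} delivers.
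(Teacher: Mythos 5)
Your proof is correct and follows exactly the route the paper intends: the forward direction is Lemma~\ref{lemma: permutingProbabilityAppendix} restricted to singletons, and the reverse direction chains Corollary~\ref{cor: permutableCommute} with Postulate~\ref{post: commutativity}, which is precisely why the paper states this as a corollary with no separate proof. Nothing is missing.
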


Together with the famous Kolmogorov extension theorem \cite{tao2011introduction},\footnote{The Kolmogorov extension theorem applies to infinite families of measurable spaces, so proposition~\ref{prop: permutabilityAndModels} can be extended to any context.} definition~\ref{def: permutableSet} and lemma~\ref{lemma: nondisturbance} ensure that the following proposition holds true --- note that the proposition can also easily be proved by hand.

\begin{proposition}\label{prop: permutabilityAndModels} Let $\mathfrak{S}$ be a system satisfying postulates \ref{ax: separability}-\ref{post: compatibility}, and let $A_{1},\dots,A_{m}$ be permutable observables in $\mathfrak{S}$. Let $\rho$ be a non-null state. Then there exists a probability space $\boldsymbol{\Lambda}_{\rho} \equiv (\Lambda,\Sigma,\mu_{\rho})$ and random variables (i.e., a measurable functions) $f_{i}: \Lambda \ri \sigma(A_{i})$, $i=1,\dots,m$, such that, for any subsequence $A_{i_{1}},\dots,A_{i_{k}}$ of $A_{1},\dots,A_{m}$, the sequential measure $P_{\rho}( \ \cdot \ ;A_{i_{1}},\dots,A_{i_{k}})$ (definition~\ref{def: sequentialMeasure}) is the pushforward of $\mu_{\rho}$ along $(f_{i_{1}},\dots,f_{i_{k}}): \Lambda \ri \prod_{j=1}^{k}\sigma(A_{i_{j}})$. It means that, for any $\Delta_{i_{1}}\times \dots \times \Delta_{i_{k}} \subset \prod_{j=1}^{k} \sigma(A_{i_{j}})$,
    \begin{align}
        P_{\rho}(\Delta_{i_{1}}\times \dots \times \Delta_{i_{k}};A_{i_{1}},\dots,A_{i_{k}})&=\mu_{\rho}(\cap_{j=1}^{m}f_{i_{j}}^{-1}(\Delta_{i_{j}})).
    \end{align}
\end{proposition}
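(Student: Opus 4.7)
The plan is to construct the probability space explicitly rather than invoking the Kolmogorov extension theorem, since the sequence $A_1,\dots,A_m$ is finite and each spectrum $\sigma(A_i)$ is finite. First I would take the sample space to be $\Lambda \doteq \prod_{i=1}^{m}\sigma(A_{i})$, equipped with its power set $\Sigma \doteq \mathscr{P}(\Lambda)$, and define $\mu_{\rho}$ to be the probability measure on $\Sigma$ induced by the distribution $p_{\rho}(\ \cdot\ ;A_{1},\dots,A_{m})$ given in definition~\ref{def: sequentialMeasure}. For each $i$, I would then set $f_{i}:\Lambda \ri \sigma(A_{i})$ to be the canonical projection $(\af_{1},\dots,\af_{m}) \mapsto \af_{i}$, which is automatically measurable since $\Sigma$ is the full power set.

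The substantive step is to verify that, for any subsequence $A_{i_{1}},\dots,A_{i_{k}}$ and any box $\underline{\Delta} \equiv \Delta_{i_{1}}\times\dots\times\Delta_{i_{k}} \subset \prod_{j=1}^{k}\sigma(A_{i_{j}})$, one has
\begin{align*}
\mu_{\rho}\left(\bigcap_{j=1}^{k} f_{i_{j}}^{-1}(\Delta_{i_{j}})\right) = P_{\rho}(\underline{\Delta};A_{i_{1}},\dots,A_{i_{k}}).
\end{align*}
Unwinding the definition of $\mu_{\rho}$, the left-hand side equals the sum of $p_{\rho}(\af_{1},\dots,\af_{m};A_{1},\dots,A_{m})$ over all $(\af_{1},\dots,\af_{m}) \in \Lambda$ with $\af_{i_{j}} \in \Delta_{i_{j}}$ for each $j$. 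I would then marginalize over the ``free'' indices $i \notin \{i_{1},\dots,i_{k}\}$ one at a time and apply the nondisturbance identity of lemma~\ref{lemma: nondisturbance} at each step, which is exactly what converts the full joint distribution over $A_{1},\dots,A_{m}$ into the joint distribution over the selected subsequence $A_{i_{1}},\dots,A_{i_{k}}$. Summing over $\af_{i_{j}} \in \Delta_{i_{j}}$ then reproduces $P_{\rho}(\underline{\Delta};A_{i_{1}},\dots,A_{i_{k}})$ by definition~\ref{def: sequentialMeasure}.

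Since $\Sigma$ is the full power set of a finite space, matching the two measures on all boxes (indeed, on all singletons suffices) immediately yields equality as measures, completing the proof. The only subtlety worth checking is that nondisturbance may be applied iteratively: after marginalizing out one index the remaining family of observables is still permutable (any subsequence of a permutable sequence is permutable, which follows directly from definition~\ref{def: permutableSet}), so the hypothesis of lemma~\ref{lemma: nondisturbance} is preserved throughout the induction. I expect no genuine obstacle here — the heavy lifting was done in lemma~\ref{lemma: nondisturbance}, and this proposition is essentially a bookkeeping statement repackaging nondisturbance as the existence of a joint classical model for the sequence, analogous to how permutability and Kolmogorov consistency together yield a hidden-variable description for a single context.
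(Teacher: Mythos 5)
Your proof is correct, and it is precisely the ``proved by hand'' argument that the paper alludes to but does not write out: the paper's nominal route is to observe that permutability plus lemma~\ref{lemma: nondisturbance} give a Kolmogorov-consistent family of finite-dimensional distributions and then invoke the Kolmogorov extension theorem, whereas you simply take $\Lambda = \prod_{i=1}^{m}\sigma(A_{i})$ with the full joint $p_{\rho}(\,\cdot\,;A_{1},\dots,A_{m})$ as $\mu_{\rho}$ and the coordinate projections as the $f_{i}$. For a finite family the two are essentially the same proof --- the extension theorem is overkill because the ``top'' measure of the consistent family already exists, and your construction makes that explicit; the paper's phrasing buys only the remark (made in its footnote) that the same argument extends to infinite contexts. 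One simplification you could make: the iterated marginalization, and the accompanying check that subsequences of permutable sequences remain permutable, are unnecessary. Lemma~\ref{lemma: nondisturbance} is already stated for an arbitrary subsequence $A_{i_{1}},\dots,A_{i_{k}}$ of the full permutable sequence, so a single application converts
\begin{align*}
\mu_{\rho}\Bigl(\bigcap_{j=1}^{k} f_{i_{j}}^{-1}(\Delta_{i_{j}})\Bigr) = P_{\rho}(\Delta_{1}\times\dots\times\Delta_{m};A_{1},\dots,A_{m}),\quad \Delta_{i}\doteq\sigma(A_{i})\ \text{for } i\notin\{i_{1},\dots,i_{k}\},
\end{align*}
directly into $P_{\rho}(\Delta_{i_{1}}\times\dots\times\Delta_{i_{k}};A_{i_{1}},\dots,A_{i_{k}})$, with no induction needed. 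Everything else (measurability via the power set, equality of measures from equality on singletons) is routine, as you say.
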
 

It proves that $\mathcal{C} \doteq \{A_{1},\dots,A_{m}\}$ is noncontextual if $A_{1},\dots,A_{m}$ are permutable:

\begin{definition}[Probabilistic contextuality]\label{def: probabilisticContextuality}  Let $\mathfrak{S}$ be a system satisfying postulates \ref{ax: separability}-\ref{post: compatibility}, and let $\mathcal{A}$ be any finite set of observables in $\mathfrak{S}$. We say that $\mathcal{A}$ is \textbf{noncontextual with respect to the state $\boldsymbol{\rho}$} if there exists a probability space $\boldsymbol{\Lambda}_{\rho} \equiv (\Lambda,\Sigma,\mu_{\rho})$ and an assignment of random variables $\mathcal{A} \ni A\mapsto f_{A}$, where $f_{A}: \Lambda \ri \sigma(A)$, such that, for any context $\mathcal{C} \equiv \{A_{1},\dots,A_{k}\} \subset \mathcal{A}$, the sequential measure $P_{\rho}( \ \cdot \ ;A_{1},\dots,A_{m})$  is the pushforward of $\mu_{\rho}$ along $(f_{A_{1}},\dots,f_{A_{m}}): \Lambda \ri \prod_{i=1}^{k}\sigma(A_{i_{j}})$. It means that, for any $\Delta_{1}\times \dots \times \Delta_{k} \subset \prod_{i=1}^{k} \sigma(A_{i})$,
\begin{align}
        P_{\rho}(\Delta_{1}\times \dots \times \Delta_{k};A_{1},\dots,A_{k})&=\mu_{\rho}(\cap_{i=1}^{m}f_{A_{i}}^{-1}(\Delta_{i})).
\end{align}  
Otherwise, we say that $\mathcal{A}$ is probabilistically contextual with respect to $\rho$. If $\mathcal{A}$ is noncontextual w.r.t. all states, we say that $\mathcal{A}$ is \textbf{probabilistically noncontextual}, and we say that it is probabilistically contextual otherwise. Finally, we say that $\mathfrak{S}$ is a \textbf{probabilistically noncontextual system} if all finite sets of observables in $\mathcal{S}$ are probabilistically noncontextual; otherwise, $\mathfrak{S}$ is said to be probabilistically contextual.     
\end{definition}

Again, we are using terminology that is slightly different than usual, but it is easy to see that this definition is equivalent to the standard definition of (probabilistic) contextuality \cite{budroni2021review, amaral2018graph, abramsky2011sheaf}. It means that a finite set of observables $\mathcal{A}$ is noncontextual w.r.t. a state $\rho$ if and only if there is a joint probability distribution accounting for the distributions defined by $\rho$ in each context included in $\mathcal{A}$ \cite{budroni2021review, abramsky2011sheaf}, or equivalently that there is a ``classical realization'' for the ``behavior'' defined by $\rho$ in the ``scenario'' $(\mathcal{A},\mathscr{C})$, where $\mathscr{C}$ denotes the set of all contexts included in $\mathcal{A}$ \cite{amaral2018graph, abramsky2011sheaf}.

Let $\mathfrak{S}$ be a system satisfying postulates \ref{ax: separability}-\ref{post: compatibility}. We know that compatible observables commute (definition~\ref{def: commutativity}) in $\mathfrak{S}$, and that finite contexts are permutable (lemma~\ref{lemma: permutingProbabilityAppendix}). It means that sequential measurements of pairwise compatible observables do not depend on ordering. However, as we discussed in section~\ref{sec: compatibility}, it is an open question whether incompatible observables commute, and consequently one cannot tell whether or not a set of observables is permutable when it is not a context. Probabilistic contextuality enables us to conclude that a certain set of observables depends on order (i.e., that is not permutable) just by analyzing sequential measurements of compatible measurements. Thus, for instance, one can argue that experimental tests of quantum contextuality \cite{budroni2021review, Zhang2019Experimental} provide experimental evidence that observables associated with noncommuting selfadjoint operators violate the Bayes rule (see definition~\ref{def: commutativity}), as quantum theory predicts, without in any way measuring these observables in sequence. On the other hand, if we are right in saying that it is problematic to represent observable events associated with incompatible observables as events in the same probability space, as we argued throughout the paper, then, from the perspective of our work, the existence of order dependence for incompatible observables and the non-existence of classical \textit{realist} models such as KS-models are apparently the only conclusions that we can immediately draw when probabilistic contextuality is witnessed. As we discussed in section~\ref{sec: categoryOfObservables}, we do not commit ourselves to the realist view on observables that lies behind these realist models (this is a lesson we learned from Kochen-Specker theorem), and, as discussed in section~\ref{sec: compatibility}, the equivalence between compatibility and commutative (i.e., independence of order) is for us a reasonable property, so, from our point of view, both these consequences of probabilistic contextuality seem perfectly reasonable. In short, once realism is left aside right from the beginning, learning that classical realist descriptions cannot exist ceases to be a surprise. In the particular case of Bell scenarios \cite{brunner2014Bell, Popescu2014beyond}, where probabilistic contextuality turns out to be equivalent to the violation of Bell inequalities, we don't see why it should be evident that these violations require some sort of nonlocal phenomena, as some authors assert \cite{Popescu2014beyond, brunner2014Bell, cabello2022howNonlocality}, given that, from the point of view of our work, the problem lies in the attempt of embedding events associated with incompatible observables in the same probability space, which is always a local problem. In any case, we have the following result.

\begin{proposition}\label{prop: contextualityNotPermutable} Let $\mathfrak{S}$ be a system satisfying postulates \ref{ax: separability}-\ref{post: compatibility}, and let $\mathcal{A}$ be any finite set of observables in $\mathfrak{S}$. If $\mathcal{A}$ is contextual w.r.t. some state, $\mathcal{A}$ is not a permutable set. 
\end{proposition}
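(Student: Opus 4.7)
I will prove the contrapositive: if $\mathcal{A}$ is permutable, then $\mathcal{A}$ is noncontextual with respect to every state $\rho$, so in particular not contextual with respect to any state.

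The plan is to simply package Proposition~\ref{prop: permutabilityAndModels} into the language of Definition~\ref{def: probabilisticContextuality}. Fix a permutable set $\mathcal{A}$ and list it as a sequence $(A_1, \ldots, A_m)$; by hypothesis this sequence is permutable in the sense of Definition~\ref{def: permutableSet}. Fix any non-null state $\rho$ (the null state is trivial). Apply Proposition~\ref{prop: permutabilityAndModels} to $(A_1, \ldots, A_m)$ and $\rho$ to obtain a probability space $\boldsymbol{\Lambda}_\rho \equiv (\Lambda, \Sigma, \mu_\rho)$ together with random variables $f_i : \Lambda \to \sigma(A_i)$, $i = 1, \ldots, m$, such that for every subsequence $A_{i_1}, \ldots, A_{i_k}$ of $A_1, \ldots, A_m$ and every box $\Delta_{i_1} \times \cdots \times \Delta_{i_k}$, the equality
\begin{align*}
P_\rho(\Delta_{i_1} \times \cdots \times \Delta_{i_k}; A_{i_1}, \ldots, A_{i_k}) = \mu_\rho\!\left(\bigcap_{j=1}^{k} f_{i_j}^{-1}(\Delta_{i_j})\right)
\end{align*}
holds. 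Now define the assignment $\mathcal{A} \ni A_i \mapsto f_{A_i} \doteq f_i$.

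Next I would verify that this assignment witnesses noncontextuality of $\mathcal{A}$ with respect to $\rho$ in the sense of Definition~\ref{def: probabilisticContextuality}. Let $\mathcal{C} \subset \mathcal{A}$ be any context, say $\mathcal{C} = \{A_{i_1}, \ldots, A_{i_k}\}$ where the indices are ordered so that $i_1 < \cdots < i_k$; this makes $A_{i_1}, \ldots, A_{i_k}$ a subsequence of $(A_1, \ldots, A_m)$. The equation above is exactly the pushforward condition required by Definition~\ref{def: probabilisticContextuality} for this context and this listing. Since the observables in $\mathcal{C}$ are pairwise compatible, they commute (postulate~\ref{post: commutativity} is not even needed here since compatibility alone implies commutativity via the existence of a cone), and hence by lemma~\ref{lemma: permutingProbabilityAppendix} the sequential measure $P_\rho(\,\cdot\,; A_{i_1}, \ldots, A_{i_k})$ is independent of the order chosen for the elements of $\mathcal{C}$; the same is true of the pushforward under $(f_{A_{i_1}}, \ldots, f_{A_{i_k}})$. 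Thus the required pushforward identity holds regardless of how $\mathcal{C}$ is enumerated. This shows $\mathcal{A}$ is noncontextual with respect to $\rho$, and since $\rho$ was arbitrary, $\mathcal{A}$ is noncontextual, contradicting contextuality of $\mathcal{A}$ with respect to some state.

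The argument is really just a definitional reduction, so there is no serious obstacle. The only mildly delicate point is bookkeeping: one must make sure that the probability space and random variables supplied by Proposition~\ref{prop: permutabilityAndModels} for the full sequence $(A_1, \ldots, A_m)$ work simultaneously for every context inside $\mathcal{A}$, i.e. that a single $\boldsymbol{\Lambda}_\rho$ with a single family $\{f_{A}\}_{A \in \mathcal{A}}$ handles all contexts at once. This is exactly the strength of Proposition~\ref{prop: permutabilityAndModels}, which quantifies over all subsequences, so the matching is immediate. Hence the proof is essentially a two-line appeal to that proposition together with Definition~\ref{def: probabilisticContextuality}.
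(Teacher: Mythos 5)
Your proof is correct and follows exactly the route the paper intends: the paper states this proposition without a separate proof precisely because it is the contrapositive packaging of Proposition~\ref{prop: permutabilityAndModels} via Definition~\ref{def: probabilisticContextuality}, which is what you carry out. Your extra care about re-enumerating contexts (using that compatible observables commute, so the sequential measure and the pushforward are both order-independent) is a worthwhile bookkeeping detail the paper leaves implicit.
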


To conclude this appendix, let's prove that, in the particular case of  systems that satisfy postulates \ref{ax: separability}-\ref{post: commutativity} (such as quantum systems), probabilistic contextuality obstructs the existence of KS-models. This result is well known in quantum systems --- this is actually the reason why probabilistic contextuality was defined in the first place \cite{amaral2018graph, budroni2021review} --- and it is a quite trivial result, but we prove it here for the sake of completeness. Also, note that, although probabilistic contextuality rules out KS-models,  showing that a system is probabilistically noncontextual does not seem to imply that a KS-model exists for it. This is because a KS-model consists of a probability space account for all observables and all states of the system at once, whereas probabilistic contextuality ensures the existence of a probability space for each pair $(\mathcal{A},\rho)$, where $\mathcal{A}$ is a finite set of observables and $\rho$ is a state.

To begin with, corollary~\ref{cor: permutabilityAndCompatibility} tells us that, if, in addition to postulates \ref{ax: separability}-\ref{post: compatibility}, $\mathfrak{S}$ satisfies postulate~\ref{post: commutativity}, then permutable finite sequences of observables and finite contexts are equivalent concepts. For this reason, we have the following corollary of proposition~\ref{prop: contextualityNotPermutable}.

\begin{corollary}\label{cor: contextualityNotCompatible}  Let $\mathfrak{S}$ be a system satisfying postulates \ref{ax: separability}-\ref{post: commutativity}, and let $\mathcal{A}$ be any finite set of observables in $\mathfrak{S}$. If $\mathcal{A}$ is contextual w.r.t. some state, $\mathcal{A}$ is not a context.
\end{corollary}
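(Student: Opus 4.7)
The plan is to derive this corollary as an immediate consequence of Proposition~\ref{prop: contextualityNotPermutable} combined with Corollary~\ref{cor: permutabilityAndCompatibility}, which is the only place where postulate~\ref{post: commutativity} enters. The logical flow is essentially a chain: probabilistic contextuality $\Rightarrow$ non-permutability $\Rightarrow$ (under postulate~\ref{post: commutativity}) failure of pairwise compatibility.

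More concretely, I would start by fixing a finite set $\mathcal{A}$ of observables which is contextual with respect to some state $\rho$, as in Definition~\ref{def: probabilisticContextuality}. The first step is to apply Proposition~\ref{prop: contextualityNotPermutable}, whose hypotheses (postulates~\ref{ax: separability}--\ref{post: compatibility}) are subsumed by our present assumption that $\mathfrak{S}$ satisfies postulates~\ref{ax: separability}--\ref{post: commutativity}; this yields that $\mathcal{A}$ is not a permutable set in the sense of Definition~\ref{def: permutableSet}. In particular, there is some enumeration $A_1,\dots,A_m$ of (possibly a subset of) $\mathcal{A}$ which witnesses a failure of permutability.

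The second step invokes Corollary~\ref{cor: permutabilityAndCompatibility}, which is precisely the equivalence ``pairwise compatible $\Leftrightarrow$ permutable'' in any system where postulate~\ref{post: commutativity} holds. Contrapositively, non-permutability implies that the observables are not pairwise compatible, which by Definition~\ref{def: context} is exactly what it means for $\mathcal{A}$ not to be a context. Hence $\mathcal{A}$ is not a context, completing the proof.

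There is no real obstacle here: the work has already been done in the preceding propositions, and the role of postulate~\ref{post: commutativity} is solely to upgrade the weaker conclusion of Proposition~\ref{prop: contextualityNotPermutable} (non-permutability) to the stronger statement about compatibility via Corollary~\ref{cor: permutabilityAndCompatibility}. The only thing to be mildly careful about is that Proposition~\ref{prop: contextualityNotPermutable} is phrased for finite sequences while Definition~\ref{def: context} is phrased for sets, but this is a harmless cosmetic issue: any enumeration of the finite set $\mathcal{A}$ is a permutable sequence iff $\mathcal{A}$ is a context, since pairwise compatibility is a property of the underlying set.
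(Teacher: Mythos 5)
Your proof is correct and follows exactly the paper's route: the corollary is obtained by chaining Proposition~\ref{prop: contextualityNotPermutable} (contextuality implies non-permutability) with Corollary~\ref{cor: permutabilityAndCompatibility} (permutability is equivalent to pairwise compatibility once postulate~\ref{post: commutativity} is assumed). The remark about sequences versus sets is a sensible bit of extra care but does not change the argument.
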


Finally: 

\begin{lemma}\label{lemma: KSprobabilisitc} Let $\mathfrak{S}$ be a system satisfying postulates \ref{ax: separability}-\ref{post: commutativity}. If $\mathfrak{S}$ is probabilistically contextual (definition~\ref{def: probabilisticContextuality}), there is no KS-model for $\mathfrak{S}$ (definition~\ref{def: KSmodel}).
\end{lemma}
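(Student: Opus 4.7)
The plan is to prove the contrapositive: assuming a KS-model $\mathfrak{M}\equiv(\boldsymbol{\Lambda},\Xi,\Phi)$ for $\mathfrak{S}$ exists, I will show that every finite set of observables $\mathcal{A}$ is probabilistically noncontextual with respect to every state $\rho$. Fix such $\mathcal{A}$ and $\rho$. The natural candidate for the joint distribution is the pushforward of $\mu_{\rho}$ along the product map
\[
F_{\mathcal{A}} \doteq (f_{A})_{A \in \mathcal{A}} : \Lambda \ri \prod_{A \in \mathcal{A}} \sigma(A),
\]
with the random variable assigned to each $A \in \mathcal{A}$ being $f_{A}$ itself (which is measurable by item (a) of definition~\ref{def: KSmodel}). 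So the probability space $\boldsymbol{\Lambda}_{\rho}$ required by definition~\ref{def: probabilisticContextuality} will be taken to be $(\Lambda,\Sigma,\mu_{\rho})$ with the assignment $A \mapsto f_{A}$.

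The real work is to verify the marginalization identity: for every context $\mathcal{C}\equiv\{A_{1},\dots,A_{k}\}\subset\mathcal{A}$ and every box $\Delta_{1}\times\dots\times\Delta_{k}\subset\prod_{i=1}^{k}\sigma(A_{i})$,
\[
P_{\rho}(\Delta_{1}\times\dots\times\Delta_{k};A_{1},\dots,A_{k}) \;=\; \mu_{\rho}\!\left(\bigcap_{i=1}^{k} f_{A_{i}}^{-1}(\Delta_{i})\right).
\]
The key is that $\mathcal{C}$ is a context, so by Specker's principle (corollary~\ref{cor: specker}) there is a cone $C\xrightarrow{g_{i}} A_{i}$, $i=1,\dots,k$, in the category of observables, meaning $A_{i}=g_{i}(C)$ for each $i$. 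I would then invoke proposition~\ref{prop: jointPushforward}, which tells us that $P_{\rho}(\ \cdot\ ;A_{1},\dots,A_{k})$ is the pushforward of $P_{\rho}(\ \cdot\ ;C)$ along $(g_{1},\dots,g_{k})$. Applying item (b) of definition~\ref{def: KSmodel} to $C$ gives $P_{\rho}(\ \cdot\ ;C) = \mu_{\rho}\circ f_{C}^{-1}$, and the functoriality condition $f_{A_{i}} = g_{i}\circ f_{C}$ from item (a) of definition~\ref{def: KSmodel} lets us compose these pushforwards into the single pushforward of $\mu_{\rho}$ along $(f_{A_{1}},\dots,f_{A_{k}})$, which is exactly the required marginal of $\mu_{\rho}\circ F_{\mathcal{A}}^{-1}$.

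The main obstacle, such as it is, is just bookkeeping: one has to check that the preimage under $(g_{1},\dots,g_{k})$ of the box $\Delta_{1}\times\dots\times\Delta_{k}$ equals $\bigcap_{i=1}^{k}g_{i}^{-1}(\Delta_{i})$, then pull back through $f_{C}$ using $f_{A_{i}}^{-1}(\Delta_{i}) = f_{C}^{-1}(g_{i}^{-1}(\Delta_{i}))$, and finally conclude $\mu_{\rho}(f_{C}^{-1}(\bigcap_{i}g_{i}^{-1}(\Delta_{i}))) = \mu_{\rho}(\bigcap_{i}f_{A_{i}}^{-1}(\Delta_{i}))$. Note that postulate~\ref{post: commutativity} is not actually needed for this argument; the conclusion follows from postulates~\ref{ax: separability}--\ref{post: compatibility} alone, via Specker's principle and proposition~\ref{prop: jointPushforward}. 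Since every context $\mathcal{C}\subset\mathcal{A}$ is thus accounted for by the same measure $\mu_{\rho}$ and the same random-variable assignment, $\mathcal{A}$ is noncontextual w.r.t.\ $\rho$. As $\mathcal{A}$ and $\rho$ were arbitrary, $\mathfrak{S}$ is probabilistically noncontextual, contradicting our hypothesis and completing the proof.
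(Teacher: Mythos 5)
Your proof is correct and follows essentially the same route as the paper's: assume a KS-model exists, use Specker's principle to obtain a cone for each context, apply proposition~\ref{prop: jointPushforward}, and compose pushforwards via the functoriality condition $f_{g(A)}=g\circ f_{A}$ to land on $\mu_{\rho}(\cap_{i}f_{A_{i}}^{-1}(\Delta_{i}))$. One correction, though: your aside that postulate~\ref{post: commutativity} is not needed is wrong. For a context with three or more observables, pairwise compatibility alone does not supply the joint cone you invoke; that is exactly what Specker's principle (corollary~\ref{cor: specker}) provides, and its proof rests on lemma~\ref{lemma: compatibleWithConjunction}, which uses postulate~\ref{post: commutativity} to upgrade the commutativity of $C$ with $\conj{A,B}$ to compatibility. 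So the hypothesis range postulates~\ref{ax: separability}--\ref{post: commutativity} in the lemma's statement is in fact the one your argument uses.
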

\begin{proof}
    Let $\mathfrak{S}$ be a system satisfying postulates~\ref{ax: separability}-\ref{post: commutativity}. Let's show that, if a KS-model exists for $\mathfrak{S}$, then $\mathfrak{S}$ is probabilistically noncontextual. Needless to say, this is equivalent to proving that there is no KS-model for probabilistically contextual systems satisfying postulates~\ref{ax: separability}-\ref{post: commutativity}. So let $\mathfrak{M} \equiv (\boldsymbol{\Lambda},\Xi,\Phi)$  be a KS-model for $\mathfrak{S}$. According to corollary~\ref{cor: specker} (Specker's principle), if $A_{1},\dots,A_{m}$ are pairwise compatible observables, there is a cone $C \xrightarrow{g_{i}} A_{i}$, $i=1,\dots,m$, for them in the category of observables, i.e., there is an observable $C$ are real functions $g_{1},\dots,g_{m}$ on $\sigma(C)$ such that $A_{i}=g_{i}(C)$ for all $i$ (definition~\ref{def: functionalRelation}). Furthermore, it follows from proposition~\ref{prop: jointPushforward} and definition~\ref{def: KSmodel} that, for any $\Delta_{1} \times \dots \times \Delta_{m} \subset \prod_{i=1}^{m} \sigma(A_{i})$ and any state $\rho$,
    \begin{align*}
        P_{\rho}(\Delta_{1} \times \dots \times \Delta_{m};A_{1},\dots,A_{m})&= P_{\rho}(\cap_{i=1}^{m}g_{i}^{-1}(\Delta_{i});C) = \mu_{\rho}(f_{C}^{-1}(\cap_{i=1}^{m}g_{i}^{-1}(\Delta_{i})))
        \\
        &= \mu_{\rho}(\cap_{i=1}^{m}f_{C}^{-1}(g_{i}^{-1}(\Delta_{i}))) =  \mu_{\rho}(\cap_{i=1}^{m}(g_{i} \circ f_{C})^{-1}(\Delta_{i}))
        \\
        &=\mu_{\rho}(\cap_{i=1}^{m}f_{g_{i}(C)}^{-1}(\Delta_{i})) 
        \\
        &= \mu_{\rho}(\cap_{i=1}^{m}f_{A_{i}}^{-1}(\Delta_{i})) 
    \end{align*}
    where $\mu_{\rho} = \Phi(\rho)$ and, for any observable $D$, $f_{D} = \Xi(D)$ (see definition~\ref{def: KSmodel}). It easily follows from this result that $\mathfrak{S}$ is probabilistically noncontextual, which completes the proof.
\end{proof}

\bibliographystyle{IEEEtran}
\bibliography{bibliography}

\end{document}